\newtheorem{thm}{Theorem}
\numberwithin{thm}{section}
\newtheorem{defn}[thm]{Definition}
\newtheorem{lemma}[thm]{Lemma}
\newtheorem{rmk}[thm]{Remark}
\newtheorem{fact}[thm]{Fact}
\newtheorem{claim}[thm]{Claim}
\newcommand{\Exp}{\mathbb{E}}
\newcommand{\Ind}{\mathbbm{1}}
\newcommand{\norm}[1]{\left\lVert #1 \right\rVert}
\newcommand{\TV}[2]{\left\| #1 - #2 \right\|_{\normalfont\textsc{tv}}}
\newcommand{\ceil}[1]{\left\lceil #1 \right\rceil}
\newcommand{\floor}[1]{\left\lfloor #1 \right\rfloor}
\newcommand{\grad}{\nabla}
\newcommand{\hessian}{\nabla^2}
\newcommand{\diag}{\mathrm{diag}}
\newcommand{\poly}{\mathrm{poly}}
\newcommand{\BIS}{\textsc{\#BIS}}
\newcommand{\Gs}{G}
\newcommand{\NP}{\textsc{NP}}
\newcommand{\RP}{\textsc{RP}}
\newcommand{\eps}{\varepsilon}
\newcommand{\N}{\mathbb{N}}
\newcommand{\Z}{\mathbb{Z}}
\newcommand{\R}{\mathbb{R}}
\newcommand{\D}{\mathcal{D}}
\newcommand{\fpras}{\mathsf{FPRAS}}
\newcommand{\MM}{\mathrm{M}}
\newcommand{\DD}{\mathrm{D}}
\renewcommand{\SS}{\mathrm{S}}
\newcommand{\Bo}{\mathfrak{B}_o}
\newcommand{\MIsing}{\mathcal{M}_{\normalfont\textsc{Ising}}}
\newcommand{\MfIsing}{\mathcal{M}^+_{\normalfont\textsc{Ising}}}
\newcommand{\hatMaIsing}{\mathcal{\hat M}^-_{\normalfont\textsc{Ising}}}
\newcommand{\hatMfIsing}{\mathcal{\hat M}^+_{\normalfont\textsc{Ising}}}
\newcommand{\MaIsing}{\mathcal{M}^-_{\normalfont\textsc{Ising}}}
\newcommand{\MRBM}{\mathcal{M}_{\normalfont\textsc{RBM}}}
\newcommand{\MfRBM}{\mathcal{M}^+_{\normalfont\textsc{RBM}}}
\newcommand{\MPotts}{\mathcal{M}_{\normalfont\textsc{Potts}}}
\newcommand{\MfPotts}{\mathcal{M}^+_{\normalfont\textsc{Potts}}}
\newcommand{\hatMfPotts}{\mathcal{\hat M}^+_{\normalfont\textsc{Potts}}}
\newcommand{\MBiPotts}{\mathcal{M}_{\normalfont\textsc{Potts-Bip}}}
\newcommand{\MBiPottsMono}{\mathcal{M}_{\normalfont\textsc{Potts-Bip-Mono}}}
\newcommand{\MPottsMono}{\mathcal{M}_{\normalfont\textsc{Potts-Mono}}}
\newcommand{\MfPottsMono}{\mathcal{M}^+_{\normalfont\textsc{Potts-Mono}}}
\newcommand{\MfBiPottsMono}{\mathcal{M}^+_{\normalfont\textsc{Potts-Bip-Mono}}}
\newcommand{\MBiIsing}{\mathcal{M}_{\normalfont\textsc{Ising-Bip}}}
\newcommand{\din}{d_{\normalfont\textsc{in}}}
\newcommand{\dout}{d_{\normalfont\textsc{out}}}
\newcommand{\e}{e}
\newcommand{\Gr}{\mathcal{G}}
\newcommand{\Good}{\Omega_{\mathrm{good}}}
\newcommand{\BB}{\mathbb{B}}
\newcommand{\1}{\mathbbm{1}}
\title{Hardness of Identity Testing for Restricted Boltzmann Machines and Potts models}
\author{
	Antonio Blanca\thanks{Pennsylvania State University.  Email: ablanca@cse.psu.edu. Research supported in part by NSF grant CCF-1850443.}
	\and
	Zongchen Chen\thanks{Georgia Institute of Technology.  Email: \{chenzongchen,vigoda\}@gatech.edu.
		Research supported in part by NSF grants CCF-1617306 and CCF-1563838.}
	\and
	Daniel \v{S}tefankovi\v{c}\thanks{University of Rochester.
		Email: stefanko@cs.rochester.edu. Research
		supported in part by NSF grant CCF-1563757.}
	\and
	Eric Vigoda$^\dag$
}
\begin{document}

\maketitle

\begin{abstract}%

We study identity testing for restricted Boltzmann machines (RBMs), and
more generally for undirected graphical models.
Given sample access to the Gibbs distribution corresponding to an unknown or hidden
model $M^*$  and given an explicit model $M$,
can we distinguish if either $M = M^*$ or
if they are (statistically) far apart?
Daskalakis et al.~(2018) presented a polynomial-time algorithm for identity testing for
the ferromagnetic (attractive) Ising model.
In contrast,
for the antiferromagnetic (repulsive) Ising model, Bez\'akov\'a et al.~(2019) proved that unless $\RP=\NP$
there is no identity testing algorithm when $\beta d=\omega(\log{n})$,
where $d$ is the maximum degree of the visible graph and $\beta$ is the largest
edge weight (in absolute value).

We prove analogous hardness results for RBMs (i.e., mixed Ising models on bipartite graphs),
even when there are no latent variables or an external field. Specifically, we show that if $\RP\neq \NP$, then when $\beta d=\omega(\log{n})$
there is no polynomial-time algorithm for identity testing for RBMs; when $\beta d =O(\log{n})$ there is an efficient
identity testing algorithm that utilizes the structure learning algorithm of Klivans and
Meka (2017).
In addition, we prove similar lower bounds for purely ferromagnetic RBMs with inconsistent
external fields, and for the ferromagnetic Potts model.
Previous hardness results for identity testing of Bez\'akov\'a et al. (2019) utilized the hardness
of finding the maximum cuts, which corresponds to the ground states of the antiferromagnetic Ising model.
Since RBMs are on bipartite graphs such an approach is not feasible.
We instead introduce a novel methodology to reduce from the corresponding approximate counting problem and utilize the
phase transition that is exhibited by RBMs and the mean-field Potts model.
We believe that our method is general, and that it can be used
to establish the hardness of identity testing for other spin systems.
\end{abstract}


\section{Introduction}\label{sec:intro}
For graphical models, there are several fundamental computational tasks
which are essential for utilizing these models.
These computational problems can be broadly labeled as follows:
sampling, counting, structure learning, and testing.
Our big picture aim is to understand the relationship between these problems.
The specific focus in this paper is on the computational complexity of the {\em identity testing} problem
for \emph{undirected} graphical models and its connections to the hardness of the counting problem.

Identity testing is a basic question in statistics for testing whether a given model fits a dataset.
Roughly speaking, given data $\D$ sampled from the posterior or likelihood distribution
of an unknown/hidden model $M^*$ and given an explicit model $M$, can we distinguish
whether $M=M^*$?

We study identity testing in the context of undirected graphical
models~\cite{Murphy}, which correspond to (pairwise) Markov random fields in
probability theory and computer vision~\cite{GG}
and to spin systems in statistical physics~\cite{Georgii}.
We focus attention on examples of graphical models of particular interest:
the Ising model, the Potts model, and Restricted Boltzmann Machines.
The Ising model is the simplest example of an undirected graphical model, and, in fact,
it is one of the most well-studied models in statistical physics where it is used to
study phase transitions.  The Potts model is the generalization of the
Ising model from a two state system to an integer $q\geq 3$ state system.
It is also well-studied in statistical physics as the nature of the phase transition changes
as $q$ increases~\cite{DCGHMT,DCST}.

Restricted Boltzmann Machines (RBMs) are a simple class of undirected
graphical models corresponding to the Ising model on bipartite graphs.
Originally introduced by Smolensky in 1986~\cite{Smolensky},
they have played an important role in the history of computational learning theory.
They have two layers of variables: one layer
corresponding to the observed variables and another layer corresponding to
the hidden/latent variables, and
no intralayer connections so that the underlying graph is bipartite.
Learning was shown to be practical in these
restricted models~\cite{Hinton,HOT}
and henceforth played a seminal role in the development of
deep learning~\cite{SalakhutdinovHinton,OsinderoHinton,SMH,topic}.

We define first the Potts model, as both the Ising model and RBMs may be viewed as special cases of this model.
The Potts model is specified by a graph $G=(V,E)$, a set of vertex labels or spins $[q] = \{1,\dots,q\}$,
a set of edge weights defined by $\beta: E \rightarrow \R$
and a set of vertex weights $h: V \times [q] \rightarrow \R $.
Configurations of the Potts model are the collection of vertex labelings
$\Omega=\{1,\dots,q\}^V$.
The \emph{Gibbs distribution} associated with the Potts model is a distribution over all configurations $\sigma\in\Omega$ such that:
\[
\mu(\sigma) = \mu_{G,\beta,h}(\sigma) := \frac{1}{Z} \exp\left( \sum_{\{u,v\}\in E} \beta(\{u,v\})\1(\sigma(u)=\sigma(v)) + \sum_{v \in V} h(v,\sigma(v)) \right),
\]
where $Z=Z_{G,\beta,h}$ is the normalizing factor or \emph{partition function} 
given by:
\[
Z := \sum_{\sigma\in\Omega}  \exp\left( \sum_{\{u,v\}\in E} \beta(\{u,v\})\1(\sigma(u)=\sigma(v)) + \sum_{v \in V} h(v,\sigma(v)) \right).
\]

When $\beta(e) > 0$ for every $e \in E$, the model is called \emph{ferromagnetic} and neighboring vertices prefer to align to the same spin.
Conversely, when $\beta(e) <0$ for every $e \in E$ the model is called \emph{antiferromagnetic}.
Models where $\beta$ is allowed to be both positive or negative for distinct edges are called \emph{mixed} models.

The Ising model corresponds to the special case where there are only two spins; i.e., $q=2$.
RBMs are mixed Ising models restricted to bipartite graphs; that is, $G$ is bipartite with bipartition $V=L\cup R$.
Since the focus in
this paper is on lower bounds, we often consider the case  of no external field ($h=0$) in order to obtain stronger
hardness results.

Given a model specification, that is, a graph $G=(V,E)$, an edge weight function $\beta$ and an external field $h$,
the goal in the {\em sampling problem} is to
generate samples from the Gibbs distribution $\mu=\mu_{G,\beta,h}$ (or from a distribution close to $\mu$
in total variation distance).
The corresponding {\em counting problem} is to compute the partition function $Z = Z_{G,\beta,h}$.
The (exact) counting problem is \#P-hard~\cite{Valiant} even for restricted classes of graphs~\cite{Greenhill,Vadhan}, and hence the
focus on the approximate counting problem of
obtaining an $\fpras$ (fully-polynomial randomized approximation scheme\footnote{A fully polynomial-time randomized approximation scheme ($\fpras$) for an optimization problem with optimal solution $Z$ produces an approximate solution $\hat{Z}$ such that, with probability at least $1-\delta$, $(1-\eps)\hat{Z} \leq Z \leq (1+\eps)\hat{Z}$ with running time polynomial in the instance size, $\eps^{-1}$ and $\log (\delta^{-1})$.}) for $Z$.
For a general class of models, the approximate counting and the approximate sampling problems
are equivalent, i.e., there are polynomial-time reductions between them~\cite{JVV,SVV,Kolmogorov}.
A seminal result of Jerrum and Sinclair~\cite{JS:ising} (see also~\cite{RandallWilson,worm,GuoJerrum}) presented an $\fpras$ for the partition function of the
ferromagnetic Ising model. 

Another two fundamental problems for undirected graphical models are \emph{structure learning}
and \emph{identity testing}.  The structure learning problem is as follows: given oracle
access to samples from the Gibbs distribution $\mu_{M^*}$ for an unknown (i.e., ``hidden'') model $M^* = (G^*,\beta^*,h^*)$,
can we learn $G^*$ (i.e., the structure of the model) in polynomial-time with probability at least $2/3$?
In the case of no latent variables (so the samples from the Gibbs distribution reveal the
label of all vertices $V$ of $G$) recent work of Klivans and Meka~\cite{KM} (see also~\cite{Bresler,VML,HKM,VMLC,WSD}) learns
$n$-vertex graphs
with $O(\log{n})\times\exp(O(\beta d))$ samples and $O(n^2\log{n})\times\exp(O(\beta d))$ time
where $d$ is the maximum degree of $G$ and $\beta := \max_{e \in E} |\beta(e)|$ is the maximum edge weight in absolute value;
this bound has nearly-optimal sample complexity from an information-theory perspective~\cite{SanthanamWainwright}.


For RBMs with latent variables (thus samples only reveal the labels for vertices on one side $R$),
structure learning can be done in time $O(n^{d_L+1})$ where $d_L$ is the maximum degree of the latent variables.
Recent work of Bresler, Koehler and Moitra~\cite{BKM} proves that there is no algorithm with running time
$n^{o(d_L)}$ assuming $k$-sparse
noisy parity on $n$ bits is hard to learn in time $n^{o(k)}$; they also show that for the special case of ferromagnetic RBMs with hidden variables there is a structure learning algorithm with $O(\log{n})\times\exp(O(\beta d^2))$ sample complexity and $O(n^2\log{n})\times\exp(O(\beta d^2))$ running time.

In the {\em identity testing} problem we are given oracle access to samples from the Gibbs distribution~$\mu_{M^*}$
for an unknown model $M^* = (G^*,\beta^*,h^*)$ (as in structure learning) and we are also given
an explicit model $M = (G,\beta,h)$. Our goal is to determine, with probability $\geq 2/3$, if either $M = M^*$ or
if the models are $(1-\eps)$-far apart; specifically, if the total variation distance between their Gibbs distributions
is at least $1-\eps$ for a given $\eps>0$.  (We note that previous works assumed separation $\geq\eps$ in the later case, whereas
we prove hardness even when we assume separation $\geq 1-\eps$.)


It is known that identity testing cannot be solved in polynomial time for general graphical models in the presence of hidden variables
unless $\RP= \NP$~\cite{BMV} .
In this paper we assume there are no hidden variables and hence the samples from $\mu_{M^*}$
reveal the label of every vertex in the graph $G$; this setting is more interesting for hardness results.
We explore a more refined picture of hardness of identity testing vs. polynomial-time algorithms.

It is known that identity testing can be reduced to sampling~\cite{DDK} or structure learning~\cite{colt}:
given an efficient algorithm for the associated sampling problem or an efficient algorithm for structure learning, then one
can efficiently solve the identity testing problem.  Hence, identity testing is (computationally)
easier than sampling and structure learning.
(To be precise, one needs to solve both the structure learning and the parameter estimation problems to solve identity testing; the algorithm of Klivans and Meka~\cite{KM} does in fact provide this.)
This raises the question of whether identity testing can be efficiently solved in cases where sampling and
structure learning are known to be hard.  We prove (for the models studied here)
that when sampling and structure learning are hard, then identity testing is also hard.

\subsection{Our results}

The $\eps$-identity testing problem for the Ising and Potts models is formally defined as follows.
For positive integers $n$ and $d$, and positive real numbers $\beta$ and $h$,
let $\MRBM(n,d,\beta,h)$ denote the family of RBMs
on $n$-vertex bipartite graphs $G=(V,E)$ of maximum degree at most $d$,
where
the absolute value of all edge interactions is at most $\beta$
and the field $|h(v,i)| \leq h$ for all $v \in V$ and $i \in [q]$; see Definition~\ref{dfn:potts:notation}.
We define $\MPotts(n,d,\beta,h)$ analogously
for the family of Potts models, without the restriction of $G$ being bipartite.

\begin{center}
	\fbox{
		\parbox{0.88\textwidth}{
			Given an RBM $M \in \MRBM(n,d,\beta,h)$, and
			sample access to a distribution $\mu_{M^*}$ for an
			unknown RBM $M^* \in \MRBM(n,d,\beta,h)$,
			distinguish with probability at least $3/4$ between the cases:
			\begin{center}
			1.~$\mu_{M}=\mu_{M^*}$;~~~~~~~~~~~~2.~${\|\mu_{M}-\mu_{M^*}\|}_\textsc{tv} \ge 1 - \eps$.
			\end{center}
}}
\end{center}
\noindent
The choice of $3/4$ for
the probability of success is arbitrary, and it can be replaced by any constant in the interval $(\frac 12,1)$ at the expense of a constant factor in the running time of the algorithm.
The $\eps$-identity testing problem for the Potts model is defined in the same manner, but assuming that both $M$ and $M^*$ belong to $\mathcal M_{\textsc{potts}}(n,d,\beta,h)$ instead.

%
%

Our first result concerns the identity testing problem on $\MRBM(n,d,\beta,0)$;
that is, (mixed) RBMs without external fields: $h(v,i)=0$ for all $v\in V$, $i \in [q]$.
We show that for RBMs the approach utilizing structure learning
is essentially best possible.  In particular we prove that when $\beta d = \omega(\log{n})$ there
is no poly-time identity testing algorithm, unless $\RP=\NP$. Note that when $\beta d = O(\log{n})$, the
algorithm of Klivans and Meka \cite{KM} for structure learning and parameter estimation
provides an identity testing algorithm with $\poly(n)$ sample complexity and running time.

\begin{thm}\label{thm:main-RBM-mixed}
	Suppose $n$, $d$ are positive integers such that $3 \le d \le n^\theta$ for constant $\theta \in (0, 1)$ and let $\eps \in (0,1)$. If $\RP \neq \NP$, then for all real $\beta > 0$ satisfying $\beta d = \omega(\log n)$  there is no polynomial running time algorithm to solve the $\eps$-identity testing problem for the class $\MRBM(n,d,\beta,0)$ of mixed RBMs without external fields.
\end{thm}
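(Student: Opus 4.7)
The plan is a polynomial-time Turing reduction from an NP-hard problem to the $\eps$-identity testing problem for $\MRBM(n,d,\beta,0)$, following the novel methodology described in the abstract. Since the Bez\'akov\'a et al.\ max-cut-based technique is not feasible on bipartite graphs, I would take as source a problem whose NP-hardness is driven by the phase transition of RBMs (or of the mean-field Potts model) in the low-temperature regime $\beta d = \omega(\log n)$: for instance, an approximate counting/sampling problem for a class of mixed RBMs or mean-field Potts instances at low temperature, where the Gibbs measure concentrates on several well-separated pure phases whose relative weights are hard to estimate.

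Given an instance $I$ of the source problem, I would construct in polynomial time an explicit RBM $M \in \MRBM(n,d,\beta,0)$ together with a description of a hidden RBM $M^* \in \MRBM(n,d,\beta,0)$ from which the sampling oracle draws, such that (i) if $I$ is a YES instance then $\mu_M = \mu_{M^*}$ exactly, while (ii) if $I$ is a NO instance then $\TV{\mu_M}{\mu_{M^*}} \geq 1 - \eps$. The construction attaches to an instance-dependent bipartite graph a family of \emph{gadgets}: small bipartite RBM fragments with mixed ferro- and antiferromagnetic interactions of magnitude $\Theta(\beta)$, chosen so that---because $\beta d = \omega(\log n)$---each gadget's marginal is exponentially concentrated on a small set of ground configurations. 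The pattern of ground configurations across the gadgets encodes the combinatorial structure of $I$; because distinct ``phase profiles'' have nearly disjoint supports in the global Gibbs measure, a change in $M^*$ reflecting a NO answer is amplified into the required $(1-\eps)$-gap in total variation distance. A hypothetical polynomial-time identity tester applied to $(M, \mu_{M^*})$ would then decide $I$, contradicting $\RP \neq \NP$.

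The principal technical difficulty is designing the gadget under three simultaneous restrictions: bipartiteness (so the resulting graphical model lies in $\MRBM$), maximum degree at most $d$, and \emph{no external field}. The last is the most delicate, since the natural way to pin gadgets at their preferred phases is via local fields; without them I would exploit a global $\Z_2$-symmetry and build gadgets invariant under spin inversion, so that the exact equality $\mu_M = \mu_{M^*}$ in case (i) follows from matching symmetric phase weights rather than matching fields. A second subtlety is the requirement of separation $\geq 1 - \eps$ rather than the weaker $\geq \eps$ used in prior work: this demands that the concentration on phases be sharp enough that in case (ii) the two Gibbs measures have essentially disjoint supports, which is precisely what $\beta d = \omega(\log n)$ provides through exponentially small tail probabilities away from the dominant phases. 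The gadget analysis then reduces to estimating ratios of partition functions inside each gadget with exponential precision, which the sharp phase separation makes tractable.
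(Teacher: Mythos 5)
Your high-level scaffolding is in the right neighborhood — reduce from approximate counting, exploit the low-temperature phase structure, and close with a bipartite degree-reducing gadget — but two load-bearing ideas in your plan diverge from the paper's, and the first would not work as stated.

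Your reduction requires that in the YES case $\mu_M = \mu_{M^*}$ \emph{exactly}, and you invest a $\Z_2$-symmetric gadget design to try to force this. The paper's construction neither achieves nor needs exact equality: the visible model $F$ contains the input graph $G$ as a subgraph, while the hidden model $F^*$ replaces $G$ by a structurally different graph (an independent set for the Ising/RBM case), so $\mu_F \neq \mu_{F^*}$ in every case. What the construction actually achieves is $\TV{\mu_F}{\mu_{F^*}} \le \frac{1}{16L}$ in one regime and $\ge 1-\eps$ in the other, where $L$ is the tester's sample complexity. Since an identity tester carries no guarantee when the true TV distance is tiny but positive, the paper needs a separate coupling argument (Theorem~\ref{thm:gen-red}) showing that $L$ samples cannot distinguish a distribution within $\frac{1}{16L}$ of $\mu_F$ from $\mu_F$ itself; this step is essential and absent from your plan. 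The no-field constraint is also less of an obstacle than you suggest: in the paper's first reduction (to $\MaIsing(n,n,\beta_0,0)$) there is simply no field, and bipartiteness is supplied only at the final degree-reduction step, not by the phase-encoding gadget.

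Your encoding mechanism — ``the pattern of ground configurations across the gadgets encodes the combinatorial structure of $I$'' — is essentially the ground-state/max-cut scheme of Bez\'akov\'a et al., which the paper explicitly argues does not transfer to RBMs because the bipartite ground states are trivial. The actual mechanism is different in kind. The construction introduces a small ``core'' (two vertices $s_1,s_2$ for the Ising reduction; a complete graph $K_m$ for Potts) whose joint state defines a majority phase ($s_1$ agrees with $s_2$) and a disordered phase ($s_1$ disagrees with $s_2$). A tunable interaction parameter $\beta_2$ (resp.\ $\beta_H$) is then chosen, using metastability near the phase-coexistence point, so that $Z_F^{\mathrm D}/Z_F^{\mathrm M}$ is proportional to $Z_G/\hat Z$ for a given threshold $\hat Z$. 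Whether the resulting TV distance to $F^*$ is small or large therefore decides whether $Z_G \le \hat Z/r$ or $Z_G \ge r\hat Z$ — the ``decision version of approximate counting'' (Definition~\ref{def:decision-cnt}), which the paper separately proves is as hard as FPRAS (Theorem~\ref{thn:decision-cnt:potts}), and which is NP-hard for the antiferromagnetic Ising model on $3$-regular graphs by Sly--Sun. So the combinatorial structure of $G$ is not encoded in a gadget phase profile at all; only the scalar $Z_G$ is, through a continuously tunable phase balance. You have the right flavor but not the operative mechanism, and a proof along your lines would stall at the step of making the gadgets' ground-state patterns carry useful information on a bipartite graph.
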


In contrast to the above result,
Daskalakis, Dikkala and Kamath~\cite{DDK} provided a poly-time identity testing algorithm for all \emph{ferromagnetic} Ising model
with \emph{consistent} fields
(the external field is consistent if it only favors the same unique spin at every vertex; otherwise it is called inconsistent; see Definition~\ref{dfn:consistent}).
Their algorithm crucially
utilizes the known poly-time sampling methods for the ferromagnetic Ising model~\cite{JS:ising,RandallWilson,worm,GuoJerrum}.
On the hardness side, super-polynomial lower bounds were recently established for identity testing for the \emph{antiferromagnetic} Ising model on general (not necessarily bipartite) graphs
when $\beta d=\omega(\log{n})$~\cite{colt}.
This previous result utilizes the hardness of the maximum cut problem, since maximum cuts  correspond to the
``ground states'' (maximum likelihood configurations) of the antiferromagnetic model; this is not the case for RBMs, and new insights are required 
(see Section~\ref{subsect:intro:tech} for a more detailed discussion).
In particular we show a new approach to reduce from the counting problem.

Ferromagnetic and antiferromagnetic RBMs are equivalent models; that is, there is a
one-to-one correspondence between configurations with the same weight.
Hence, the results in~\cite{DDK} solve the identity testing problem for both ferromagnetic and antiferromagnetic RBMs with no latent variables,
even in the presence of a consistent external field.
Moreover,
Klivans and Meka's algorithm from~\cite{KM} together with the hardness results of
Theorem~\ref{thm:main-RBM-mixed}
provides a fairly complete picture of
the computational complexity of identity testing for (mixed) RBMs with no external field ($h=0$).

Our next result concerns the hardness of identity testing for purely \emph{ferromagnetic} RBMs with an \emph{inconsistent} magnetic field; that is, a field that favors one spin for some of the vertices and the other spin for the rest; see Definition~\ref{dfn:consistent}.
For this we utilize the complexity of {\BIS}, which is the problem of counting the independent sets in a bipartite graph. {\BIS} is believed not to have an $\fpras$, and it has achieved considerable interest in approximate counting as a tool for proving relative complexity hardness~\cite{DGGJ,GJ,DGJ,BDGJM,CDGJLMR,CGGGJSV,GGJ}.
Let $\MfRBM(n,d,\beta,h)$ be set of all ferromagnetic RBMs in
$\MRBM(n,d,\beta,h)$.

\begin{thm}\label{thm:main-RBM-ferro}
	Suppose $n$, $d$ are positive integers such that $3 \le d \le n^\theta$ for constant $\theta \in (0, 1)$ and let $\eps \in (0,1)$.
	If {\BIS} does not admit an $\fpras$, there exists $h = O(1)$ such that when $\beta d= \omega(\log n)$
	there is no polynomial running time algorithm that solves the $\eps$-identity testing problem for
	the class $\MfRBM(n,d,\beta,h)$ of \emph{ferrromagnetic} RBMs with \emph{inconsistent} external fields.
\end{thm}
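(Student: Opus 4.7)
The plan is to reduce approximate counting for {\BIS} to the $\eps$-identity testing problem for $\MfRBM(n,d,\beta,h)$, using the known fact that approximating the partition function $Z_H$ of the ferromagnetic Ising model on a bipartite graph $H$ with an inconsistent external field is {\BIS}-equivalent. It thus suffices to build, from any bipartite graph $H$, a pair of RBMs $M, M^* \in \MfRBM(n,d,\beta,h)$ (with $M$ computable in polynomial time) such that the answer of the identity tester on $(M, M^*)$ lets us binary-search over guesses $Z_0$ and approximate $Z_H$.

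The key idea is to exploit the first-order phase transition of the mean-field ferromagnetic RBM on a large complete bipartite graph with inconsistent field. In the regime $\beta d = \omega(\log n)$, with $h = O(1)$ tuned near the coexistence line, the Gibbs measure concentrates exponentially on two macroscopically distinct pure phases, and the log-ratio of their weights is exponentially sensitive to additive perturbations of the energy. The hidden model $M^*$ is constructed by attaching, via a low-degree gadget that preserves bipartiteness and the maximum degree $d$, a scaled copy of $H$ to such a mean-field backbone. The gadget is engineered so that the log-ratio of the two phase weights of $\mu_{M^*}$ equals (up to a computable constant) $\log Z_H$.

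The reference model $M$ is obtained from $M^*$ by adding a small calibrating external field that shifts this log-ratio by exactly $-\log Z_0$, where $Z_0$ is a chosen threshold. Then $\mu_M = \mu_{M^*}$ iff $Z_H = Z_0$, while even a constant-factor gap between $Z_H$ and $Z_0$ drives the two phase weights to opposite extremes, giving $\TV{\mu_M}{\mu_{M^*}} \ge 1 - \eps$. Iterating the identity tester over a polynomial grid of $Z_0$ values therefore yields an $\fpras$ for $Z_H$, and hence for {\BIS}.

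The main obstacle is establishing the sharp two-phase decomposition for the constructed RBM in the $\beta d = \omega(\log n)$ regime: one needs (a) that the Gibbs measure restricted to either phase is well-concentrated, essentially captured by a rapidly-mixing local dynamics, and (b) that tunneling between the phases is exponentially small, so that an energy shift of polynomial magnitude really does drive one phase's mass to zero rather than to some intermediate value. These estimates must be preserved under the bipartite structure and under the perturbation induced by embedding $H$, while simultaneously respecting the degree bound $d$ and the constant field bound $h = O(1)$. Controlling the contour-type terms and the gadget's side effects on the backbone is the technical heart of the argument.
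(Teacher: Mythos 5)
Your high-level plan (reduce \#BIS via ferromagnetic Ising with inconsistent fields, use a phase-coexistence gadget, calibrate so the phase ratio encodes $\log Z_H$, then binary-search over thresholds) is aligned with the paper's framework, but there is a structural flaw in the placement of $H$ that breaks the reduction, and the phase-coexistence analysis you sketch is both heavier and unnecessary compared to what the paper actually does.

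The fatal gap is that you put the hard instance $H$ into the \emph{hidden} model $M^*$. In identity testing, the reduction must \emph{supply} the sample stream to the tester; so the reduction algorithm needs to sample (approximately) from $\mu_{M^*}$ in polynomial time. This is condition~(iii) of the paper's generic reduction, Theorem~\ref{thm:gen-red}. If $M^*$ contains a scaled copy of $H$, sampling from $\mu_{M^*}$ is essentially as hard as the problem you are trying to solve. The paper gets around this by doing the opposite: the hard instance $G$ sits inside the \emph{visible} model $F$ (which only needs to be constructible, not sampled), while the hidden model $F^*$ is obtained from $F$ by replacing $G$ with a highly symmetric graph (a complete graph $K_N$ on the same vertex set, with shifted edge weight $\beta_K = \hat\beta + 4\log 2$). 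Because $F^*$ is symmetric, its Gibbs measure has only $O(N)$ distinct configuration ``types'' once you track the spins of the two hub vertices and the count of spin-$1$ vertices in $K_N$, so Lemma~\ref{lem:K-sample} gives an exact polynomial-time sampler. Your plan needs an analogous swap to be salvageable.

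You also propose a genuine mean-field bipartite backbone with first-order phase coexistence, and worry (correctly, given that choice) about concentration within each phase and exponentially small tunneling, possibly via mixing-time arguments. The paper avoids all of that for the RBM case: the ``mean-field'' role is played by just two vertices $s_1,s_2$, connected to each vertex of $G$ by $N$ disjoint paths of length two, with $N^2$ extra pendant vertices carrying the inconsistent field $(h,0)$ and $(0,h)$. The majority/disordered dichotomy is simply $\sigma(s_1)=\sigma(s_2)$ versus $\sigma(s_1)\neq\sigma(s_2)$, and the relevant partition-function ratios $Z_F^{\mathrm D}/Z_F^{\mathrm M}$ are computed \emph{exactly} in closed form as functions of $\beta_1,\beta_2,h$ and $Z_G$ (Lemma~\ref{lem:Z-to-TV} and Claim~\ref{claim:Z-and-Z0-ferro-RBM}); the tuning parameter is $\beta_2$ with $h=\beta_2$, found by monotonicity/binary search, not an additive external-field perturbation of $M$. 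No dynamics, no tunneling, no metastability estimates are needed for the RBMs (those tools are used in the paper only for the Potts mean-field gadget, where $q\ge 3$ forces genuine first-order behavior). Finally, the degree bound and bipartiteness are not built into the backbone at all: they come afterward via the degree-reduction gadget of Theorem~\ref{thm:deg-red}, which replaces each vertex by a random near-regular bipartite expander; your sketch does not separate these two concerns and attempts to enforce $d$ and bipartiteness inside the phase-coexistence gadget, which makes the construction harder than it needs to be.
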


Given the efficient identity testing algorithm for ferromagnetic Ising models~\cite{DDK,JS:ising},
we may ask whether there
are other (ferromagnetic) models that allow efficient testing algorithms.
A prime candidate is the ferromagnetic Potts model.
Both the ferromagnetic Ising and Potts models have
a rich structure; for instance, their random-cluster representation~\cite{Grimmett}
enables sophisticated (and widely-used)
sampling algorithms such as the Swendsen-Wang algorithm~\cite{SW}.
However, while there are efficient samplers for the ferromagnetic Ising model for all graphs $G$ and all edge interactions $\beta$~\cite{JS:ising,worm,GuoJerrum}, the case of the ferromagnetic
Potts model (i.e., $q>2$ spins) looks less promising. In fact, it is unlikely that there is an efficient sampling/counting algorithm for general
ferromagnetic Potts models
since this is a known {\BIS}-hard problem~\cite{GJ,GSVY}; this is due
to a phenomena called \emph{phase co-existence}, which we will  also exploit;
see Section~\ref{subsubsec:mf}.
Given the weaker hardness of sampling and approximate counting for the ferromagnetic Potts model, the hardness of the identity problem was less clear.

We prove that identity testing for the ferromagnetic Potts model is in fact hard
in the same regime of parameters where sampling and structure learning are known to be hard.
Specifically, we observe that the structure learning algorithm from~\cite{KM} applies to the Potts model, and hence implies
a testing algorithm when $\beta d = O(\log{n})$; we establish lower bounds when $\beta d = \omega(\log{n})$
that hold even for the simpler case of models with no external field.

\begin{thm}
	\label{thm:Potts}
	Suppose $n$, $d$, $q \ge 3$ are positive integers such that $3 \le d \le n^\theta$ for constant $\theta \in (0, 1)$ and let $\eps \in (0,1)$.
	If {\BIS} does not admit an $\fpras$, then there is no polynomial running time algorithm that solves the $\eps$-identity testing problem for the class $\MfPotts(n,d,\beta,0)$ of ferromagnetic $q$-state Potts models without an external field. Moreover,
	our lower bound applies restricted to the class of ferromagnetic Potts models on bipartite graphs in $\MfPotts(n,d,\beta,0)$.
\end{thm}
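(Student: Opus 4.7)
The plan is to reduce the approximate counting of independent sets in bipartite graphs (BIS) to identity testing in $\MfPotts(n,d,\beta,0)$ in the regime $\beta d = \omega(\log n)$. The crucial structural input is the first-order phase transition of the ferromagnetic $q$-state Potts model with $q\geq 3$: on a random bipartite $d$-regular graph (or a suitable mean-field bipartite gadget), at the critical inverse temperature $\beta_c(d,q)$ the Gibbs distribution decomposes into $q$ symmetric ordered phases $\mu^{\text{ord}}_1,\dots,\mu^{\text{ord}}_q$ (each concentrated on near-monochromatic configurations of one color) together with a disordered phase $\mu^{\text{dis}}$ (color-balanced configurations). The ordered mixture $\mu^{\text{ord}}=\frac{1}{q}\sum_i \mu^{\text{ord}}_i$ and $\mu^{\text{dis}}$ are supported on essentially disjoint subsets of $[q]^V$, so $\TV{\mu^{\text{ord}}}{\mu^{\text{dis}}}\geq 1-\eps$, and approximating the relative weights of these two regimes is already known to be BIS-hard.

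First I would build a mean-field Potts gadget $G_0$ inside our class on which this phase co-existence is sharp and quantifiable, verifying in particular the $1-\eps$ total variation gap between the ordered and disordered mixtures. Next I would embed a BIS instance $H$ into a composite bipartite graph $G=G_0\cup G(H)$ with ferromagnetic Potts weights of magnitude $\beta$ so that the effective inverse temperature seen by the mean-field component is shifted above or below $\beta_c$ according to whether $|\mathcal{I}(H)|$ exceeds a guessed threshold $k$. Because $G_0$ and $G(H)$ are both bipartite, all interactions are ferromagnetic, and no external field is introduced, the resulting model remains in $\MfPotts(n,d,\beta,0)$ and the \emph{moreover} clause about bipartite graphs is automatic.

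The reduction itself sets, for a BIS instance $H$ and a guessed threshold $k$, the explicit model $M=M(H,k)$ to be the composite Potts instance above, and the hidden model $M^*$ to be a fixed Potts instance whose Gibbs distribution equals $\mu^{\text{ord}}$ on $G_0$. By the $\Z_q$ symmetry of the no-field ferromagnetic Potts model, $\mu_{M^*}$ can be sampled efficiently by drawing a color $i\in[q]$ uniformly at random and then running Glauber dynamics restricted to the $i$-th ordered phase, which mixes rapidly by standard metastability arguments. Feeding these samples to the hypothetical identity tester distinguishes $\mu_M=\mu_{M^*}$ from $\TV{\mu_M}{\mu_{M^*}}\geq 1-\eps$, which in turn decides whether $|\mathcal{I}(H)|>k$ or $|\mathcal{I}(H)|<k$; a binary search over $k$ with polynomially many tester calls then produces an $\fpras$ for $|\mathcal{I}(H)|$, contradicting the BIS-hardness assumption.

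The main obstacle is the embedding step: designing the attachment $G(H)$ so that the Potts partition function factors cleanly and the BIS quantity $|\mathcal{I}(H)|$ acts as a tilt on the ordered-versus-disordered balance of the mean-field gadget, while keeping all edge weights bounded by the prescribed $\beta$, maintaining maximum degree at most $d$, and preserving bipartiteness. In particular I expect the critical window around $\beta_c$ to be exponentially narrow in $n$, so care will be needed to ensure that a polynomial-precision tilt is both achievable with interactions of magnitude $\beta$ and sufficient to land $M$ firmly on one side of $\beta_c$. Verifying that the restricted within-phase sampler for $\mu_{M^*}$ produces samples whose total variation distance to the true Gibbs distribution is small enough to fool the tester across all its queries is the secondary technical hurdle.
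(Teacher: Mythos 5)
Your high-level strategy is the same as the paper's: exploit the first-order phase transition and phase co-existence of the mean-field $q$-state Potts model, tilt the balance between the ordered and disordered phases by the (BIS-hard) quantity to be approximated, place the hidden model in the ordered phase, and read off the answer from the tester's output. However, several of the concrete steps in your plan have genuine gaps.

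First, the claim that ``approximating the relative weights of these two regimes is already known to be BIS-hard'' is not accurate as stated. What is known (Goldberg--Jerrum; Galanis--\v{S}tefankovi\v{c}--Vigoda--Yang) is that approximating $Z_{G,\beta_G}$ for the ferromagnetic Potts model is \BIS-hard. The relative weights of the ordered and disordered phases on the \emph{gadget alone} can actually be computed exactly in poly time by enumerating signatures. The paper therefore does not try to encode independent sets directly; it embeds a hard Potts instance $G$ (with $Z_{G,\beta_G}$ to approximate) into the construction and uses a separate, general lemma (Theorem~\ref{thn:decision-cnt:potts}) to reduce approximate counting to a decision problem, which is what the tester actually solves. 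Your route -- building a gadget $G(H)$ that ties $|\mathcal I(H)|$ directly to the tilt while staying inside $\MfPotts(n,d,\beta,0)$ -- would require re-deriving the hardcore-to-ferromagnetic-Potts gadgets inside the testing instance, which is substantially harder than reducing from a known-hard counting instance.

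Second, and most seriously, your plan to sample the hidden model by ``drawing a color $i\in[q]$ uniformly at random and then running Glauber dynamics restricted to the $i$-th ordered phase'' does not close the gap. The hidden model $M^*$ must be a genuine Potts model in $\MfPotts(n,d,\beta,0)$, and the samples handed to the tester must be (close to) its Gibbs distribution $\mu_{M^*}$ -- not the ordered-phase mixture $\mu^{\mathrm{ord}}$, which is not the Gibbs distribution of any no-field Potts model. You would need both a proof of rapid mixing for restricted Glauber dynamics (no small task) \emph{and} a proof that the restricted chain's stationary distribution is within TV $1/\poly(n)$ of $\mu_{M^*}$. The paper avoids this entirely by making the hidden model $F^*$ fully symmetric: $G$ is replaced by a complete graph $K_N$ (with slightly inflated edge weight) joined to the mean-field $K_m$ by a complete bipartite graph; then a configuration's probability depends only on its $O(n^{2q})$ possible signatures, which can be enumerated exactly (Lemma~\ref{les}). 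The cost is that $\mu_{F^*}$ is only \emph{close} to, not equal to, the ordered-phase restriction, which must then be handled carefully in the TV-distance triangle inequalities.

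Third, you note the degree and bipartiteness constraints but do not supply a construction. The paper's mean-field gadget is the (non-bipartite) complete graph $K_m$; working with a bipartite mean-field gadget from the start would require re-deriving the metastability estimates of Lemmas~\ref{lem:binary-betaH-meta}--\ref{lem:binary-betaH-ratio} in a less standard setting. Instead, the paper first proves hardness for $\MfPotts(n,n,\beta,0)$ with constant $\beta$ and unbounded degree (Theorem~\ref{thm:Potts-general}), and then runs a separate degree-reduction step (Theorem~\ref{thm:deg-red}) using random near-regular bipartite expanders, which simultaneously produces a bipartite, bounded-degree instance with $\hat\beta d = \omega(\log\hat n)$. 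This separation is what makes the ``moreover'' clause about bipartite graphs come for free.
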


%
%

\subsection{Our techniques}
\label{subsect:intro:tech}

Our proof is a general approach that allows us to obtain hardness results for several models of interest. 
Specifically, we devise a novel methodology to reduce the problem of approximate counting (i.e., approximating partition functions) to identity testing.
For this we consider a decision version of approximate counting and 
prove that this variant is as hard as the standard approximation problem;
this first step of our reduction applies to many other models of interest (see Theorem~\ref{thn:decision-cnt:potts} and Section~\ref{app:app-cnt}).

In the second step of our reduction, given a hard counting instance, 
we 
use insights about the phase transition of the models
to construct a testing instance whose output allows us to solve the decision version of approximate counting.
The actual reduction is generic (see Theorem~\ref{thm:gen-red}), but the insights about each model are needed to build a suitable testing instance; this construction is the only part of our proof that is model specific, whereas every other step in the proof applies to more general spin systems.
Our approach is nicely illustrated in the context of the ferromagnetic 
Potts model; that is, in the proof of Theorem~\ref{thm:Potts} in Section~\ref{sec:Potts}.
There, we utilize the phase transition phenomenon in the associated mean-field Potts model which corresponds to the complete graph. In particular, there is a phase co-existence corresponding to a
first-order phase transition which we utilize to approximate the partition function 
of the input graph; see Section~\ref{sec:Potts}. 

In the third and final step of the reduction, we reduce the
maximum degree of the graph in the testing instance by using random bipartite graphs as gadgets, as has been done in seminal hardness results for approximate counting~\cite{Sly,SlySun}, and more recently in~\cite{colt} for the hardness of testing for the antiferromagnetic Ising model.
This step is also generic and applies to a large class of models; see Section~\ref{section:main-proof} and specifically Theorem~\ref{thm:deg-red}.
One interesting implication of our approach is that our gadget and reduction yields always bipartite graphs, and
hence we immediately get hardness results for bipartite graphs for all of the models studied in this paper.

We pause to briefly contrast the above proof approach with that in~\cite{colt}, where it was established hardness of identity testing for the antiferromagnetic Ising model. As mentioned earlier, in the antiferromagnetic Ising model, the configurations with the highest weight or likelihood (i.e., the ground states) correspond to the maximum cuts of the original graph. Hence, it is natural to prove hardness of identity testing for the antiferromagnetic Ising model using a reduction from the maximum cut problem. The ground states of ferromagnetic systems, on the other hand, correspond to the monochromatic configurations, so there is no hard optimization problem in the background to utilize in the reduction. 
(The similar obstacle for RBMs is that the maximum cut problem is trivial in bipartite graphs, so we cannot hope to use it to prove hardness.)
We  use the hardness of approximating the partition function instead, and consequently our
reduction is of a completely different flavor (than~\cite{colt}); we utilize 
the unique nature of the phase transition in these models in an essential way. 

To reduce the degree of the graphs in our construction we do utilize insights and certain technical lemmas from~\cite{colt}. Specifically, those concerning the expansion of random near-regular bipartite graphs.
We note that the models we consider on these random graphs are different than those in~\cite{colt}; in particular, we consider mixed models and allowed external fields, whereas in ~\cite{colt} these gadgets are purely antiferromagnetic and there is no external field.

We present our proof approach in the context of the ferromagnetic Potts model first, specifically in Section~\ref{sec:Potts}
we prove Theorem~\ref{thm:Potts}.  The proofs for RBMs, namely Theorems~\ref{thm:main-RBM-mixed}
and~\ref{thm:main-RBM-ferro} which follow the same approach, are provided in Sections~\ref{RBM-nofield} and \ref{RBM-field}, respectively.

\section{Testing ferromagnetic Potts models}
\label{sec:Potts}

In this section we prove Theorem~\ref{thm:Potts}, our lower bound for identity testing for the ferromagnetic Potts model.
To prove this theorem, we introduce a new methodology
to reduce approximate counting (i.e., the problem of finding an $\fpras$ for the partition function of a model), to identity testing.
We later use this framework to establish our lower bounds for
identity testing for RBMs (i.e., Theorems~\ref{thm:main-RBM-mixed} and~\ref{thm:main-RBM-ferro}); we believe our methods could be used
to establish the hardness of identity testing for other spin systems.

We introduce some useful notation next. Recall that in the introduction we define the families of models $\MRBM$, $\MfRBM$, $\MPotts$ and $\MfPotts$. We formalize and extend this notation as follows.

\begin{defn} 
	\label{dfn:potts:notation}
		For integers $n,d \ge 3$ and $\beta, h \in \R$,
		let $\MPotts(n,d,\beta,h,q)$ denote the family of $q$-state Potts models on $n$-vertex graphs $G=(V_G,E_G)$ of maximum degree at most $d$ with edge interactions and external field given by $\beta_G: E_G \rightarrow \R$ and $h_G: V_G \times [q] \rightarrow \R$, respectively, such that:
		\begin{enumerate}[(i)]
			\item for every edge $\{u,v\}\in E_G$, $|\beta_G(\{u,v\})|\le\beta$; and
			\item for every vertex $v\in V_G$ and spin $i\in [q]$, $|h_G(v,i)|\le h$.
		\end{enumerate}
\end{defn}

\begin{rmk}
	We omit $q$ from the notation above as it is usually clear from context. For the special case of $q=2$, 
	i.e., the Ising model, we use $\MIsing$; when $q=2$ and the underlying graph is bipartite we use $\MRBM$.
	In addition, we add ``$+$'' or ``$-$'' as a superscript to the notation to denote the corresponding ferromagnetic or antiferromagnetic
	subfamilies; e.g., $\MfPotts(n,d,\beta,h)$ denotes the subset of ferromagnetic Potts models in $\MPotts(n,d,\beta,h)$. Finally, we add a circumflex, e.g., $\hatMfPotts(n,d,\beta,h)$, for the subfamily of models where every edge weight is \emph{exactly} equal to $\beta$.
\end{rmk}

\subsection{Step 1: Decision version of approximate counting}\label{Potts-decision}

Our starting point is always a known hard approximate counting instance.
For the ferromagnetic Potts model,
we consider the problem of approximating its partition function on a graph $G$. As mentioned in the introduction,
this problem is known to be {\BIS}-hard, even under the additional assumptions that all edges have the same interaction parameter $0 < \beta_G = \Theta(1)$ and that there is no external field (i.e., $h=0$)~\cite{GJ,GSVY}.
Our goal is to design an $\fpras$ for the partition function $Z_{G,\beta_G} := Z_{G,\beta_G,0}$ using a polynomial-time algorithm for identity testing, 
thus establishing the {\BIS}-hardness of this problem.

Our first step is to reduce the problem of approximating $Z_{G,\beta_G}$ to a natural decision variant of the problem. This decision version will be more naturally solved by the testing algorithm and is more generally defined as follows:

%
%


\begin{center}
	\setlength{\fboxsep}{5pt}
	\noindent\fbox{
		\parbox{0.88\textwidth}{
			\begin{restatable}[Decision $r$-approximate counting]{defn}{defDecisionCnt}
				\label{def:decision-cnt}
			Given a Potts model ($G$,$\beta_G$,$h_G$),
			an approximation ratio $r>1$ and an input $\hat{Z} \in \R$, distinguish with probability at least $5/8$ between the following two cases:
			\[
			\text{(i)}~ Z_{G,\beta_G,h_G} \le \frac{1}{r} \hat{Z}
			\quad\quad \text{(ii)}~
			Z_{G,\beta_G,h_G} \ge r \hat{Z}
			\]
			\end{restatable}
		}
	}
\end{center}

We show that the decision version of approximate counting is as hard as the standard problem of approximating $Z_{G,\beta_G,h_G}$. 

\begin{thm}
	\label{thn:decision-cnt:potts}
	Let $n,d \ge 1$ be integers and let $\beta,h\ge0$ be real numbers.
	Suppose that there is no $\fpras$ for the counting problem for a family of Potts models 
	$\mathcal M$, where 
	$$
	\mathcal M \in \{\hatMfPotts(n,d,\beta,h),\hatMaIsing(n,d,\beta,h),\hatMfIsing(n,d,{\beta},{h})\}.
	$$
	Then, for any $c>0$ there is no polynomial-time algorithm for the decision version of $n^c$-approximate counting for $\mathcal M$.
\end{thm}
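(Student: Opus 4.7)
My plan is to prove the contrapositive: assuming a polynomial-time algorithm $\mathcal{A}$ that solves the decision version of $n^c$-approximate counting for $\mathcal{M}$, I will construct an $\fpras$ for $Z_{G,\beta_G,h_G}$ whenever $(G,\beta_G,h_G)\in\mathcal{M}$, contradicting the hypothesis. The argument is model-agnostic and applies uniformly to all three families listed. First, since $\mathcal{A}$ succeeds with probability at least $5/8>1/2$, parallel repetition with majority voting boosts this to $1-\delta'$ for any inverse-polynomial $\delta'$ in polynomial time, yielding an amplified oracle $\mathcal{A}'$; by a union bound, all polynomially many queries I make to $\mathcal{A}'$ can be assumed simultaneously correct with high probability.

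The next step uses $\mathcal{A}'$ to produce a crude multiplicative approximation of $Z_G$. The trivial bounds $q^{n}e^{-\beta|E_G|-hn}\le Z_G\le q^{n}e^{\beta|E_G|+hn}$ confine $\log Z_G$ to a window of polynomial width. I set $r=n^c$ and take a geometric sequence of candidates $\hat Z_0<\hat Z_1<\cdots<\hat Z_L$ with $\hat Z_{i+1}=r^2\hat Z_i$ spanning this window, so $L$ is polynomial in the input size. I query $\mathcal{A}'$ at every $\hat Z_i$ and let $i^\ast$ be the smallest index where $\mathcal{A}'$ outputs case (i) (``$Z\le\hat Z_i/r$''). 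Output (i) is forbidden whenever $Z_G\ge r\hat Z_{i^\ast}$, and the previous index not outputting (i) forbids $Z_G\le \hat Z_{i^\ast-1}/r=\hat Z_{i^\ast}/r^3$. Hence $Z_G\in(\hat Z_{i^\ast}/r^3,\,r\hat Z_{i^\ast})$, a factor-$r^4$ window; equivalently $\hat W:=\hat Z_{i^\ast}/r$ is a crude $r^2=n^{2c}$-approximation of $Z_G$. The $r^2$-spacing ensures that the (i)-forced and (ii)-forced regions of adjacent candidates are disjoint, which is precisely what makes these one-sided bounds valid despite the oracle's arbitrary behaviour in the middle region.

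To boost accuracy to any desired $\eps>0$, I use the disjoint-copies trick. Set $k=\lceil C(c/\eps)\log(n/\eps)\rceil$ for a sufficiently large absolute constant $C$, and form $G^{(k)}$, the disjoint union of $k$ copies of $G$ with the inherited edge interactions and external field. Disjoint union preserves maximum degree, edge-weight magnitudes, external-field magnitudes, and uniformity of $\beta$, so $G^{(k)}$ lies in $\hatMfPotts$, $\hatMaIsing$, or $\hatMfIsing$ whenever $G$ does; the number of vertices becomes $N=nk$ and $Z_{G^{(k)}}=Z_G^k$ by independence. Applying the crude-approximation procedure to $G^{(k)}$ (now with oracle ratio $N^c$) yields $\hat W$ satisfying $Z_G^k/N^{2c}\le \hat W\le N^{2c}Z_G^k$; taking $k$-th roots, $\hat W^{1/k}$ approximates $Z_G$ within a factor of $N^{2c/k}=(nk)^{2c/k}\le 1+\eps$ by our choice of $k$. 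The total running time is polynomial in $n$, $1/\eps$ and $\log(1/\delta)$, so this is an $\fpras$ for $\mathcal{M}$, the desired contradiction.

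The argument is largely mechanical once the disjoint-copies trick is in place. The one real subtlety is that the oracle's answer is unconstrained in the indistinguishability region $(Z/r,rZ)$, so its output along the candidate sequence need not be monotone in $i$; I therefore rely only on the \emph{forced} one-sided implications of its outputs and use $r^2$-spacing, rather than attempting a naive binary search that would require a clean transition. A secondary point is simply to verify that the disjoint-copies construction stays in the correct family ($\hatMfPotts$, $\hatMaIsing$, or $\hatMfIsing$), which holds by construction since every copy inherits the same $\beta$ and $h$ as the original instance.
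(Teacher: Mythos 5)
Your proof is correct and follows essentially the same two-stage strategy as the paper: amplify the decision oracle, obtain a crude polynomial-factor multiplicative approximation of $Z_G$, and then amplify accuracy via the disjoint-copies power trick $Z_{G^{(k)}}=Z_G^k$. The only difference is in the crude-approximation step, where the paper runs an adaptive binary search (halving $\log(u_i/\ell_i)$ each round, maintaining the invariant $\frac{1}{r}\ell_i<Z_G<ru_i$) while you do a non-adaptive linear sweep over a geometric grid with $r^2$ spacing; both are polynomial-time, so this is an immaterial implementation choice. One small clarification: your concern that binary search ``would require a clean transition'' is unfounded — the paper's binary search uses exactly the same forced one-sided implications you identify (output (i) rules out $Z_G\ge r\hat Z$, output (ii) rules out $Z_G\le\hat Z/r$), and the invariant it maintains is robust to arbitrary oracle behavior in the indistinguishability gap, so binary search works here too and is marginally more query-efficient.
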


Our proof of this theorem is provided in Section~\ref{app:app-cnt}.

\subsection{Step 2: Testing instance construction}\label{Potts-proof-general}

We first construct a hard instance for the identity testing problem for the ferromagnetic Potts model on general graphs, with no restriction on the maximum degree and with a constant upper bound on the edge interactions.
We prove first that identity testing is {\BIS}-hard in this setting.

\begin{thm}
\label{thm:Potts-general}
	Consider a ferromagnetic Potts model with no external field ($h=0$) where the interaction on every edge
	is ferromagnetic and bounded from above by a constant $\beta_0 > 0$.
	Then, there is no polynomial-time
	identity testing algorithm for the model unless there is an $\fpras$ for \#BIS.
\end{thm}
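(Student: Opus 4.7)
The plan is to reduce the \#BIS-hard approximate counting problem for ferromagnetic Potts models (with bounded edge weights and no external field) to identity testing. By Theorem~\ref{thn:decision-cnt:potts}, it suffices to design a polynomial-time reduction from the decision version of $n^c$-approximate counting to identity testing for $\MfPotts$. So given a ferromagnetic Potts instance $(G,\beta_G,0)$ and a threshold $\hat Z$, I want to build an identity testing instance $(M,M^*)$ whose tester output reveals whether $Z_{G,\beta_G}\le \hat Z/n^c$ or $Z_{G,\beta_G}\ge n^c\hat Z$. Crucially, the tester needs sample access to $\mu_{M^*}$, so $M^*$ must also be built to be efficiently sampleable by a direct procedure (e.g.\ as a mixture of tractable pure phases drawn with the correct mixing weights).

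The gadget I plan to use is the mean-field $q$-state ferromagnetic Potts model on a complete graph $K_N$. For $q\ge 3$ this model undergoes a first-order phase transition at some $\beta_c=\beta_c(q)$; at $\beta_c$ the Gibbs measure is, up to an $e^{-\Omega(N)}$ correction, a convex combination of a disordered phase $\mu^{\mathrm{dis}}$ (balanced spin proportions) and $q$ symmetric ordered phases $\mu^{\mathrm{ord}}_1,\dots,\mu^{\mathrm{ord}}_q$, any two of which are $(1-o(1))$-far in total variation. A perturbation $\beta\mapsto \beta_c+\delta$ of size $\delta=\Theta(1/N)$ exponentially rebalances the weight between the ordered and disordered phases, so varying $\delta$ over an interval of length $\poly(\log n)/N$ sweeps this balance across a polynomially wide range while keeping every edge weight $O(1/N)\le \beta_0$. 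I would then construct $M^*$ by attaching $G$ to $K_N$ through a planted ferromagnetic coupling whose net effect, after summing over configurations of $G$, is an effective shift of the mean-field inverse temperature by an amount linear in $\log Z_{G,\beta_G}-\log\hat Z$. Since Theorem~\ref{thm:Potts-general} forbids any external field, this shift has to be engineered purely through edges: I plan to introduce tiny couplings (of order $O(1/N^2)$) between each vertex of $K_N$ and an auxiliary structure built from copies of $G$, chosen so that the marginal on $K_N$ sees the desired effective temperature perturbation.

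Having built such $M^*$, I then invoke the generic reduction of Theorem~\ref{thm:gen-red}. Let $M$ be an explicit ferromagnetic Potts model on the same vertex set whose Gibbs measure approximates $\mu^{\mathrm{ord}}$ on $K_N$, obtained by a very mild symmetry-breaking perturbation (again implemented as edge weights, not as a field) that always forces the ordered regime. Then in case (ii) of the decision problem $M^*$ also lies in the ordered regime, so $\mu_M$ and $\mu_{M^*}$ are $\eps$-close in TV, while in case (i) one has $\mu_{M^*}\approx \mu^{\mathrm{dis}}$ and so $\|\mu_M-\mu_{M^*}\|_{\textsc{tv}}\ge 1-\eps$. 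A hypothetical polynomial-time tester on $(M,M^*)$ would therefore solve decision $n^c$-approximate counting for $\hatMfPotts(n,d,\beta_0,0)$, contradicting Theorem~\ref{thn:decision-cnt:potts} and hence the \#BIS-hardness of the counting problem. The main obstacle I anticipate is the quantitative control of the mean-field gadget: showing that the pure-phase decomposition at (and just off) $\beta_c$ is sharp enough that a $\poly(n)$-sized shift in the effective $\beta$ provably distinguishes cases (i) and (ii), and engineering the couplings between $G$ and $K_N$ to produce exactly this shift while (a) respecting the constant edge-weight bound $\beta_0$, (b) using no external field, and (c) leaving $M^*$ directly sampleable by drawing a pure phase with the correct weight and then sampling within that phase.
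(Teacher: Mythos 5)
You have the right ingredients---the mean-field $q$-state Potts gadget at the first-order transition, the metastability that lets you tune the ordered/disordered balance by a $\Theta(m^{-3/2})$ perturbation of $\beta_H$, reduction to the decision version of approximate counting via Theorem~\ref{thn:decision-cnt:potts}, and the generic reduction via Theorem~\ref{thm:gen-red}. But you have the roles of the visible and hidden models reversed, and this is fatal. You propose to embed $G$ into the \emph{hidden} model $M^*$ and engineer couplings so that the phase of $M^*$ encodes the sign of $\log Z_G - \log\hat Z$. The reduction, however, must \emph{itself generate} the $L$ samples from $\mu_{M^*}$ that are handed to the tester; nothing gives you those samples for free. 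If $M^*$ contains $G$ with its hard ferromagnetic Potts interaction, then sampling from $\mu_{M^*}$ is exactly the \#BIS-hard problem you started with, and the last sentence of your proposal (``drawing a pure phase with the correct weight'') is precisely what you cannot do, because ``the correct weight'' of each phase depends on $Z_G$.

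The paper sidesteps this by putting $G$ in the \emph{visible} model $F$ (visible models are given to the tester as a description, not sampled) and replacing $G$ by a dense, highly-symmetric surrogate (a complete graph $K_N$ with boosted edge weight $\beta_K = \beta_G + 4\log q$) in the hidden model $F^*$, so that $F^*$ has only $\poly(n)$ many configuration ``types'' and is exactly sampleable (Lemma~\ref{les}). The tester then effectively compares ``does $F$ behave like the always-ordered $F^*$?'': if $Z_G$ is small, $F$ and $F^*$ are both dominated by the majority phase on $H$ and are close in total variation; if $Z_G$ is large, the disordered phase on $H$ dominates in $F$ (since $Z^{\mathrm{D}}_F$ carries a factor $Z_G$), so $F$ and $F^*$ are $(1-\eps)$-far. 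Relatedly, you gesture at ``tiny $O(1/N^2)$ couplings between $K_N$ and auxiliary copies of $G$'' producing an effective temperature shift, but the mechanism that actually makes the analysis go through is different and simpler: a complete bipartite graph between $G$ and $H$ with one carefully chosen weight $\beta$, so that $Z^{\mathrm{M}}_F$ and $Z^{\mathrm{D}}_F$ factor cleanly into a mean-field part times either a monochromatic weight or $Z_G$ (Lemma~\ref{lem:M-D-S}). I'd recommend reworking the proposal with $G$ in the explicit model and a $G$-free, enumerable structure in the hidden model.
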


To establish this theorem, we construct an identity testing instance that allows us to solve the decision variant of approximate counting (see Definition~\ref{def:decision-cnt}).
We note that this theorem does not immediately imply Theorem~\ref{thm:Potts} from the introduction because
we allow the degree to be unbounded; specifically, Theorem~\ref{thm:Potts-general}
establishes hardness for $\MfPotts(n,n,\beta,0)$.
The next step of the proof uses this result
and a degree-reducing gadget to
establish Theorem~\ref{thm:Potts} (see Section~\ref{Potts-degree}).
Our main gadget in the proof of Theorem~\ref{thm:Potts-general} will be a complete graph $H$ on $m$ vertices; this is known as the \emph{mean-field} case in statistical physics.
%
%

\subsubsection{The ferromagnetic mean-field $q$-state Potts model}
\label{subsubsec:mf}

Let $H = K_m$ be a complete graph on $m$ vertices and let $\beta_H$ be
the interaction strength on the edges of $H$.
By symmetry, 
the $q$-state Potts configurations on a complete graph can be described by their
``signature''---by ``signature'' we mean the vector $(\sigma_1,\dots,\sigma_q) \in \Z^q$ where $\sigma_i \ge 0$ is
the number of vertices that have spin $i$; note that $\sum_{i=1}^q \sigma_i = m$.

In the complete graph, the ferromagnetic Potts model is known to undergo an ``order-disorder'' phase transition. 
Specifically, 
there exists a critical value $\beta_H = \mathfrak{B}_o/m$
such that when $\beta_H < \mathfrak{B}_o/m$, long-range
correlations do not exist; the system is then said to be in a ``disordered'' state as the typical configurations
have signature $\approx (m/q,\dots,m/q)$ where each spin has roughly the same density (up to lower order terms).
In contrast, when $\beta_H >  \mathfrak{B}_o/m$, typical configurations have a dominant spin
and the remaining spins are uniformly distributed. These configurations are thus referred to as
``majority'' configurations.   More precisely there exists a constant $\alpha=\alpha(\beta_H)>1/q$ and, with
high probability,
configurations from the Gibbs distribution have signature 
$\approx \left(\alpha m,\frac{(1-\alpha)m}{q-1},\dots,\frac{(1-\alpha)m}{q-1}\right)$ up to permutations and lower order terms.

When $q \ge 3$, the phase transition is known to be of \emph{first-order}, which means that at the critical point $\beta_H = \mathfrak{B}_o/m$ both disordered and majority configurations occur
with constant probability.
This phenomena is referred to as \emph{phase co-existence}, and it
 is known (or conjectured) to be present in a variety of graphs, being
the root reason for the hardness of sampling and counting for the ferromagnetic Potts model.
In contrast, in the Ising model (i.e., when $q=2$),
there is a \emph{second-order phase transition} and the majority density $\alpha(\mathfrak{B}_o)$ is $1/q$ at the critical point; hence these two phases -- disordered and majority -- coincide at this point.

 We now formalize the notion of the majority phase $M$, the disordered phase $D$,
 and the remaining configurations $S$ with their corresponding partition functions $Z_H^{\MM}$, $Z^{\DD}_H$, and $Z_H^{\SS}$.
 The majority phase is defined with respect to a fixed constant $\hat{\alpha} = \hat{\alpha}(\mathfrak{B}_o)$ which is the density of the dominant color
at the phase coexistence point $\mathfrak{B}_o/m$.
Let $\Omega_H$ denote the set of Potts configurations on $H$ and
for $\sigma \in \Omega_H$, let $(\sigma_1,\dots,\sigma_q) \in \Z^q$ denote its signature. 
Consider the following sets:
$$
M:=\Bigg\{\sigma \in \Omega_H\,\big|\, ~\exists j \in [q]:~|\sigma_j -\hat{\alpha} m|\leq m^{3/4} \mbox{ and } \left|\sigma_i-\frac{1-\hat{\alpha}}{q-1}m\right|\leq m^{3/4}\ \mbox{for}\ i\in [q]\setminus \{j\} \Bigg\},
$$
$$
D:= \Bigg\{\sigma \in \Omega_H\,\big|\, ~\forall i\in [q]:~|\sigma_i- m/q|\leq m^{3/4}\Bigg\},
$$
and
$
S:= \Omega_H \setminus (M\cup D).
$

For a configuration $\sigma$ on the complete graph $H = (E_H,V_H)$, let
$$w_H^\sigma(\beta_H) = \exp\left( \sum_{\{u,v\}\in E(H)} \beta_H\1(\sigma(u)=\sigma(v)) \right)$$
 denote the weight of $\sigma$ in the mean-field model $(H,\beta_H)$.
Consider the contributions of each type of configuration to the partition function. That is,
\begin{equation*}\label{jjj}
Z_H^{\MM}(\beta_H):= \sum_{\sigma\in M} w_H^\sigma(\beta_H),
\quad\quad
Z_H^{\DD}(\beta_H) := \sum_{\sigma\in D} w_H^\sigma(\beta_H),
\quad\quad
Z_H^{\SS}(\beta_H) := \sum_{\sigma\in S} w_H^\sigma(\beta_H).
\end{equation*}
Hence, the partition function of $(H,\beta_H)$ is given by $Z_H(\beta_H) = Z_H^{\MM}(\beta_H) + Z_H^{\DD}(\beta_H) + Z_H^{\SS}(\beta_H)$.
We note that 
in our reduction later, we will choose a specific $\beta_H>0$ depending on the instance of the approximate counting problem and the parameters of the identity testing algorithm; hence,
to emphasize the effect of $\beta_H$, we parameterize $Z_H^{\MM}$ (and other functions in this section) in terms of $\beta_H$. 

The following two lemmas detail the relevant behavior of the mean-field Potts model at and around the critical point $\mathfrak{B}_o/m$.
We note that as a consequence of the first-order phase transition, there is a critical window around $\mathfrak{B}_o/m$ where the non-dominant phase (i.e., disorder or majority) is still much more likely than any other type configurations; this phenomena is known as \emph{metastability} and will also be crucial for us. 

First we establish that in the critical window around $\mathfrak{B}_o/m$ the majority $M$ and disordered $D$ configurations are exponentially more likely than the remaining configurations $S$.  Several variants of this result have been proved in some fashion before, e.g.,~\cite{BGJ,LL,GJ,CDLLPS,GLP,GSVmf,BSmf}; however, the precise bound we require in our proofs does not seem to be available in the literature.
 
 \begin{lemma}\label{lem:binary-betaH-meta}
 	There exists constants $c,c'>0$ such that for any $\beta_H$ satisfying $|\beta_H-\mathfrak{B}_o/m|\leq c' m^{-3/2}$ we have
 	\begin{equation*}
 	Z_H^{\SS}(\beta_H) \leq \min\{Z_H^{\MM}(\beta_H),Z_H^{\DD}(\beta_H)\}\exp(-c \sqrt{m}).
 	\end{equation*}
 \end{lemma}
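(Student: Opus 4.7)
My plan is to rewrite each of $Z_H^{\MM},Z_H^{\DD},Z_H^{\SS}$ as a sum over signatures and then run a Laplace/saddle-point analysis on the probability simplex. Grouping configurations by their signature $\vec{\sigma}=(\sigma_1,\ldots,\sigma_q)$ and using $\sum_i\binom{\sigma_i}{2}=\tfrac{1}{2}(\sum_i\sigma_i^2-m)$, Stirling's approximation yields, with $\vec{\alpha}=\vec{\sigma}/m$ and $B:=\beta_H m$,
\[
\binom{m}{\sigma_1,\ldots,\sigma_q}\exp\!\Big(\beta_H\sum_i\binom{\sigma_i}{2}\Big) = m^{-O(1)}\,e^{m\,\Phi_B(\vec{\alpha})},\qquad \Phi_B(\vec{\alpha}) := -\sum_i\alpha_i\log\alpha_i+\tfrac{B}{2}\sum_i\alpha_i^2.
\]
Inside the window $|B-\Bo|\le c'\,m^{-1/2}$, so $\Phi_B$ and $\Phi_{\Bo}$ agree uniformly up to an additive $O(c'/\sqrt{m})$.

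Next I invoke the classical variational picture of the mean-field ferromagnetic Potts model (as in the references listed in Section~\ref{subsubsec:mf}): at $B=\Bo$, the function $\Phi_{\Bo}$ attains its global maximum on the simplex at exactly $q+1$ points---the uniform vector $\vec{u}=(1/q,\ldots,1/q)$ and the $q$ permutations $\vec{v}^{(j)}$ of $(\hat\alpha,\tfrac{1-\hat\alpha}{q-1},\ldots,\tfrac{1-\hat\alpha}{q-1})$---each with strictly negative-definite Hessian on the tangent space of the simplex, and every other critical point of $\Phi_{\Bo}$ lies strictly below. From this I will derive two quantitative ingredients. (i) By Taylor's theorem there exist $m$-independent constants $\rho,\lambda>0$ such that for every $\vec{w}\in\{\vec{u},\vec{v}^{(1)},\ldots,\vec{v}^{(q)}\}$,
\[
\Phi_{\Bo}(\vec{\alpha}) \le \Phi_{\Bo}(\vec{w}) - \lambda\,\infnorm{\vec{\alpha}-\vec{w}}^2\quad\text{whenever}\quad\infnorm{\vec{\alpha}-\vec{w}}\le\rho.
\]
(ii) By compactness there is $\delta>0$ such that $\Phi_{\Bo}(\vec{\alpha})\le \Phi_{\Bo}(\vec{u})-\delta$ whenever $\vec{\alpha}$ is at $\ell_\infty$ distance $>\rho$ from every maximizer. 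Both bounds transfer to $\Phi_B$ with only a uniform $O(c'/\sqrt{m})$ loss.

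Combining: for any $\vec{\sigma}\in S$, by the definition of $M,D$ one has $\infnorm{\vec{\alpha}-\vec{u}}>m^{-1/4}$ and $\infnorm{\vec{\alpha}-\vec{v}^{(j)}}>m^{-1/4}$ for all $j$; applying (i) with $\infnorm{\vec{\alpha}-\vec{w}}>m^{-1/4}$ gives a pointwise gap $\Phi_B(\vec{\alpha})\le \max_{\vec{w}}\Phi_B(\vec{w})-\tfrac{\lambda}{2}m^{-1/2}$ near a maximizer, and (ii) gives an even larger constant gap far from all maximizers. Summing over the $O(m^{q-1})$ signatures in $S$ produces
\[
Z_H^{\SS}(\beta_H)\le m^{O(1)}\exp\!\Big(m\max\{\Phi_B(\vec{u}),\Phi_B(\vec{v}^{(1)})\} - \tfrac{\lambda}{2}\sqrt{m}\Big),
\]
while a single signature nearest to $m\vec{u}$ (resp.\ $m\vec{v}^{(1)}$) certifies $Z_H^{\DD}\ge m^{-O(1)}e^{m\Phi_B(\vec{u})}$ and $Z_H^{\MM}\ge m^{-O(1)}e^{m\Phi_B(\vec{v}^{(1)})}$. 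Since $m\max\Phi_B-m\Phi_B(\vec{u})$ and $m\max\Phi_B-m\Phi_B(\vec{v}^{(1)})$ are both $O(c'\sqrt{m})$, choosing $c'$ small enough relative to $\lambda$ and absorbing polynomial prefactors into the exponential yields the claimed $\min\{Z_H^{\MM},Z_H^{\DD}\}\,e^{-c\sqrt{m}}$ bound for some $c>0$.

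The main obstacle is the uniformity across the window: as $B$ varies over $\Bo\pm O(1/\sqrt{m})$, the actual maximizers of $\Phi_B$ shift from $\vec{u},\vec{v}^{(j)}$ and the near-maximum values $\Phi_B(\vec{u}),\Phi_B(\vec{v}^{(j)})$ are no longer equal to each other. A short perturbative/implicit-function argument at the non-degenerate Hessians at $B=\Bo$ shows that the shift is $O(1/\sqrt{m})$ in both location and value (so that the true maximizers remain well inside the $m^{3/4}$-boxes defining $M$ and $D$); this is precisely why the width $c'm^{-3/2}$ of the window appears in the hypothesis and why $c'$ ultimately has to be chosen small compared to $\lambda$.
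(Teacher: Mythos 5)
Your proposal is correct and follows essentially the same route as the paper's proof: rewrite each $Z_H^{\cdot}$ in terms of the variational function $\Phi_\beta(\alpha)=H(\alpha)+\tfrac{\beta}{2}\norm{\alpha}_2^2$, transfer to $\Phi_{\Bo}$ via the Lipschitz bound $|\Phi_{\beta_1}-\Phi_{\beta_2}|\le\tfrac12|\beta_1-\beta_2|$ on the simplex, use the $q+1$ nondegenerate global maxima of $\Phi_{\Bo}$ with strong concavity near them plus a compactness gap far from them, and compare the resulting exponents after accounting for the $m^{O(1)}$ prefactors. Your closing worry about the maximizers of $\Phi_B$ shifting is, as you effectively note, already neutralized by this reduction — one never needs to locate the maxima of $\Phi_B$, only of $\Phi_{\Bo}$, and the lower bounds on $Z_H^{\MM},Z_H^{\DD}$ come from single signatures inside the fixed boxes — which is exactly how the paper handles it (their choice $c'=c/2$ is your ``$c'$ small relative to $\lambda$'').
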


In addition, we show that we can find  in $\poly(m)$ time a value for the parameter $\beta_H$ in the critical window to achieve a specified ratio $R$ of the majority partition function $Z^\mathrm{M}_H(\beta_H)$ to the disordered partition function $Z^\mathrm{D}_H(\beta_H)$.

\begin{lemma}\label{lem:binary-betaH-ratio}
There exist constants $c,c'>0$ such that for any $R\in [{\e}^{-c\sqrt{m}},{\e}^{c\sqrt{m}}]$ and any constant $\delta\in(0,1)$, 
we can efficiently find $\beta_H>0$ in $\poly(m)$ time such that $|\beta_H-\mathfrak{B}_o/m|\leq c' m^{-3/2}$ and 
\begin{equation}\label{rrr}
(1-\delta)R \le \frac{Z_H^{\MM}(\beta_H)}{Z_H^{\DD}(\beta_H)} \le R.
\end{equation}
\end{lemma}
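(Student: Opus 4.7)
The plan is to combine three ingredients: (i) exact polynomial-time computation of $Z_H^{\MM}(\beta_H)$ and $Z_H^{\DD}(\beta_H)$ by direct enumeration over signatures; (ii) a derivative bound showing that $\log\bigl(Z_H^{\MM}/Z_H^{\DD}\bigr)$ is smooth and monotone in $\beta_H$ at rate $\Theta(m^2)$; and (iii) control on the log-ratio at the coexistence point $\mathfrak{B}_o/m$. Together with the intermediate value theorem and a binary search, these will locate the desired $\beta_H$.

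The first ingredient is essentially immediate: because $H=K_m$, every Potts configuration is determined by its signature $(\sigma_1,\ldots,\sigma_q)$, and both $Z_H^{\MM}(\beta_H)$ and $Z_H^{\DD}(\beta_H)$ are sums of $O(m^{q-1})$ terms of the form $\binom{m}{\sigma_1,\ldots,\sigma_q}\exp(\beta_H\, e(\sigma))$, where $e(\sigma)=\sum_i\binom{\sigma_i}{2}$; for fixed $q$ this can be evaluated to arbitrary polynomial precision in $\poly(m)$ time.

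For the second ingredient, differentiating in $\beta_H$ yields
\begin{equation*}
\frac{d}{d\beta_H}\log\frac{Z_H^{\MM}(\beta_H)}{Z_H^{\DD}(\beta_H)} \;=\; \Exp_M[e(\sigma)] - \Exp_D[e(\sigma)],
\end{equation*}
where the expectations are under the Gibbs distribution restricted to $M$ and $D$ respectively. For any $\sigma\in M$ the signature lies within $m^{3/4}$ of $\bigl(\hat\alpha m,\tfrac{(1-\hat\alpha)m}{q-1},\ldots,\tfrac{(1-\hat\alpha)m}{q-1}\bigr)$ up to permutation, which gives $e(\sigma)=\tfrac{m^2}{2}\bigl[\hat\alpha^2+\tfrac{(1-\hat\alpha)^2}{q-1}\bigr]+O(m^{7/4})$; for $\sigma\in D$ similarly $e(\sigma)=\tfrac{m^2}{2q}+O(m^{7/4})$. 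Because the first-order transition for $q\ge 3$ satisfies $\hat\alpha>1/q$, strict convexity of $x\mapsto x^2$ yields $\hat\alpha^2+\tfrac{(1-\hat\alpha)^2}{q-1}>1/q$, so the derivative is $\Theta(m^2)$ uniformly across the critical window of width $2c' m^{-3/2}$.

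For the third ingredient, phase coexistence at $\mathfrak{B}_o/m$ forces both phases to carry constant probability in the Gibbs measure; combined with Lemma~\ref{lem:binary-betaH-meta}, which dismisses $S$, this yields $\bigl|\log(Z_H^{\MM}(\mathfrak{B}_o/m)/Z_H^{\DD}(\mathfrak{B}_o/m))\bigr|=O(1)$. Integrating the derivative bound across the window, $\log(Z_H^{\MM}/Z_H^{\DD})$ sweeps monotonically over an interval of length $\Theta(\sqrt m)$ anchored at an $O(1)$ value, so for an appropriate absolute constant $c>0$ it covers $[-c\sqrt m,c\sqrt m]$, and every $R$ in the lemma's range is attained by some $\beta_H$ in the window. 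A binary search over $\beta_H$ with resolution $\delta/\Theta(m^2)$ suffices to guarantee $(1-\delta)R\le Z_H^{\MM}/Z_H^{\DD}\le R$, converging in $O(\log m+\log(1/\delta))$ iterations, each costing $\poly(m)$ for a ratio evaluation. The main obstacle I anticipate is the $O(1)$ bound at the coexistence point: establishing it rigorously will require a Laplace-type analysis comparing the dominant saddle points of the two phases together with a bound on the Gaussian fluctuations inside each, ensuring the prefactor is neither $\omega(1)$ nor $o(1)$. Once this is in place, the monotonicity and algorithmic parts are routine.
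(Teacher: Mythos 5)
Your proposal follows essentially the same route as the paper's proof: enumerate signatures for exact evaluation of $Z_H^{\MM}$ and $Z_H^{\DD}$, bound $\tfrac{d}{d\beta_H}\log(Z_H^{\MM}/Z_H^{\DD})$ below by $\Theta(m^2)$ via the difference of expected monochromatic-edge counts in the two phases (together with $\hat\alpha>1/q$ and convexity), anchor the log-ratio at $\mathfrak{B}_o/m$, and binary-search. That is exactly the structure in the paper.

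The one place you overcomplicate is the anchoring step. You flag the $O(1)$ bound on $|\log(Z_H^{\MM}(\mathfrak{B}_o/m)/Z_H^{\DD}(\mathfrak{B}_o/m))|$ as the "main obstacle" requiring a delicate Laplace-type analysis of prefactors and Gaussian fluctuations. But you do not need an $O(1)$ bound at all, and the paper does not attempt one. Since the derivative is $\Theta(m^2)$ and the window has width $\Theta(m^{-3/2})$, the log-ratio sweeps an interval of length $\Theta(\sqrt m)$ across the window; any anchor of size $o(\sqrt m)$ suffices to guarantee that $[-c\sqrt m,\,c\sqrt m]$ is covered for some constant $c>0$. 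A crude $\poly(m)$ bound on the ratio—obtained directly from the Laplace approximation by noting that $\Phi_{\mathfrak{B}_o}(u)=\Phi_{\mathfrak{B}_o}(\alpha^{*,i})$ and that each sum has polynomially many terms, each sandwiched between $|\hat A|^{-1}e^{\Phi m+O(1)}$ and $e^{\Phi m}$—gives $|\log(Z_H^{\MM}/Z_H^{\DD})|=\Theta(\log m)$ at the critical point, which is amply sufficient (this is Fact~\ref{fact:mean} in the paper). So the step you single out as hard is in fact the easiest one, and pursuing the sharper prefactor analysis would be wasted effort; everything else in your argument is correct and matches the paper.
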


The proof of these two lemmas is provided in Appendix~\ref{app:mean-field-potts}.

\subsubsection{Identity testing reduction}
\label{subsubsec:potts-reduce}

\noindent\textbf{Visible Model Construction.}\ \ 
Let $(G,\beta_G)$ be the instance of the ferromagnetic Potts model with no external field (i.e., $h = 0$) for which we are trying to approximate the partition function $Z_{G,\beta_G}$; we shall assume $G=(V_G,E_G)$ is an $N$-vertex graph and that every edge has interaction strength
$0< \beta_G = \Theta(1)$.
Let $H = (V_H,E_H)$ be a complete graph on $m= N^{10}$ vertices.
The graph $F = (V_F,E_F)$ is the result of connecting the vertices of $H$ and $G$ with a complete bipartite graph $K_{m,N}$ with edges $E_{m,N}$; that is, $V_F = V_G \cup V_H$ and $E_F = E_H \cup E_G \cup E_{m,N}$.
We consider the Potts model on the graph $F$
with edge interactions
$\beta_F: E_F \rightarrow \R$ given by:
\[
\beta_F(e) = \left.
\begin{cases}
\beta_H & \text{if } e \in E_H\\
\beta_G & \text{if } e \in E_G\\
\beta & \text{if } e \in E_{m,N},
\end{cases}
\right.
\]
where $\beta_H,\beta > 0$ will be chosen later. 
We use $n:=N+m$ for the number of vertices of $F$, and,
with a slight abuse of notation, we use $F$ for the Potts model $(F,\beta_F)$ which will play the role of the visible model in our reduction;  
$\mu_F$ denotes the corresponding Gibbs distribution.

We study first the properties of ``typical'' configurations on $G$ conditional on a
configuration $\sigma$ on the complete graph $H$.  
For this, we introduce some additional notation.
Let $\Omega_F$, $\Omega_H$ and $\Omega_G$ be the set of Potts configuration on the graph $F$, $H$ and $G$ respectively; note that $\Omega_F= \Omega_H \times \Omega_G$.
For $\sigma \in \Omega_H$, define
\[
Z_{F}^\sigma(\beta_H) := \sum_{\eta \in \Omega_F: \eta(V_H) = \sigma} {w_F^\eta(\beta_H)}
\]
where the weight $w_F^\eta(\beta_H)$ of configuration $\eta$ is given by
$$
w_F^\eta(\beta_H) =  \exp\left( \sum_{\{u,v\}\in E_F} \beta_F(\{u,v\})\1(\eta(u)=\eta(v))\right);
$$
that is, $Z_{F}^\sigma(\beta_H)$ is the total contribution to the partition $Z_F(\beta_H)$ of $F$ 
of the configurations that agree with $\sigma$ on $H$.  

 If we fix a configuration $\sigma$ on $H$ and look at the configuration on $G$ (under the Gibbs distribution on $F$
 conditional on $\sigma$) then $\sigma$ will act as an external field on the vertices of $G$.
 We show that if $\sigma$ is in the majority phase (i.e., in the set $M$), then the configuration on $G$
 will be monochromatic with high probability as these configurations will maximize the number of monochromatic
 edges between $G$ and $H$. In contrast, when $\sigma$ is in the disordered phase (i.e., in $D$), then every configuration
on $G$ will have (roughly) the same number of monochromatic edges between $G$ and $H$; hence,
the partition function $Z_F^\sigma(\beta_H)$ in this case will be proportional to $Z_{G,\beta_G}$.

To formalize this,
we split the partition function of $F$ into three parts
depending on the signature on the complete graph $H$. Let
\[  Z_F^{\MM}(\beta_H) = \sum_{\sigma\in M} Z^\sigma_{F}(\beta_H), \ \ Z_F^{\DD} (\beta_H)= \sum_{\sigma\in D} Z^\sigma_F(\beta_H), \ \mbox{ and  }
 Z_F^{\SS}(\beta_H) = \sum_{\sigma\in S} Z^\sigma_F(\beta_H);
 \]
then, 
$
Z_F(\beta_H) = Z_{F}^{\MM}(\beta_H) +  Z_{F}^{\DD}(\beta_H) + Z_{F}^{\SS}(\beta_H).
$

The following lemma details the above description of the
properties of configurations on the original instance $G$ conditional on the configuration on the
complete graph $H$.

\begin{lemma}\label{lem:M-D-S}
For any constants $\delta \in (0,1)$ and $ c > 0$, and
any $\beta_H$ such that $|\beta_H-\mathfrak{B}_o/m|\leq c m^{-3/2}$,
there exists constants $c_1,c_2>0$ such that  for any $\beta\in \left[\frac{c_1 N}{m} , \frac{c_2}{Nm^{3/4}}\right]$:

\begin{enumerate}
\item 
When the configuration on $H$ is in the majority phase, the configuration on $G$ is likely to be monochromatic; more precisely,
\begin{equation}
\label{eq:maj}
e^{-\delta} \cdot Z_H^{\MM} \cdot \exp\left( \hat{\alpha} \beta N m + \beta_G |E_G| \right) \le 
Z_F^{\MM}(\beta_H) 
\le e^\delta \cdot Z_H^{\MM} \cdot \exp\left( \hat{\alpha} \beta N m + \beta_G |E_G| \right).
\end{equation}
\item When the configuration on $H$ is in the disordered phase, the configuration on $G$ will have very limited influence from the configuration on $H$; more precisely,
\begin{equation}
\label{eq:dis}
e^{-\delta} \cdot Z_H^{\DD} \cdot Z_G \cdot \exp\left( \beta N m/q \right) 
\le Z_F^{\DD} (\beta_H) 
\le  e^{\delta} \cdot Z_H^{\DD} \cdot Z_G \cdot \exp\left( \beta N m/q \right).
\end{equation}
\item The remaining configurations on $H$ have a small contribution to the partition function of the model $F$; more precisely, 
\begin{equation}\label{by2-main}
Z_F^{\SS}(\beta_H) \le Z_F(\beta_H)
\exp\left(-\Omega(\sqrt{m})\right).
\end{equation}
\end{enumerate}
\end{lemma}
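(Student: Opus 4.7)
The key observation is that the Gibbs weight factorises along the bipartition between $V_H$ and $V_G$: for any $\sigma\in\Omega_H$ one has
$$
Z_F^\sigma(\beta_H)\;=\;w_H^\sigma(\beta_H)\cdot \tilde Z_G(\sigma),\qquad
\tilde Z_G(\sigma)\;:=\;\sum_{\tau\in\Omega_G}\exp\!\Bigl(\beta_G\!\!\sum_{\{u,v\}\in E_G}\!\!\Ind(\tau(u)=\tau(v))\;+\;\beta\sum_{u\in V_G}\sigma_{\tau(u)}\Bigr),
$$
using that a bipartite edge from $u\in V_G$ to $V_H$ contributes a monochromatic match for each $H$-vertex carrying spin $\tau(u)$, totalling $\sigma_{\tau(u)}$. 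Since $\tilde Z_G(\sigma)$ depends on $\sigma$ only through its signature, each of the three parts reduces to estimating $\tilde Z_G(\sigma)$ uniformly over the relevant class of signatures and then summing against $w_H^\sigma(\beta_H)$. The upper bound $\beta\le c_2/(Nm^{3/4})$ is precisely what is needed to turn the $O(m^{3/4})$ slack in every signature into an $\e^{\pm\delta}$ multiplicative error in $\tilde Z_G(\sigma)$; the lower bound $\beta\ge c_1N/m$ is only used in the majority phase, to overpower the $\sim N\log q$ combinatorial entropy of $\Omega_G$.

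For part~1, fix $\sigma\in M$ with dominant colour $j$. The monochromatic-$j$ assignment $\tau\equiv j$ contributes $\exp(\beta_G|E_G|+\beta N\sigma_j)$, which equals $\e^{\pm\delta/2}\exp(\beta_G|E_G|+\hat\alpha\beta Nm)$ once $c_2\le\delta/2$, because $|\sigma_j-\hat\alpha m|\le m^{3/4}$ and $\beta Nm^{3/4}\le c_2$. Any $\tau$ differing from $\tau\equiv j$ on $k\ge1$ vertices has weight smaller by a factor of at least $\exp(\beta k(\sigma_j-\max_{i\neq j}\sigma_i))=\exp(\Omega(\beta k m))$, using that at the first-order transition one has $\hat\alpha>1/q$, so $\sigma_j-\max_{i\neq j}\sigma_i=\Omega(m)$. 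Summing over such $\tau$ using the crude entropy bound $\binom{N}{k}(q-1)^k\le(Nq)^k$ and $\beta m\ge c_1N$ with $c_1$ a sufficiently large constant (depending on $q,\delta$) shows the combined non-monochromatic contribution is at most $\exp(-\Omega(c_1N))\le\delta/2$. Hence $\tilde Z_G(\sigma)=\e^{\pm\delta}\exp(\beta_G|E_G|+\hat\alpha\beta Nm)$ uniformly in $\sigma\in M$, and summing against $w_H^\sigma(\beta_H)$ gives~\eqref{eq:maj}.

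For part~2, $\sigma\in D$ satisfies $\sigma_i=m/q\pm m^{3/4}$ for every $i\in[q]$, so $\beta\sum_u\sigma_{\tau(u)}=\beta Nm/q\pm\beta Nm^{3/4}$ \emph{simultaneously} for every $\tau\in\Omega_G$. Pulling this colour-free factor out of the sum and using $\beta Nm^{3/4}\le c_2\le\delta$ yields $\tilde Z_G(\sigma)=\e^{\pm\delta}\exp(\beta Nm/q)\cdot Z_G$; multiplying by $w_H^\sigma(\beta_H)$ and summing over $D$ gives~\eqref{eq:dis}. For part~3, use the trivial bound $\tilde Z_G(\sigma)\le q^N\exp(\beta_G|E_G|+\beta Nm)$, valid for all $\sigma\in\Omega_H$, together with the already-established lower bound on $Z_F^\MM(\beta_H)$:
$$
\frac{Z_F^\SS(\beta_H)}{Z_F^\MM(\beta_H)}\;\le\;\e^{\delta}\,q^N\,\exp\bigl((1-\hat\alpha)\beta Nm\bigr)\cdot\frac{Z_H^\SS(\beta_H)}{Z_H^\MM(\beta_H)}.
$$
Lemma~\ref{lem:binary-betaH-meta} bounds the last ratio by $\exp(-c\sqrt m)$. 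Since $\beta Nm\le c_2m^{1/4}$ and $m=N^{10}$, both $N\log q$ and $\beta Nm$ are polynomially smaller than $\sqrt m=N^5$, so the right-hand side is $\exp(-\Omega(\sqrt m))$; combined with $Z_F(\beta_H)\ge Z_F^\MM(\beta_H)$ this yields~\eqref{by2-main}.

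The main obstacle is the bookkeeping of the two competing constraints on $\beta$: making the bipartite field strong enough in the majority phase to lock $G$ onto the dominant colour, while simultaneously being weak enough that in the disordered phase the field is nearly colour-blind. These two requirements are exactly $\beta m=\Omega(N)$ and $\beta Nm^{3/4}=O(1)$, and they are compatible only because $m$ is taken to be a sufficiently large polynomial in $N$; the choice $m=N^{10}$ further guarantees the $\exp(-\Omega(\sqrt m))$ separation needed for part~3.
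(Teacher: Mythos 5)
Your proposal is correct and takes essentially the same approach as the paper: factor the conditional weight on $G$ out of $Z_F^\sigma$, handle the majority phase by locking $G$ onto the dominant colour (using $\beta m\gtrsim N$ to kill the $q^N$ combinatorial overhead and $\beta Nm^{3/4}\lesssim\delta$ to control the signature slack), handle the disordered phase by observing the effective field is nearly uniform so it factors out of $Z_G$, and handle $S$ by a crude upper bound together with Lemma~\ref{lem:binary-betaH-meta} and $m=N^{10}$. The only cosmetic difference is that you organize the majority-phase sum over $\tau$ by the number $k$ of vertices disagreeing with the dominant colour, whereas the paper applies a single uniform bound to every $\tau\neq\eta_1$ and multiplies by $q^N$; both give the same conclusion.
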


We remark that the factors~$\exp(\hat{\alpha} \beta N m + \beta_G |E_G|)$ and $\exp(\beta N m /q)$
in~\eqref{eq:maj} and~\eqref{eq:dis}, respectively, account for the contribution of
all the monochromatic edges in $G$ and between $G$ and $H$ in each case.

\begin{proof}[Proof of Lemma~\ref{lem:M-D-S}]
%
%

We fix $\beta_H$ and, for ease of notation, we drop the dependence on $\beta_H$ throughout the proof; i.e., $Z_F^\MM(\beta_H)$ becomes $Z_F^\MM$,
$w_H^\sigma(\beta_H)$ becomes $w_H(\sigma)$ for $\sigma \in \Omega_H$ and
$w_F^\eta(\beta_H)$ becomes $w_F(\eta)$ for $\eta \in \Omega_F$.

Let $\sigma \in \Omega_H$ and $\tau \in \Omega_G$.
When computing weight for configuration $\sigma \cup \tau$ (i.e., the configuration of $F$ that results from combining the spins assignment of $\sigma$ and $\tau$ in $H$ and $G$, respectively,
it will be convenient to separate the interaction of edges in $H$ (that captures the 
phase coexistence in the mean-field model) and the interaction in $G$ and between $H$ and $G$
(that captures the effect of different phases on $G$). 
Thus, let
$$
w_{F\setminus H}(\sigma \cup \tau) := \frac{w_F(\sigma \cup \tau)}{w_H(\tau)}.
$$
Then,
$$
Z_F^\mathrm{M} = \sum_{\sigma\in M} \sum_{\tau \in \Omega_G} w_H(\sigma) w_{F \setminus H}(\sigma \cup \tau).
$$

For $\sigma \in M$, let 
$(\sigma_1,\dots,\sigma_q) \in \Z^q$ be its signature;
suppose w.l.o.g.\ that $\sigma_1$ is such that
$|\sigma_1 -\hat{\alpha} m| \leq m^{3/4}$ and $|\sigma_i-\frac{1-\hat{\alpha}}{q-1}m| \le m^{3/4}$ for all $i \in \{2,\dots,q\}$. 
Consider the configuration $\eta_1$ on $G$
that assigns spin $1$ to every vertex of $G$
and let $a = \max_{i \in \{2,\dots,q\}} \sigma_i$. 
For any other configuration $\tau \neq \eta_1$ on $G$ with $t \ge 1$ vertices not assigned spin $1$, we have that 
	\begin{equation}
	\label{eq:unlikely-config}
	w_{F\setminus H}(\sigma \cup \tau)\leq \exp\left( \beta_G |E_G|+  \beta (\sigma_1 (N-t) + at) \right) \leq \exp\left( \beta_G |E_G|+  \beta (\sigma_1 (N-1) + a) \right),
	\end{equation}
	since there are at most $|E_G|$ monochromatic edges in $G$ and at least one vertex in $G$
	has a vertex assigned a spin different from $1$ (thus there are at most $\sigma_1 (N-1) + a$
	monochromatic edges between $G$ and $H$).	
	Hence, we get
	$$
	\frac{w_{F\setminus H}(\sigma \cup \tau)}{w_{F\setminus H}(\sigma \cup \eta_1)} 
	\le \frac{\exp\left( \beta_G |E_G|+  \beta (\sigma_1 (N-1) + a) \right)}{\exp\left( \beta_G |E_G| + \sigma_1 N \beta \right)} 
	\le e^{(a-\sigma_1) \beta } \leq
	e^{ \left( - \alpha' m + 2m^{3/4}\right) \beta }
	\leq e^{ - \alpha'' m \beta},
	$$
	where $\alpha' = \hat{\alpha}  - (1-\hat{\alpha})/(q-1) > 0$ and the rightmost inequality
	is true for some $\alpha''>0$ and sufficiently large $m$. For $c_1=(2\log q)/\alpha''$ we have
	for $\beta\geq c_1 N/m$
	$$
	\frac{w_{F\setminus H}(\sigma \cup \tau)}{w_{F\setminus H}(\sigma \cup \eta_1)}  \leq q^{-2N}.
	$$
	Hence
	\begin{equation}\label{none}
	\sum_{\tau \neq \eta_1 \in \Omega_G} w_{F\setminus H}(\sigma \cup \tau) \leq q^{-N} w_{F\setminus H}(\sigma \cup \eta_1).
	\end{equation}
	Now,  
	\begin{align}
	w_{F\setminus H}(\sigma \cup \eta_1) &= \exp\left( \beta_G |E_G| + \sigma_1 N \beta \right) \nonumber\\
	&\le  \exp\left( \beta_G |E_G| + \hat{\alpha} m  N \beta + m^{3/4} N \beta \right) \nonumber\\ 
	&\le e^{\delta/2} \exp\left( \beta_G |E_G|+ \hat{\alpha} m  N \beta \right) \label{eq:weight:FH},
	\end{align}
		where in the last equality we take $c_2=\delta/2$ and use the fact that $\beta\leq c_2/(N m^{3/4})$.
		Therefore, when $\sigma \in M$ is such that $|\sigma_1 - \hat \alpha m| < m^{3/4}$, we have
			$$
			\sum_{\tau \in \Omega_G} w_{F\setminus H}(\sigma \cup \tau) \leq (1+q^{-N}) w_{F\setminus H}(\sigma \cup \eta_1) \le e^{\delta} \exp\left( \beta_G |E_G|+ \hat{\alpha} m  N \beta \right),
			$$
			for $N$ sufficiently large.
	By symmetry, we then get that
	$$
Z_F^\MM \le \sum_{\sigma\in M} w_H(\sigma) e^{\delta} \exp\left( \beta_G |E_G|+ \hat{\alpha} m  N \beta \right) = e^{\delta}  Z^\MM_H  \exp\left( \beta_G |E_G|+ \hat{\alpha} m  N \beta \right).
	$$
	The lower bound in~\eqref{eq:maj} can be derived in similar fashion and part 1 of the lemma follows.
	
	For part 2, suppose that $\sigma\in D$ and let $\tau \in \Omega_G$. Let $\tau_i$ be the number of vertices of $G$ assigned spin $i$ in $\tau$ and let $w_G(\tau)$ denote the weight of $\tau$ for the Potts model $(G,\beta_G)$.	
	Then,
	\begin{align}\label{zz2}
	w_{F\setminus H}(\sigma \cup \tau) &= w_G(\tau) \exp\left( \beta\sum_{i=1}^q \sigma_i \tau_i \right) \nonumber\\ 
	&\le
	w_G(\tau) \exp\left( m^{3/4} N \beta +  \beta m N/q \right) \nonumber\\ 
	&\le
	e^\delta w_G(\tau) \exp\left( \beta m N/q \right),
	\end{align}
	since recall we set $c_2 = \delta/2$. Hence,
	$$
	Z_F^\DD = \sum_{\sigma\in D} \sum_{\tau \in \Omega_G} w_H(\sigma) w_{F \setminus H}(\sigma \cup \tau) \le e^\delta Z_H^\DD Z_G \exp\left( \beta m N/q \right).
	$$
	The lower bound for $Z_F^\DD$ can be derived analogously and part 2 of the lemma follows. 

  Finally for part 3, note that 
	\begin{align*}
	Z_F^\mathrm{S} &= \sum_{\sigma\in S} \sum_{\tau \in \Omega_G} w_H(\sigma) w_{F \setminus H}(\sigma \cup \tau)  
	\le q^N \exp\left( \beta_GN^2 + \beta N m \right) Z^\SS_H 
	\le  \min\{Z_H^\mathrm{M},Z_H^\mathrm{D}\} \exp\left( -\Omega(\sqrt{m}) \right), 
	\end{align*}
	where the last inequality follows for sufficiently large $N$ and $m$ from Lemma~\ref{lem:binary-betaH-meta} and the fact that $\beta < c_2/(Nm^{3/4})$. Then, 
	\begin{equation*}
		\label{by2-app}
	\frac{Z_F^\mathrm{S}}{Z_F}\leq  \frac{Z_F^\mathrm{S}}{Z_F^\mathrm{M}}\leq \exp\left(-\Omega(\sqrt{m})\right),
	\end{equation*}
	and the result follows.
\end{proof}

\medskip
\noindent\textbf{Hidden Model Construction.}\ \ 
We now construct our hidden model and show that we can efficiently generate
samples from its Gibbs distribution.
Let $F^*$ be the graph obtained by our construction above where we replace the  graph $G$ by a complete graph
on $N$ vertices.
More precisely, let $K = K_N$ be a complete graph on $N$ vertices and let $F^*$ be 
the graph that results from connecting the vertices of $K$ and $H$ with a complete bipartite graph $K_{N,m}$.

The edges of $K$ have parameter $\beta_K =  \beta_G+4\log q$, whereas the remaining edges
have the same interaction strength as in $F$; that is,
edges between $K$ and $H$ will have parameter $\beta$ and those in $H$ parameter $\beta_H$.
This Potts model on $F^*$, which again with a slight abuse of notation we denote by $F^*$, will act as the hidden model.
We choose $\beta_K = \beta_G+4\log q$, so that $K$ is more likely to be monochromatic than $G$. 
Let $\mu_{F^*}$ the corresponding Gibbs distribution on $F^*$.
We show next that we can efficiently generate samples from $\mu_{F^*}$.

\begin{lemma}\label{les}
There is an exact sampling algorithm for the distribution $\mu_{F^*}$ with running time $\poly(n)$. 
\end{lemma}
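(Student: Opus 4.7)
The plan is to exploit the block symmetry of $F^*$. The graph $F^*$ consists of a complete graph on $N$ vertices (the $K$ part), a complete graph on $m$ vertices (the $H$ part), and all $Nm$ cross edges, with uniform edge weights $\beta_K$, $\beta_H$, $\beta$ on the three blocks respectively, and no external field. Consequently, the weight of a configuration $\eta\in\Omega_{F^*}$ depends only on the \emph{signature} $(k_1,\dots,k_q,h_1,\dots,h_q)$, where $k_i$ (resp.\ $h_i$) is the number of vertices in $K$ (resp.\ $H$) assigned spin $i$. Any two configurations sharing a signature are equiprobable under $\mu_{F^*}$.

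First, I would write down the weight and multiplicity attached to each signature. A configuration with signature $(k_1,\dots,k_q,h_1,\dots,h_q)$ has
\[
w_{F^*}(\eta)=\exp\!\left(\beta_K\sum_{i=1}^q\binom{k_i}{2}+\beta_H\sum_{i=1}^q\binom{h_i}{2}+\beta\sum_{i=1}^q k_i h_i\right),
\]
and the number of such configurations is $\binom{N}{k_1,\dots,k_q}\binom{m}{h_1,\dots,h_q}$. Let $W(k_1,\dots,k_q,h_1,\dots,h_q)$ be the product; this is computable in polynomial time (working with logarithms to avoid overflow, since the weights may be doubly exponential). The total number of signatures is at most $\binom{N+q-1}{q-1}\binom{m+q-1}{q-1}=O(n^{2(q-1)})$, which is polynomial in $n$ for the fixed constant $q$ of Theorem~\ref{thm:Potts}.

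Second, I would sample a signature according to the discrete distribution proportional to $W(\cdot)$. Enumerating all signatures, summing their weights to get $Z_{F^*}$, and then selecting one by inverse-transform sampling takes $\poly(n)$ time; exact (rational) arithmetic on the log-weights, or scaling by the maximum log-weight before exponentiating, avoids any numerical issues. Third, given the sampled signature, I would draw a uniformly random assignment realizing it: independently choose a uniformly random partition of $V_K$ into blocks of sizes $k_1,\dots,k_q$ and of $V_H$ into blocks of sizes $h_1,\dots,h_q$ (for example by a Fisher--Yates shuffle). By construction the resulting configuration is distributed exactly as $\mu_{F^*}$.

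There is no substantive obstacle here beyond the bookkeeping; the only point to watch is that the weights $W(\cdot)$ can be astronomically large, which is dealt with by performing all computations on log-weights and only exponentiating normalized differences when sampling. Correctness of the procedure follows immediately from the symmetry observation, and the running time is $\poly(n)$ since $q$ is a constant.
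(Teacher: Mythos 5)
Your proof is correct and takes essentially the same route as the paper's: exploit the block symmetry of $F^*$ (two cliques plus a complete bipartite connector, all with uniform weights) to reduce to a polynomial number of signatures, sample a signature from the induced distribution by enumeration, then pick a uniformly random labeling realizing it. The only differences are cosmetic — you spell out the weight formula and the log-space arithmetic, while the paper remarks that multinomial coefficients can be computed as products of binomials — but the idea and its justification coincide.
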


\begin{proof}
Because of symmetry there are at most $n^{2q}$ types of configurations---described by their signatures
on $H$ and $K$; recall that $n = m+N$. We can then enumerate every signature,
explicitly compute its probability and sample from the resulting distribution.
This involves computing multinomial coefficients, but they can each be expressed as product of $q$ binomial coefficients which can be easily computed in $\poly(n)$ time.
Once the signature is generated from the correct distribution, we can simply take a random permutation of the vertices to assign their spins.
\end{proof}

\medskip
\noindent\textbf{Proof Overview.}\ \ 
We provide the high-level idea of the reduction next. 
Recall that our goal  is to provide a polynomial-time algorithm for the decision version of the $r$-approximate counting problem for the ferromagnetic Potts model $(G,\beta_G)$.
That is, for a real number $\hat Z$ we want to determine whether $Z_G \le \frac{1}{r}\hat{Z}$ or $Z_G \ge r \hat{Z}$,
where $Z_G := Z_{G,\beta_G}$ is the partition function of the model $(G,\beta_G)$ .

For any ``reasonable'' $\hat Z \in \R$ (i.e., $\hat Z$ that is not too small or too large, in which case the approximate counting problem becomes trivial), we can find a value of the parameter $\beta_H$ for our construction such that
$$ 
\frac{Z_F^{\DD}(\beta_H)}{Z_F^{\MM}(\beta_H)} 
\approx
\frac{1}{\sqrt{\varepsilon L}} \frac{Z_G}{\hat{Z}},
$$
where $L=L(n)$ and $\eps = \eps(n)$ are the sample complexity and accuracy parameter of the testing algorithm, respectively.
This is possible because of the first-order phase transition of the ferromagnetic mean-field $q$-state Potts model for $q \ge 3$, and the associated
phase coexistence and metastability phenomena discusses earlier; see Section~\ref{subsubsec:mf}.
(Specifically, by Lemma~\ref{lem:binary-betaH-ratio} 
we can find $\beta_H$ so that ${Z_H^{\MM}(\beta_H)}/{Z_H^{\DD}(\beta_H)} \approx R$ for any target $R$, and then we can use Lemma~\ref{lem:M-D-S}
to translate this value to a value for $Z_G \cdot Z_F^{\MM}(\beta_H)/Z_F^{\DD}(\beta_H)$.)

Now, for this choice of $\beta_H$ and setting $r \approx \sqrt{L/\eps}$, note that if $Z_G \le \frac{1}{r}\hat{Z}$, then 
${Z_F^{\DD}(\beta_H)}/{Z_F^{\MM}(\beta_H)}$ is small ($\lesssim 1/L$). Conversely, when $Z_G \ge r \hat{Z}$, the ratio is large ($\gtrsim 1/\eps$). 
Therefore, to distinguish whether $Z_G \le \frac{1}{r}\hat{Z}$ or $Z_G \ge r \hat{Z}$ it is sufficient to determine whether the ratio ${Z_F^{\DD}(\beta_H)}/{Z_F^{\MM}(\beta_H)}$ is small or large. 
For this we can use the identity testing algorithm. 
In particular, when the ratio is small ($\lesssim 1/L$), the majority phase of $H$ is dominant in $F$, and $G$ will likely be monochromatic. Since this is also the case in $F^*$ (i.e., $K$ is monochromatic with high probability), then the models $F$ and $F^*$ will be close in total variation distance ($\lesssim 1/L$), and the testing algorithm using only $L$ samples would output $\textsc{Yes}$. 
Otherwise, when ${Z_F^{\DD}(\beta_H)}/{Z_F^{\MM}(\beta_H)}$ is large ($\gtrsim 1/\eps$), the disorder phase is dominant, so $F$ and $F^*$ are likely to disagree on the spins of $G$ and $K$; this implies that their total variation distance is large ($\gtrsim 1-\eps$), and so the tester would output $\textsc{No}$. 
We proceed to flesh out the technical details next.

\begin{lemma}
	\label{lem:F-to-TV}
	Let $\varepsilon \in (0,1)$ be a constant, $L = L(n) = \poly(n)$ and $r = 96\eps^{-1}\sqrt{\eps L +1}$. 
	Suppose $\hat Z \in \R$ is such that $rq \exp(\beta_G |E_G|) \le \hat{Z} \le \frac{1}{r} q^N \exp(\beta_G |E_G|)$.
	Then, there exists constants $c,c_1,c_2 >0$ such that the following holds. 
	For any $\beta\in \left[\frac{c_1 N}{m} , \frac{c_2}{Nm^{3/4}}\right]$, 
    we can find $\beta_H>0$ in the range $|\beta_H-\mathfrak{B}_o/m|\leq c m^{-3/2}$ in $\poly(n)$ time such that all of the following holds:
\begin{enumerate}[(i)]
    \item $  \frac{1}{4\sqrt{\eps L+1}} \frac{Z_G}{\hat{Z}} \le 
    \frac{Z_F^\mathrm{D}(\beta_H)}{Z_F^\mathrm{M}(\beta_H)} \le \frac{1}{\sqrt{\eps L+1}} \frac{Z_G}{\hat{Z}}$, and  
    $\frac{Z_F^\mathrm{S}(\beta_H)}{Z_F(\beta_H)} \le e^{-c_3 \sqrt{m}};$
    \item 
    $
    \frac{Z_{F^*}^\mathrm{D}(\beta_H)}{Z_{F^*}^\mathrm{M}(\beta_H)} \le
    \frac{2}{r\sqrt{\eps L+1}}$, and $\frac{Z_{F^*}^\mathrm{S}(\beta_H)}{Z_{F^*}(\beta_H)} \le e^{-c_3 \sqrt{m}}; 
    $
\item If $Z_G \le \frac{1}{r} \hat{Z}$, then 
$ \TV{\mu_F}{\mu_{F^*}} \le \frac{1}{16L} $;
\item If $Z_G \ge r \hat{Z}$, then 
$ \TV{\mu_F}{\mu_{F^*}}  \ge 1-\eps $.
\end{enumerate}
\end{lemma}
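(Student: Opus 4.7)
My plan is to use the free parameter $\beta_H$ from Lemma~\ref{lem:binary-betaH-ratio} to engineer the ratio $Z_H^\MM/Z_H^\DD$ so that, once Lemma~\ref{lem:M-D-S} is applied to both the visible model $F$ and the hidden model $F^*$, the bounds in (i) and (ii) fall out by direct substitution, and then (iii) and (iv) reduce to a standard TV decomposition. First I would fix small absolute constants $\delta,\delta'\in(0,1)$ (e.g.\ $\delta=\delta'=1/10$, so that $e^{4\delta}/(1-\delta')\le 2$), pick any $\beta\in[c_1N/m,c_2/(Nm^{3/4})]$, and set the target
\[
R^\star := \tfrac{2e^{2\delta}}{1-\delta'}\cdot\sqrt{\eps L+1}\cdot\tfrac{\hat Z}{\exp(\beta_G|E_G|)}\cdot\exp\!\bigl(\beta Nm(\tfrac1q-\hat\alpha)\bigr).
\]
The bounds on $\hat Z$ and the choice $m=N^{10}$ give $|\log R^\star|=O(N\log q+\log L+\beta Nm)=o(\sqrt m)$, so Lemma~\ref{lem:binary-betaH-ratio} produces in $\poly(n)$ time a $\beta_H$ with $|\beta_H-\mathfrak{B}_o/m|\le cm^{-3/2}$ and $Z_H^\MM(\beta_H)/Z_H^\DD(\beta_H)\in[(1-\delta')R^\star,R^\star]$.

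Part (i) now follows since Lemma~\ref{lem:M-D-S} gives $Z_F^\DD/Z_F^\MM = e^{\pm 2\delta}(Z_H^\DD/Z_H^\MM)\cdot Z_G\,\exp(\beta Nm(1/q-\hat\alpha)-\beta_G|E_G|)$, which our choice of $R^\star$ places in $[\tfrac14,1]\cdot Z_G/(\hat Z\sqrt{\eps L+1})$; the bound on $Z_F^\SS/Z_F$ is exactly~\eqref{by2-main}. For part (ii) I would rerun Lemma~\ref{lem:M-D-S} with $K$ and $\beta_K=\beta_G+4\log q$ replacing $G$ and $\beta_G$. The key new input is that because $\beta_K\ge 4\log q$, every non-monochromatic configuration of $K$ loses weight at least $\beta_K(N-1)\ge 4(N-1)\log q$, hence $Z_K=q\exp(\beta_K|E_K|)(1+O(q^{-2N}))$. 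Substituting this into the analogue of~\eqref{eq:dis} and invoking $\hat Z\ge rq\exp(\beta_G|E_G|)$ yields
\[
\frac{Z_{F^*}^\DD}{Z_{F^*}^\MM}\;\le\;\frac{e^{2\delta}q(1+o(1))}{(1-\delta')R^\star}\cdot\exp\!\bigl(\beta Nm(\tfrac1q-\hat\alpha)\bigr)\;\le\;\frac{q\exp(\beta_G|E_G|)(1+o(1))}{2\hat Z\sqrt{\eps L+1}}\;\le\;\frac{2}{r\sqrt{\eps L+1}}.
\]
The bound on $Z_{F^*}^\SS/Z_{F^*}$ follows from the analogue of~\eqref{by2-main}, since $\beta_KN^2=O(N^2)\ll\sqrt m$.

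For (iii) and (iv) I use the decomposition $\TV{\mu_F}{\mu_{F^*}}\le\Exp_{\sigma\sim\pi_H^F}\TV{p(\cdot|\sigma)}{p^*(\cdot|\sigma)}+\TV{\pi_H^F}{\pi_H^{F^*}}$, where $\pi_H^{(\cdot)}$ is the marginal on $V_H$ and $p^{(\cdot)}(\cdot|\sigma)$ the conditional on $V_G$ resp.\ $V_K$. Two observations suffice. For $\sigma\in M$ with majority color $i$, inequality~\eqref{none} (applied also to $K$) shows both $p$ and $p^*$ place mass $\ge 1-q^{-N}$ on the monochromatic configuration $\eta_i$, so $\TV{p(\cdot|\sigma)}{p^*(\cdot|\sigma)}\le 2q^{-N}$. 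Moreover, on $M$ both $Z_F^\sigma/w_H(\sigma)$ and $Z_{F^*}^\sigma/w_H(\sigma)$ are proportional to $\exp(\sigma_iN\beta)$ up to $(1+O(q^{-N}))$ factors, so $\TV{\pi_H^F|_M}{\pi_H^{F^*}|_M}=O(q^{-N})$. In case (iii), (i)--(ii) bound $\pi_H^F(M^c),\pi_H^{F^*}(M^c)=O(1/(r\sqrt{\eps L+1}))+e^{-c\sqrt m}=O(1/L)$, so both summands are $O(1/L)$ and the choice $r=96\eps^{-1}\sqrt{\eps L+1}$ gives $\TV\le 1/(16L)$ for large $n$. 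In case (iv), (i) yields $Z_F^\DD/Z_F^\MM\ge r/(4\sqrt{\eps L+1})=24/\eps$, so $\mu_F(H\in M)\le\eps/24$, while (ii) gives $\mu_{F^*}(H\in M)\ge 1-\eps/24$; the event $A=\{H\in M\}$ then witnesses $\TV\ge 1-\eps/12\ge 1-\eps$. The most delicate step is calibrating $R^\star$: the same target must simultaneously make the upper bound in (i) sharp (with the factor-of-4 slack between upper and lower bounds) and produce the $2/(r\sqrt{\eps L+1})$ in (ii) after the stray factor of $q$ is absorbed by $\hat Z\ge rq\exp(\beta_G|E_G|)$.
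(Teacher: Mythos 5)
Your proof is correct and follows essentially the same approach as the paper: calibrate $\beta_H$ via Lemma~\ref{lem:binary-betaH-ratio} against a target $R^\star$, feed through Lemma~\ref{lem:M-D-S} (with $G$ and $K$ respectively) to get parts (i)--(ii), and for (iii)--(iv) exploit the fact that conditional on the majority phase both $G$ and $K$ are monochromatic with overwhelming probability. The only cosmetic difference is in (iii)--(iv): you use the marginal-conditional TV decomposition over $V_H$, whereas the paper conditions on $H\in M$ and then on the monochromatic event $\mathcal{A}$, passing through a shared reference measure $\rho$ via the triangle inequality — both establish the same bounds from the same lemmas.
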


\begin{proof}
	Let $\alpha_0 = \hat{\alpha} - 1/q$.
	By parts 1 and 2 of Lemma~\ref{lem:M-D-S}, for
	any $\beta_H$ such that $|\beta_H-\mathfrak{B}_o/m|\leq c m^{-3/2}$
	and any $\beta\in \left[\frac{c_1 N}{m} , \frac{c_2}{Nm^{3/4}}\right]$ for suitable constants $c_1,c_2 > 0$, we have 
	\[
	\frac{2}{3} \cdot \exp\left( - \alpha_0 \beta Nm - \beta_G|E_G| \right) \cdot \frac{Z_H^\mathrm{D}}{Z_H^\mathrm{M}} \cdot Z_G
	\le 
	\frac{Z_F^\mathrm{D}}{Z_F^\mathrm{M}} 
	\le 
	\frac{4}{3} \cdot \exp\left( - \alpha_0 \beta Nm - \beta_G|E_G| \right) \cdot \frac{Z_H^\mathrm{D}}{Z_H^\mathrm{M}} \cdot Z_G,
	\]
	where for ease of notation we dropped the dependence on $\beta_H$ and set $Z_G = Z_{G,\beta_G}$.
	Moreover, part 3 of the same lemma implies that there exists a constant $c_3 > 0$ such that
	\begin{equation}
	\label{eq:s-small}
	\frac{Z_F^\mathrm{S}}{Z_F} \le e^{-c_3 \sqrt{m}}.
	\end{equation}
	Recall that $n = m + N$ and $m = N^{10}$.
	By Lemma~\ref{lem:binary-betaH-ratio}, 
	 we can find $\beta_H>0$ in $\poly(m)$ time such that $|\beta_H-\mathfrak{B}_o/m|\leq c m^{-3/2}$ and
	 \begin{equation}
	 \label{eq:sec2:ml-1}
	\frac{3}{8\sqrt{\eps L+1}} \cdot \exp\left( \alpha_0 \beta Nm + \beta_G |E_G| \right) \cdot \frac{1}{\hat{Z}} \le
	\frac{Z_H^\mathrm{D}}{Z_H^\mathrm{M}} \le \frac{3}{4\sqrt{\eps L+1}} \cdot \exp\left( \alpha_0 \beta Nm + \beta_G |E_G| \right) \cdot \frac{1}{\hat{Z}} ;
    \end{equation}
	note that the assumptions $rq \exp(\beta_G |E_G|) \le \hat{Z} \le \frac{1}{r} q^N \exp(\beta_G |E_G|)$ and $r = \poly(n)$ ensure that $Z_H^\mathrm{D}/Z_H^\mathrm{M}$ is in the desired range. 
	Thus, for this choice of $\beta_H$ we get
     \begin{equation}
     \label{eq:ratio}
	\frac{1}{4\sqrt{\eps L+1}} \frac{Z_G}{\hat{Z}} \le
	\frac{Z_F^\mathrm{D}}{Z_F^\mathrm{M}} \le \frac{1}{\sqrt{\eps L+1}} \frac{Z_G}{\hat{Z}}.
	 \end{equation}
	This establishes part (i) of the lemma. 
	
	For part (ii), we  note that
	Lemma~\ref{lem:M-D-S} holds for the hidden model $F^*$ (with $F$ and $G$ replaced by $F^*$ and $K$, respectively), without any change in the proof. Hence, we get
	 \begin{equation}
	\label{eq:sec2:ml-2}
	\frac{Z_{F^*}^\mathrm{D}}{Z_{F^*}^\mathrm{M}} 
		\le 
		\frac{4}{3} \cdot \exp\left( - \alpha_0 \beta Nm - \beta_K|E_K| \right) \cdot \frac{Z_H^\mathrm{D}}{Z_H^\mathrm{M}} \cdot Z_K
	 \end{equation}
	and
	 \begin{equation}
	\label{eq:sec2:ml-3}
	\frac{Z_{F^*}^\mathrm{S}}{Z_{F^*}} \le e^{-c_3 \sqrt{m}}. 
	 \end{equation}
	Thus, for our choice of $\beta_H$ we deduce from \eqref{eq:sec2:ml-1}, \eqref{eq:sec2:ml-2} and \eqref{eq:sec2:ml-3} that
	\[
	\frac{Z_{F^*}^\mathrm{D}}{Z_{F^*}^\mathrm{M}} 
	\le \frac{1}{\sqrt{\eps L+1}} \exp(\beta_G |E_G| - \beta_K |E_K|) \cdot \frac{Z_K}{\hat{Z}} 
	\le \frac{2}{r\sqrt{\eps L+1}},
	\]
	where the last inequality follows from $q \exp(\beta_K |E_K|)/Z_K \ge 1/2$ when $\beta_K \ge 4\log q$ and the assumption that $\hat{Z} \ge r q \exp(\beta_G |E_G|)$.  
	
	We prove part (iii) next. 
	Suppose that $Z_G \le \frac{1}{r} \hat{Z}$ and let $\nu_F$ be the conditional distribution of $\mu_F$ conditioned on the configuration on $H$ being in the majority phase (i.e., in the set $M$). That is, for $\sigma \in \Omega_H$ and $\tau \in \Omega_G$, 
	$$\nu_F(\sigma \cup \tau) = \1(\sigma \in M) \frac{\mu_F(\sigma \cup \tau) Z_F}{Z^\MM_F}.$$
	From the definition of total variation distance we have
	\[
	\TV{\mu_F}{\nu_F} = \sum_{\eta \in \Omega_F: \mu_F(\eta) \ge \nu_F(\eta)}  \mu_F(\eta) - \nu_F(\eta) = \frac{Z_F^\mathrm{D} + Z_F^\mathrm{S}}{Z_F}.
	\]
	From~\eqref{eq:s-small},~\eqref{eq:ratio}  and the assumption that $Z_G \le \frac{1}{r}\hat{Z}$, we get
	\[
	\TV{\mu_F}{\nu_F} \le \frac{Z_F^\mathrm{D}}{Z_F^\mathrm{M}} + \frac{Z_F^\mathrm{S}}{Z_F} \le \frac{1}{\sqrt{\eps L+1}} \frac{Z_G}{\hat{Z}} + e^{-c_3\sqrt{m}} \le \frac{1}{r\sqrt{\eps L +1}} + e^{-c_3\sqrt{m}}.
	\]
	Since $r = 96\eps^{-1}\sqrt{\eps L +1}$, it follows that
	\begin{equation}\label{eq:mu-nu-TV}
	\TV{\mu_F}{\nu_F} \le \frac{\eps}{96(\eps L+1)} + e^{-c_3\sqrt{m}} \le \frac{1}{96L} + e^{-c_3 \sqrt{m}}.
	\end{equation}
	Similarly, for the distribution $\mu_{F^*}$ and the conditional distribution $\nu_{F^*}$ of the majority phase, we also have
	\begin{equation}\label{eq:mu*-nu*-TV}
	\TV{\mu_{F^*}}{\nu_{F^*}} \le \frac{Z_{F^*}^\mathrm{D}}{Z_{F^*}^\mathrm{M}} + \frac{Z_{F^*}^\mathrm{S}}{Z_{F^*}} \le \frac{2}{r\sqrt{\eps L +1}} + e^{-c_3\sqrt{m}} \le \min\left\{\frac{1}{48L}, \frac{\eps}{48}\right\} + e^{-c_3\sqrt{m}}.
	\end{equation}
	
	Let $\mathcal{A}$ be the event that all vertices of $G$ are assigned the same spin.  
	By drawing a sample from $\nu_F$ and sequentially resampling the spin of each vertex of $G$,
	we deduce from a union bound and the fact $\hat \alpha > 1/q$ that
	\[
	1-\nu_F(\mathcal{A}) \le N \cdot \frac{\exp \left( \beta_G N + \beta (\frac{1-\hat{\alpha}}{q-1} m + m^{3/4}) \right)}{\exp \left( \beta(\hat{\alpha} m - m^{3/4}) \right)} \le e^{-\gamma\beta m}
	\]
	for a suitable constant $\gamma >0$; 
	similarly
	\[
		1 - \nu_{F^*}(\mathcal{A}) \le e^{-\gamma\beta m}.
	\]
	Let $\rho = \nu_F(\,\cdot\, | \mathcal{A})$ denote the conditional distribution of $\nu_F$ given $\mathcal{A}$. 
	Observe that $\rho$ does not depend on the graph $G$, because we condition on the event that all vertices from $G$ receive the same spin, and thus the structure of $G$ does not affect the conditional distribution $\rho$.
	In particular, we have $\rho = \nu_F(\,\cdot\, | \mathcal{A}) = \nu_{F^*}(\,\cdot\, | \mathcal{A})$. 
	Thus, we get 
	\begin{equation}\label{eq:nu-nu*-TV}
	\TV{\nu_F}{\nu_{F^*}} \le 
	\TV{\nu_F}{\rho} + \TV{\nu_{F^*}}{\rho} 
	= 1-\nu_F(\mathcal{A}) + 1-\nu_{F^*}(\mathcal{A}) 
	\le 2e^{-\gamma\beta m}. 
	\end{equation}
	
	From \eqref{eq:mu-nu-TV}, \eqref{eq:mu*-nu*-TV}, \eqref{eq:nu-nu*-TV} and the triangle inequality, we conclude that 
	\begin{align*}
		\TV{\mu_F}{\mu_{F^*}} &\le \TV{\mu_F}{\nu_F} + \TV{\mu_{F^*}}{\nu_{F^*}} + \TV{\nu_F}{\nu_{F^*}}\\ 
		&\le \frac{1}{32L} + 2e^{-c_3\sqrt{m}} + 2e^{-\gamma\beta m} \le \frac{1}{16L}. 
	\end{align*}
	and part (i) follows.
	
	Finally, for part (iv), suppose that
	$Z_G \ge r\hat{Z}$. Then,
	\begin{equation}\label{eq:2-nu-nu*-TV}
	\TV{\mu_F}{\nu_F} = 1 - \frac{Z_F^\mathrm{M}}{Z_F} \ge 1 - \frac{Z_F^\mathrm{M}}{Z_F^\mathrm{D}} \ge 1 - 4\sqrt{\eps L +1} \frac{\hat{Z}}{Z_G} \ge 1- \frac{4}{r}\sqrt{\eps L+1} = 1-\frac{\eps}{24}.
	\end{equation}
	Thus, equations \eqref{eq:2-nu-nu*-TV}, \eqref{eq:mu*-nu*-TV}, \eqref{eq:nu-nu*-TV} and the triangle inequality imply that
	\begin{align*}
		\TV{\mu_F}{\mu_{F^*}} &\ge \TV{\mu_F}{\nu_F} - \TV{\mu_{F^*}}{\nu_{F^*}} - \TV{\nu_F}{\nu_{F^*}}\\
		&\ge 1- \frac{\eps}{16} - e^{-c_3\sqrt{m}} - 2e^{-\gamma\beta m} \ge 1-\eps,
	\end{align*}
	and the result follows.
\end{proof}

\subsubsection{A generic reduction from counting to testing}

Theorem~\ref{thm:Potts-general} will follow from Lemmas~\ref{les} and~\ref{lem:F-to-TV} using the following general reduction from the decision version of $r$-approximate counting to testing. 

\begin{thm}
	\label{thm:gen-red}
	Let $(G,\beta_G,h_G)$ be a Potts model on an $N$-vertex graph $G$ with partition function $Z_G$ and let $\hat Z \in \R$.
	Let $\varepsilon \in (0,1)$ be a constant, $n = \poly(N)$ and 
	suppose there exists an $\varepsilon$-identity testing algorithm for 
	a family of Potts models $\mathcal M$ on $n$-vertex graphs 
	with sample complexity $L = L(n) = \poly(n)$ and $\poly(n)$ running time.
	Suppose that given $(G,\beta_G,h_G)$, $\hat Z$, $\varepsilon$ and $L$, 
	there exists $r = \poly(L,\varepsilon^{-1})$ 
	such that
	we can construct  
	two models $ F,  F^* \in \mathcal M$ in $\poly(n)$ time
	satisfying:
	\begin{enumerate}[(i)]
		\item If $Z_G \le \frac 1r \hat Z$, then  
			$ \TV{\mu_{ F}}{\mu_{ F^*}} \le \frac{1}{16L} $;
		\item If $Z_G \ge r \hat{Z}$, then 
		$ \TV{\mu_{ F}}{\mu_{ F^*}}  \ge 1-\eps$; and
		\item We can generate samples from a distribution $\mu_{F^*}^{\textsc{alg}}$ such that
		$\TV{\mu_{ F^*}}{\mu_{ F^*}^{\textsc{alg}}}  \le \delta$ in time $\poly(n,\delta^{-1})$.
	\end{enumerate}
Then, there is a $\poly(N)$ running time algorithm for the decision version of $r$-approximate counting for $(G,\beta_G,h_G)$ that succeeds with probability at least $5/8$.
\end{thm}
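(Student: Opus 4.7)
The reduction is a single black-box use of the identity tester. Given the counting instance $(G,\beta_G,h_G)$ and candidate $\hat Z$, I would first invoke the hypothesis to build $F, F^* \in \mathcal M$ in $\poly(n)$ time. Then, setting $\delta := 1/(32L)$, I would use hypothesis~(iii) to draw $L$ i.i.d.\ samples $X_1,\dots,X_L$ from $\mu_{F^*}^{\textsc{alg}}$ in time $\poly(n, \delta^{-1}) \cdot L = \poly(n) = \poly(N)$, and feed these samples together with the explicit description of $F$ into the given $\eps$-identity testing algorithm. The decision procedure then outputs ``case (i): $Z_G \le \hat Z / r$'' if the tester answers \textsc{Yes}, and ``case (ii): $Z_G \ge r \hat Z$'' otherwise.

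\textbf{Correctness.} The analysis is a standard coupling / data-processing argument using the product-measure bound $\TV{\nu_1^{\otimes L}}{\nu_2^{\otimes L}} \le L \cdot \TV{\nu_1}{\nu_2}$. By hypothesis~(iii) and the choice of $\delta$, the joint law of the $L$ samples actually fed to the tester is within total variation $L\delta = 1/32$ of the joint law of $L$ exact samples from $\mu_{F^*}$. In case~(i), hypothesis~(i) gives $\TV{\mu_F}{\mu_{F^*}} \le 1/(16L)$, so a second product coupling places our input within $1/16 + 1/32 = 3/32$ of $L$ i.i.d.\ samples from $\mu_F$. On this latter (virtual) input, the tester is in its exact-equality regime and outputs \textsc{Yes} with probability $\ge 3/4$; consequently, on the actual input, the tester outputs \textsc{Yes} with probability at least $3/4 - 3/32 = 21/32 > 5/8$, so the reduction succeeds. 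In case~(ii), hypothesis~(ii) places the tester directly in its far-apart regime on exact samples from $\mu_{F^*}$, where \textsc{No} is output with probability $\ge 3/4$; the single coupling against the approximate sampler then yields \textsc{No} on the actual input with probability at least $3/4 - 1/32 = 23/32 > 5/8$.

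\textbf{Main obstacle.} The only conceptual friction is that the identity tester's guarantee covers only exact equality versus $(1-\eps)$-separation, whereas case~(i) of our construction produces distributions that are merely $1/(16L)$-close rather than identical. The resolution is precisely the product coupling above: the scaling $1/(16L)$ in hypothesis~(i) is deliberately inverse-linear in the sample complexity $L$, so that after multiplying by $L$ we lose only a small additive constant in the \textsc{Yes}-probability, which is easily absorbed into the $3/4 \to 5/8$ slack. Everything else—the choice of $\delta$, the use of the triangle inequality in the TV bounds, and the polynomial-time accounting (since $\delta^{-1} = \poly(n)$ and $n = \poly(N)$)—is routine bookkeeping.
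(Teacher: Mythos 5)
Your proposal is correct and follows essentially the same route as the paper's proof: build $F,F^*$, draw $L$ $\delta$-approximate samples with $\delta = \Theta(1/L)$, feed them to the tester, and use the product coupling bound $\TV{\nu_1^{\otimes L}}{\nu_2^{\otimes L}} \le L\,\TV{\nu_1}{\nu_2}$ plus the triangle inequality to transfer the tester's $3/4$ guarantee to the approximate input at a small constant cost. The only cosmetic differences are your choice $\delta = 1/(32L)$ versus the paper's $1/(16L)$ and the resulting slack constants ($21/32$ and $23/32$ versus the paper's uniform $5/8$ bound), which do not affect the argument.
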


\begin{proof}
	Recall that the input to the decision version of $r$-approximate counting is the model $(G,\beta_G,h_G)$ and a real number $\hat{Z}>0$; the goal is to determine whether $Z_G \le \frac{1}{r} \hat{Z}$ or $Z_G \ge r \hat{Z}$. The algorithm proceeds as follows:
	\begin{enumerate}
		\item Construct the Potts models $F$ and $F^*$ in $\mathcal M$.
		\item Generate $L = L(n)$ $\delta$-approximate samples $\mathcal{S} = \{\sigma_1,\dots,\sigma_L\}$ from $\mu_{F^*}^{\textsc{alg}}$, setting $\delta = \frac{1}{16L}$.
		\item The input to the testing algorithm, henceforth called the \textsc{Tester}, is $F$, which plays the role of the visible model, and the samples $\mathcal{S}$.
		\item If the \textsc{Tester} outputs \textsc{Yes}, then return $Z_G \le \frac{1}{r} \hat{Z}$.
		\item If the \textsc{Tester} outputs \textsc{No}, then return $Z_G \ge r \hat{Z}$.
	\end{enumerate}
	
	We show next that our output for decision version of $r$-approximate counting is correct with probability at least $5/8$. 
	Consider first the case when $Z_G \le \frac{1}{r} \hat{Z}$.
	If this is the case, then by assumption we have
	$
	\TV{\mu_{F}}{\mu_{F^* }} \le \frac{1}{16L}
	$
	and 
	\begin{equation}
	\label{eq:F*-alg}
	\TV{\mu_{F^* }}{\mu_{F^* }^{\textsc{alg}}} \le \frac{1}{16L}.
	\end{equation}
	So, by the triangle inequality,
	$$
	\TV{\mu_F}{\mu_{F^* }^{\textsc{alg}}} \le \frac{1}{8L}.
	$$
	Let $(\mu_{F})^{\otimes L}$, $(\mu_{F^* })^{\otimes L}$ and $(\mu_{F^* }^{\textsc{alg}})^{\otimes L}$ be the product distributions corresponding to $L$ independent samples from $\mu_{F}$, $\mu_{F^* }$ and $\mu_{F^* }^{\textsc{alg}}$ respectively.
	We have
	$$
	\TV{(\mu_{F})^{\otimes L}}{(\mu_{F^* }^{\textsc{alg}})^{\otimes L}} \leq L \TV{\mu_F}{\mu_{F^* }^{\textsc{alg}}} \le \frac 18. 
	$$
	Hence, if $\pi$ is the optimal coupling of the distributions $(\mu_{F^* }^{\textsc{alg}})^{\otimes L}$ and $(\mu_{F})^{\otimes L}$, and $(\mathcal{S},\mathcal{S}')$ is sampled from $\pi$, then 
	$\mathcal{S} \sim (\mu_{F^* }^{\textsc{alg}})^{\otimes L}$, $\mathcal{S}' \sim (\mu_{F})^{\otimes L}$ and $\pi(S\neq S' ) \le \frac{1}{8}$. Therefore,
	\begin{align}
	\label{eq:coupling-main}
	\Pr[ \textsc{Tester}&~\text{outputs}~\textsc{No}~\text{when given samples}~\mathcal{S}~\text{where}~\mathcal{S}\sim(\mu_{F^* }^{\textsc{alg}})^{\otimes L}]\notag\\
	={}& \Pr[  \textsc{Tester}~\text{outputs}~\textsc{No}~\text{when given samples}~\mathcal{S}~\text{where}~(\mathcal{S},\mathcal{S}')\sim\pi]\notag \\
	\leq{}&	\Pr[ \textsc{Tester}~\text{outputs}~\textsc{No}~\text{when given samples}~\mathcal{S}'~\text{where}~(\mathcal{S},\mathcal{S}')\sim\pi]+\pi(S\neq S' )\notag\\
	={}&\Pr[\textsc{Tester}~\text{outputs}~\textsc{No}~\text{when given samples}~\mathcal{S}'~\text{where}~\mathcal{S}'\sim(\mu_{F})^{\otimes L}]+\pi(S\neq S' )\notag\\
	\le{}&  \frac{1}{4} + \frac{1}{8} \le \frac{3}{8}.
	\end{align}	
	Hence, the \textsc{Tester} returns \textsc{Yes} (and our output is correct) with probability at least $5/8$.
	
	Now, if $Z_G \ge r \hat{Z}$, then by assumption 
	$
	\TV{\mu_{F}}{\mu_{F^* }} > 1 - \eps.
	$
	Moreover, by~ \eqref{eq:F*-alg} 
	$$
	\TV{(\mu_{F^* })^{\otimes L}}{(\mu_{F^* }^{\textsc{alg}})^{\otimes L}} \leq L \TV{\mu_{F^* }}{\mu_{F^* }^{\textsc{alg}}} \le \frac 18.
	$$
	Thus, analogously to~\eqref{eq:coupling-main} (i.e., using the optimal coupling for $(\mu_{F^* }^{\textsc{alg}})^{\otimes L}$ and $(\mu_{F^* })^{\otimes L}$), we get
	$$
	\Pr\left[  \textsc{Tester}~\text{outputs}~\textsc{Yes}~\text{when given samples}~\mathcal{S}~\text{where}~\mathcal{S}\sim(\mu_{F^* }^{\textsc{alg}})^{\otimes L}\right] \leq \frac{3}{8}.
	$$
	Hence, the \textsc{Tester} returns \textsc{No} with probability at least $5/8$.	Therefore, we can conclude that our algorithm for decision $r$-approximate counting succeeds with probability at least $5/8$.
	The result then follows from the fact that the running time of the algorithm is $\poly(N)$, as each step of the algorithm takes at most $\poly(N)$ time by our assumptions.
\end{proof}

\subsubsection{Proof of Theorem~\ref{thm:Potts-general}}

We can now prove Theorem~\ref{thm:Potts-general} which states hardness of identity testing for the ferromagnetic Potts model on
general graphs. 

\begin{proof}[Proof of Theorem~\ref{thm:Potts-general}]

Consider the ferromagnetic Potts model on an $N$-vertex graph $G = (V_G,E_G)$
with constant edge weight $\beta_G$ in every edge and no external field. 
Let $\hat Z > 0$ be a real number and let $n = N^{10}+N$.
Suppose there is an $\varepsilon$-identity testing algorithm for $\MfPotts(n,n,\beta_G,0)$
with sample complexity $L = L(n) = \poly(n)$ and running time $\poly(n)$.  
Let $r = 96\eps^{-1}\sqrt{\eps L +1}$; our goal is to determine whether $Z_G \le \frac{1}{r} \hat{Z}$ or $Z_G \ge r \hat{Z}$ where $Z_G := Z_{G,\beta_G}$.

We construct the Potts models  $F$ and $F^*$ as describe in Section~\ref{subsubsec:potts-reduce} with corresponding Gibbs distributions $\mu_F$ and $\mu_{F^*}$ using the values of $\beta$ and $\beta_H$ supplied by Lemma~\ref{lem:F-to-TV}; hence the models $F$ and $F^*$ belong to $\MfPotts(n,n,\beta_G,0)$, since $\beta_G > \max\{\beta,\beta_H\}$.

Lemmas~\ref{lem:F-to-TV} ensures that when 
\begin{equation}
\label{eq:Z-bound}
r q e^{\beta_G |E_G|} \le \hat{Z} \le \frac{q^N}{r}  e^{\beta_G |E_G|},
\end{equation}
conditions (i) and (ii) in Theorem~\ref{thm:gen-red} are satisfied.
Moreover, Lemma~\ref{les} gives condition (iii). Thus, Theorem~\ref{thm:gen-red} implies that we have an algorithm for the decision version of $r$-approximate counting for the Potts model on $G$ when $\hat Z$ satisfies~\eqref{eq:Z-bound}.
Meanwhile, we can bound $Z_G$ crudely by
\[
q e^{\beta_G|E_G|} \le Z_G \le q^N {e}^{\beta_G|E_G|}. 
\]
Thus, if $\hat{Z} < r q \exp(\beta_G|E_G|) \le r Z_G$, we can output $\hat{Z} \le \frac{1}{r} Z_G$.
Similarly, when $\hat{Z} > \frac{1}{r} q^N \exp(\beta_G|E_G|) \ge \frac{1}{r} Z_G$ we can output $\hat{Z} \ge r Z_G$.
Therefore, we have a $\poly(N)$ algorithm for the decision version of $r$-approximate counting for $\hatMfPotts(N,N,\beta_G,0)$ where $N =  \Theta(n^{1/10})$, $r = \poly(N)$  and $\beta_G= \Theta(1)$.
The result then follows from Theorem~\ref{thn:decision-cnt:potts}
and the fact that there is no $\fpras$ for $\hatMfPotts(N,N,\beta_G,0)$ unless 
there is one for \#BIS~\cite{GJ,GSVY}. 
\end{proof}

\subsection{Step 3: Degree reduction}\label{Potts-degree}

The following result provides a reduction from identity testing in the family $\MPotts(\hat{n},d,\hat \beta,\hat h)$ to identity testing in $\MPotts(n,n,\beta,{h})$, under some mild assumptions on the model parameters; this allows us to deduce the hardness of identity problem on graphs of bounded degree as
stated in Theorem~\ref{thm:Potts} using the main result Theorem~\ref{thm:Potts-general} from the previous section.

\begin{thm}
	\label{thm:deg-red-main}
	Let $\hat{n},d \in \N^+$ be such that $3 \le d \le \hat{n}^{1-\rho}$ for some constant $\rho \in (0,1)$.
	Suppose that
	for some constants $\beta,h\ge0$
	there is no $\poly(n)$ running time $\varepsilon$-identity testing algorithm for $\MPotts(n,n,\beta,h)$.
	Then there exists a constant $c\in(0,1)$ such that, for any constant $\hat\eps > \eps$
	there is no $\poly(\hat{n})$ running time $\hat \eps$-identity testing algorithm for $\MPotts(\hat{n},d,\hat \beta, \hat h)$
	provided $\hat\beta d = \omega(\log \hat{n})$ and
	$\hat h \le h \hat{n}^{-c}$.
\end{thm}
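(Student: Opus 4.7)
The plan is to adapt the gadget-based degree-reduction technique used in \cite{Sly,SlySun,colt} to the $q$-state Potts setting. Given an instance $M=(G,\beta_G,h_G)\in\MPotts(n,n,\beta,h)$, I will construct a bounded-degree instance $\hat M=(\hat G,\hat\beta_{\hat G},\hat h_{\hat G})\in\MPotts(\hat n,d,\hat\beta,\hat h)$ by replacing each vertex $v\in V_G$ with an independently sampled random bipartite near-$d$-regular graph $B_v=(L_v\cup R_v,E_v)$ on $\Theta(\hat n/n)$ vertices. For every $v\in V_G$, a collection of ``port'' vertices on $L_v$ is reserved, one port per edge of $G$ incident to $v$, and each edge $\{u,v\}\in E_G$ becomes a single edge in $\hat G$ joining a port on $L_u$ to a port on $L_v$. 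All edges (both internal to a gadget and between gadgets) carry the same weight $\hat\beta$; this weight is chosen so that $\hat\beta d=\omega(\log\hat n)$ while the aggregate coupling through the gadget matches $\beta_G(\{u,v\})$ on super-spins up to a $(1+o(1))$ factor. The external field $\hat h$ on each vertex of $B_v$ is selected so that summing the field over the $\Theta(\hat n/n)$ vertices of $B_v$ reproduces $h_G(v,\cdot)$ on the super-spin; since $n=\Theta(\hat n^{1-c})$ for a suitable $c\in(0,1)$ depending on $\rho$, this forces $|\hat h|\le h\,\hat n^{-c}$.

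The next step is to invoke the expansion and phase-structure properties of random near-$d$-regular bipartite graphs developed in \cite{colt} and extend them from the two-spin to the $q$-spin setting. Concretely, I will show that with high probability over the gadget sampling, each $B_v$ admits a partition of its configuration space into $q$ ``majority phases'' $\Phi_v^1,\dots,\Phi_v^q$, plus an exceptional set of mass $\exp(-\Omega(\hat\beta\,d\,|V(B_v)|))$ under any boundary condition on the $O(d)$ ports; here $\Phi_v^i$ consists of configurations where spin $i$ is dominant on $L_v$. Using this I will define a projection $\pi\colon\Omega_{\hat G}\to\Omega_G$ sending a configuration to the vector of dominant-on-$L_v$ spins, and show that the pushforward $\pi_*\mu_{\hat M}$ is within total variation $o(1)$ of $\mu_M$. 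The argument requires computing the effective pairwise interaction across an inter-gadget edge $\{u,v\}$ by summing out the gadget configurations in each phase and checking that this agrees with $\beta_G(\{u,v\})$, as well as verifying that the effective single-vertex field agrees with $h_G(v,\cdot)$, both up to errors that are negligible on the scale of $\hat n$; the assumption $\hat\beta d=\omega(\log\hat n)$ ensures the exceptional gadget mass is negligible even after union-bounding over the $n$ gadgets and the $q^{O(d)}$ possible port boundaries.

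To complete the reduction, suppose there is a $\poly(\hat n)$-time $\hat\varepsilon$-identity tester $\mathcal{A}$ for $\MPotts(\hat n,d,\hat\beta,\hat h)$ with sample complexity $\hat L=\poly(\hat n)$. Given $\hat L$ samples from the hidden $M^{*}\in\MPotts(n,n,\beta,h)$, I will lift each sample $\sigma\sim\mu_{M^{*}}$ to a sample of $\mu_{\hat M^{*}}$ by independently drawing each gadget from its local Gibbs distribution (using the same random $B_v$ as in $\hat M$) conditioned on being in phase $\Phi_v^{\sigma(v)}$; by the $q$-fold symmetry and the concentration established above, this can be done efficiently via rejection sampling from the unconditional gadget distribution with acceptance probability $\Omega(1/q)$, and the resulting distribution approximates $\mu_{\hat M^{*}}$ in total variation to within $o(1/\hat L)$. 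Running $\mathcal{A}$ on the explicit visible model $\hat M$ with these lifted samples then decides the original testing problem: if $M=M^{*}$ then $\hat M=\hat M^{*}$, while if $\TV{\mu_M}{\mu_{M^{*}}}\ge 1-\varepsilon$, the contraction of TV under the projection map together with the gadget-error estimates yields $\TV{\mu_{\hat M}}{\mu_{\hat M^{*}}}\ge 1-\hat\varepsilon$ for any constant $\hat\varepsilon>\varepsilon$. The main obstacle I anticipate is the Potts-specific gadget analysis: extending the two-phase concentration results of \cite{colt} to $q$ phases with inconsistent (if very small) external fields, and verifying that the effective Hamiltonian matches $(G,\beta_G,h_G)$ up to $o(1)$ TV error, requires a careful entropy-versus-energy balance inside each gadget and tight control over the port boundary conditions via the bipartite expansion of $B_v$.
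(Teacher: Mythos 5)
Your plan follows the same broad blueprint as the paper's proof (which goes through the more general Theorem~\ref{thm:deg-red}): replace each vertex of $G$ by a copy of a random near-$d$-regular bipartite gadget in the spirit of \cite{colt}, argue that each gadget is with high probability monochromatic conditional on its port boundary, and reduce identity testing on the bounded-degree instance to identity testing on $G$ via a pushforward / lifting argument. The fields $\hat h \approx h_G/|V(B_v)|$ and the scaling $n = \hat n^{\Theta(1)}$ are also as in the paper.

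However there is a genuine gap in your choice of inter-gadget edge weights. You assign every edge (internal and inter-gadget) the common weight $\hat\beta$ with $\hat\beta d = \omega(\log\hat n)$, and assert that the resulting super-spin coupling ``matches $\beta_G(\{u,v\})$ up to a $(1+o(1))$ factor.'' This is not the case: with a single inter-gadget edge of weight $\hat\beta \to \infty$ between two ports, each port strongly aligns with its own gadget's phase, so the effective coupling between super-spins grows with $\hat\beta$ rather than remaining $\Theta(\beta_G(e)) = O(1)$. There is no mechanism in your construction that damps the coupling back down to $\beta_G(e)$. The paper avoids this altogether by assigning each of the $2\ell(e)$ inter-gadget edges the \emph{small} weight $\beta_G(\{u,v\})/(2\ell(\{u,v\}))$ (which is well below $\hat\beta$ once $\hat\beta d = \omega(\log\hat n)$), so that on the event that every gadget is monochromatic the Hamiltonian on $\Gs_\Gamma$ equals that on $G$ exactly, up to a configuration-independent constant; this yields the \emph{exact} identity $\mu_{\Gs_\Gamma}(\cdot \mid \Good) = \mu_G$ of Lemma~\ref{lemma:model:eq}. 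Your plan replaces this exact identity by an ``up to $(1+o(1))$'' approximation that the construction does not actually deliver, and the final TV comparisons in Lemma~\ref{lemma:main-reduction-phase} and in the proof of Theorem~\ref{thm:deg-red} rely on having the exact conditional distribution.

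Two smaller discrepancies worth noting: the paper uses a \emph{single} random gadget $B$ reused in every $B_v$, whereas you sample independently (this is not obviously fatal, but the expansion and union-bound bookkeeping in \cite{colt} is stated for one $B$); and the gadget exceptional mass in the paper is of the form $1 - (1 - q^2 e^{2\hat h} e^{-\delta\hat\beta d})^{2b} = O(b\, e^{-\Omega(\hat\beta d)})$, not $\exp(-\Omega(\hat\beta d\, |V(B_v)|))$. Also, your rejection-sampling lift is unnecessary: since the pushforward on $\Good$ is exact, the paper's Lemma~\ref{lemma:main-reduction-sampling} simply colors every vertex of $B_v$ with $\sigma_G(v)$, giving a sample from $\mu_{\Gs_\Gamma}(\cdot\mid\Good)$ directly and hence a $\delta$-approximate sample from $\mu_{\Gs_\Gamma}$.
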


This theorem is a special case of our more general result in Theorem~\ref{thm:deg-red}, which 
 we prove in Section~\ref{section:main-proof}. We conclude with the proof of Theorem~\ref{thm:Potts}.

\begin{proof}[Proof of Theorem~\ref{thm:Potts}]
	Follows from Theorems~\ref{thm:Potts-general} and~\ref{thm:deg-red-main}.
\end{proof}

\section{Testing mixed RBMs with no external fields}
\label{RBM-nofield}

In this section, we show that identity testing for RBMs with arbitrary edges interactions is computationally hard,
even in the absence of an external field (i.e., $h=0$); specifically, we prove Theorem~\ref{thm:main-RBM-mixed} from the introduction.
For this, we establish first the hardness of the identity testing problem for antiferromagnetic Ising models with bounded edge interactions. We then reduce this problem to identity testing for mixed RBMs using our degree reduction machinery (see Sections and \ref{Potts-degree} and ~\ref{section:main-proof}) which conveniently also turns our instance into a bipartite graph.

We start by reducing the problem of approximating the partition function of the antiferromagnetic Ising models to identity testing. Hence, the following well-known result concerning the hardness of approximate counting in the antiferromagnetic setting plays an important role for us.

\begin{thm}[\!\cite{SlySun,GSV:Ising}]
	\label{thm:antiferro-Ising-hard}
	Let $d \ge3$ be an integer and let $\beta_0 > \beta_c(d) := \mathrm{arctanh}(1/(d-1))$ be a real number. Then, for a sufficiently large integer $N$, there is no $\fpras$ for the partition function of
	the antiferromagnetic Ising model on $d$-regular $N$-vertex graphs
	with interaction $\beta_0$ on every edge,
	unless $\RP=\NP$.
\end{thm}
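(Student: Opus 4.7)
The plan is to follow the reduction of Sly and Sly--Sun (extended by Galanis--\v{S}tefankovi\v{c}--Vigoda to all $d \geq 3$), which reduces a hard combinatorial problem---e.g., approximating \textsc{Max-Cut} on 3-regular graphs---to approximating the partition function of the antiferromagnetic Ising model on $d$-regular graphs with $\beta_0 > \beta_c(d) = \mathrm{arctanh}(1/(d-1))$. The threshold $\beta_c(d)$ is exactly the point where the antiferromagnetic Ising model on the infinite $d$-regular tree transitions from uniqueness to non-uniqueness of the infinite-volume Gibbs measure, with two extremal bipartite phases coexisting above it. The overall strategy is: (1) analyze the phase structure on random $d$-regular bipartite graphs as gadgets, (2) compose the gadgets using a hard-to-approximate base instance, and (3) argue that an $\fpras$ for the composed instance would decode the base instance.

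First I would study the distribution of the antiferromagnetic Ising model on a random $d$-regular bipartite graph $H = (A \cup B, E_H)$ with $\beta_0 > \beta_c(d)$. The heart of the argument is a second-moment computation (with small-subgraph conditioning to control cycle structure) showing that with high probability over $H$, the Gibbs measure concentrates on configurations whose imbalance between $A$ and $B$ lies near one of two symmetric optima $(\alpha^*, 1-\alpha^*)$ or $(1-\alpha^*, \alpha^*)$, where $\alpha^*$ is the nontrivial fixed point of the tree recursion at $\beta_0$. This concentration defines a robust \emph{phase marker}: a Boolean-valued function of the configuration indicating which of the two asymmetric phases is realized, stable under local perturbations and, crucially, stable when a bounded number of vertices of $H$ are exposed as ``ports'' to outside edges.

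Next I would carry out the reduction. Given a 3-regular instance $G_0$ of (approximate) \textsc{Max-Cut}, replace each vertex $v$ of $G_0$ by a fresh random $d$-regular bipartite gadget $H_v$ with a small number of reserved ports, and replace each edge $\{u,v\}$ of $G_0$ by an antiferromagnetic edge (at interaction $\beta_0$) between a port of $H_u$ and a port of $H_v$. By the phase marker analysis, the effective distribution induced on the markers $(m_v)_{v \in V(G_0)}$ is, up to exponentially small error in the gadget size, an antiferromagnetic Ising model on $G_0$ itself with an effective inverse temperature that can be driven arbitrarily low (i.e., arbitrarily ``cold''). Hence the partition function of the composed graph, up to an explicit multiplicative factor that an $\fpras$ can absorb, encodes a very-low-temperature antiferromagnetic partition function on $G_0$, from which the maximum cut of $G_0$ can be read off; this contradicts $\RP \neq \NP$. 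The number of vertices in the composed graph is polynomial in $|V(G_0)|$, so an $\fpras$ for the Ising partition function on $N$-vertex $d$-regular graphs yields an $\RP$ algorithm for \textsc{Max-Cut} approximation.

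The main obstacle---and the technical contribution of Sly--Sun and its extension---is proving the phase-marker concentration: controlling the second moment of the partition function conditioned on the bipartite magnetization profile, handling the short cycles via small-subgraph conditioning, and showing that the marker is not disrupted when a constant number of half-edges are rewired to form inter-gadget connections. Once this ``gadget'' behaves like a clean Boolean variable, the outer reduction to \textsc{Max-Cut} is comparatively routine, but the second-moment analysis at $\beta_0 > \beta_c(d)$ for general $d$ requires delicate estimates on the tree recursion and its stability.
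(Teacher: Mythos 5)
The paper states this theorem as a black-box import from Sly--Sun and Galanis--\v{S}tefankovi\v{c}--Vigoda (\texttt{[SlySun,GSV:Ising]}) and offers no proof of its own, so there is nothing internal to compare against; your sketch is a faithful high-level summary of the argument in the cited works. The essential ingredients you name are the right ones: non-uniqueness of the antiferromagnetic Ising model on the infinite $d$-regular tree above $\beta_c(d)=\mathrm{arctanh}(1/(d-1))$ yields two symmetric dominant phases on random $d$-regular bipartite gadgets (established by a second-moment computation with small-subgraph conditioning), and wiring such gadgets along a hard bounded-degree \textsc{Max-Cut} instance makes the per-gadget phase indicators behave, up to exponentially small error, like an effectively zero-temperature antiferromagnetic Ising model on the base graph, so an $\fpras$ for the composed instance would let one extract a max cut and hence place an $\NP$-hard problem in $\RP$. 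One minor wording slip: you describe the effective model as having ``inverse temperature that can be driven arbitrarily low (i.e., arbitrarily cold),'' but cold corresponds to \emph{high} inverse temperature---the mechanism you describe (gadget size drives the effective marker coupling to $\infty$) is correct, just phrased backwards.
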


The next step in our proof is a reduction from the decision version of approximate counting (see Definition~\ref{def:decision-cnt}) to identity testing.

\begin{thm}\label{thm:reduction-antiferro-Ising}
	Let $\varepsilon \in (0,1)$ be any constant.
	There exists $0 < \beta_0 = O(1)$ such that
	an $\eps$-identity testing algorithm for $\MaIsing(n,n,\beta_0,0)$ with $\poly(n)$ sample complexity and running time
	can be used to solve the decision $r$-approximate counting problem for $\hatMaIsing(N,3,-0.6,0)$ in $\poly(N)$ time, where $N = \Theta(\sqrt{n})$ and $r = \poly(N)$.
\end{thm}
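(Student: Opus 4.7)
My plan is to follow the three-step framework developed in Section~\ref{sec:Potts}, specialized to the antiferromagnetic Ising setting. First, I would invoke Theorem~\ref{thn:decision-cnt:potts} with $\mathcal{M}=\hatMaIsing(N,3,-0.6,0)$ to reduce standard $r$-approximate counting to its decision version. Combining this with Theorem~\ref{thm:antiferro-Ising-hard} (noting $|\beta_G|=0.6>\beta_c(3)=\mathrm{arctanh}(1/2)\approx 0.549$) gives $\NP$-hardness of decision $N^c$-approximate counting for this family; this will be the hard counting problem we reduce from.

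Next, given a hard instance $(G,\beta_G)$ and a target $\hat Z\in\R$, I would construct a visible and a hidden antiferromagnetic Ising model $F, F^*\in\MaIsing(n,n,\beta_0,0)$ with $n=\Theta(N^2)$, mirroring the construction in Section~\ref{subsubsec:potts-reduce}. Namely, take a gadget $H$ on $m=n-N=\Theta(N^2)$ vertices exhibiting two dominant phases $A$ and $B$, connect $H$ to $G$ via a complete bipartite graph of antiferromagnetic edges of weight $\beta$ to obtain $F$, and obtain $F^*$ by replacing $G$ with a reference graph $K_N$ carrying a suitable edge weight $\beta_K$, attached to $H$ in the same way. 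The design goal is that in phase $A$ the signal that $H$ sends to $G$ is strong enough to pin the $G$-configuration into a canonical state independent of the internal edges of $G$ (so $Z_F^A$ does not depend on $Z_G$), while in phase $B$ the signal on $G$ is essentially cancelled and so $Z_F^B$ is proportional to $Z_G$. I would then tune $\beta_H$ and $\beta$, via analogues of Lemmas~\ref{lem:binary-betaH-meta}, \ref{lem:binary-betaH-ratio}, \ref{lem:M-D-S}, and~\ref{lem:F-to-TV}, so that the ratios $Z_F^A/Z_F^B$ and $Z_{F^*}^A/Z_{F^*}^B$ yield the two-sided TV bounds required by Theorem~\ref{thm:gen-red}: $\TV{\mu_F}{\mu_{F^*}}\le 1/(16L)$ when $Z_G\le \hat Z/r$, and $\TV{\mu_F}{\mu_{F^*}}\ge 1-\eps$ when $Z_G\ge r\hat Z$, with $r\approx \sqrt{L/\eps}$. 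Exact sampling from $\mu_{F^*}$ in $\poly(n)$ time would follow by enumerating signatures on $H$ and $K_N$ as in Lemma~\ref{les}, exploiting the symmetry of the reference graph.

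Finally, feeding $(F, F^*)$ into Theorem~\ref{thm:gen-red} converts any hypothetical $\eps$-identity testing algorithm for $\MaIsing(n,n,\beta_0,0)$ into a polynomial-time algorithm for decision $r$-approximate counting on $\hatMaIsing(N,3,-0.6,0)$, contradicting Step~1.

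The hard part will be Step~2, specifically the choice of gadget $H$. Unlike the Potts case with $q\ge 3$, which exploits the first-order phase transition of the mean-field ferromagnetic Potts model at $\mathfrak{B}_o/m$, the mean-field Ising model ($q=2$) has only a second-order transition with no phase coexistence at its critical point; thus a direct mean-field $K_m$ gadget with antiferromagnetic interactions fails to produce two simultaneously-dominant phases. A natural replacement is a non-mean-field gadget tuned near the uniqueness threshold $\beta_c$---for instance, a random near-regular bipartite graph whose antiferromagnetic Ising model in the non-uniqueness regime ($|\beta_H|>\beta_c$) is known to exhibit approximate coexistence between a bipartite-ordered phase (configurations roughly aligned with the two sides of the bipartition) and a disordered phase, in the spirit of the hardness-of-counting gadgets of~\cite{SlySun,GSV:Ising}. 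Proving the metastability and tunable-ratio analogues of Lemmas~\ref{lem:binary-betaH-meta} and~\ref{lem:binary-betaH-ratio} for such a gadget---with exponential concentration of the two phases and a ratio window broad enough to cover any admissible value of $\hat Z$---is where most of the technical effort would concentrate.
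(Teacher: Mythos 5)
Your Step~1 and Step~3, and the overall framework of constructing $F, F^*$ and plugging them into Theorem~\ref{thm:gen-red}, match the paper, and you correctly identify the key obstruction: the $q=2$ Ising model has only a second-order transition, so the Potts mean-field gadget $K_m$ cannot be reused. However, the fix you propose does not work. The Ising model, mean-field or otherwise, does not exhibit first-order coexistence between an ordered and a disordered phase anywhere. On a random $d$-regular bipartite graph in the non-uniqueness regime of the antiferromagnetic Ising model (as in~\cite{SlySun,GSV:Ising}), the two dominant phases are the two bipartite-ordered phases related by a global spin flip, and the disordered phase is strictly subdominant except exactly at the critical point, where the three phases merge (second-order behavior). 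So there is no critical window in which an ordered/disordered ratio is tunable from $e^{-c\sqrt m}$ to $e^{c\sqrt m}$, which is precisely what your analogues of Lemmas~\ref{lem:binary-betaH-meta} and~\ref{lem:binary-betaH-ratio} would require. A further structural issue is that the two ordered phases of such a gadget are spin-flip symmetric, so the field a gadget in an ``ordered'' phase exerts on $G$ has no preferred orientation; you do not get a phase in which $G$ is pinned to a canonical configuration independent of $Z_G$. Finally, taking $K_N$ as the hidden reference is problematic here: every edge weight in $\MaIsing$ must be negative, and the antiferromagnetic Ising model on $K_N$ is frustrated and neither trivially samplable nor analytically explicit.

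The paper bypasses phase transitions entirely with a deterministic, closed-form gadget. The role of $H$ is played by just \emph{two} vertices $s_1, s_2$, with the ``majority'' and ``disordered'' cells defined as $\{\sigma(s_1) = \sigma(s_2)\}$ and $\{\sigma(s_1) \neq \sigma(s_2)\}$. Each $v\in V_G$ is joined to both $s_1$ and $s_2$ by $N$ disjoint length-$2$ paths of weight $-\beta_1$; when $\sigma(s_1)=\sigma(s_2)$ the two bundles reinforce and pin $v$, and when $\sigma(s_1)\neq\sigma(s_2)$ they cancel, so $Z_F^{\mathrm{D}} \propto Z_G$ while $Z_F^{\mathrm{M}}$ is explicit. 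In addition $s_1$ and $s_2$ are joined by $N^2$ disjoint length-$3$ paths of weight $-\beta_2$, each contributing $e^{-3\beta_2}+3e^{-\beta_2}$ in the majority cell and $3e^{-2\beta_2}+1$ in the disordered cell, so the ratio $Z_F^{\mathrm{D}}/Z_F^{\mathrm{M}_0}$ equals $(g(\beta_2)/\cosh\beta_1)^{N^2}e^{0.9N}Z_G$ for an explicit monotone $g$, and can be set to any target by binary search on $\beta_2$ --- no concentration or metastability argument is needed. The hidden model replaces $G$ by the \emph{edgeless} graph $I_N$, which is trivially samplable and has known partition function $2^N$, so it serves as the reference in place of the clique.
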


We can now provide the proof of Theorem~\ref{thm:main-RBM-mixed}.

\begin{proof}[Proof of Theorem~\ref{thm:main-RBM-mixed}]
From Theorems~\ref{thm:antiferro-Ising-hard} and~\ref{thn:decision-cnt:potts}, it follows that for any $c>0$ there is no $\poly(N)$ running time algorithm for the decision version of $N^c$-approximate counting for $\hatMaIsing(N,3,-0.6,0)$ unless $\RP=\NP$.
Theorem~\ref{thm:reduction-antiferro-Ising} then implies that, under the same assumption that $\RP=\NP$, there is no
$\eps$-identity testing algorithm for $\MaIsing(n,n,\beta_0,0)$ with $\poly(n)$ sample complexity and running time for constant $\varepsilon \in (0,1)$, $n = \Theta(\sqrt{n})$ and a suitable constant $\beta_0 > 0$. The result then follows from Theorem \ref{thm:deg-red}. 
\end{proof}

We provide in the next section the missing proof of Theorem~\ref{thm:reduction-antiferro-Ising}

\subsection{Reducing counting to testing for the antiferromagnetic Ising model: proof of Theorem~\ref{thm:reduction-antiferro-Ising}}

\noindent
\textbf{Testing instance construction.} \ \ Consider an antiferromagnetic Ising model on an $N$-vertex $3$-regular graph $G = (V_G,E_G)$ with the same inverse temperature parameter $\beta_G = -0.6$ on every edge an no external field.
We provide an algorithm for the decision version of $r$-approximate counting for $Z_G := Z_{G,\beta_G,0}$, using the presumed identity testing algorithm.

Define $F$ to be a graph with the vertex set
\[
V_{F} = V_G \cup \{s_1,s_2\} \cup \left\{u^{(i)}_{v,j}: v\in V_G, 1\le i\le N, j\in\{1,2\} \right\} \cup \left\{ w^{(i)}_j: 1\le i \le N^2,j\in\{1,2\} \right\}
\]
and the edge set
\begin{align*}
E_{F} = E_G &\cup \left\{\{u_{v,j}^{(i)},v\},\{u_{v,j}^{(i)},s_j\}: v\in V_G,1\le i\le N,j\in\{1,2\}\right\}\\
&\cup \left\{ \{w_j^{(i)},s_j\}: 1\le i\le N^2,j\in\{1,2\} \right\}\\
& \cup \left\{ \{w_1^{(i)}, w_2^{(i)}\}: 1\le i\le N^2 \right\};
\end{align*}
\begin{figure}[th]
	\centering
  	\includegraphics[width=0.7\textwidth]{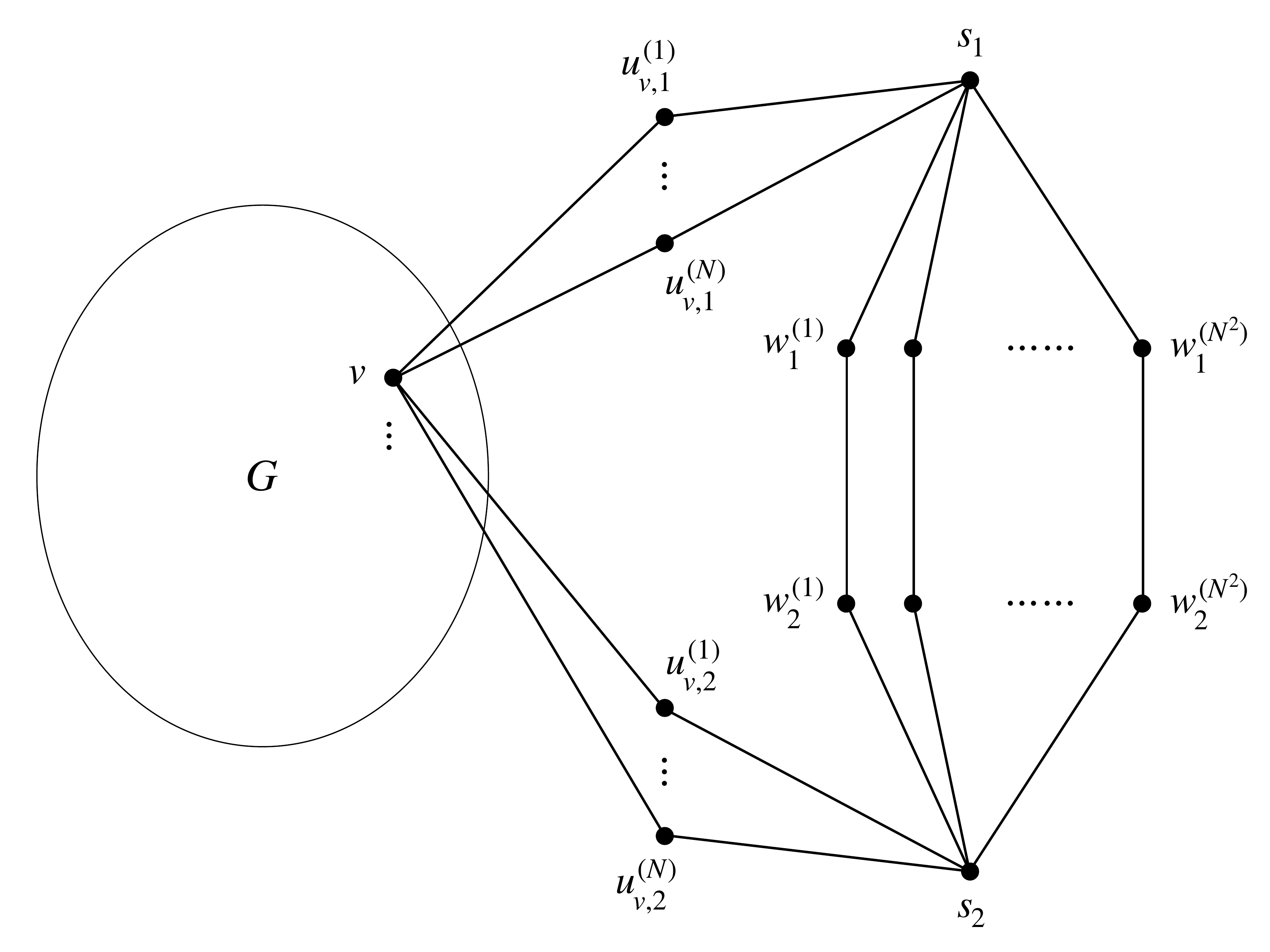}
  	\caption{The graph $F$. For every vertex $v\in V_G$ and $j\in\{1,2\}$, $v$ and $s_j$ are connected by $N$ disjoint paths of length $2$. Also, $s_1$ and $s_2$ are connected by $N^2$ disjoint paths of length $3$.}
  	\label{fig:F-antiferro}
\end{figure}
see Figure~\ref{fig:F-antiferro}. 
Observe that $F$ has $n = 4N^2+N+2$ vertices. 
Given two real numbers $\beta_1,\beta_2 > 0$, we then define an antiferromagnetic Ising model on the graph $F$ as follows:
\begin{enumerate}
\item Every edge $\{u,v\}\in E_G$ has weight $-0.6$.
\item For every $v\in V_G$, $1\le i\le N$ and $j=1,2$, the two edges $\{u_{v,j}^{(i)},v\}$ and $\{u_{v,j}^{(i)},s_j\}$ have weight $-\beta_1$;
\item For every $1\le i\le N^2$, the edges $\{w_1^{(i)},s_1\}$, $\{w_2^{(i)},s_2\}$ and $\{w_1^{(i)}, w_2^{(i)}\}$ have weight $-\beta_2$.
\end{enumerate}
We slight abuse of notation, we use $F$ for the resulting Ising model on $F$ and $\mu := \mu_F$ for the corresponding Gibbs distribution. $F$ will be the visible model of our testing instance. 

For the hidden model $F^*$, we consider the same construction above but replacing $G$ with an independent set $I_N$ on $V_G$. 
Let $\mu^* := \mu_{F^*}$ be the corresponding the Gibbs distribution.
We note first that we can efficiently sample from $\mu^*$.

\begin{lemma}\label{lem:I-sample-antiferro}
There is an exact sampling algorithm for the distribution $\mu^*$ with running time $\poly(n)$. 
\end{lemma}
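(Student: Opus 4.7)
The plan is to exploit the fact that, after removing $G$'s edges (since we replaced $G$ by the independent set $I_N$), the graph $F^*$ has a very simple decoupled structure once we condition on the spins of the two hub vertices $s_1,s_2$. Concretely, deleting $\{s_1,s_2\}$ disconnects $F^*$ into (i) for each $v\in V_G$, a ``star-of-paths'' consisting of $v$ together with the $2N$ pendant vertices $u_{v,j}^{(i)}$ (none of which is adjacent to any vertex outside this piece once $s_1,s_2$ are removed, because $E_G$ is empty), and (ii) for each $1\le i\le N^2$, a single edge $\{w_1^{(i)},w_2^{(i)}\}$. So conditional on a pair $(a,b)\in\{-1,+1\}^2$ for $(\sigma(s_1),\sigma(s_2))$, all remaining spins factorize over these independent pieces.

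Next I would compute, for each of the four values $(a,b)$, the conditional partition function $Z^{(a,b)}$ in closed form. For a fixed vertex $v\in V_G$ and a fixed $\sigma(v)\in\{-1,+1\}$, each of the $N$ vertices $u_{v,1}^{(i)}$ can be summed out independently, contributing a factor $\varphi(\sigma(v),a):=\sum_{x\in\{-1,+1\}}\exp(-\beta_1 x\sigma(v)-\beta_1 xa)$, and similarly $\varphi(\sigma(v),b)$ for each $u_{v,2}^{(i)}$. Summing over $\sigma(v)$ gives a per-vertex contribution
\[
A_v(a,b)=\sum_{s\in\{-1,+1\}}\varphi(s,a)^N\varphi(s,b)^N,
\]
which is the same for every $v$. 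For the $w$-gadgets, each path contributes a factor
\[
B(a,b)=\sum_{x,y\in\{-1,+1\}}\exp\bigl(-\beta_2(ax+xy+yb)\bigr).
\]
Hence $Z^{(a,b)}=A_v(a,b)^{N}\,B(a,b)^{N^2}$, which is computable in $\poly(n)$ time, and $Z_{F^*}=\sum_{a,b}Z^{(a,b)}$.

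The sampling algorithm is then straightforward: (1) sample $(\sigma(s_1),\sigma(s_2))$ from the distribution proportional to $Z^{(a,b)}$; (2) for each $v\in V_G$, sample $\sigma(v)$ from the distribution proportional to $\varphi(\sigma(v),a)^N\varphi(\sigma(v),b)^N$; (3) given $(\sigma(v),a)$, sample each $u_{v,1}^{(i)}$ independently, and similarly for $u_{v,2}^{(i)}$ using $(\sigma(v),b)$; (4) for each $i$, sample $(w_1^{(i)},w_2^{(i)})\in\{-1,+1\}^2$ from its $4$-point conditional given $(a,b)$. Each step is an exact sample from a discrete distribution on a constant-size space whose probabilities are given by explicit ratios of the quantities already computed, so the whole procedure runs in $\poly(n)$ time and produces an exact sample from $\mu^*$.

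The only thing to verify is correctness of the decomposition above, i.e., that the joint weight factorizes exactly as claimed; this follows immediately from the Gibbs factorization because the listed pieces are the connected components of $F^*\setminus\{s_1,s_2\}$ and no edge of $F^*$ crosses two of them. There is no real obstacle here: the lemma is essentially a bookkeeping exercise once one notices that deleting $G$'s edges makes $\{s_1,s_2\}$ a separator of constant size.
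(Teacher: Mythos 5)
Your proof is correct, and it takes a genuinely different route from the paper's. The paper's proof exploits the \emph{exchangeability} of the $N$ vertices in $I_N$: it classifies configurations into $4(N+1)$ ``types'' by $(\sigma(s_1),\sigma(s_2))$ and the number of spin-$1$ vertices in $I_N$, observes that all configurations of a given type have the same weight, computes the weight and the count per type, samples a type, and then samples a configuration uniformly within that type. Your proof instead uses the fact that $\{s_1,s_2\}$ is a constant-size separator whose removal disconnects $F^*$ into $N$ star-gadgets and $N^2$ single edges, giving a \emph{conditional-independence} decomposition; you then do a four-stage sequential sampler (sample $(\sigma(s_1),\sigma(s_2))$ from the marginal, then the $\sigma(v)$'s independently, then the $u$'s, then the $w$-pairs). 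Both yield $\poly(n)$ exact samplers. Your approach is slightly more robust in that it does not require the $I_N$-pieces to be interchangeable (it would still work if different $v$'s had different parameters on their star edges), whereas the paper's typing argument relies on that symmetry; the paper's approach, on the other hand, leads to a very short and transparent write-up. One minor point: your factor $\varphi(s,a)=\sum_x\exp(-\beta_1 xs-\beta_1 xa)$ implicitly uses the $\pm 1$-product parameterization of the Ising weight, whereas the paper writes weights via $\exp(\beta\,\1(\sigma(u)=\sigma(v)))$; these differ only by a global multiplicative constant, which cancels in every ratio you use, so the algorithm is unaffected, but it is worth aligning the notation when writing this up.
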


\begin{proof}
Configurations in $\Omega_{F^*}$ can be classified by their type, which is given by the spins of $s_1,s_2$ and the number of spin $1$'s in the independent set $I_N$. 
There are $4(N+1)$ types in total. 
Observe that configurations of each type have the same weight by symmetry, and this weight can be computed efficiently since given the spins of $s_1$, $s_2$ and $I_N$ the remaining graph has only isolated vertices and edges. 
Also it is easy to get the number of configurations of each type. 
Thus, to sample from $\mu^*$, we can first sample a type from the induced distribution on types, and then sample a configuration of the given type uniformly at random. 
\end{proof}

Our hidden and visible models $F$ and $F^*$ are related as follows.

\begin{lemma}\label{lem:Z-to-TV-antiferro}
Let $\varepsilon \in (0,1)$ be a constant, $L = L(n) = \poly(n)$ and $r = 96 \eps^{-1} \sqrt{\eps L+1}$. 
Suppose $\hat Z \in \R$ is such that $r2^N e^{-0.9N} \le \hat{Z} \le \frac{1}{r} 2^N$.
Then, for any $\beta_1 \ge 3$, we can find $0 < \beta_2 < \beta_1+2$ in $\poly(n)$ time such that all of the following holds:
\begin{enumerate}[(i)]
\item $\frac{1}{4\sqrt{\eps L+1}} \frac{Z_G}{\hat{Z}} 
\le 
\frac{Z_{F}^\mathrm{D}}{Z_{F}^\mathrm{M}} 
\le 
\frac{1}{\sqrt{\eps L+1}} \frac{Z_G}{\hat{Z}}$;
\item $\frac{Z_{F^*}^\mathrm{D}}{Z_{F^*}^\mathrm{M}} \le \frac{1}{r \sqrt{\eps L+1}};$
\item If $Z_G \le \frac{1}{r} \hat{Z}$, then
$\TV{\mu}{\mu^*} \le \frac{1}{16L}$;
\item If $Z_G \ge r \hat{Z}$, then
$\TV{\mu}{\mu^*} \ge 1-\eps$. 
\end{enumerate}
\end{lemma}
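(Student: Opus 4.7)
The approach is the antiferromagnetic-Ising analogue of the Potts argument in Lemma~\ref{lem:F-to-TV}, but with the ``majority'' and ``disordered'' phases redefined combinatorially as $M := \{\sigma(s_1) = \sigma(s_2)\}$ and $D := \{\sigma(s_1) \neq \sigma(s_2)\}$. Since every gadget vertex ($u^{(i)}_{v,j}$ or $w^{(i)}_j$) has degree $2$, the first step is to sum them out. A direct calculation gives the effective length-$2$ factors $A := 1 + e^{-2\beta_1}$ (endpoints agree) and $B := 2 e^{-\beta_1}$ (endpoints disagree), with ratio $A/B = \cosh \beta_1$, and the length-$3$ factors $C := e^{-3\beta_2} + 3 e^{-\beta_2}$ (when $s_1 = s_2$) and $D := 1 + 3 e^{-2\beta_2}$ (when $s_1 \neq s_2$).

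Second, I would compute $Z_F^M, Z_F^D, Z_{F^*}^M, Z_{F^*}^D$ explicitly via these constants. In the majority case, fix $\sigma(s_1) = \sigma(s_2) = +1$ and double by spin-flip symmetry; each $v \in V_G$ feels a net pull $(A/B)^{2N} = (\cosh \beta_1)^{2N}$ toward alignment with $s_1 = s_2$. For $\beta_1 \geq 3$ this pull is exponentially larger than the $e^{O(\beta_G N)}$ variation from the edges of $G$, so the sum concentrates on the all-$+1$ configuration on $V_G$ and
\[
Z_F^M = 2 C^{N^2} A^{2N^2} e^{-0.9 N} \bigl(1 + e^{-\Omega(N)}\bigr).
\]
In the disordered case, the pulls from $s_1$ and $s_2$ on each $v$ exactly cancel (both spins give factor $(AB)^N$), yielding the clean identity
\[
Z_F^D = 2 D^{N^2} (AB)^{N^2} Z_G.
\]
The same computation on $F^*$ (with $W_G$ replaced by $1$ and $Z_G$ by $2^N$) gives $Z_{F^*}^M = 2 C^{N^2} A^{2N^2} (1 + e^{-\Omega(N)})$ and $Z_{F^*}^D = 2 D^{N^2} (AB)^{N^2} 2^N$.

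The third step is the crucial tuning of $\beta_2$. Both target ratios are proportional to $(BD/AC)^{N^2}$:
\[
\frac{Z_F^D}{Z_F^M} = \bigl(1 + e^{-\Omega(N)}\bigr) \cdot Z_G \, e^{0.9 N} \cdot \left(\frac{BD}{AC}\right)^{N^2}, \qquad \frac{Z_{F^*}^D}{Z_{F^*}^M} = \bigl(1 + e^{-\Omega(N)}\bigr) \cdot 2^N \cdot \left(\frac{BD}{AC}\right)^{N^2}.
\]
Since $B/A = 1/\cosh\beta_1$ is a fixed constant in $(0,1)$ and $D/C$ is continuous and strictly increasing in $\beta_2$, ranging from $1$ at $\beta_2 = 0$ to $\infty$ as $\beta_2 \to \infty$, a binary search in $\poly(n)$ time pins down $\beta_2$ so that $(BD/AC)^{N^2}$ lies within a factor of $4$ of the target $(\sqrt{\eps L+1}\, \hat Z\, e^{0.9N})^{-1}$. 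Asymptotically $D/C \sim e^{\beta_2}/3$ and $\cosh\beta_1 \sim e^{\beta_1}/2$, so the required $\beta_2$ is close to $\beta_1 + \log(3/2)$, well inside $(0, \beta_1 + 2)$; the hypothesized range $r\, 2^N e^{-0.9N} \le \hat Z \le 2^N/r$ ensures the target is attainable. This establishes (i) directly, and (ii) follows by dividing the two displayed ratios and using $\hat Z \ge r \cdot 2^N e^{-0.9N}$.

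For (iii) and (iv) I would mimic the coupling argument of Lemma~\ref{lem:F-to-TV}. Let $\nu_F := \mu_F(\,\cdot\, \mid M)$ and $\nu_{F^*} := \mu_{F^*}(\,\cdot\, \mid M)$, and let $\mathcal{A}$ be the event that $V_G$ is monochromatic. The key observation is that $\mu_F(\,\cdot\, \mid M, \mathcal{A}) = \mu_{F^*}(\,\cdot\, \mid M, \mathcal{A})$, because conditioning on $\mathcal{A}$ turns $W_G$ and $W_{I_N}$ into constants that cancel, and the rest of the gadget is identical in the two models. By the monochromaticity argument of Step 2, $\nu_F(\mathcal{A}), \nu_{F^*}(\mathcal{A}) \geq 1 - e^{-\Omega(\beta_1 N)}$, hence $\TV{\nu_F}{\nu_{F^*}} \leq 2 e^{-\Omega(\beta_1 N)}$. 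For (iii), the hypothesis $Z_G \le \hat Z/r$ and part (i) give $\TV{\mu_F}{\nu_F} \leq Z_F^D/Z_F^M \leq 1/(r\sqrt{\eps L + 1}) \leq 1/(96 L)$, and analogously for $F^*$; the triangle inequality then yields $\TV{\mu_F}{\mu_{F^*}} \leq 1/(16L)$. For (iv), the hypothesis $Z_G \geq r \hat Z$ and part (i) give $\TV{\mu_F}{\nu_F} \geq 1 - Z_F^M/Z_F^D \geq 1 - \eps/24$, and the reverse triangle inequality combined with the bounds above gives $\TV{\mu_F}{\mu_{F^*}} \geq 1 - \eps$. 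The main obstacle is the precision demanded in Step 3: we must control $\log(BD/AC)$ to accuracy $\Theta(1/N^2)$, which requires both the binary search and the verification that the $(1+e^{-\Omega(N)})$ multiplicative errors in $Z_F^M$ and $Z_{F^*}^M$ do not overwhelm this delicate ratio---precisely where the hypothesis $\beta_1 \geq 3$ (giving $B/A \leq 1/\cosh 3 < 0.1$) is used.
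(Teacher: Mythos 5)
Your proposal matches the paper's proof essentially step for step: the same phase decomposition via $\sigma(s_1)$ versus $\sigma(s_2)$, the same explicit summation over degree-$2$ gadget vertices giving the length-$2$ and length-$3$ factors (your $B/A = 1/\cosh\beta_1$ and $D/C$ is exactly the paper's $g(\beta_2)$), the same binary-search choice of $\beta_2$ to hit the target value of $(g(\beta_2)/\cosh\beta_1)^{N^2}$, and the same triangle-inequality argument through the conditional distributions $\nu_F = \mu_F(\cdot\mid M)$, $\nu_{F^*}$, and the monochromatic event $\mathcal A$ (the paper's $\Omega_F^{\mathrm M_0}$ and Claim~\ref{claim:Z-and-Z0} play the role of your $(1+e^{-\Omega(N)})$ correction). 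The one place the paper is slightly more careful is in isolating Claim~\ref{claim:Z-and-Z0} so that the approximation $Z_F^{\mathrm M_0}\approx Z_F^{\mathrm M}$ is established once and reused cleanly, rather than carried along as a multiplicative error; you correctly flag this as the point where $\beta_1\ge 3$ enters.
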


The proof of Lemma~\ref{lem:Z-to-TV-antiferro} is provided in Section~\ref{subsec:mixed-rbm-lemmas}.
We proceed first with the proof of Theorem~\ref{thm:reduction-antiferro-Ising} which follows along the lines of the proof of Theorem~\ref{thm:Potts-general}.

\begin{proof}[Proof of Theorem~\ref{thm:reduction-antiferro-Ising}]
	
Consider an antiferromagnetic Ising model on an $N$-vertex $3$-regular graph $G = (V_G,E_G)$ with edge weight $\beta_G = -0.6$ on every edge and no external field; note that this model belongs to the family $\hatMaIsing(N,3,-0.6,0)$.
Let $\hat Z > 0$ be a real number, let $n = 4N^2+N+2$ and
suppose there is an $\varepsilon$-identity testing algorithm for $\MaIsing(n,n,\beta_0,0)$
with sample complexity $L = L(n) = \poly(n)$ and running time $\poly(n)$, where $\beta_0 >0$ is a suitable constant we choose later.
Let $r = 96\eps^{-1}\sqrt{\eps L +1}$; we want to check whether $Z_G \le \frac{1}{r} \hat{Z}$ or $Z_G \ge r \hat{Z}$ where $Z_G := Z_{G,\beta_G}$.

We construct the Ising models  $F$ and $F^*$ with Gibbs distribution $\mu$ and $\mu^*$, respectively as described above. We set $\beta_1=3$ and use the $\beta_2$ supplied by Lemma~\ref{lem:Z-to-TV-antiferro}; hence the models $F$ and $F^*$ belong to $\MaIsing(n,n,\beta_0,0)$, provided $\beta_0 \ge \max\{\beta_1,\beta_2\}$.

By Lemma~\ref{lem:Z-to-TV-antiferro} when 
$r 2^N e^{-0.9N} \le \hat{Z} \le \frac{1}{r} 2^N$,
conditions (i) and (ii) from Theorem~\ref{thm:gen-red} are satisfied; condition (iii) is given by Lemma~\ref{lem:I-sample-antiferro}. Hence, we have an algorithm for the decision version of $r$-approximate counting for the Ising model on $G$ for $\hat Z$ in this range.
Otherwise, 
observe that the weight of every configuration is at least $e^{-0.9N}$, which corresponds to the weight of the monochromatic configuration, and at most $1$.
Thus, $2^N e^{-0.9N} \le Z_G \le 2^N$. 
If $\hat{Z} < r 2^N e^{-0.9N} \le r Z_G$, then we can output $\hat{Z} \le \frac{1}{r} Z_G$. Similarly, $\hat{Z} > \frac{1}{r} 2^N \ge \frac{1}{r} Z_G$ and we output $\hat{Z} \ge r Z_G$.

Therefore, we have a $\poly(N)$ running time algorithm for the decision version of $r$-approximate counting for $\hatMaIsing(N,3,-0.6,0)$ where $N =  \Theta(n^{1/2})$ and $r = \poly(N)$, as desired.
\end{proof}



\subsection{Proof of Lemma~\ref{lem:Z-to-TV-antiferro}}
\label{subsec:mixed-rbm-lemmas}

Our construction of the visible and hidden models is inspired by our construction in Section~\ref{Potts-proof-general} for the ferromagnetic Potts model. 
In particular, the two vertices $\{s_1,s_2\}$ play the role of the complete graph $H$ in our construction in Section~\ref{subsubsec:potts-reduce}. 
We partition $\Omega_{F} = \{+,-\}^{V_{F}}$ into two disjoint subsets $\Omega_{F} = \Omega_{F}^\mathrm{M} \cup \Omega_{F}^\mathrm{D}$, depending on whether $\sigma(s_1) = \sigma(s_2)$ (the \emph{majority phase}) or $\sigma(s_1) \neq \sigma(s_2)$ (the \emph{disordered phase}); 
more precisely, the set of \emph{majority} configurations is given by
\[
\Omega_{F}^\mathrm{M} = \left\{ \sigma\in\Omega_{F}: \sigma(s_1) = \sigma(s_2) \right\}
\]
and the set of \emph{disordered} configurations is
\[
\Omega_{F}^\mathrm{D} = \left\{ \sigma\in\Omega_{F}: \sigma(s_1) \neq \sigma(s_2) \right\}.
\]
The partition function for the majority phase is defined naturally as
\[
Z_{F}^\mathrm{M} = \sum_{\sigma\in\Omega_{F}^\mathrm{M}} \exp\left( \sum_{\{u,v\}\in E_{F}} \beta_{F}(\{u,v\}) \Ind\{\sigma(u) = \sigma(v)\} \right),
\]
and similarly for $Z_{F}^\mathrm{D}$. 
Therefore, we have $Z_{F} = Z_{F}^\mathrm{M} + Z_{F}^\mathrm{D}$. 
In the same way, we also define the partition functions $Z_{F^*}^\mathrm{M}$ and $Z_{F^*}^\mathrm{D}$ for the hidden model on the graph $F^*$ (notice that $\Omega_{F^*}^\mathrm{M} = \Omega_{F}^\mathrm{M}$ and $\Omega_{F^*}^\mathrm{D} = \Omega_{F}^\mathrm{D}$). 



\begin{proof}[Proof of Lemma~\ref{lem:Z-to-TV-antiferro}]
Consider the following subset of configurations in $\Omega_F^\mathrm{M}$ given by
\[
\Omega_{F}^{\mathrm{M}_0} = \left\{ \sigma \in \Omega_{F}^\mathrm{M}: \forall v\in V_G, \sigma(v) = \sigma(s_1) \right\}. 
\]
We also define the corresponding partition function $Z_F^{\mathrm{M}_0}$ and $Z_{F^*}^{\mathrm{M}_0}$ in the same way as above. 
We claim that $Z_F^{\mathrm{M}_0}$ (resp., $Z_{F^*}^{\mathrm{M}_0}$) is a good approximation (with only exponentially small error) of the partition function $Z_F^\mathrm{M}$ (resp., $Z_{F^*}^\mathrm{M}$) that we are interested in. 
\begin{claim}\label{claim:Z-and-Z0}
If $\beta_1 \ge 3$, then 
$(1-e^{-2N}) Z_F^\mathrm{M} \le Z_F^{\mathrm{M}_0} \le Z_F^\mathrm{M}$ 
and 
$(1-e^{-2N}) Z_{F^*}^\mathrm{M} \le Z_{F^*}^{\mathrm{M}_0} \le Z_{F^*}^\mathrm{M}$. 
\end{claim}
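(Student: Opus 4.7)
The upper inequalities $Z_F^{\mathrm{M}_0}\le Z_F^{\mathrm{M}}$ and $Z_{F^*}^{\mathrm{M}_0}\le Z_{F^*}^{\mathrm{M}}$ are immediate from set containment, so the plan is to prove the matching lower bounds. The strategy is to exploit the tree-like structure of the gadget: every vertex $v\in V_G$ is connected to each $s_j$ through $N$ disjoint length-$2$ paths whose internal vertices $u_{v,j}^{(i)}$ have degree exactly two, and similarly each pair $(w_1^{(i)},w_2^{(i)})$ forms an isolated length-$3$ path between $s_1$ and $s_2$. Summing out these internal spins first will express $Z_F^{\mathrm{M}}$ in closed form as a sum over subsets $S\subseteq V_G$ (the vertices disagreeing with $s_1=s_2$), after which the claim reduces to a clean combinatorial bound on how much the antiferromagnetic interactions in $E_G$ can compensate for the path penalty incurred by each disagreeing vertex.

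Fix a configuration with $\sigma(s_1)=\sigma(s_2)=+$ (the other case being symmetric) and let $S=S(\sigma)=\{v\in V_G:\sigma(v)=-\}$. A direct calculation of the two-term sum over each middle vertex $u_{v,j}^{(i)}$ yields the factor $1+e^{-2\beta_1}$ when $\sigma(v)=+$ and $2e^{-\beta_1}$ when $\sigma(v)=-$; since there are $2N$ such paths per $v$, this contributes a total factor of $(1+e^{-2\beta_1})^{2N(N-|S|)}(2e^{-\beta_1})^{2N|S|}$. Likewise, summing over the middle spins of each length-$3$ $w$-path yields the factor $e^{-3\beta_2}+3e^{-\beta_2}$, which is independent of $\sigma|_{V_G}$ and so cancels in any ratio. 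Finally, edges in $E_G$ contribute $\exp(-0.6\,m(S)) = \exp(-0.6|E_G|)\exp(0.6\,e_G(S,V_G\setminus S))$. Using $(1+e^{-2\beta_1})/(2e^{-\beta_1})=\cosh(\beta_1)$ and the $3$-regularity bound $e_G(S,V_G\setminus S)\le 3|S|$, the ratio of the non-$\mathrm{M}_0$ part of $Z_F^{\mathrm{M}}$ to $Z_F^{\mathrm{M}_0}$ becomes
\[
R \;=\; \sum_{\emptyset\neq S\subseteq V_G} e^{0.6\,e_G(S,V_G\setminus S)}\cdot \cosh(\beta_1)^{-2N|S|} \;\le\; \sum_{k=1}^{N}\binom{N}{k}\bigl(e^{1.8}\cosh(\beta_1)^{-2N}\bigr)^{k}.
\]

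The hypothesis $\beta_1\ge 3$ gives $\cosh(\beta_1)\ge \cosh 3>10$, so $e^{1.8}\cosh(\beta_1)^{-2N}\le e^{1.8}\,10^{-2N}$ and the sum is dominated by its first term, giving $R\le 2N e^{1.8}\,10^{-2N}\le e^{-2N}$ for $N$ large enough. From this, $Z_F^{\mathrm{M}}\le(1+R)Z_F^{\mathrm{M}_0}\le Z_F^{\mathrm{M}_0}/(1-R)$, which rearranges to $(1-e^{-2N})Z_F^{\mathrm{M}}\le Z_F^{\mathrm{M}_0}$ as claimed.

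The argument for $F^*$ is strictly simpler because replacing $G$ by the independent set $I_N$ kills the $e_G(S,V_G\setminus S)$ correction, so the same derivation yields $R_{F^*}\le \sum_{k\ge 1}\binom{N}{k}\cosh(\beta_1)^{-2Nk}\le e^{-2N}$ with room to spare. The only mildly delicate step in the whole proof is getting the edge-bound in $E_G$ tight enough: one needs the antiferromagnetic boost of at most $e^{0.6\cdot 3|S|}=e^{1.8|S|}$ from breaking monochromatic edges in $G$ to be overwhelmed by the per-vertex path penalty $\cosh(\beta_1)^{-2N}$, which $\beta_1\ge 3$ (equivalently, $\cosh(\beta_1)^{2N}\ge 10^{2N}\gg e^{1.8}$ once $N\ge 2$) guarantees with a large margin. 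All other steps are purely bookkeeping.
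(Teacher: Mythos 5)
Your proof is correct and takes essentially the same approach as the paper, which also reduces to the per-vertex penalty $\cosh(\beta_1)^{-2N}$ (from integrating out the degree-two path vertices) weighed against the at-most-$e^{0.6\cdot 3}=e^{1.8}$ gain from cutting $G$-edges at a $3$-regular vertex, with $\beta_1\ge 3$ making the penalty dominate. The only cosmetic difference is that the paper phrases this as a union bound on $\Pr(\exists v\in V_G:\sigma(v)\neq\sigma(s_1)\mid\sigma(s_1)=\sigma(s_2))$, while you sum out the interior spins explicitly and expand $Z_F^{\mathrm{M}}/Z_F^{\mathrm{M}_0}-1$ over nonempty subsets $S\subseteq V_G$ and then bound the resulting binomial sum; the two are algebraically interchangeable.
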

The proof of the following claim is postponed to the end of the section. 
We then derive explicit formula for $Z_{F}^\mathrm{D}$ and $Z_{F}^{\mathrm{M}_0}$. 
For configurations in $\Omega_{F}^\mathrm{D}$, every spin assignment to the vertices of $G$, $s_1$ and $s_2$ is multiplied by a $2e^{-\beta_1} (e^{-2\beta_1}+1)$ factor, corresponding to the weight of the edges $\{u_{v,j}^{(i)},v\}$, $\{u_{v,j}^{(i)},s_j\}$, $j\in\{1,2\}$ for every vertex $v\in V_G$ and every $1\le i \le N$, 
and by a $3e^{-2\beta_2} + 1$ factor for the edges $\{w_1^{(i)},s_1\}$, $\{w_2^{(i)},s_2\}$ and $\{w_1^{(i)}, w_2^{(i)}\}$ for every $1\le i \le N^2$. 
For configurations in $\Omega_{F}^{\mathrm{M}_0}$, each monochromatic configuration on $G$ is multiplied by a $(e^{-2\beta_1}+1)^2$ factor for the edges $\{u_{v,j}^{(i)},v\}$, $\{u_{v,j}^{(i)},s_j\}$, $j\in\{1,2\}$ for every vertex $v\in V_G$ and every $1\le i \le N$, 
and by $e^{-3\beta_2} + 3e^{-\beta_2}$ for the edges $\{w_1^{(i)},s_1\}$, $\{w_2^{(i)},s_2\}$ and $\{w_1^{(i)}, w_2^{(i)}\}$ for every $1\le i \le N^2$. 
Thus, we obtain
\begin{align*}
Z_{F}^\mathrm{D} &= 2 \left( 3e^{-2\beta_2} + 1 \right)^{N^2} \left( 2 e^{-\beta_1} \left(e^{-2\beta_1} + 1\right) \right)^{N^2} Z_G;\\
Z_{F}^{\mathrm{M}_0} &= 2 \left( e^{-3\beta_2} + 3e^{-\beta_2} \right)^{N^2} \left(e^{-2\beta_1} + 1 \right)^{2N^2} e^{-0.9 N}.
\end{align*}
Let $g(x) = (3e^{-2x}+1)/(e^{-3x}+3e^{-x})$
and recall that $\cosh x = \frac{1}{2}(e^x+e^{-x})$. We then deduce that
\begin{equation}\label{eq:D/M0}
\frac{Z_{F}^\mathrm{D}}{Z_{F}^{\mathrm{M}_0}} = \left( \frac{g(\beta_2)}{\cosh \beta_1} \right)^{N^2} e^{0.9N} Z_G.
\end{equation}

Now for $\beta_1 \ge 3$, we show that we can pick $\beta_2 >0$ such that
\begin{equation}\label{eq:beta2}
\frac{1}{2\sqrt{\eps L+1}} \frac{e^{-0.9N}}{\hat{Z}} \le 
\left(\frac{g(\beta_2)}{\cosh \beta_1}\right)^{N^2} \le 
\frac{1}{\sqrt{\eps L+1}} \frac{e^{-0.9N}}{\hat{Z}}. 
\end{equation}
Such $\beta_2>0$ always exists and satisfies $\beta_2 < \beta_1 + 2$. 
To see this, we note that the function $g(x)$ is a continuous increasing function for $x\ge 0$ with $g(0)=1$ and $g(\infty) = \infty$. 
Since $\hat{Z} \le \frac{1}{r} 2^N$, we get
\begin{align*}
\frac{1}{N^2} \log \left( \frac{1}{4\sqrt{\eps L+1}}\, \frac{e^{-0.9N}}{\hat{Z}} \right) + \log(\cosh \beta_1) &\ge
\frac{1}{N^2} \log \left( \frac{1}{4\sqrt{\eps L+1}}\, r 2^{-N} e^{-0.9N} \right) + \beta_1 - 1\\ 
&\ge
- \frac{2}{N} + 3 - 1 > 0,
\end{align*}
where the second inequality follows from $r/(4\sqrt{\eps L +1}) = 6/\eps \ge 1$.
This shows that 
\[
\left( \frac{1}{2\sqrt{\eps L+1}}\, \frac{e^{-0.9N}}{\hat{Z}} \right)^{\frac{1}{N^2}} \cosh \beta_1 \ge 1
\]
and thus implies the existence of $\beta_2>0$. 
Meanwhile, since $\hat{Z} \ge r 2^N e^{-0.9 N}$ we have
\begin{align*} 
\frac{1}{N^2} \log \left( \frac{1}{\sqrt{\eps L+1}}\, \frac{e^{-0.9N}}{\hat{Z}} \right) + \log(\cosh \beta_1)
\le \frac{1}{N^2} \log \left( \frac{1}{\sqrt{\eps L+1}}\, \frac{1}{r} 2^{-N} \right) + \beta_1 < \beta_1,
\end{align*}
where the second inequality follows from $r \sqrt{\eps L +1} = 96\eps^{-1} (\eps L +1) \ge 1$.
This shows that 
\[
e^{\beta_1} > g(\beta_2) = \frac{3e^{-2\beta_2} + 1}{e^{-3\beta_2} + 3e^{-\beta_2}} \ge \frac{1}{4 e^{-\beta_2}} \ge e^{\beta_2 - 2}
\]
and thus $\beta_2 < \beta_1 + 2$. Finally, we can compute a $\beta_2$ satisfying \eqref{eq:beta2} in $\poly(n)$ time by, for example, the binary search algorithm. 

Combining Claim~\ref{claim:Z-and-Z0} and equations \eqref{eq:D/M0} and \eqref{eq:beta2}, we deduce that
\[
\frac{1}{4\sqrt{\eps L+1}} \frac{Z_G}{\hat{Z}} 
\le (1-e^{-2N})\frac{Z_F^\mathrm{D}}{Z_F^{\mathrm{M}_0}} 
\le \frac{Z_F^\mathrm{D}}{Z_F^\mathrm{M}} 
\le \frac{Z_F^\mathrm{D}}{Z_F^{\mathrm{M}_0}} 
\le \frac{1}{\sqrt{\eps L+1}} \frac{Z_G}{\hat{Z}}. 
\]
This shows the first part of the lemma.
For part (ii), we can compute $Z_{F^*}^\mathrm{D}$ and $Z_{F^*}^{\mathrm{M}_0}$ in a similar fashion and obtain
\begin{align*}
Z_{F^*}^\mathrm{D} &= 2 \left( 3e^{-2\beta_2} + 1 \right)^{N^2} \left( 2 e^{-\beta_1} \left(e^{-2\beta_1} + 1\right) \right)^{N^2} 2^N;\\
Z_{F^*}^{\mathrm{M}_0} &= 2 \left( e^{-3\beta_2} + 3e^{-\beta_2} \right)^{N^2} \left(e^{-2\beta_1} + 1 \right)^{2N^2}.
\end{align*}
This gives 
\begin{equation}\label{eq:F*-D/M0}
\frac{Z_{F^*}^\mathrm{D}}{Z_{F^*}^{\mathrm{M}_0}} = \left( \frac{g(\beta_2)}{\cosh \beta_1} \right)^{N^2} 2^N.
\end{equation}
Therefore, by equations~\eqref{eq:F*-D/M0} and \eqref{eq:beta2} we obtain
\[
\frac{Z_{F^*}^\mathrm{D}}{Z_{F^*}^\mathrm{M}} 
\le \frac{Z_{F^*}^\mathrm{D}}{Z_{F^*}^{\mathrm{M}_0}} 
\le \frac{1}{\sqrt{\eps L+1}}\, \frac{e^{-0.9N}}{\hat{Z}}\, 2^N \le \frac{1}{r\sqrt{\eps L+1}},
\] 
where the last inequality follows from the assumption $\hat{Z} \ge r2^N e^{-0.9N}$;
thus, part (ii) follows.

Next, we derive parts (iii) and (iv). 
We define $\nu = \mu(\,\cdot\, | \Omega_{F}^\mathrm{M})$ to be the distribution conditioned on $\Omega_{F}^\mathrm{M}$, and similarly $\nu^* = \mu^*(\,\cdot\, | \Omega_{F^*}^\mathrm{M})$. 
By the definition of total variation distance we have
\[
\TV{\mu}{\nu} = \TV{\mu}{\mu(\,\cdot\, | \Omega_{F}^\mathrm{M})} = \frac{Z_{F}^\mathrm{D}}{Z_{F}} = 1 - \frac{Z_{F}^\mathrm{M}}{Z_{F}}.
\]

For part (iii), if $Z_G \le \frac{1}{r} \hat{Z}$, then we deduce from part (i) that
\[
\TV{\mu}{\nu} \le \frac{Z_{F}^\mathrm{D}}{Z_{F}^\mathrm{M}} 
\le \frac{1}{\sqrt{\eps L+1}} \frac{Z_G}{\hat{Z}} 
\le \frac{1}{r\sqrt{\eps L+1}} = \frac{\eps}{96(\eps L + 1)} \le \frac{1}{96L}.
\]
Similarly, part (ii) implies
\[
\TV{\mu^*}{\nu^*} \le \frac{Z_{F^*}^\mathrm{D}}{Z_{F^*}^\mathrm{M}} 
\le \frac{1}{r \sqrt{\eps L+1}} = \frac{\eps}{96(\eps L + 1)} \le \min\left\{\frac{1}{96L}, \frac{\eps}{96}\right\}. 
\]
Let $\rho = \nu(\,\cdot\, | \Omega_{F}^{\mathrm{M}_0})$ denote the conditional distribution of $\nu$ on $\Omega_F^{\mathrm{M}_0}$. 
Observe that $\rho$ does not depend on the graph $G$, because we condition on the event that all vertices from $G$ receive the same spin, and thus the structure of $G$ does not affect the conditional distribution $\rho$.
In particular, we have $\rho = \nu(\,\cdot\, | \Omega_{F}^{\mathrm{M}_0}) = \nu^*(\,\cdot\, | \Omega_{F^*}^{\mathrm{M}_0})$. 
Then, Claim~\ref{claim:Z-and-Z0} implies that
\[
\TV{\nu}{\rho} = 1 - \frac{Z_F^{\mathrm{M}_0}}{Z_F^\mathrm{M}} \le e^{-2N}
\]
and similarly $\TV{\nu^*}{\rho} \le e^{-2N}$. 
Therefore, we obtain from the triangle inequality that
\[
\TV{\nu}{\nu^*} \le \TV{\nu}{\rho} + \TV{\nu^*}{\rho} \le 2 e^{-2N}. 
\]
We conclude again from the triangle inequality that
\[
\TV{\mu}{\mu^*} \le \TV{\mu}{\nu} + \TV{\mu^*}{\nu^*} + \TV{\nu}{\nu^*} \le \frac{1}{96L} + \frac{1}{96L} + 2 e^{-2N} \le \frac{1}{16L}. 
\]

Finally, for part (iv), if $Z_G \ge r \hat{Z}$, then by part (ii) we have
\[
\TV{\mu}{\nu} \ge 1 - \frac{Z_{F}^\mathrm{M}}{Z_{F}^\mathrm{D}} \ge 1 - 4\sqrt{\eps L+1}\, \frac{\hat{Z}}{Z_G} \ge 1 - \frac{4}{r}\sqrt{\eps L+1} = 1 - \frac{\eps}{24}.
\]
Hence, 
\[
\TV{\mu}{\mu^*} \ge \TV{\mu}{\nu} - \TV{\mu^*}{\nu^*} - \TV{\nu}{\nu^*} \ge 1 - \frac{\eps}{24} - \frac{\eps}{96} - 2e^{-2N} \ge 1-\eps,
\]
as claimed. 
\end{proof}

\begin{proof}[Proof of Claim~\ref{claim:Z-and-Z0}]
For the first inequality, 
note that $Z_F^{\mathrm{M}_0} \le Z_F^\mathrm{M}$. 
A union bound implies
\begin{align*}
1 - \frac{Z_{F}^{\mathrm{M}_0}}{Z_{F}^\mathrm{M}}
&= \Pr\Big(\exists v \in V_G: \sigma(v) \neq \sigma(s_1) \Big| \sigma(s_1) = \sigma(s_2)\Big)\le \sum_{v\in V_G} \Pr(\sigma(v) \neq \sigma(s_1) | \sigma(s_1) = \sigma(s_2)). 
\end{align*}
For every $\sigma \in \Omega_{F}^\mathrm{M}$ and $v\in V_G$, if $\sigma(v) \neq \sigma(s_1)$, then the total weight of edges incident to $v$ is at most $(2 e^{-\beta_1})^{2N}$; 
and if $\sigma(v) = \sigma(s_1)$, then it is at least $(e^{-2\beta_1} + 1)^{2N} \exp(\beta_G \deg_G(v)) \ge (e^{-2\beta_1} + 1)^{2N} e^{-1.8}$. 
Thus, we get
\begin{align*}
\Pr(\sigma(v) \neq \sigma(s_1) | \sigma(s_1) = \sigma(s_2)) 
&\le \frac{(2e^{-\beta_1})^{2N}}{(2e^{-\beta_1})^{2N} + (e^{-2\beta_1}+1)^{2N} e^{-1.8}}\\
&\le e^{1.8} \left(\frac{2 e^{-\beta_1}}{e^{-2\beta_1}+1}\right)^{2N} \le 10 e^{-2(\beta_1-1)N} \le 10 e^{-4N},
\end{align*}
where the last inequality follows from the assumption $\beta_1 \ge 3$. 
Therefore,
\[
\frac{Z_{F}^{\mathrm{M}_0}}{Z_{F}^\mathrm{M}} \ge 1 - 10N e^{-4N} \ge 1 - e^{-2N}. 
\]
The bound for $F^*$ can be derived analogously.
\end{proof}




\section{Testing ferromagnetic RBMs with inconsistent fields}
\label{RBM-field}

In this section, we establish our lower bound for identity testing for ferromagnetic RBMs with inconsistent fields; specifically, we prove Theorem~\ref{thm:main-RBM-ferro} from the introduction. Let us formally define first the notions of consistent and inconsistent external fields. 

\begin{defn}
	\label{dfn:consistent}
	Consider an Ising model on a graph $G=(V_G,E_G)$ with external field $h_G: V_G \times \{1,2\} \rightarrow \R$. We say that the external field $h_G$ is \emph{consistent} if~~$\forall v \in V_G$, $h_G(v,1) \ge 0$ and $h_G(v,2) = 0$ or~$\forall v \in V_G$, $h_G(v,2) \ge 0$ and $h_G(v,1) = 0$. 
	\end{defn}
We use once again our reduction strategy from $r$-approximate counting to testing.
We start from the following well-known result. 

\begin{thm}[\!\cite{GJfield}]
		\label{thm:inconsistent-Ising-hard}
		There is no $\fpras$ for the partition function of ferromagnetic Ising models with inconsistent fields, unless \#BIS admits an $\fpras$.
\end{thm}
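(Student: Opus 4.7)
The plan is to reduce \#BIS to approximating the partition function of a ferromagnetic Ising model with an inconsistent external field, via the bipartite gauge transformation followed by a connection to the hard-core model on bipartite graphs.

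First, restrict attention to bipartite graphs $G = (L \cup R, E)$ and work in the $\pm 1$ spin convention. Apply the involution $T$ defined by $T(\sigma)(v) = \sigma(v)$ for $v \in L$ and $T(\sigma)(v) = -\sigma(v)$ for $v \in R$. Since every edge of $G$ crosses the bipartition, $T$ reverses the sign of $\sigma_u \sigma_v$ on every edge and therefore converts ferromagnetic interactions into antiferromagnetic ones; at the same time it converts an inconsistent field (favoring $+1$ on $L$ and $-1$ on $R$) into a consistent one (favoring $+1$ everywhere). Because $T$ is a bijection of $\{\pm 1\}^V$, the two models have identical partition functions, so it suffices to prove \#BIS-hardness for antiferromagnetic Ising on bipartite graphs with a consistent field.

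Second, given a bipartite \#BIS instance $G = (V, E)$, parametrize an antiferromagnetic Ising configuration by the set $I \subseteq V$ of $-1$-spins. Using $e(I, V \setminus I) = \sum_{v \in I} \deg(v) - 2 e(I)$, where $e(I)$ denotes the number of edges inside $I$, the partition function rewrites as
\[
Z = e^{-\beta|E| + h|V|} \sum_{I \subseteq V} \Big(\prod_{v \in I} e^{\beta \deg(v) - h}\Big) \, (e^{-2\beta})^{e(I)}.
\]
Set $y_v := e^{\beta \deg(v) - h}$ and $z := e^{-2\beta}$. The rescaled sum is a polynomial in $z$ whose value at $z = 0$ equals the hard-core partition function on $G$ with vertex-dependent fugacities $\{y_v\}$. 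Choose $h$ so that these fugacities land in a \#BIS-hard regime; e.g., when $G$ is $d$-regular one picks $h = \beta d$, yielding $y_v \equiv 1$ and recovering exactly the number of independent sets of $G$. For a general bipartite \#BIS instance, one can either invoke \#BIS-hardness of the hard-core model with vertex-dependent fugacities, or first reduce the instance to a degree-regular one via a standard degree-equalization gadget.

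The main obstacle is controlling the truncation error coming from the terms with $e(I) \ge 1$, which do not appear in the hard-core partition function. One needs the quantity $\sum_{I: e(I) \ge 1} \prod_{v \in I} y_v \cdot z^{e(I)}$ to be dominated by $\sum_{I \text{ independent}} \prod_{v \in I} y_v$ within the desired multiplicative precision. Since $z = e^{-2\beta}$ is polynomially small for $\beta = \Omega(\log n)$ and every non-independent $I$ contributes at least one factor of $z$, this error can be made negligible by choosing $\beta$ sufficiently large (polylogarithmic in $n$ and in the target precision). An $\fpras$ for the ferromagnetic Ising partition function with inconsistent fields therefore yields an $\fpras$ for \#BIS, as required.
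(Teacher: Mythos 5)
The paper does not prove this statement---it is cited from Goldberg and Jerrum~\cite{GJfield}---so there is no in-paper argument to compare against. Your reduction is a legitimate route to the result: the gauge flip on one side of the bipartition is the standard equivalence between ferromagnetic/inconsistent and antiferromagnetic/consistent on bipartite graphs, and pushing the antiferromagnetic interaction hard collapses the model onto independent sets and hence onto \BIS. The conclusion follows.

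That said, the truncation-error step is under-justified. Observing that ``every non-independent $I$ contributes at least one factor of $z$'' only bounds the error by $2^n z$, which would force $\beta = \Omega(n)$ rather than the polylogarithmic scaling you claim. This gap is not fatal---the theorem places no upper bound on $\beta$, and $\beta = \Theta(n)$ still has polynomial bit length, so the $\fpras$ reduction goes through---but the argument as written does not deliver the bound as stated. To actually get $\beta = O(\log(n/\eps))$, use a charging argument: for each non-independent $I$, remove one vertex incident to an internal edge, obtaining $I'$ with $z^{e(I)}\le z\cdot z^{e(I')}$; since each $I'$ has at most $n$ preimages under this map, the error is at most $zn$ times the total sum, so the relative error is at most $zn/(1-zn)$, and $z\le \eps/(2n)$ suffices. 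Separately, the detour through degree equalization or weighted hard-core hardness is unnecessary: Definition~\ref{dfn:consistent} constrains only the signs of the field, not the magnitudes, so you may choose $h_v$ vertex-by-vertex to force $y_v\equiv 1$, at which point the $z\to 0$ limit is literally the number of independent sets of $G$.
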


The next step is the reduction from the decision version of approximate counting to identity testing.

\begin{thm}\label{thm:reduction-ferro-Ising}
	Let $\varepsilon \in (0,1)$ be any constant.
	For every $\hat{\beta}, \hat{h}>0$
	there exist $\beta_0, h_0>0$ such that an $\eps$-identity testing algorithm for $\MfIsing(n,n,\beta_0,h_0)$ with $\poly(n)$ sample complexity and running time
	can be used to solve the decision $r$-approximate counting problem for 
	$\hatMfIsing(N,N,\hat{\beta},\hat{h})$ in $\poly(N)$ time, where $N = \Theta(\sqrt{n})$ and $r = \poly(N)$.
\end{thm}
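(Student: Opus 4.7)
The plan is to mimic the three-step methodology employed in Theorem~\ref{thm:reduction-antiferro-Ising} and to invoke the generic reduction Theorem~\ref{thm:gen-red}, but with a gadget tailored to the purely ferromagnetic setting. Starting from an input $(G,\beta_G,h_G) \in \hatMfIsing(N,N,\hat\beta,\hat h)$, and a target $\hat Z$, the goal is to construct in $\poly(N)$ time a visible model $F$ and a hidden model $F^*$ belonging to $\MfIsing(n,n,\beta_0,h_0)$ for some $\beta_0,h_0$ depending only on $\hat\beta,\hat h$, such that the TV distance $\TV{\mu_F}{\mu_{F^*}}$ is small when $Z_G \le \hat Z / r$ and close to one when $Z_G \ge r \hat Z$. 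Given Theorem~\ref{thm:inconsistent-Ising-hard} and Theorem~\ref{thn:decision-cnt:potts}, this reduction together with Theorem~\ref{thm:gen-red} will rule out the testing algorithm.

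\textbf{Gadget and phase structure.} The key difference from Section~\ref{RBM-nofield} is that ferromagnetic interactions alone do not create a phase coexistence, so I would introduce the coexistence by planting \emph{inconsistent} fields on two auxiliary vertices $s_1, s_2$: apply field $+h^*$ to $s_1$ and $-h^*$ to $s_2$ with $h^* \gg \hat h$. As in Section~\ref{RBM-nofield}, connect every $v \in V_G$ to both $s_1$ and $s_2$ via $N$ parallel length-$2$ ferromagnetic paths each, and connect $s_1$ to $s_2$ via $N^2$ parallel length-$3$ ferromagnetic paths; all intermediate (degree-$2$) path vertices have no field. Partition $\Omega_F$ into the \emph{majority} phase $\Omega_F^{\MM}=\{\sigma(s_1)=\sigma(s_2)\}$ and the \emph{disordered} phase $\Omega_F^{\DD}=\{\sigma(s_1)\neq\sigma(s_2)\}$, where in $\Omega_F^{\DD}$ the fields force $\sigma(s_1)=+1,\sigma(s_2)=-1$. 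In the majority phase $s_1$ and $s_2$ exert \emph{aligned} effective fields on each $v \in V_G$, forcing $G$ to be essentially monochromatic (analogous to Claim~\ref{claim:Z-and-Z0}), so $Z_F^{\MM}$ is independent of the structure of $G$ up to exponentially small error. In the disordered phase the two effective fields on every $v \in V_G$ \emph{cancel by symmetry}, so the restriction of $\mu_F$ to $V_G$ is exactly $\mu_{G,\beta_G,h_G}$ and $Z_F^{\DD}$ factors as (a gadget-only quantity) $\cdot Z_G$. The hidden model $F^*$ is obtained by replacing $G$ with an empty graph $I_N$ on $N$ vertices (with zero field), so $Z_{F^*}^{\MM}$ and $Z_{F^*}^{\DD}$ admit explicit closed forms.

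\textbf{Tuning and the key lemma.} As in Lemma~\ref{lem:Z-to-TV-antiferro}, the parameter $\beta^*$ on the $s_1$-$s_2$ paths plays the role of $\beta_2$: varying $\beta^*$ continuously moves the ratio $Z_F^{\DD}/Z_F^{\MM}$ over an exponential range, so one can solve $\beta^*$ by binary search in $\poly(N)$ time so that
\[
\frac{1}{4\sqrt{\eps L+1}}\frac{Z_G}{\hat Z} \;\le\; \frac{Z_F^{\DD}}{Z_F^{\MM}} \;\le\; \frac{1}{\sqrt{\eps L+1}}\frac{Z_G}{\hat Z}, \qquad \frac{Z_{F^*}^{\DD}}{Z_{F^*}^{\MM}} \;\le\; \frac{1}{r\sqrt{\eps L+1}}.
\]
This uses the crude a priori bound $e^{-c_1 N^2}\le Z_G\le e^{c_1 N^2}$ (with $c_1$ depending on $\hat\beta,\hat h$) to handle the trivially small/large $\hat Z$ regime, and requires $\hat Z$ to lie in an explicit polynomial window where the binary search succeeds. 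The TV bounds for parts (iii)-(iv) then follow from the same triangle-inequality decomposition as in Lemma~\ref{lem:Z-to-TV-antiferro}: conditioning on the majority phase, the two conditional distributions $\mu_F(\cdot\mid\Omega_F^{\MM})$ and $\mu_{F^*}(\cdot\mid\Omega_{F^*}^{\MM})$ both concentrate on the same "$s_1=s_2$ and $G$ monochromatic" configurations, so they are within $e^{-\Omega(N)}$ of each other. Exact $\poly(n)$ sampling from $\mu_{F^*}$ is by the same symmetry argument as Lemma~\ref{lem:I-sample-antiferro}: classify configurations by the spins of $s_1,s_2$ and the number of $+1$'s in $I_N$, giving $O(N)$ symmetry classes of computable weight.

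\textbf{Main obstacle.} The main delicate point is verifying that the "monochromatization" in the majority phase survives the presence of the \emph{inconsistent} field $h_G$ inside $G$. Concretely, one must show an analogue of Claim~\ref{claim:Z-and-Z0}, namely that the effective field $\approx 2N\beta^*$ exerted by $s_1,s_2$ on each $v\in V_G$ dominates both the per-vertex contribution $O(\hat\beta)$ from $G$'s edges and the $\hat h$ bias from $h_G$. This forces the choice $\beta_0, h_0$ (and the path parameters) to depend on $\hat\beta$ and $\hat h$ through explicit constants; since $\beta_0, h_0$ are allowed to be any constants depending on $\hat\beta,\hat h$, this is acceptable. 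Once this lemma is established, the four conditions of Lemma~\ref{lem:Z-to-TV-antiferro} carry over verbatim (with $n=\Theta(N^2)$), and Theorem~\ref{thm:gen-red} together with Theorems~\ref{thm:inconsistent-Ising-hard} and~\ref{thn:decision-cnt:potts} completes the proof.
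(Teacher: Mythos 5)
Your high-level plan—reduce from $\fpras$-hardness via Theorems~\ref{thm:inconsistent-Ising-hard} and~\ref{thn:decision-cnt:potts}, build visible/hidden Ising models $F,F^*$ with a two-phase gadget, tune a parameter so the ratio $Z_F^{\DD}/Z_F^{\MM}$ encodes $Z_G/\hat Z$, and invoke Theorem~\ref{thm:gen-red}—is exactly the paper's approach. But the concrete gadget you propose has two fatal flaws, and the paper's actual construction differs from yours in precisely the two places you get wrong.

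First, you connect $s_1$ to $s_2$ by $N^2$ \emph{ferromagnetic} length-$3$ paths and rely on $O(1)$ fields planted directly on $s_1,s_2$ to create the phase competition. This cannot work: a ferromagnetic length-$3$ path contributes $e^{3\beta^*}+3e^{\beta^*}$ when $\sigma(s_1)=\sigma(s_2)$ versus $3e^{2\beta^*}+1$ when they disagree, so for every $\beta^*>0$ the path favors the majority phase, and $N^2$ of them favor it by $e^{\Omega(\beta^* N^2)}$. On top of this, the $N^2$ vertex-to-$s_j$ paths independently contribute another $(\cosh\beta_1)^{N^2}$ factor toward majority. A field of constant magnitude on just two vertices cannot offset an $e^{\Theta(N^2)}$ advantage, so $Z_F^{\DD}/Z_F^{\MM}$ is always exponentially small and cannot be tuned over the required $\poly(n)$ window. (In the antiferromagnetic construction of Section~\ref{RBM-nofield} the same paths \emph{favor disorder}—that is the content of $g(\beta_2)>1$—which is why no fields are needed there; that mechanism does not transfer to the ferromagnetic setting.) The paper instead hangs $N^2$ pendant vertices $w_j^{(i)}$ off each $s_j$, each carrying an inconsistent field, so the disorder-favoring pressure is also $e^{\Theta(N^2)}$ and \emph{can} be balanced against the path term via $\beta_2$.

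Second, your hidden model replaces $G$ by the empty graph $I_N$ with zero field, mimicking Lemma~\ref{lem:I-sample-antiferro}. This breaks the analogue of part~(ii) of Lemma~\ref{lem:Z-to-TV}: with your tuning, $Z_{F^*}^{\DD}/Z_{F^*}^{\MM}$ scales like $(\text{gadget ratio})\cdot Z_{I_N}/Z_{I_N}^{\mathrm{mo}} = (\text{gadget ratio})\cdot 2^{N-1}$, which is exponentially large rather than $O(1/(r\sqrt{\eps L+1}))$, so $\mu_{F^*}$ is not concentrated on the majority phase and the TV bound in case $Z_G \le \hat Z/r$ fails. The empty graph was a valid choice in Section~\ref{RBM-nofield} only because for the antiferromagnetic $3$-regular model the monochromatic configurations are the \emph{lowest}-weight ones, so $Z_G/Z_G^{\mathrm{mo}} \ge 2^{N-1}$ already; in the ferromagnetic case $Z_G/Z_G^{\mathrm{mo}}$ can be $O(1)$, so the hidden model must also be highly monochromatic. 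The paper achieves this with a complete graph $K_N$ carrying edge weight $\beta_K=\hat\beta+4\log 2$ and the \emph{same} inconsistent field $h_G$, giving $Z_K/Z_K^{\mathrm{mo}}\le 2$. Both fixes are essential; the rest of your argument (effective-field cancellation in the disordered phase, the monochromatization claim, the binary search for $\beta_2$, the TV triangle inequalities) then carries through as you describe.
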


We can now provide the proof of Theorem~\ref{thm:main-RBM-ferro}.

\begin{proof}[Proof of Theorem~\ref{thm:main-RBM-ferro}]
Follows from Theorems~\ref{thm:inconsistent-Ising-hard},~\ref{thn:decision-cnt:potts}, \ref{thm:reduction-ferro-Ising} and \ref{thm:deg-red}. 
\end{proof}

\subsection{Reducing counting to testing for the ferromagnetic Ising model with an inconsistent field: proof of Theorem~\ref{thm:reduction-ferro-Ising}}

\noindent\textbf{Testing instance construction.}\ \ Consider an instance $(G,\beta_G,h_G)$ of ferromagnetic Ising models with an inconsistent field,
where $G = (V_G,E_G)$ is the underlying graph with $N = |V_G|$, $\beta_G(e) =  \hat{\beta}>0$ for every $e \in E_G$, and
at every vertex the external field is either $h_G = (\hat{h},0)$ or $h_G = (0,\hat{h})$ for $\hat{h}>0$; that is, $\forall v\in V_G,~h_G(v,j) = \1(j=1) \hat h$ for $j=\{1,2\}$ or $h_G(v,j) = \1(i=2) \hat h$ for $i=\{1,2\}$. Note that for consistency with the notation in the previous sections we use spins $\{1,2\}$ for the Ising model, instead of the usual ``$+$'' and ``$-$'' spins.
Our goal is to give a $r$-approximate counting algorithm for the partition function $Z_G := Z_{G,\beta_G,h_G}$ for some $r = \poly(N)$ using an identity testing algorithm.

Define $F$ to be a graph with the vertex set
\[
V_{F} = V_G \cup \{s_1,s_2\} \cup \left\{u^{(i)}_{v,j}: v\in V_G, 1\le i\le N, j\in\{1,2\} \right\} \cup \left\{ w^{(i)}_j: 1\le i \le N^2, j\in\{1,2\} \right\}
\]
and the edge set
\begin{align*}
E_{F} ={}& E_G \cup \left\{\{u_{v,j}^{(i)},v\},\{u_{v,j}^{(i)},s_j\}: v\in V_G,1\le i\le N, j\in\{1,2\}\right\}\\ 
&\cup \left\{ \{w_j^{(i)},s_j\}: 1\le i\le N^2, j\in\{1,2\} \right\};
\end{align*}
\begin{figure}[th]
	\centering
  	\includegraphics[width=0.7\textwidth]{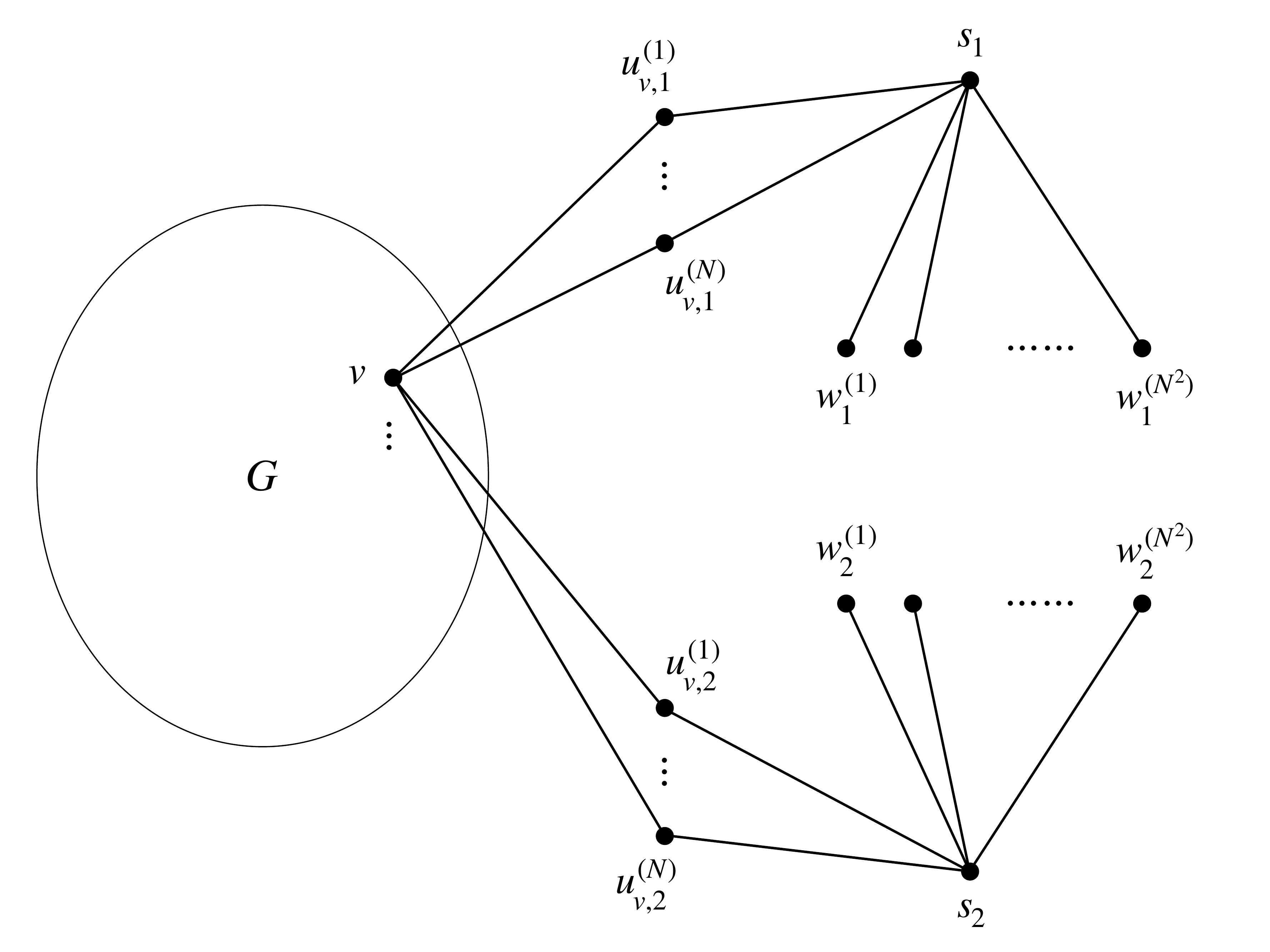}
  	\caption{The graph $F$. For every vertex $v\in V_G$ and $j\in\{1,2\}$, $v$ and $s_j$ are connected by $N$ disjoint paths of length $2$. Also, each of $s_1$ and $s_2$ is adjacent to $N^2$ vertices with nonzero fields.}
  	\label{fig:F-ferro}
\end{figure}
see Figure~\ref{fig:F-ferro}.

Given three real numbers $\beta_1,\beta_2,h > 0$, we then define a ferromagnetic Ising model on the graph $F$ as follows:
\begin{enumerate}
\item Every edge $\{u,v\}\in E_G$ has weight $\hat\beta$ and every vertex $v\in V_G$ has external field given by $h_G$.
\item For every $v\in V_G$, $1\le i\le N$ and $j\in\{1,2\}$, the two edges $\{u_{v,j}^{(i)},v\}$ and $\{u_{v,j}^{(i)},s_j\}$ have weight $\beta_1$;
\item For every $1\le i\le N^2$ and $j\in\{1,2\}$, the edge $\{w_j^{(i)},s_j\}$ has weight $\beta_2$;
\item For every $1\le i\le N^2$, the vertex $w_1^{(i)}$ has external field $(h,0)$ and the vertex $w_2^{(i)}$ has external field $(0,h)$; that is, $h_F(w_1^{(i)},j) = \1(j=1)h$ and $h_F(w_2^{(i)},j) = \1(j=2)h$.
\end{enumerate}
Thus, $F$ is a graph on $n = 4N^2+N+2$ vertices and the Ising model on $F$ is ferromagnetic with an inconsistent external field. Let $\mu := \mu_F$ denote the corresponding Gibbs distribution.

For the hidden model $F^*$, we consider the same construction above but replacing $G$ with a complete graph $K = K_N$ on $N$ vertices where
every edge has weight $\beta_K = \hat{\beta}+4\log 2>0$ and every vertex has the same field $h_G$ as the Ising model on $G$.
Let $\mu^* := \mu_{F^*}$ be the corresponding the Gibbs distribution.
We note first that we can efficiently sample from $\mu^*$.

\begin{lemma}\label{lem:K-sample}
	There is an exact sampling algorithm for the distribution $\mu^*$ with running time $\poly(n)$. 
\end{lemma}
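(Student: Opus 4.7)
The plan is to exploit symmetry to reduce $F^*$ to a polynomial number of equivalence classes of configurations, exactly as in the proofs of Lemmas~\ref{les} and~\ref{lem:I-sample-antiferro}. The only new wrinkle is that the external field $h_G$ is inconsistent, so the complete graph $K$ is not fully vertex-symmetric; however, since $h_G$ takes only two possible values, $V_G$ partitions into two classes $V_1 = \{v \in V_G : h_G(v,1) = \hat h,\, h_G(v,2)=0\}$ and $V_2 = V_G \setminus V_1$ on which the model is symmetric.

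First I would classify each configuration $\sigma \in \Omega_{F^*}$ by the \emph{type}
\[
T(\sigma) \;=\; \bigl(\sigma(s_1),\, \sigma(s_2),\, n_1(\sigma),\, n_2(\sigma)\bigr),
\]
where $n_j(\sigma) = |\{v \in V_j : \sigma(v) = 1\}|$. There are at most $4(|V_1|+1)(|V_2|+1) = O(N^2)$ types. Next I would verify that, conditioned on the type, the weight of the configuration factorizes into an explicit product: the $K$-edge contribution $\exp(\beta_K\cdot E_{\mathrm{mono}}(n_1+n_2))$ depends only on $n_1+n_2$ (since $K$ is complete), the field contribution on $V_G$ is $\exp(\hat h\, n_1 + \hat h(|V_2|-n_2))$, and for every fixed $v \in V_G$ and $j \in \{1,2\}$, summing out the $N$ auxiliary spins $u^{(i)}_{v,j}$ contributes a factor $g(\sigma(v),\sigma(s_j))^N$ where
\[
g(a,b) \;=\; \begin{cases} e^{2\beta_1}+1 & \text{if } a=b, \\ 2e^{\beta_1} & \text{if } a\neq b, \end{cases}
\]
and similarly each $w^{(i)}_j$ contributes an explicit factor $\phi_j(\sigma(s_j))$ obtained by summing over its two spin values with the appropriate field.

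Putting this together, the total weight of all configurations of a given type $(b_1,b_2,n_1,n_2)$ equals $\binom{|V_1|}{n_1}\binom{|V_2|}{n_2}$ times an explicit product of the above factors. I would compute this quantity for all $O(N^2)$ types in $\poly(n)$ time (the binomial coefficients are of polynomial bit-length), normalize to obtain the distribution over types, and sample a type from this distribution exactly. Finally, given a sampled type $(b_1,b_2,n_1,n_2)$, I would set $\sigma(s_j)=b_j$, pick uniformly random subsets of $V_1$ and $V_2$ of sizes $n_1$ and $n_2$ to receive spin $1$, and independently sample each auxiliary spin $u^{(i)}_{v,j}$ and $w^{(i)}_j$ from its (explicit, product-form) conditional distribution given $(\sigma(v),\sigma(s_j))$ or $\sigma(s_j)$. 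This yields an exact sample from $\mu^*$ in $\poly(n)$ time. There is no substantial obstacle; the only care needed is the bookkeeping for the inconsistent field, which is handled by splitting into $V_1$ and $V_2$ and indexing types by the pair $(n_1,n_2)$ rather than a single count.
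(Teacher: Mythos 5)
Your proof is correct and takes essentially the same symmetry-based route as the paper: classify configurations into types, compute type weights (summing out the degree-$2$ and degree-$1$ auxiliary vertices $u^{(i)}_{v,j}$ and $w^{(i)}_j$, which decouple given the spins on $K_N$, $s_1$, $s_2$), sample a type from the induced distribution, then sample uniformly within the type. The one substantive difference is that you track the spin-$1$ counts $n_1$ and $n_2$ separately in the two field-classes $V_1, V_2$, giving $O(N^2)$ types. The paper tracks only the total count, asserting ``$4(N+1)$ types in total'' and claiming that configurations of the same type have the same weight ``by symmetry.'' Because $F^*$ inherits the \emph{inconsistent} field $h_G$ on $K = K_N$, that claim is not literally true: two configurations with the same number of spin-$1$ vertices but different splits $(n_1,n_2)$ across $V_1$ and $V_2$ receive different field contributions $\hat h\, n_1 + \hat h(|V_2| - n_2)$. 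Your refinement is the correct fix, and it costs only a polynomial factor, so the lemma's conclusion is unaffected. In short, you have identified and repaired a genuine (if minor) imprecision in the paper's proof.
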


\begin{proof}
Configurations in $\Omega_{F^*}$ are classified by their type, which is given by the spins of the vertices $s_1$ and $s_2$ and the number of vertices with spin $1$ in the complete graph $K_N$. 
There are $4(N+1)$ types in total. 
Observe that configurations of each type have the same weight by symmetry, and this weight can be computed efficiently since given the spins of $s_1$, $s_2$ and $K_N$ the remaining graph has only isolated vertices and edges. 
It is also straightforward to get the number of configurations of each type. 
Thus, to sample from $\mu^*$, we first sample a type from the induced distribution on the types, and then sample a configuration of the given type uniformly at random. 
\end{proof}

Denote the sum of weights of two monochromatic configurations on $G$ by
\[
Z_G^\mathrm{mo} := \sum_{i \in \{1,2\}} \exp \left( \hat{\beta} |E_G| + \sum_{v\in V_G} h_G(v,i) \right). 
\]
The hidden and visible models $F$ and $F^*$ are related as follows.

\begin{lemma}\label{lem:Z-to-TV}
	Let $\varepsilon \in (0,1)$ be a constant, $L = L(n) = \poly(n)$ and $r = 96 \eps^{-1} \sqrt{\eps L+1}$. 
	Suppose $\hat Z \in \R$ is such that $rZ_G^\mathrm{mo} \le \hat{Z} \le \frac{1}{r} \exp(\frac{1}{2} (\hat{\beta} + \hat{h} + 1) N^2 )$.
	Then, for any $\beta_1 \ge \frac{1}{2}(\hat \beta + \hat h + 5)$, 
	we can find $\beta_2 \in(0,\beta_1)$ in $\poly(n)$ time such that by setting $h = \beta_2$ al of the following holds:
	\begin{enumerate}[(i)]
	\item $\frac{1}{4\sqrt{\eps L+1}} \frac{Z_G}{\hat{Z}} \le \frac{Z_{F}^\mathrm{D}}{Z_{F}^\mathrm{M}} \le \frac{1}{\sqrt{\eps L+1}} \frac{Z_G}{\hat{Z}}$;
	\item $\frac{Z_{F^*}^\mathrm{D}}{Z_{F^*}^\mathrm{M}} \le \frac{2}{r \sqrt{\eps L+1}}$;
	\item If $Z_G \le \frac{1}{r} \hat{Z}$, then
		$\TV{\mu}{\mu^*} \le \frac{1}{16L}$;
	\item If $Z_G \ge r \hat{Z}$, then
		$\TV{\mu}{\mu^*} \ge 1-\eps$. 
	\end{enumerate}
\end{lemma}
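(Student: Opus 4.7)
The plan is to mimic the structure of the proof of Lemma~\ref{lem:Z-to-TV-antiferro}. Exactly as there, I would split $\Omega_F=\Omega_F^\mathrm{M}\cup\Omega_F^\mathrm{D}$ according to whether $\sigma(s_1)=\sigma(s_2)$, and introduce the auxiliary subset
\[
\Omega_F^{\mathrm{M}_0}:=\{\sigma\in\Omega_F^\mathrm{M}: \sigma(v)=\sigma(s_1)\text{ for all } v\in V_G\}
\]
of majority configurations that are monochromatic on $V_G$ as well. The whole argument then reduces to three ingredients: (a) estimating $Z_F^{\mathrm{M}_0}/Z_F^\mathrm{M}$ and $Z_{F^*}^{\mathrm{M}_0}/Z_{F^*}^\mathrm{M}$, (b) computing $Z_F^\mathrm{D}/Z_F^{\mathrm{M}_0}$ in closed form, and (c) tuning $\beta_2$.

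For (a), in the majority phase each $v\in V_G$ feels an effective field of strength $(\cosh\beta_1)^{2N}$ from the $2N$ length-two paths to $s_1$ and $s_2$ favoring $\sigma(v)=\sigma(s_1)$, so flipping the spin of $v$ multiplies the configuration weight by at most $(\cosh\beta_1)^{-2N}\cdot e^{\hat\beta(N-1)+\hat h}$. Under the hypothesis $\beta_1\ge\tfrac{1}{2}(\hat\beta+\hat h+5)$, and using $\cosh\beta_1\ge e^{\beta_1}/2$, this ratio is at most $e^{-3N}$, so a union bound over $V_G$ yields the analog of Claim~\ref{claim:Z-and-Z0}: $(1-e^{-2N})Z_F^\mathrm{M}\le Z_F^{\mathrm{M}_0}\le Z_F^\mathrm{M}$, and similarly for $F^*$. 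For (b), the parameter choice $h=\beta_2$ is engineered precisely so that summed over the spin of $w_j^{(i)}$, its contribution given $\sigma(s_j)$ is $e^{2\beta_2}+1$ if the field-preferred spin of $w_j^{(i)}$ matches $\sigma(s_j)$ and $2e^{\beta_2}$ otherwise, i.e.\ everything factors through $\cosh\beta_2$. Likewise each length-two path between $V_G$ and $\{s_1,s_2\}$ contributes $e^{2\beta_1}+1$ when matching and $2e^{\beta_1}$ when mismatching; crucially, in the disordered phase every $v\in V_G$ sees one matching and one mismatching bundle of $N$ paths \emph{regardless of $\sigma(v)$}, so the per-vertex path factor is constant in $\sigma(v)$ and the sum over $V_G$-configurations collapses to $Z_G$. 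Collecting terms and using $(e^{2x}+1)/(2e^x)=\cosh x$ then gives
\[
\frac{Z_F^\mathrm{D}}{Z_F^{\mathrm{M}_0}} = \bigl[(\cosh\beta_2)^{N^2}+(\cosh\beta_2)^{-N^2}\bigr]\,(\cosh\beta_1)^{-N^2}\,\frac{Z_G}{Z_G^\mathrm{mo}},
\]
and the analogous identity for $F^*$ obtained by replacing $Z_G,Z_G^\mathrm{mo}$ with $Z_K,Z_K^\mathrm{mo}$.

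For (c), I would apply the intermediate value theorem---implemented by binary search in $\poly(n)$ time---to choose $\beta_2\in(0,\beta_1)$ so that $(\cosh\beta_2/\cosh\beta_1)^{N^2}$ lands in a subinterval of $\bigl[\tfrac{Z_G^\mathrm{mo}}{4\sqrt{\eps L+1}\,\hat Z},\,\tfrac{Z_G^\mathrm{mo}}{\sqrt{\eps L+1}\,\hat Z}\bigr]$ (after absorbing the $1\pm e^{-2N}$ margin from~(a) and the negligible $(\cosh\beta_2)^{-N^2}$ term). The map $\beta_2\mapsto(\cosh\beta_2/\cosh\beta_1)^{N^2}$ is continuous and increases from $(\cosh\beta_1)^{-N^2}$ to $1$ on $(0,\beta_1]$, and the assumed range $rZ_G^\mathrm{mo}\le\hat Z\le r^{-1}\exp(\tfrac{1}{2}(\hat\beta+\hat h+1)N^2)$ together with $\beta_1\ge\tfrac{1}{2}(\hat\beta+\hat h+5)$ places the target interval safely inside this range, so an admissible $\beta_2$ exists. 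Substituting back into the identity of (b) then yields~(i). For~(ii), the same $\beta_2$ combined with $Z_K/Z_K^\mathrm{mo}\le 2$---a consequence of $\beta_K=\hat\beta+4\log 2$ being large enough that the complete graph $K$ is monochromatic with probability at least $1/2$---together with $\hat Z\ge rZ_G^\mathrm{mo}$, gives $Z_{F^*}^\mathrm{D}/Z_{F^*}^{\mathrm{M}_0}\le 2/(r\sqrt{\eps L+1})$, which~(a) then upgrades to the stated bound on $Z_{F^*}^\mathrm{D}/Z_{F^*}^\mathrm{M}$.

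Finally, parts (iii) and~(iv) follow by the standard triangle-inequality argument used in Lemma~\ref{lem:Z-to-TV-antiferro}: set $\nu=\mu(\cdot\mid\Omega_F^\mathrm{M})$, $\nu^*=\mu^*(\cdot\mid\Omega_{F^*}^\mathrm{M})$, and $\rho=\nu(\cdot\mid\Omega_F^{\mathrm{M}_0})=\nu^*(\cdot\mid\Omega_{F^*}^{\mathrm{M}_0})$ (the two conditional distributions agree because once $V_G$ is fixed to a single spin the remaining randomness lives on $\{s_1,s_2\}$ together with the pendant vertices and is insensitive to whether the graph on $V_G$ is $G$ or $K$); then $\TV{\mu}{\nu}=Z_F^\mathrm{D}/Z_F$, $\TV{\mu^*}{\nu^*}=Z_{F^*}^\mathrm{D}/Z_{F^*}$, and $\TV{\nu}{\rho},\TV{\nu^*}{\rho}\le e^{-2N}$, so feeding the bounds of~(i) and~(ii) into the triangle inequality in the two regimes $Z_G\le\hat Z/r$ and $Z_G\ge r\hat Z$ gives~(iii) and~(iv). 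I expect the main technical obstacle to be identifying the ``magic'' parameter choice $h=\beta_2$ that makes the pendant-vertex contributions decouple through $\cosh$, and verifying that the assumed range of $\hat Z$ is exactly what is needed to make the tuning step~(c) feasible given the constraint $\beta_2<\beta_1$; once these are in place, the remainder is a close parallel of the antiferromagnetic proof.
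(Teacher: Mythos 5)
Your proposal is correct and follows essentially the same route as the paper's proof: the same decomposition $\Omega_F^{\mathrm{M}}, \Omega_F^{\mathrm{D}}, \Omega_F^{\mathrm{M}_0}$, the same analog of Claim~\ref{claim:Z-and-Z0}, the same closed-form ratio $Z_F^{\mathrm{D}}/Z_F^{\mathrm{M}_0}$ via $\cosh$, the same choice $h=\beta_2$, the same binary-search tuning of $\beta_2$, and the same three-distribution triangle-inequality argument for (iii)--(iv). The only differences are cosmetic (you describe the path and pendant contributions per vertex rather than writing the full product formulas, and your target interval for the tuning step is stated a bit more loosely before being refined to a subinterval), but the key observations---why $h=\beta_2$ makes the pendant weight ratio equal $\cosh\beta_2$, why the disordered phase decouples to give $Z_G$, and the use of $Z_K^{\mathrm{mo}}/Z_K\ge 1/2$ and $\hat Z\ge r Z_G^{\mathrm{mo}}$ for part~(ii)---are exactly those of the paper.
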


The proof of Lemma~\ref{lem:Z-to-TV} is provided in Section~\ref{subsec:inconsistent-aux}.
We provide next the proof of Theorem~\ref{thm:reduction-ferro-Ising}.

\begin{proof}[Proof of Theorem~\ref{thm:reduction-ferro-Ising}]
Consider the ferromagnetic Ising model $(G,\beta_G,h_G)$, where 
$G=(V_G,E_G)$ is an $N$-vertex graph, $\beta_G(e) = \hat \beta$ for all $e \in E_G$
and $h_G(v,j) = \1(j=1) \hat h$ for $j=\{1,2\}$ or $h_G(v,j) = \1(j=2) \hat h$ for $i=\{1,2\}$ for all $v\in V_G$; note that this model belongs to $\hatMfIsing(N,N,\hat \beta ,\hat h)$.
Let $\hat Z > 0$ be a real number, let $n = 4N^2+N+2$ and
suppose there is an $\varepsilon$-identity testing algorithm for $\MfIsing(n,n,\beta_0,h_0)$
with sample complexity $L = L(n) = \poly(n)$ and running time $\poly(n)$, where $\beta_0,h_0 >0$ are a suitable constants.
Let $r = 96\eps^{-1}\sqrt{\eps L +1}$; we want to check whether $Z_G \le \frac{1}{r} \hat{Z}$ or $Z_G \ge r \hat{Z}$ where $Z_G := Z_{G,\beta_G,h_G}$.

We construct the Ising models  $F$ and $F^*$ with Gibbs distribution $\mu$ and $\mu^*$, respectively as described above, setting $\beta_1= \frac{1}{2}(\hat \beta + \hat h + 5)$, using the $\beta_2$ supplied by Lemma~\ref{lem:Z-to-TV}, and taking $h = \beta_2$; hence the models $F$ and $F^*$ belong to $\MfIsing(n,n,\beta_0,h_0)$, provided $\beta_0 \ge \max\{\hat\beta,\beta_K,\beta_1,\beta_2\}$ and $h_0 \ge \max \{\hat h,h\}$.

By Lemma~\ref{lem:Z-to-TV} when 
$rZ_G^\mathrm{mo} \le \hat{Z} \le \frac{1}{r} \exp(\frac{1}{2} (\hat{\beta} + \hat{h} + 1) N^2 )$,
conditions (i) and (ii) of Theorem~\ref{thm:gen-red} are satisfied; condition (iii) is given by Lemma~\ref{lem:K-sample}. Therefore, we have an algorithm for the decision version of $r$-approximate counting for the Ising model on $G$ for $\hat Z$ in this range.
When $\hat Z$ is not in this range, 
note that  we have the following crude bounds on $Z_G$:
\[
Z_G^\mathrm{mo} \le Z_G \le 2^N \cdot \exp\left( \hat{\beta} \frac{N^2}{2} + \hat{h} N \right) \le \exp\left( \frac{1}{2} (\hat{\beta} + \hat{h} + 1) N^2 \right).
\]
Thus, if $\hat{Z} < r Z_G^\mathrm{mo} \le r Z_G$ we can output $\hat{Z} \le \frac{1}{r} Z_G$. 
Similarly, $\hat{Z} > \frac{1}{r} \exp(\frac{1}{2} (\hat{\beta} + \hat{h} + 1) N^2) \ge \frac{1}{r} Z_G$ we output $\hat{Z} \ge r Z_G$.

Therefore, we have a $\poly(N)$ running time algorithm for the decision version of $r$-approximate counting for $\hatMfIsing(N,N,\hat \beta,\hat h)$ where $N =  \Theta(n^{1/2})$ and $r = \poly(N)$, as desired.
\end{proof}

\subsection{Proof of Lemma~\ref{lem:Z-to-TV}}
\label{subsec:inconsistent-aux}
We reuse the notation introduce in Section~\ref{subsec:mixed-rbm-lemmas}. Recall that 
$\Omega_F^\mathrm{M} = \{ \sigma\in\Omega_F: \sigma(s_1) = \sigma(s_2) \}$ and
$ \Omega_F^\mathrm{D} = \{ \sigma\in\Omega_F: \sigma(s_1) \neq \sigma(s_2) \}$.
Also the partition function for the majority phase is given by
\[
Z_F^\mathrm{M} = \sum_{\sigma\in \Omega_F^\mathrm{M}} \exp\left( \sum_{\{u,v\}\in E_F} \beta_F(\{u,v\}) \Ind\{\sigma(u) = \sigma(v)\} + \sum_{v\in V_F} h_F(v,\sigma(v)) \right)
\]
and $Z_F^\mathrm{D}$ is defined similarly. 
The corresponding partition functions for the hidden model are denoted by $Z_{F^*}^\mathrm{M}$ and $Z_{F^*}^\mathrm{D}$.

\begin{proof}
Let
$
\Omega_{F}^{\mathrm{M}_0} = \left\{ \sigma \in \Omega_{F}^\mathrm{M}: \forall v\in V_G, \sigma(v) = \sigma(s_1) \right\} 
$
and consider restrictions of partition functions $Z_F^{\mathrm{M}_0}$ and $Z_{F^*}^{\mathrm{M}_0}$, as in the proof of Lemma~\ref{lem:Z-to-TV-antiferro}. 
The following claim, whose proof is provided at the end of the section, has the same flavor as Claim~\ref{claim:Z-and-Z0}. 

\begin{claim}\label{claim:Z-and-Z0-ferro-RBM}
If $\beta_1 \ge \frac{1}{2}(\hat{\beta} + \hat{h} + 5)$, 
then 
$(1-e^{-2N}) Z_F^\mathrm{M} \le Z_F^{\mathrm{M}_0} \le Z_F^\mathrm{M}$ 
and 
$(1-e^{-2N}) Z_{F^*}^\mathrm{M} \le Z_{F^*}^{\mathrm{M}_0} \le Z_{F^*}^\mathrm{M}$. 
\end{claim}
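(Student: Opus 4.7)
The upper bounds $Z_F^{M_0} \le Z_F^M$ and $Z_{F^*}^{M_0} \le Z_{F^*}^M$ are immediate from the inclusions $\Omega_F^{M_0} \subseteq \Omega_F^M$ and $\Omega_{F^*}^{M_0} \subseteq \Omega_{F^*}^M$. For the lower bounds I plan to transplant the strategy of Claim~\ref{claim:Z-and-Z0}: rewrite
\[
1 - \frac{Z_F^{M_0}}{Z_F^M} = \Pr_\mu\!\left(\exists v \in V_G: \sigma(v) \neq \sigma(s_1) \;\Big|\; \Omega_F^M\right) \le \sum_{v \in V_G} \Pr_\mu\!\left(\sigma(v) \neq \sigma(s_1) \;\Big|\; \Omega_F^M\right),
\]
and control each per-vertex probability by freezing the spins outside $\{v\} \cup \{u_{v,j}^{(i)}\}_{i,j}$ and marginalizing out the $2N$ middle vertices, each of which is a leaf attached only to $v$ and $s_j$.

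The key computation is that summing each $\sigma(u_{v,j}^{(i)}) \in \{1,2\}$ produces $e^{2\beta_1}+1$ when $\sigma(v) = \sigma(s_j)$ and $2e^{\beta_1}$ otherwise; using $\sigma(s_1) = \sigma(s_2) = a$ on $\Omega_F^M$, the full $2N$-fold product yields a ratio of at most $(2e^{\beta_1}/(e^{2\beta_1}+1))^{2N} \le (2e^{-\beta_1})^{2N}$ between the $\sigma(v) \neq a$ and $\sigma(v) = a$ branches. Absorbing the remaining $v$-dependent factors---the $G$-edges at $v$, contributing at most $e^{\hat\beta\deg_G(v)} \le e^{\hat\beta(N-1)}$, and the external field at $v$, contributing at most $e^{\hat h}$---I obtain
\[
\Pr_\mu\!\left(\sigma(v) \neq \sigma(s_1) \,\Big|\, \Omega_F^M\right) \le (2e^{-\beta_1})^{2N} e^{\hat\beta(N-1) + \hat h} = \exp\!\left(N(2\log 2 + \hat\beta - 2\beta_1) + O(1)\right).
\]
The hypothesis $2\beta_1 \ge \hat\beta + \hat h + 5$ bounds this exponent by $N(2\log 2 - \hat h - 5) + O(1) \le -3N$ for $N$ large (since $2\log 2 - 5 < -3$ and $\hat h \ge 0$), so each per-vertex probability is $e^{-\Omega(N)}$; the union bound over the $N$ vertices of $V_G$ then yields the claimed $1 - Z_F^{M_0}/Z_F^M \le e^{-2N}$.

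The argument for $F^*$ is structurally identical, with $V_G$ replaced by $V_K$ and $\hat\beta$ replaced by $\beta_K = \hat\beta + 4\log 2$. The main obstacle I foresee is the tightness of constants: the substitution turns the per-vertex exponent into $N(6\log 2 + \hat\beta - 2\beta_1) + O(1) \le N(6\log 2 - \hat h - 5) + O(1)$, which is still linearly negative because $6\log 2 \approx 4.16 < 5$, but with a much thinner margin than in the $F$ case. This still gives an $e^{-\Omega(N)}$ per-vertex bound, which after the union bound is absorbed into $e^{-2N}$ for $N$ sufficiently large; it is here that the constant $5$ in the hypothesis $\beta_1 \ge \tfrac{1}{2}(\hat\beta + \hat h + 5)$ is essentially being calibrated to accommodate the $4\log 2$ inflation of the coupling in the hidden graph $K$.
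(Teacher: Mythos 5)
Your plan and the per-vertex weight comparison match the paper's proof of this claim, and the calculation for $F$ is correct and essentially identical to the paper's: $\Pr(\sigma(v)\neq\sigma(s_1)\mid\Omega_F^{\mathrm M}) \le \left(\tfrac{2e^{\beta_1}}{e^{2\beta_1}+1}\right)^{2N} e^{\hat\beta(N-1)+\hat h} \le e^{-2(\beta_1-1)N+(\hat\beta+\hat h)N} \le e^{-3N}$, then union bound $Ne^{-3N}\le e^{-2N}$.

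However, your concluding sentence about $F^*$ is arithmetically wrong, and it points at a real (if minor) flaw shared with the paper's terse \emph{``proved analogously.''} You correctly compute that the per-vertex exponent for $F^*$ becomes $N(6\log 2 + \hat\beta - 2\beta_1) + O(1) \le N(6\log 2 - 5) + O(1) \approx -0.84N + O(1)$, so the union-bounded deficit is of order $N e^{-0.84 N}$. That quantity is \emph{never} at most $e^{-2N}$ (one would need $N \le e^{-(2-0.84)N}$, which fails for every $N\ge1$), so it is \emph{not} ``absorbed into $e^{-2N}$ for $N$ sufficiently large.'' The constant $5$ in $\beta_1 \ge \tfrac12(\hat\beta+\hat h+5)$ is calibrated for $F$ only; to get $e^{-2N}$ for $F^*$ by this argument one would need the hypothesis strengthened to $\beta_1 \ge \tfrac12(\beta_K+\hat h+5) = \tfrac12(\hat\beta+4\log 2+\hat h+5)$, or the claim's constant for $F^*$ relaxed. (Contrast with Claim~\ref{claim:Z-and-Z0} in Section~\ref{RBM-nofield}, where $F^*$ uses an \emph{independent set} in place of $G$, so the analogous $F^*$ bound is strictly easier, not harder.) The practical impact is nil: Lemma~\ref{lem:Z-to-TV} only needs the $F^*$ deficit to be negligible against $1/L = 1/\poly(N)$, which $N e^{-0.84N+O(1)}$ certainly is. But as written, neither your proof nor the paper's one-liner establishes the literal $e^{-2N}$ bound for $F^*$.
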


We then derive explicit formulae for the two partition functions $Z_{F}^\mathrm{D}$ and $Z_{F}^{\mathrm{M}_0}$.
For configurations $\sigma \in \Omega_{F}^\mathrm{D}$ with $\sigma(s_1) = 1$ and $\sigma(s_2) = 2$ (resp., $\sigma(s_1) = 2$ and $\sigma(s_2) = 1$), the weight of the configuration on $G$ is multiply by a factor of $2e^{\beta_1}(e^{2\beta_1} + 1)$ for each edge $\{u_{v,j}^{(i)},v\}$, $\{u_{v,j}^{(i)},s_j\}$, $j\in\{1,2\}$ for every $v\in V_G$ and every $1\le i \le N$; it is also multiply by a 
$(e^{\beta_2+h}+1)^2$ (resp., $(e^{\beta_2} + e^h)^2$) factor for each edge $\{w_j^{(i)},s_j\}$ and the vertex $w_j^{(i)}$, $j=\{1,2\}$ and $1\le i\le N^2$. 
For configurations in $\Omega_F^{\mathrm{M}_0}$, both monochromatic configurations on $G$ receive additional weight $(e^{2\beta_1}+1)^2$ for the edges $\{u_{v,j}^{(i)},v\}$, $\{u_{v,j}^{(i)},s_j\}$, $j=\{1,2\}$ for every vertex $v\in V_G$ and every $1\le i \le N$, 
and a $(e^{\beta_2+h}+1)(e^{\beta_2} + e^h)$ factor for each edge $\{w_j^{(i)},s_j\}$ and the vertex $w_j^{(i)}$, $j=\{1,2\}$ for every $1\le i\le N^2$. 
Thus, we obtain that
\begin{align*}
Z_{F}^\mathrm{D} &= \left[ \left( e^{\beta_2+h} + 1 \right)^{2N^2} + \left( e^{\beta_2} + e^h \right)^{2N^2} \right] \left( 2 e^{\beta_1} \left(e^{2\beta_1} + 1\right) \right)^{N^2} Z_G;\\
Z_{F}^{\mathrm{M}_0} &= \left( e^{\beta_2+h} + 1 \right)^{N^2} \left( e^{\beta_2} + e^h \right)^{N^2} \left(e^{2\beta_1} + 1 \right)^{2N^2} Z_G^\mathrm{mo}.
\end{align*}
Recall that $\cosh x = \frac{1}{2}(e^x+e^{-x})$. We then deduce that
\begin{align*}
\frac{Z_{F}^\mathrm{D}}{Z_{F}^{\mathrm{M}_0}} = \left[ \left( \frac{\cosh(\frac{\beta_2+h}{2})}{\cosh(\frac{\beta_2-h}{2})} \right)^{N^2} + \left( \frac{\cosh(\frac{\beta_2-h}{2})}{\cosh(\frac{\beta_2+h}{2})} \right)^{N^2} \right] \left( \frac{1}{\cosh \beta_1} \right)^{N^2} \frac{Z_G}{Z_G^\mathrm{mo}}.
\end{align*}
Since $\cosh x \ge 1$ for all $x\in\R$, let $h=\beta_2 > 0$ and then we get
\begin{equation}\label{eq:D/M0-ferro}
\left( \frac{\cosh \beta_2}{\cosh \beta_1} \right)^{N^2} \frac{Z_G}{Z_G^\mathrm{mo}}
\le \frac{Z_{F}^\mathrm{D}}{Z_{F}^{\mathrm{M}_0}}
\le 2\left( \frac{\cosh \beta_2}{\cosh \beta_1} \right)^{N^2} \frac{Z_G}{Z_G^\mathrm{mo}}.
\end{equation}

Now for $\beta_1 \ge \frac{1}{2}(\hat \beta + \hat h + 5)$, we pick $\beta_2 > 0$ such that
\begin{equation}\label{eq:beta2-ferro}
\frac{1}{3\sqrt{\eps L+1}}\, \frac{Z_G^\mathrm{mo}}{\hat{Z}}
\le
\left( \frac{\cosh \beta_2}{\cosh \beta_1} \right)^{N^2}
\le
\frac{1}{2\sqrt{\eps L+1}}\, \frac{Z_G^\mathrm{mo}}{\hat{Z}},
\end{equation}
Such $\beta_2>0$ always exists and satisfies $\beta_2 < \beta_1$. 
To see this, we note that since  $\hat{Z} \le \frac{1}{r} \exp(\frac{1}{2} (\hat{\beta} + \hat{h} + 1) N^2 )$ and $Z_G^\mathrm{mo} \ge 2$, we have
\begin{align*}
\frac{1}{N^2} \log \left( \frac{1}{3\sqrt{\eps L+1}}\, \frac{Z_G^\mathrm{mo}}{\hat{Z}} \right) + \log(\cosh \beta_1)
&\ge
\frac{1}{N^2} \log \left( \frac{1}{3\sqrt{\eps L+1}}\, 2r e^{ - \frac{1}{2} (\hat{\beta} + \hat{h} + 1) N^2} \right) + \beta_1 - 1\\
&\ge 
- \frac{1}{2} (\hat{\beta} + \hat{h} + 1) + \frac{1}{2}(\hat{\beta}+\hat{h}+5) - 1 = 1 > 0,
\end{align*}
where the second inequality follows from $2r/(3\sqrt{\eps L +1}) = 64/\eps \ge 1$. 
This is equivalent to  
\[
\left( \frac{1}{3\sqrt{\eps L+1}}\, \frac{Z_G^\mathrm{mo}}{\hat{Z}} \right)^{\frac{1}{N^2}} \cosh \beta_1 \ge 1,
\]
and hence $\beta_2>0$ satisfying \eqref{eq:beta2-ferro} always exists and can be computed in $\poly(n)$ time. 
Note also that since $\hat{Z} \ge r Z_G^\mathrm{mo}$ we have
\[
\frac{1}{2\sqrt{\eps L+1}}\, \frac{Z_G^\mathrm{mo}}{\hat{Z}} \le \frac{1}{2r\sqrt{\eps L+1}} = \frac{\eps}{192(\eps L+1)} < 1.
\]
This shows that $\cosh \beta_2/\cosh \beta_1 < 1$ and thus $\beta_2 < \beta_1$. 

Combining Claim~\ref{claim:Z-and-Z0-ferro-RBM} and inequalities~\eqref{eq:D/M0-ferro} and \eqref{eq:beta2-ferro}, we deduce that
\[
\frac{1}{4\sqrt{\eps L+1}} \frac{Z_G}{\hat{Z}} 
\le (1-e^{-2N})\frac{Z_F^\mathrm{D}}{Z_F^{\mathrm{M}_0}} 
\le \frac{Z_F^\mathrm{D}}{Z_F^\mathrm{M}} 
\le \frac{Z_F^\mathrm{D}}{Z_F^{\mathrm{M}_0}} 
\le \frac{1}{\sqrt{\eps L+1}} \frac{Z_G}{\hat{Z}}. 
\]
This shows part (i). 
For part (ii), we can compute $Z_{F^*}^\mathrm{D} / Z_{F^*}^{\mathrm{M}_0}$ in a similar fashion and obtain
\begin{equation}\label{eq:F*-D/M0-ferro}
\left( \frac{\cosh \beta_2}{\cosh \beta_1} \right)^{N^2} \frac{Z_K}{Z_K^\mathrm{mo}}
\le \frac{Z_{F^*}^\mathrm{D}}{Z_{F^*}^{\mathrm{M}_0}}
\le 2\left( \frac{\cosh \beta_2}{\cosh \beta_1} \right)^{N^2} \frac{Z_K}{Z_K^\mathrm{mo}}.
\end{equation}
Therefore, by inequalities~\eqref{eq:F*-D/M0-ferro} and \eqref{eq:beta2-ferro} we obtain
\[
\frac{Z_{F^*}^\mathrm{D}}{Z_{F^*}^\mathrm{M}} 
\le \frac{Z_{F^*}^\mathrm{D}}{Z_{F^*}^{\mathrm{M}_0}} 
\le \frac{1}{\sqrt{\eps L+1}}\, \frac{Z_G^\mathrm{mo}}{\hat{Z}} \frac{Z_K}{Z_K^\mathrm{mo}} \le \frac{2}{r\sqrt{\eps L+1}},
\] 
where the last inequality follows from the assumption $\hat{Z} \ge r Z_G^\mathrm{mo}$ and the fact that $Z_K^\mathrm{mo}/Z_K \ge 1/2$ when $\beta_K \ge 4\log 2$. 
Thus, part (ii) is established.

To establish part (iii), let us
define $\nu = \mu(\,\cdot\, | \Omega_{F}^\mathrm{M})$ to be the distribution conditioned on $\Omega_{F}^\mathrm{M}$, and similarly $\nu^* = \mu^*(\,\cdot\, | \Omega_{F^*}^\mathrm{M})$. 
By the definition of total variation distance we have
\[
\TV{\mu}{\nu} = \TV{\mu}{\mu(\,\cdot\, | \Omega_{F}^\mathrm{M})} = \frac{Z_{F}^\mathrm{D}}{Z_{F}} = 1 - \frac{Z_{F}^\mathrm{M}}{Z_{F}}.
\]

For part (iii), if $Z_G \le \frac{1}{r} \hat{Z}$, we deduce from part (i) that
\[
\TV{\mu}{\nu} \le \frac{Z_{F}^\mathrm{D}}{Z_{F}^\mathrm{M}} 
\le \frac{1}{\sqrt{\eps L+1}} \frac{Z_G}{\hat{Z}} 
\le \frac{1}{r\sqrt{\eps L+1}} = \frac{\eps}{96(\eps L + 1)} \le \frac{1}{96L}.
\]
Similarly, by part (ii) we have
\[
\TV{\mu^*}{\nu^*} \le \frac{Z_{F^*}^\mathrm{D}}{Z_{F^*}^\mathrm{M}} 
\le \frac{2}{r \sqrt{\eps L+1}} = \frac{\eps}{48(\eps L + 1)} \le \min\left\{\frac{1}{48L}, \frac{\eps}{48}\right\}. 
\]
Let $\rho = \nu(\,\cdot\, | \Omega_{F}^{\mathrm{M}_0})$ denote the conditional distribution of $\nu$ on $\Omega_F^{\mathrm{M}_0}$. 
Observe that $\rho$ does not depend on the graph $G$, because we condition on the event that all vertices from $G$ receive the same spin, and thus the structure of $G$ does not affect the conditional distribution $\rho$.
In particular, we have $\rho = \nu(\,\cdot\, | \Omega_{F}^{\mathrm{M}_0}) = \nu^*(\,\cdot\, | \Omega_{F^*}^{\mathrm{M}_0})$. 
Then, Claim~\ref{claim:Z-and-Z0-ferro-RBM} implies that
\[
\TV{\nu}{\rho} = 1 - \frac{Z_F^{\mathrm{M}_0}}{Z_F^\mathrm{M}} \le e^{-2N}
\]
and similarly $\TV{\nu^*}{\rho} \le e^{-2N}$. 
Therefore, we obtain from the triangle inequality that
\[
\TV{\nu}{\nu^*} \le \TV{\nu}{\rho} + \TV{\nu^*}{\rho} \le 2 e^{-2N}. 
\]
We conclude again from the triangle inequality that
\[
\TV{\mu}{\mu^*} \le \TV{\mu}{\nu} + \TV{\mu^*}{\nu^*} + \TV{\nu}{\nu^*} \le \frac{1}{96L} + \frac{1}{48L} + 2 e^{-2N} \le \frac{1}{16L}. 
\]

For part (iv), if $Z_G \ge r \hat{Z}$, then by part (i)
\[
\TV{\mu}{\nu} \ge 1 - \frac{Z_{F}^\mathrm{M}}{Z_{F}^\mathrm{D}} \ge 1 - 4\sqrt{\eps L+1}\, \frac{\hat{Z}}{Z_G} \ge 1 - \frac{4}{r}\sqrt{\eps L+1} = 1 - \frac{\eps}{24}.
\]
Hence, 
\[
\TV{\mu}{\mu^*} \ge \TV{\mu}{\nu} - \TV{\mu^*}{\nu^*} - \TV{\nu}{\nu^*} \ge 1 - \frac{\eps}{24} - \frac{\eps}{48} - 2e^{-2N} \ge 1-\eps,
\]
as claimed. 
\end{proof}

\begin{proof}[Proof of Claim~\ref{claim:Z-and-Z0-ferro-RBM}]
Note first that $Z_F^{\mathrm{M}_0} \le Z_F^\mathrm{M}$ and
from a union bound we get 
\begin{align*}
1 - \frac{Z_{F}^{\mathrm{M}_0}}{Z_{F}^\mathrm{M}}
&= \Pr\Big(\exists v \in V_G: \sigma(v) \neq \sigma(s_1) \Big| \sigma(s_1) = \sigma(s_2)\Big)\le \sum_{v\in V_G} \Pr(\sigma(v) \neq \sigma(s_1) | \sigma(s_1) = \sigma(s_2)). 
\end{align*}
For every $\sigma \in \Omega_{F}^\mathrm{M}$ and $v\in V_G$, if $\sigma(v) \neq \sigma(s_1)$, then the total weight of edges incident to $v$ is at most $(2e^{\beta_1})^{2N} \exp(\hat{\beta}(N-1) + \hat{h})$; 
and if $\sigma(v) = \sigma(s_1)$, then it is at least $(e^{2\beta_1} + 1)^{2N}$. 
Thus, we get
\begin{align*}
\Pr(\sigma(v) \neq \sigma(s_1) | \sigma(s_1) = \sigma(s_2)) 
&\le \frac{(2e^{\beta_1})^{2N} \exp(\hat{\beta}(N-1) + \hat{h})}{(2e^{\beta_1})^{2N} \exp(\hat{\beta}(N-1) + \hat{h}) + (e^{2\beta_1} + 1)^{2N}}\\
&\le \left(\frac{2 e^{\beta_1}}{e^{2\beta_1}+1}\right)^{2N} \exp\left( \hat{\beta}(N-1) + \hat{h} \right)\\ 
&\le \exp\left(-2(\beta_1-1)N\right) \cdot \exp\left( (\hat{\beta}+\hat{h})N \right)\\
&\le e^{-3N}. 
\end{align*}
where the last inequality follows from the assumption $\beta_1 \ge \frac{1}{2}(\hat{\beta} + \hat{h} + 5)$. 
Therefore,
$
\frac{Z_{F}^{\mathrm{M}_0}}{Z_{F}^\mathrm{M}} \ge 1 - Ne^{-3N} \ge 1 - e^{-2N}. 
$ The bound for $F^*$ is proved analogously. 
\end{proof}




\section{Hardness of testing in bounded degree graphs}
\label{section:main-proof}

In this section, we provide a reduction from identity testing in bounded degree graphs to identity testing in general graphs. We introduce some convenient notation first.
Recall that  we use $\MPotts(n,d,\beta,h)$ for the family of Potts models on $n$-vertex graphs with
maximum degree at most $d$ with the absolute value of the edge and vertex weights bounded by $\beta$ and $h$, respectively; see Definition~\ref{dfn:potts:notation}.
We add ``\textsc{-Bip}'' to the subscript of this notation to denote the restriction to bipartite graphs; that is,
$\MBiPotts(n,d,\beta,h)$ denotes the set of models in $\MPotts(n,d,\beta,h)$, where the underlying graphs is bipartite; note that $\MBiIsing = \MRBM$.
Our reduction will also apply to Ising and Potts models with certain kinds of external fields, and so it is useful then to introduce the notion of \emph{$h$-vertex-monochromatic} external fields. 

\begin{defn}
	\label{dfn:field}
	Consider a Potts model on a graph $G=(V_G,E_G)$ with external field $h_G: V_G \times [q] \rightarrow \R$. 
	For $h \in \R$, we call $h_G$ \emph{$h$-vertex-monochromatic} if 
	$|h_G(v,i)| \le h$ for all $v \in V_G$, $i \in [q]$ and 
	$|\{i\in [q]: h_G(v,i) \neq 0\}| \le 1$ for all $v \in V_G$. 
\end{defn}

In words, an $h$-vertex-monochromatic field is one that allows $h_G$ to be non-zero (and at most $h$) for at most one spin at each vertex.
We add ``\textsc{-Mono}'' to the subscript of $\MPotts$ to denote the subfamily of models where the external field is $h$-vertex-monochromatic; namely, 
$\MPottsMono(n,d,\beta,h)$ and $\MBiPottsMono(n,d,\beta,h)$ respectively denote the subfamilies of models from $\MPotts(n,d,\beta,h)$ and from $\MBiPotts(n,d,\beta,h)$ with $h$-vertex-monochromatic fields.

\begin{thm}
	\label{thm:deg-red}
	Let $\hat{n},d \in \N^+$ be such that $3 \le d \le \hat{n}^{1-\rho}$ for some constant $\rho \in (0,1)$.
	Suppose that
	for some constants $\beta,h\ge0$
	there is no $\poly(n)$ running time $\varepsilon$-identity testing algorithm for $\MPottsMono(n,n,\beta,h)$.
	Then there exists a constant $c\in(0,1)$ such that, for any constant $\hat\eps > \eps$
	there is no $\poly(\hat{n})$ running time $\hat \eps$-identity testing algorithm for $\MBiPottsMono(\hat{n},d,\hat \beta, \hat h)$
	provided $\hat\beta d = \omega(\log \hat{n})$ and
	$\hat h \le h \hat{n}^{-c}$.
	
	Moreover, our reduction preserves ferromagnetism; that is, the statement remains true if we replace the family $\MPottsMono$ by $\MfPottsMono$
	and $\MBiPottsMono$ by $\MfBiPottsMono$.
	
%
\end{thm}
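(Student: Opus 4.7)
The plan is to give a Karp-style reduction that turns an $\eps$-identity testing instance $(M, M^*)$ in $\MPottsMono(n,n,\beta,h)$ into an $\hat\eps$-identity testing instance $(\hat M, \hat M^*)$ in $\MBiPottsMono(\hat n, d, \hat\beta, \hat h)$ with $\hat n = \poly(n)$, such that:
\begin{enumerate}[(a)]
    \item if $\mu_M = \mu_{M^*}$ then $\TV{\mu_{\hat M}}{\mu_{\hat M^*}} \le 1/(16L)$, where $L = L(\hat n) = \poly(\hat n)$ is the sample complexity of the presumed tester;
    \item if $\TV{\mu_M}{\mu_{M^*}} \ge 1 - \eps$ then $\TV{\mu_{\hat M}}{\mu_{\hat M^*}} \ge 1 - \hat\eps$;
    \item samples from $\mu_{M^*}$ can be converted in $\poly(\hat n)$ time into samples from a distribution within total variation $1/(16L)$ of $\mu_{\hat M^*}$.
\end{enumerate}
Plugging the output into a hypothetical $\poly(\hat n)$-time $\hat\eps$-tester for $\MBiPottsMono(\hat n, d, \hat\beta, \hat h)$ and running the generic coupling argument of Theorem~\ref{thm:gen-red} then solves the original testing problem in $\poly(n)$ time, contradicting the hypothesis.

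First I would handle the degree reduction on edges. For each edge $e = \{u,v\}$ of the visible graph $G$ (and independently for $e \in E_{G^*}$) with weight $\beta_G(e) \in [-\beta, \beta]$, I would replace $e$ by a random near-regular bipartite gadget $\Gamma_e$ of max degree $d$ on $\poly\log \hat n$ auxiliary vertices, with interior weights $\pm \hat\beta$ and two terminals identified with $u$ and $v$. In the regime $\hat\beta d = \omega(\log \hat n)$ the gadget lies deep in the ordered/low-temperature regime: the expansion lemmas for random near-regular bipartite graphs developed in \cite{colt} show that with probability $1 - \hat n^{-\omega(1)}$ over the choice of $\Gamma_e$, its partition function conditioned on terminal spins $(\sigma_u,\sigma_v)$ concentrates around its mean. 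Tuning the bipartition sizes and the signs of the interior edges realizes any target effective interaction of the form $\beta_G(e)\,\1(\sigma_u=\sigma_v)$ on the terminals, with error $\exp(-\Omega(\hat n^{\rho}))$. If the original model is ferromagnetic, all interior weights can be taken $+\hat\beta$, preserving ferromagnetism.

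Second I would handle the external field. Because $\hat h \le h\,\hat n^{-c}$, no single auxiliary vertex can supply a field of magnitude $h$; instead, for each $v \in V_G$ whose original field $h_G(v,i_v)$ is supported on the single spin $i_v$, I attach a \emph{field gadget} consisting of $\Theta(\hat n^c)$ auxiliary leaves of degree $\le d$, each carrying a field of magnitude $\hat h$ supported on the same spin $i_v$, connected to $v$ by ferromagnetic edges of weight $\hat\beta$. The number of leaves is chosen so that their cumulative effective field at $v$ matches $h_G(v,i_v)$ up to exponentially small error. This step preserves both ferromagnetism and the $h$-vertex-monochromatic property. The composite graph $\hat G$ is obtained by gluing all edge and field gadgets to the original vertex set $V_G$; since every gadget is intrinsically bipartite and the gadgets meet only at the vertices of $V_G$, a global bipartition of $\hat G$ can be assembled. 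By construction $\hat n = \poly(n)$ and $d \le \hat n^{1-\rho}$.

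The total-variation bounds (a) and (b) then follow from gadget concentration together with a coupling argument and a union bound over the $O(n^2)$ gadgets: with probability $1-o(1)$ over the gadget randomness, the projection of $\mu_{\hat M}$ onto $V_G$ is within $o(\hat\eps - \eps)$ of $\mu_M$, and symmetrically for the starred versions; any single gadget instance meeting these estimates suffices, so the reduction can be derandomized. Property (c) is obtained by drawing a sample $\sigma$ from $\mu_{M^*}$ restricted to $V_G$ and extending it by sampling each gadget's interior independently from its Gibbs distribution conditioned on the terminal spins from $\sigma$; each such conditional sampler runs in $\poly(\hat n)$ time by enumerating the $\poly(\hat n)$ ``types'' of the gadget in the spirit of Lemmas~\ref{les} and \ref{lem:I-sample-antiferro}. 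The main obstacle I anticipate is the gadget analysis itself: proving that random near-regular bipartite graphs at edge weight $\hat\beta$ with $\hat\beta d = \omega(\log \hat n)$ can realize any target pairwise interaction in $[-\beta,\beta]$ with exponentially small error, simultaneously in the ferromagnetic and mixed settings, and that this realization survives composition with the field gadgets. The analogous statement for the antiferromagnetic Ising case is established in \cite{colt}, and the proof plan is to lift it to Potts models by adapting the expansion estimates to the $q$-state configuration space and to the presence of $h$-vertex-monochromatic fields.
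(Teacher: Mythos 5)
Your proposal does not follow the paper's route, and the route it takes has a fatal structural flaw: it does not actually reduce the maximum degree of the graph. You keep the original vertex set $V_G$ and attach, to each edge $e=\{u,v\}$, a bipartite gadget $\Gamma_e$ whose \emph{two terminals are identified with $u$ and $v$}. But then a vertex $v$ of $G$ is a terminal of $\deg_G(v)$ different edge-gadgets simultaneously; since each incident gadget contributes at least one new neighbor to $v$, the degree of $v$ in $\hat G$ is at least $\deg_G(v)$, which is up to $n$ because the source family is $\MPottsMono(n,n,\beta,h)$. The field gadget makes this worse: attaching $\Theta(\hat n^c)$ auxiliary leaves directly to $v$ gives $v$ degree $\Omega(\hat n^c)$, which contradicts the target bound $d$ the moment $d$ is constant (which is precisely the interesting regime, e.g.\ $d=3$). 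So conditions (a)--(c) may well hold for your $\hat M$, but $\hat M$ is not in $\MBiPottsMono(\hat n, d, \hat\beta, \hat h)$, and the reduction does not produce a valid input for the hypothetical tester.

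The missing idea is that one must replace \emph{vertices}, not edges, by random bipartite expanders. In the paper's construction (Section~\ref{subsection:testing-instance-potts}), each $v\in V_G$ is replaced by a copy $B_v$ of a random bipartite graph $B\sim\Gr(b,p,\din,\dout)$ with $2b=\Theta(\poly(n))$ vertices and $p$ designated ports of in-degree $\din<d$; the $\ell(e)=\lceil|\beta_G(e)|/\hat\beta\rceil$ ``copies'' of edge $e=\{u,v\}$ are routed through \emph{distinct unused ports} of $B_u$ and $B_v$, each carrying weight $\beta_G(e)/(2\ell(e))$, so every vertex of the new graph has degree at most $\din+\dout=d$. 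Likewise, the original field $h_G(v,\cdot)$ is split evenly over the $2b$ vertices of $B_v$ (weight $h_G(v,i)/2b$ each), not concentrated at leaves, which both meets the bound $\hat h \le h\hat n^{-c}$ and leaves degrees untouched. The expander theorems (Theorems~\ref{thm:gadget:fact-ld} and~\ref{thm:gadget:fact-hd}) guarantee that conditioned on its boundary, each gadget $B_v$ is monochromatic with probability $1-o(1)$; crucially, Lemma~\ref{lemma:model:eq} then shows that conditioned on all gadgets being in a phase, the distribution on $\hat G$ is \emph{exactly} $\mu_G$ pushed forward, so Lemma~\ref{lemma:main-reduction-phase} gives the $\pm 2\delta$ TV transfer with $\delta = o(\hat\eps-\eps)$, and Lemma~\ref{lemma:main-reduction-sampling} gives efficient sampling for the hidden model. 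Your concentration-of-partition-function argument only aims for approximate simulation, and the gadget size $\poly\log\hat n$ is also too small to supply the $d_G\cdot \dout\lceil \ell(e)/\dout^2\rceil \le p$ ports needed to route all incident edges. To repair the proposal you would have to abandon the edge-gadget / field-leaf construction and adopt a vertex-replacement expander gadget along the lines above.
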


The proof of this theorem is fleshed out in the following sections. First in Section~\ref{subsection:potts-gadget},
we introduce our degree reducing gadget, which consists of a random bipartite graph of maximum degree $d$.
In Section~\ref{subsection:testing-instance-potts}, we describe the construction of the testing instance (i.e., the reduction) and the actual proof of Theorem~\ref{thm:deg-red} is then finalized in Section~\ref{subsec:proof-deg}. 

\subsection{A degree reducing gadget for the Potts model}
\label{subsection:potts-gadget}
Suppose $b,p,d,\din,\dout$ are positive integers such that
$b \ge p$, $d \ge 3$ and $\din + \dout = d$.
Let $B = (V_B,E_B)$ be the random bipartite graph defined as follows:
\begin{enumerate}
	\item Set $V_B=L\cup R$, where $|L|=|R|=b$ and $L\cap R=\emptyset$;
	\item Let $P$ be subset of $V_B$ chosen uniformly at random among all the subsets such that $|P\cap L|=|P\cap R|=p$;
	\item Let $M_1,\dots,M_{\din}$ be $\din$ random perfect matchings between $L$ and $R$;
	\item Let $M_1',\dots,M_{\dout}'$ be $\dout$ random perfect matchings between $L\backslash P$ and $R\backslash P$;
	\item Set $E_B = \left(\bigcup_{i=1}^{\din} M_i \right) \cup \left(\bigcup_{i=1}^{\dout} M_i'\right)$;
	\item Make the graph $B$ simple by replacing multiple edges with single edges.
\end{enumerate}
We use $\Gr(b,p,\din,\dout)$ to denote the resulting distribution; that is, $B \sim \Gr(b,p,\din,\dout)$.
Vertices in $P$ are called \textit{ports}. Every port has degree at most $\din$ while every non-port vertex has degree at most $d$.
The set of ports $P$ is chosen uniformly at random following~\cite{colt}, in order to use the expansion properties of $B \sim \Gr(b,p,\din,\dout)$ proved there.


To capture the notion of an
external configuration for the bipartite graph $B$,
we assume that
$B$ is an induced subgraph of a larger graph $\mathbb{B} = (V_{\BB},E_{\BB})$; i.e., $V_B \subset V_{\BB}$ and $E_B \subset E_{\BB}$.
Let $\partial P  = V_{\BB} \setminus V_{B}$. We assume that
every vertex in $P \subseteq V_B$ is connected to up to $\dout$ vertices in $\partial P$
and that
there are no edges between $V_B \setminus P$ and $\partial P$ in $\BB$.
Given a real number $\beta_B > 0$, we consider the Potts model on the graph $\BB$ with:
\begin{enumerate}
	\item edge interactions given by $\beta_{\BB}: E_{\BB} \rightarrow \R$,
	where $\max_{e \in E_{\BB}\setminus E_B} |\beta_{\BB}(e)| \le \beta_B$ and 
	$\beta_{\BB}(e) = \beta_B$ for every $e \in E_B$;
	\item an external field given by $h_{\BB}: V_{\BB} \times [q] \rightarrow \R$, where there exists $\kappa \in [q]$ and $h \in \R$ such that $h_{\BB}(v,i) = h\cdot \1(i=\kappa) \cdot  \1(v \in V_B)$.
\end{enumerate}
We remark that the field $h_{\BB}$ is $h$-vertex-monochromatic, but we also require that the spin for which the field is allowed to be not zero to be the same for all vertices.

Let $\sigma^i(B)$ be the configuration of $B = (V_B,E_B)$ where every vertex in $V_B$
is assigned color $i \in [q]$.
Let $\{\partial P = \tau\}$ denote the event that the configuration on $\partial P$ is $\tau \in [q]^{\partial P}$.
For certain choices of the random graph parameters
we can show that for any $\tau$, with high probability over~$B$,
the Potts configuration of $V_B$ on $\BB$ conditioned on $\{\partial P = \tau\}$ will likely be $\sigma^i(B)$  for some $i \in [q]$.

\begin{thm}
	\label{thm:gadget:fact-ld}
	Suppose $3 \le d = O_b(1)$, $\din = d-1$, $\dout = 1$ and $p = \floor{b^\alpha}$, where $\alpha \in (0,\frac{1}{4}]$ is a constant independent of $b$.
	Then, there exists a constant $\delta > 0$
	such that
	with probability $1-o(1)$ over the choice of the random graph $B$
	the following holds for every configuration $\tau$ on $\partial P$:
	$$
	\mu_{\BB} \left( \bigcup_{i \in [q]} \{\sigma^i(B)\} \;\middle|\; \partial P = \tau \right) \ge \left(1 -    \frac{q^2{\e}^{2h}}{{\e}^{\delta \beta_B d }} \right)^{2b}.
	$$
\end{thm}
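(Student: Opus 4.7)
The plan is to apply a Peierls-type argument using the edge expansion of $B \sim \Gr(b,p,d-1,1)$. First, I would invoke the expansion property established in the random bipartite graph analysis of \cite{colt}: with probability $1 - o(1)$ over $B$, there is an absolute constant $\delta' > 0$ (depending only on $d$) such that every subset $S \subseteq V_B$ has edge boundary $|\partial_B(S)| \geq \delta' d \min(|S|, 2b - |S|)$ in $B$. We condition on this ``good expansion'' event for the remainder of the argument; the stated probability $1-o(1)$ over the choice of $B$ comes from this step.

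Second, fix any configuration $\tau$ on $\partial P$ and any non-monochromatic configuration $\sigma$ on $V_B$. Let $j(\sigma) \in [q]$ be a color of maximum cardinality in $\sigma$, so $S(\sigma) := \{v : \sigma(v) \neq j(\sigma)\}$ satisfies $|S(\sigma)| \leq 2b(q-1)/q$, and compare $\sigma$ to the monochromatic reference $\sigma^{j(\sigma)}(B)$. The edges of $B$ contribute a ratio of at most $\exp(-\beta_B |\partial_B(S(\sigma))|)$, since $\sigma^{j(\sigma)}(B)$ is monochromatic on all of $E_B$ while $\sigma$ misses the boundary edges. Each of the at most $\dout = 1$ edges from a port to $\partial P$ contributes a factor of at most $e^{\beta_B \dout |S(\sigma) \cap P|} \leq e^{\beta_B |S(\sigma)|}$ in the adverse direction, and the external field contributes at most $e^{2h|S(\sigma)|}$ since $|h_\BB(v,\cdot)| \leq h$ on $V_B$. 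Combining with expansion yields
\[
\frac{w(\sigma)}{w(\sigma^{j(\sigma)}(B))} \leq \exp\bigl(-(\delta' \beta_B d - \beta_B - 2h) \min(|S(\sigma)|, 2b - |S(\sigma)|)\bigr) \leq \exp\bigl(-\delta \beta_B d \min(|S(\sigma)|, 2b - |S(\sigma)|)\bigr)
\]
for a suitable constant $\delta \in (0, \delta')$, where we have absorbed the subleading terms using $d \geq 3$ and that $\beta_B d$ is large compared to $h$.

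Third, I would bound the total non-monochromatic weight by summing over the choices of $j(\sigma)$ and $S(\sigma)$. For each majority color $j$ and each $T \subseteq V_B$ of size $k$, there are at most $(q-1)^k$ configurations with $j(\sigma) = j$ and $S(\sigma) = T$, so
\[
\sum_{\sigma\text{ non-mono}} w(\sigma) \leq \sum_{j=1}^q w(\sigma^j(B))\sum_{k=1}^{\lfloor 2b(q-1)/q\rfloor} \binom{2b}{k} (q-1)^k \exp\bigl(-\delta \beta_B d \min(k, 2b-k)\bigr).
\]
Terms with $k \leq b$ sum to at most $(1 + (q-1) e^{-\delta \beta_B d})^{2b} - 1$ by the binomial theorem. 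Terms with $b < k \leq 2b(q-1)/q$ satisfy $2b - k \geq 2b/q$, so each is crudely bounded by $q^{2b} \exp(-2\delta \beta_B d b / q)$, which is negligible whenever $\beta_B d$ is sufficiently large (the main theorems ensure $\beta_B d = \omega(\log b)$). Dividing by $\sum_i w(\sigma^i(B))$ yields
\[
\mu_{\BB}\Bigl(\bigcup_i \{\sigma^i(B)\} \;\Big|\; \partial P = \tau\Bigr) \geq \bigl(1 + (q-1) e^{-\delta \beta_B d}\bigr)^{-2b} \geq \bigl(1 - (q-1) e^{-\delta \beta_B d}\bigr)^{2b} \geq \Bigl(1 - \tfrac{q^2 e^{2h}}{e^{\delta \beta_B d}}\Bigr)^{2b},
\]
using $(1+x)^{-1} \geq 1 - x$ and $q-1 \leq q^2 e^{2h}$.

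The hard part will be the expansion statement itself: importing and adapting the edge-expansion lemmas of \cite{colt} to the parameter regime $\din = d - 1$, $\dout = 1$, $p = \floor{b^\alpha}$, and verifying that expansion holds \emph{uniformly} over all $S$ (and not merely for small $S$, where first-moment methods suffice). This is where the random choice of the port set $P$ and the restriction $\alpha \leq 1/4$ are used to rule out exceptional subsets with high probability. Once the uniform expansion is in hand, the Peierls weight comparison and the counting/rearrangement steps outlined above are routine.
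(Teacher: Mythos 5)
Your overall strategy---a Peierls comparison against the nearest monochromatic configuration, powered by the edge expansion of $B$ and summed over sets by a binomial argument---is the same as the paper's. But there is a genuine gap in how you absorb the contribution of the port edges, and closing it requires a second, sharper expansion estimate that your proposal never invokes.

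Concretely: your ratio bound is
$\exp(-(\delta'\beta_B d - \beta_B - 2h)\min(|S|, 2b-|S|))$,
where the $-\beta_B$ term is meant to dominate the adverse port contribution $\beta_B|S\cap P|$. This requires $\delta'\beta_B d|S| \ge \beta_B|S\cap P|$ (and more), i.e., $\delta' d \ge |S\cap P|/|S|$. When $S\subseteq P$, this ratio equals $1$, so you need $\delta' d > 1$. But the expansion constant $\delta'$ imported from the random-graph lemmas is an unspecified constant that may well satisfy $\delta' d < 1$ for $d=3$; your note that ``$d\ge 3$ and $\beta_B d$ is large compared to $h$'' does not rescue this, since the offending term $\beta_B$ scales with $\beta_B$ just like the expansion term. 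The paper avoids this by invoking \emph{two} separate properties of the random gadget (Theorems~5.2 and~5.3 of~\cite{colt}): not only $[S,V_B\setminus S]\ge\gamma d\,|S|$, but also $[S,V_B\setminus S]\ge(1+\gamma)\,|S\cap P|$, with the crucial ``$1+\gamma$'' having a strictly positive $\gamma$. A convex combination of the two then yields $[S,V_B\setminus S]\ge |S\cap P| + \tfrac{\gamma^2}{1+\gamma}d\,|S|$, after which the port term subtracts off cleanly and you are left with a genuine $\exp(-\delta\beta_B d\,|V_B\setminus S_\sigma|)$ decay. Without the second expansion property, the Peierls comparison does not close when $S$ is a small set consisting mostly of ports, precisely the regime where the ``every subset'' expansion of the random gadget is weakest relative to the ports.

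A smaller discrepancy worth noting: you split the sum into $k\le b$ versus $k>b$ and claim each large-$k$ term is bounded by $q^{2b}\exp(-2\delta\beta_B db/q)$ using $\min(k,2b-k)\ge 2b/q$. The conclusion happens to be salvageable here because the port contribution is at most $\beta_B p = O(\beta_B b^{1/4})$, which is sub-linear, but your stated inequality for the ratio $w(\sigma)/w(\sigma^{j(\sigma)})$ is not literally true as written since $\beta_B|S|$ and $2h|S|$ do not scale with $\min(|S|,2b-|S|)$ when $|S|>b$. The paper instead handles the $|S_\sigma|\le b$ case with a different identity, namely $\sum_j[S_\sigma(j),S_\sigma(j)] = bd - \tfrac12\sum_j[S_\sigma(j),V_B\setminus S_\sigma(j)]$, applying expansion to each colour class separately, and then bounding the port contribution by the crude $\beta_B\dout|P|$; you would need to make some such adjustment explicit rather than the single $\min$-formula you wrote.
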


\begin{thm}
	\label{thm:gadget:fact-hd}
	Suppose $p = b$ and $4 + \frac{1200}{\rho} \le d \le b^{1-\rho}$
	for some constant $\rho \in (0,1)$ independent of $b$.
	Then, there exist constants $\delta = \delta(\rho) > 0$ and $\theta = \theta(\rho) \in (0,1)$
	such that
	when $\din = \floor{\theta d}$ and $\dout = d - \floor{\theta d}$
	the following holds for every configuration $\tau$ on $\partial P$ with probability $1-o(1)$ over the choice of the random graph $B$:
	$$
	\mu_{\BB} \left( \bigcup_{i \in [q]} \{\sigma^i(B)\} \;\middle|\; \partial P = \tau \right)  \ge \left(1 -    \frac{q^2{\e}^{2h}}{{\e}^{\delta \beta_B d }} \right)^{2b}.
	$$
\end{thm}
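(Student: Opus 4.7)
The plan is to prove the theorem via a low-temperature Peierls-style argument: compare each configuration on $V_B$ against the nearest monochromatic configuration, and bound the excess weight using edge-expansion of the random graph $B$. Fix a boundary configuration $\tau$ on $\partial P$ throughout. The first step is to invoke the edge-expansion bound for $B\sim \Gr(b,b,\din,\dout)$ established in the analysis of random near-regular bipartite graphs in~\cite{colt}. With $p=b$, the out-matchings $M_1',\ldots,M_{\dout}'$ are between the empty sets $L\setminus P$ and $R\setminus P$, so $B$ is simply the union of $\din$ independent random perfect matchings between $L$ and $R$ (collapsed to a simple graph). Choosing $\theta=\theta(\rho)\in(0,1)$ sufficiently close to $1$, one shows that with probability $1-o(1)$ every nonempty $T\subseteq V_B$ satisfies
\[
|\partial_E T|\ge c\,\din\cdot \min(|T|,\,2b-|T|)
\]
for some constant $c=c(\theta)>0$; the assumption $d\ge 4+1200/\rho$ guarantees the per-subset failure probability $\exp(-\Omega(\din k))$ beats the union-bound factor $\binom{2b}{k}\le (eb/k)^k$ uniformly in $k\in[1,b]$.

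Next I perform the weight comparison. Given any $\sigma\in[q]^{V_B}$, let $i^\star=i^\star(\sigma)$ be its most common color and $T=\{v\in V_B:\sigma(v)\ne i^\star\}$. Compared with $\sigma^{i^\star}(B)$, the configuration $\sigma$ loses at least $|\partial_E T|$ monochromatic internal edges (each contributing $\beta_B$ to the log-weight), gains at most $2\beta_B\dout\,|T|$ in boundary-edge log-weight (since each vertex of $T$ has at most $\dout$ neighbors in $\partial P$ and each $\beta_{\BB}(e)\in[-\beta_B,\beta_B]$), and gains at most $2h|T|$ from the field. Combining this with the expansion bound and choosing $\theta$ so that $c\theta>2(1-\theta)$ yields, for some $\delta=\delta(\rho)>0$,
\[
w(\sigma\mid\tau)\le w(\sigma^{i^\star}\mid\tau)\cdot \exp\!\bigl(-\delta\beta_B d\,|T|+2h|T|\bigr).
\]
Grouping the sum over $\sigma$ by the value of $i^\star$ and dropping the ``$i^\star=i$'' constraint as an upper bound,
\[
\sum_{\sigma:\,i^\star(\sigma)=i}w(\sigma\mid\tau)\le w(\sigma^i\mid\tau)\sum_{k=0}^{2b}\binom{2b}{k}(q-1)^k e^{(2h-\delta\beta_B d)k}=w(\sigma^i\mid\tau)\bigl(1+(q-1)e^{2h-\delta\beta_B d}\bigr)^{2b},
\]
so summing over $i$ and using $1/(1+x)\ge 1-x$ gives
\[
\mu_{\BB}\!\left(\bigcup_{i\in[q]}\{\sigma^i(B)\}\,\Big|\,\partial P=\tau\right)\ge \bigl(1-(q-1)e^{2h-\delta\beta_B d}\bigr)^{2b}\ge \bigl(1-q^2e^{2h}/e^{\delta\beta_B d}\bigr)^{2b},
\]
which is the claimed bound (the factor $q^2$ merely absorbs the $q-1$ with room to spare).

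The main obstacle is the first step: producing the uniform edge-expansion bound for the simple graph obtained from $\din$ random matchings, with failure probability small enough that a union bound over all subset sizes $k\in[1,b]$ succeeds simultaneously. This is precisely where the thresholds $d\ge 4+1200/\rho$ and $d\le b^{1-\rho}$ are used, and where $\theta$ must be calibrated so that the expansion constant $c\theta$ strictly exceeds the combined boundary cost $2(1-\theta)$ per defect vertex. Once expansion is in place the Peierls counting in the second paragraph is routine, and the external field $h$ enters only as an additive contribution to the exponent that can be absorbed by tightening $\delta$ slightly.
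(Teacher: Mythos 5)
Your proof takes the same Peierls-plus-edge-expansion route as the paper, and the counting in the final paragraph matches the paper's bookkeeping. There is, however, a genuine gap in the per-configuration weight bound.

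You claim, for \emph{every} $\sigma$, that
\[
w(\sigma\mid\tau)\le w(\sigma^{i^\star}\mid\tau)\cdot \exp\bigl(-\delta\beta_B d\,|T|+2h|T|\bigr),
\]
where $T$ is the set of defect vertices. This follows from your expansion bound $|\partial_E T|\ge c\,\din\min(|T|,2b-|T|)$ only when $|T|\le b$, since there $\min(|T|,2b-|T|)=|T|$. When $q\ge 3$ the most common color $i^\star$ need only cover $2b/q$ vertices, so $|T|$ can be as large as $2b(q-1)/q>b$. In that regime the expansion gives you $|\partial_E T|\ge c\,\din(2b-|T|)$, which can be far smaller than $c\,\din|T|$; the boundary gain $2\beta_B\dout|T|$ and field gain $2h|T|$ still scale with $|T|$, so the claimed net decay $\exp(-\delta\beta_B d\,|T|)$ is not delivered by the expansion argument alone, and indeed for $|T|$ close to its maximum the ratio can be exponentially \emph{large} rather than small. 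Trying to rescue this by tuning $\theta$ only makes $\theta$ depend on $q$, whereas the theorem needs $\theta=\theta(\rho)$. The subsequent sum $\sum_{k=0}^{2b}\binom{2b}{k}(q-1)^k e^{(2h-\delta\beta_B d)k}$ therefore does \emph{not} legitimately upper-bound the contribution of configurations with $|T|>b$, even though each individual summand with $k>b$ looks tiny.

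The paper closes exactly this hole with a two-case analysis. For $|T|<b$ (i.e., $|S_\sigma|>b$) the expansion argument is used as you do. For $|T|\ge b$ (i.e., $|S_\sigma|\le b$) it instead writes the internal weight loss as $\tfrac12\sum_j[S_\sigma(j),V_B\setminus S_\sigma(j)]$, applies expansion to \emph{each} color class $S_\sigma(j)$ (all of size $\le b$), sums $\sum_j|S_\sigma(j)|=2b$, and compares against the boundary loss $\dout|P|$; this produces a uniform bound $w^\tau(\sigma)\le w^i\exp(-\delta\beta_B d\,b + r(\sigma,i))$ that does not reference $|T|$, and whose $e^{-\delta\beta_B d\,b}$ prefactor crushes the crude $(q-1+e^h)^{2b}$ count of such configurations. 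You would need to add this second case, with its distinct symmetric-sum argument, to make the proof complete for $q\ge 3$. (For $q=2$, $|T|\le b$ always and your argument goes through as written.)
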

\noindent
These theorems are extensions of Theorems 4.1 and 4.2 in \cite{colt},
where similar bounds are established for the case when every edge of $\BB$ has the same weight $\beta < 0$; i.e., the antiferromagnetic setting. In this new setting, there is an external field, every edge in $E_B$ has weight $\beta_B > 0$, and edges between $P$ and $\partial P$ are allowed to have either negative or positive weights bounded in absolute value by $\beta_B$.


\begin{proof}[Proof of Theorems~\ref{thm:gadget:fact-ld} and~\ref{thm:gadget:fact-hd}]

    Let $E(S,T)$ denote the set of edges between $S$ and $T$ in $E_\BB$.
	For ease of notation, we set $\beta = \beta_B$.
	Let $P_i \subseteq \partial P$ be the set of vertices of $\partial P$ that are assigned color $i \in [q]$ in $\tau$.
	The weight of $\sigma^i(B)$ in $\BB$ conditional on $\tau$ is then given by
	\begin{align}
	\label{eq:ground-state}
	w^i := w_{\BB}^\tau(\sigma^i(B)) =  \exp\left[ \beta d b +  2bh  \1(i = \kappa)+ \sum_{e \in E(P,P_i)} \beta_{\BB} (e)  \right].
	\end{align}

	Let $\Omega_B$ be the set of Potts configurations of the graph $B$.
	For $\sigma\in\Omega_B$, let $S_\sigma(i) \subseteq V_B$ be the set of vertices that are assigned color $i \in [q]$ in $\sigma$.
	We let $S_\sigma$ denote the set of maximum cardinality among $S_\sigma(1),\dots,S_\sigma(q)$.
	Let
	$\Omega_B^i \subseteq \Omega_B$ be the set of configurations $\sigma$ such that $S_\sigma = S_\sigma(i)$.
	For $\sigma\in\Omega_B$, we use $w^\tau(\sigma)$ for the weight of the configuration
	on $\BB$ that agrees with
	$\sigma$ on $V_B$ and with $\tau$ on $V_\BB\setminus V_B$.
	By definition, the partition function $Z_{\BB}^\tau$
	for the conditional distribution $\mu_{\BB}(\cdot \mid \partial P = \tau)$
	satisfies
	\begin{equation}
	\label{eq:partition:main}
	Z_{\BB}^\tau = \sum_{\sigma \in \Omega_B} w^\tau(\sigma) = \sum_{\sigma \in \Omega_B: |S_\sigma|  > b} w^\tau(\sigma) + \sum_{\sigma \in \Omega_B: |S_\sigma|  \le b} w^\tau(\sigma).
	\end{equation}
	
	We bound each term in the right-hand side of \eqref{eq:partition:main} separately.
	For $\sigma \in \Omega_B$, let $r(\sigma,i) = |S_\sigma(\kappa)|h - 2bh  \1(i = \kappa)$.
	We will show that in the regimes of parameters in Theorems~\ref{thm:gadget:fact-ld}	and~\ref{thm:gadget:fact-hd},
	with probability $1-o(1)$ over the choice of
	the random graph $B \sim \Gr(b,p,\din,\dout)$, there exists a constant $\delta > 0$ such that for every $\sigma \in \Omega_B$:
	\begin{align}
	w^\tau(\sigma) &\le w^i \cdot e^{ -\delta \beta d |V_B\setminus S_\sigma(i)| + r(\sigma,i)}~~\text{when}~~\sigma\in\Omega_B^i,~|S_\sigma|>b;~\text{and}\label{eq:weight-bound:1}\\
	w^\tau(\sigma) &\le w^i \cdot e^{ -\delta \beta d b + r(\sigma,i)}~~\text{when}~~\sigma\in\Omega_B^i,~|S_\sigma|\le b. \label{eq:weight-bound:2}
	\end{align}
	Before proving these two bounds, we show how to use them to complete the proofs of the theorems.
	From \eqref{eq:weight-bound:1}, we get
	\begin{align*}
	\sum_{\sigma \in \Omega_B:|S_\sigma|  >  b} w^\tau(\sigma)
	= \sum_{i=1}^q \sum_{\sigma \in \Omega_B^i:|S_\sigma|  >  b} w^\tau(\sigma)
	\leq  \sum_{i=1}^q \sum_{\sigma \in \Omega_B^i:|S_\sigma|  >  b} w^{i} \!\cdot\! e^{ -\delta \beta d |V_B \setminus S_\sigma(i)| + r(\sigma,i)}.
	\end{align*}
	If $i = \kappa$,
	\begin{align*}
	\sum_{\sigma \in \Omega_B^\kappa:|S_\sigma|  >  b} e^{-\delta \beta d |V_B \setminus S_\sigma(\kappa)|  + r(\sigma,\kappa)} &= \sum_{\sigma \in \Omega_B^\kappa:|S_\sigma|  >  b} e^{ -\delta \beta d |V_B \setminus S_\sigma(\kappa)|  - h |V_B \setminus S_\sigma(\kappa)|} \\	
	&= \sum_{x=0}^{b}  \binom{2b}{x}  (q-1)^{x} {\e}^{ -(\delta \beta d +h)x }
	\le \left(1+\frac{q-1}{\e^{\delta\beta d+h}}\right)^{2b}.
	\end{align*}
	If $i \neq \kappa$,
	\begin{align*}
	\sum_{\sigma \in \Omega_B^i:|S_\sigma|  >  b} e^{-\delta \beta d |V_B \setminus S_\sigma(i)| + r(\sigma,i)}
	&= \sum_{\sigma \in \Omega_B^i:|S_\sigma|  >  b} e^{-\delta \beta d |V_B \setminus S_\sigma(i)| + h |S_\sigma(\kappa)| }\\
	&= \sum_{x=0}^b \sum_{y=0}^{x}  \binom{2b}{x} \binom{x}{y}  (q-2)^{x-y}  {\e}^{ -\delta \beta d x+h y } \\
	&\le \sum_{x=0}^b \binom{2b}{x}  (q-2)^{x}  {\e}^{ -\delta \beta d x} \left(1 + \frac{{\e}^h}{q-2}\right)^x \\
	&\le  \left(1+\frac{q-2+{\e}^h}{\e^{\delta\beta d}}\right)^{2b}.
	\end{align*}
	Hence, 	letting $W = \sum_{i = 1}^q w^{i}$, we obtain
	$$
	\sum_{\sigma \in \Omega_B:|S_\sigma|  >  b}  w^\tau(\sigma) \le  w^\kappa \left(1+\frac{q-1}{\e^{\delta\beta d+h}}\right)^{2b} + (W-w^\kappa) \left(1+\frac{q-2+{\e}^h}{\e^{\delta\beta d}}\right)^{2b}.
	$$
	
	To bound the second summand from~\eqref{eq:partition:main},
	note that from \eqref{eq:weight-bound:2} we get
	\begin{align*}
	\sum_{\sigma:|S_\sigma|  \le  b} w^\tau(\sigma)
	&= \sum_{i=1}^q \sum_{\sigma \in \Omega_B^i:|S_\sigma|  \le   b} w^\tau(\sigma)
	\leq  \sum_{i=1}^q \sum_{\sigma \in \Omega_B^i:|S_\sigma|  \le b} w^{i} \!\cdot\!{\e}^{-\delta \beta d b + r(\sigma,i)}\\
	&\le  \sum_{i=1}^q w^{i} \!\cdot\!{\e}^{-\delta \beta d b} \sum_{\sigma \in \Omega_B^i:|S_\sigma|  \le b} {\e}^{|S_\sigma(\kappa)| h}\\
		&\le  W \!\cdot\!{\e}^{-\delta \beta d b} \sum_{x=0}^{2b} \binom{2b}{x} {\e}^{x h} (q-1)^{2b-x}\\
		&\le  W \!\cdot\!\left( \frac{(q-1+{\e}^h)^2}{{\e}^{\delta \beta d }}\right)^{b}.
	\end{align*}
	Thus,
	\begin{align*}
	Z_{\BB}^\tau &\leq w^\kappa \left(1+\frac{q-1}{\e^{\delta\beta d+h}}\right)^{2b} + (W-w^\kappa) \left(1+\frac{q-2+{\e}^h}{\e^{\delta\beta d}}\right)^{2b} + \left( \frac{(q-1+{\e}^h)^2}{{\e}^{\delta \beta d }}\right)^{b} W \\
	 &\le W\left[ \left(1+\frac{q-2+{\e}^h}{\e^{\delta\beta d}}\right)^{2b}	 + 	\left( \frac{(q-1+{\e}^h)^2}{{\e}^{\delta \beta d }}\right)^{b}\right].
	 \end{align*}
	Setting $x = \frac{q-2+{\e}^h}{\e^{\delta\beta d}}$, $y = \frac{(q-1+{\e}^h)^2}{{\e}^{\delta \beta d }}$ and $z = \frac{q^2{\e}^{2h}}{{\e}^{\delta \beta d }}$
	\begin{align*}
	\mu_{\BB} \left( \bigcup_{i \in [q]} \{\sigma^i(B)\} \mid \partial P = \tau \right)  = \frac{W }{Z_{\BB}^\tau} \geq \frac{1}{(1+x)^{2b}+y^b}
	\ge \frac{1}{(1+2z)^{2b}} \ge (1-2z)^{2b}
	\end{align*}
	as claimed.
	
	It remains for us to establish \eqref{eq:weight-bound:1} and \eqref{eq:weight-bound:2};
	we start with \eqref{eq:weight-bound:1}.
	For $S,T \subseteq V_B \cup \partial P$, let $[S,T]$ denote the number of edges between $S$ and $T$ in the graph $\BB$.
	Then,
	\begin{align}
	\label{eq:weight-first}
	w^\tau(\sigma) &= \exp\left[ \beta\sum_{j=1}^q [S_\sigma(j),S_\sigma(j)] + \sum_{j=1}^q \sum_{e \in E(P_j,S_\sigma(j) \cap P) }  \beta_{\BB} (e) + h|S_\sigma(\kappa)|\right].
	\end{align}
	Now, $\sum_{j=1}^q [S_\sigma(j),S_\sigma(j)]  \le d b -  [S_\sigma,V_B\setminus S_\sigma]$ and for any $i \in [q]$
	\begin{align*}
	\sum_{j=1}^q \sum_{e \in E(P_j,S_\sigma(j) \cap P) }  \beta_{\BB} (e)  -  \sum_{e \in E(P,P_i)} \beta_{\BB} (e)
	&= \sum_{j \neq i} \sum_{e \in E(P_j,S_\sigma(j) \cap P) }  \beta_{\BB} (e)  -  \sum_{e \in E(P \setminus S_\sigma(i),P_i)} \beta_{\BB} (e)
	\\
	& \le \beta \sum_{j \neq i}  [S_\sigma(j) \cap P,P_j \cup P_{i}].
	\end{align*}
	Plugging these two bounds into~\eqref{eq:weight-first} and using~\eqref{eq:ground-state}, we get for $\sigma \in \Omega_B^i$
	\begin{align}
	w^\tau(\sigma)	&\le \exp\left[ \beta(d b -  [S_\sigma,V_B\setminus S_\sigma])+ \beta \sum_{j \neq i}  [S_\sigma(j) \cap P,P_j \cup P_{i}] +\sum_{e \in E(P,P_i)} \beta_{\BB} (e) + h|S_\sigma(\kappa)| \right] \notag \\
	&= w^{i} \cdot \exp\left[ -{\beta} [S_\sigma,V_B\setminus S_\sigma] + \beta \sum_{j \neq i} [S_\sigma(j) \cap P,P_j \cup P_{i}] + r(\sigma,i)\right]\notag\\
		&\le  w^{i} \cdot \exp\left[ -{\beta} \left([S_\sigma,V_B\setminus S_\sigma] -[(V_B\setminus S_\sigma) \cap P,\partial P]\right) + r(\sigma,i)\right].
	\label{eq:weight-bound-1}
	\end{align}
	
	When $3 \le d = O_b(1)$, $p = \floor{b^\alpha}$ with $\alpha \in (0,\frac{1}{4}]$, $\din = d-1$ and $\dout = 1$.
	Hence, $[V_B\setminus S_\sigma \cap P,\partial P] = |(V_B\setminus S_\sigma) \cap P|$.
	Theorems~5.2 and~5.3 from~\cite{colt} imply that  exists a constant $\gamma > 0$ such that with probability $1-o(1)$ over the choice of the random graph $B$ we have
	\begin{align*}
	\frac{[S_\sigma,V_B\setminus S_\sigma]}{|V_B \setminus S_\sigma|} &\ge \gamma d,~\textrm{and}\\
	\frac{[S_\sigma,V_B\setminus S_\sigma]}{|(V_B \setminus S_\sigma) \cap P|} &\ge 1 + \gamma.
	\end{align*}
	Combining these two inequalities we get for
	$\delta = \frac{\gamma^2}{1+\gamma}$ that
	\begin{equation*}
	\label{eq:exr-imeq}
	[S_\sigma,V_B\setminus S_\sigma] \ge |(V_B \setminus S_\sigma) \cap P| + \delta d |V_B \setminus S_\sigma|.
	\end{equation*}
	Plugging this bound into \eqref{eq:weight-bound-1},
	\begin{equation}
	\label{eq:weight-bound-2}
w^\tau(\sigma) \le w^{i}  \cdot {\exp}\left[ -\delta \beta d |V_B \setminus S_\sigma| + r(\sigma,i)\right],
	\end{equation}
	and we get~\eqref{eq:weight-bound:1}, since $\sigma \in \Omega_B^i$ and so $S_\sigma = S_\sigma(i)$.
	
	Under the assumptions in Theorem~\ref{thm:gadget:fact-hd},
	we can also establish \eqref{eq:weight-bound-2} as follows.
	When $b^{1-\rho} \ge d \ge \din = \floor{\theta d} \ge 3$,  Theorem 5.1 from~\cite{colt} implies that
	$$
	[S_\sigma,V_B\setminus S_\sigma]   \ge \frac{\rho \din}{300} |V_B\setminus S_\sigma| = \frac{\rho \floor{\theta d}}{300}   |V_B\setminus S_\sigma|.
	$$
	Moreover,
	$$
	[V_B \setminus S_\sigma , \partial P] \le \dout |V_B\setminus S_\sigma| = (d-\floor{\theta d}) |V_B\setminus S_\sigma|.
	$$
	Hence,  taking
	$
	\theta = \frac{300 + 0.75 \rho}{300+\rho}
	$\
	we get  that when $d \ge 4 + \frac{1200}{\rho}$:
	\begin{equation}
	\label{eq:theta}
	\frac{\rho \floor{\theta d}}{300}   - (d-\floor{\theta d})  \ge \frac{\rho d}{600}.
	\end{equation}
	Together with \eqref{eq:weight-bound-1} this implies
	$$
	w^\tau(\sigma) \le   w^i\cdot  {\exp}\left[ -\frac{\rho \beta d |V_B\setminus S_\sigma|}{600} + r(\sigma,i)\right],
	$$
	which gives \eqref{eq:weight-bound-2} for $\delta \leq \rho/600$, and thus we again obtain~\eqref{eq:weight-bound:1}.
	(Observe that our choice of $\theta$ guarantees $d-1 \ge \din = \floor{\theta d} \ge 3$ for all $d \ge 4$.)

	
	We establish~\eqref{eq:weight-bound:2} next.
	Since
	$$
	\sum_{j=1}^q [S_\sigma(j),S_\sigma(j)] = bd - \frac{1}{2} \sum_{j=1}^q [S_\sigma(j),V_B\setminus S_\sigma(j)],
	$$
	and
	$$
	\sum_{j=1}^q \sum_{e \in E(P_j,S_\sigma(j) \cap P) }  \beta_{\BB} (e)  -  \sum_{e \in E(P,P_i)} \beta_{\BB} (e) \le \beta \dout |P|,
	$$
	we get from~\eqref{eq:ground-state} and~\eqref{eq:weight-first} that for $\sigma \in \Omega_B^i$
	\begin{align}
	w^\tau(\sigma)
	&\le w^{i} \cdot \exp\left[ -\frac{\beta}{2} \sum_{j=1}^q [S_\sigma(j),V_B\setminus S_\sigma(j)]  + \beta \dout |P|+ r(\sigma,i) \right] \notag\\
	&\le w^{i} \cdot \exp\left[ -{\beta} \left(\frac{1}{2}\sum_{j=1}^q [S_\sigma(j),V_B\setminus S_\sigma(j)] -\dout|P|\right) + r(\sigma,i)\right].
	\label{eq:weight-bound-3}
	\end{align}
	Since $|S_\sigma(j)| \le b$ for $j\in [q]$,
	our assumptions in Theorem~\ref{thm:gadget:fact-ld} combined with
	Theorem~5.2 from~\cite{colt} imply that there  exists a constant $\gamma > 0$ such that with probability $1-o(1)$ over the choice of the random graph $B$ we have for all $j \in [q]$
	\begin{align*}
	\frac{[S_\sigma(j),V_B \setminus S_\sigma(j)]}{|S_\sigma(j)|} &\ge \gamma d.
	\end{align*}
	Plugging this bound  into \eqref{eq:weight-bound-3}, and since $\dout=1$ by assumption, we get
	\begin{align*}
	w^\tau(\sigma) &\le  w^{i} \cdot \exp\left[ -{\beta} \left(\frac{1}{2}\sum_{j=1}^q \gamma d |S_\sigma(j)|-|P|\right) + r(\sigma,i)\right] \\
	&= w^{i} \cdot \exp\left[ -{\beta} \left(\gamma d b-|P|\right) + r(\sigma,i) \right] \le  w^{i} \cdot \exp\left[ -{\beta \delta d b} + r(\sigma,i)\right],
	\end{align*}
	where the last inequality holds for a suitable constant $\delta > 0$ and $b$ sufficiently large since $|P| \le \lfloor b^{1/4} \rfloor$.

	Finally, the assumptions in Theorem~\ref{thm:gadget:fact-hd} and Theorem 5.1 from~\cite{colt} imply that
	$$
	[S_\sigma(j),V_B\setminus S_\sigma(j)]   \ge \frac{\rho \din}{300} |S_\sigma(j)| = \frac{\rho \floor{\theta d}}{300}   | S_\sigma(j)|.
	$$
	Hence, since $|P|=b$ 
	\begin{align*}
	w^\tau(\sigma) &\le  w^{i} \cdot \exp\left[ -{\beta} \left(\frac{1}{2}\sum_{j=1}^q  \frac{\rho \floor{\theta d}}{300}   |S_\sigma(j)|-(d-\floor{\theta d})|P|\right)+ r(\sigma,i)\right] \\
	&\le w^{i} \cdot \exp\left[ -{\beta} b \left( \frac{\rho \floor{\theta d}}{300} -(d-\floor{\theta d})\right)+ r(\sigma,i)\right] \\
	&\le  w^{i} \cdot \exp\left[ -{\beta \delta d b}+ r(\sigma,i)\right],
	\end{align*}
	where the last inequality holds for a suitable constant $\delta > 0$ for $\theta$ satisfying \eqref{eq:theta}.
	This completes the proofs of the theorem.
\end{proof}

\subsection{Testing instance construction}
\label{subsection:testing-instance-potts}

Consider a Potts model on an $n$-vertex graph $G=(V_G,E_G)$,
with edge interactions $\beta_G: E_G \rightarrow \R$
and an $h$-vertex-monochromatic external field $h_G: V_G \times [q] \rightarrow \R$; see Definition~\ref{dfn:field}.
We show how to construct a Potts model on a larger graph of maximum degree at most $d$,
with edge interactions bounded by $\hat\beta$
and an $\hat h$-vertex-monochromatic external field
whose distribution captures that of the model $(G,\beta_G,h_G)$. We can think of $d$, $\hat \beta$ and $\hat h$ as the parameters for our construction.


We use an instance of the random bipartite graph $\Gr(b,p,\din,\dout)$
from Section~\ref{subsection:potts-gadget}
as a gadget to define a simple graph $\Gs_\Gamma= (V_{\Gs_\Gamma},E_{\Gs_\Gamma})$, where
$\Gamma$ denotes the set parameters $\{b,p,\din,\dout\}$.
The graph $\Gs_\Gamma$ is constructed as follows:
\begin{enumerate}
	\item Generate an instance $B = (V_B , E_B)$ of the random graph model $\Gr(b,p,\din,\dout)$;
	\item Replace every vertex $v$ of $\Gs$ by a copy $B_v =  ( L_v \cup R_v, E_{B_v})$ of the generated instance $B$;
	
	\item For every edge $e = \{v,u\} \in E_{\Gs}$, let $\ell(e) = \lceil |\beta_G(e)|/ \hat \beta\rceil$ and 
	choose $\dout \cdot \lceil \ell(e)/ \dout^2 \rceil$ unused ports in $L_v$, 
	$\dout \cdot\lceil \ell(e)/ \dout^2 \rceil$ unused ports in $R_u$ and 
	connect them with any simple bipartite graph of maximum degree at most $\dout$ and exactly $\ell(e)$ edges;
	
	\item Similarly, for every edge $e = \{v,u\} \in E_{\Gs}$, 
	choose $\dout \cdot\lceil \ell(e)/ \dout^2 \rceil$ unused ports in $R_v$ 
	and $\dout \cdot \lceil \ell(e)/ \dout^2 \rceil$ unused ports in $L_u$ and connect them with any simple bipartite graph of maximum degree at most $\dout$ and exactly $\ell(e)$ edges; 
\end{enumerate}

Let $d_G$ be the maximum degree of the graph $\Gs$.
Our construction requires:
\begin{align}
&\din+\dout = d \le b,\label{eq:cons-cond-1}\\
&d_G \cdot \left( \dout \cdot \max_{e \in E_G} \left\lceil \frac{\ell(e)}{\dout^2} \right\rceil \right) \le p.  \label{eq:cons-cond-3}
\end{align}
Observe also that there is always a simple bipartite graph of maximum degree at most $\dout$ and exactly $\ell(e)$ edges for steps 3 and 4;
take, for example,  $\lfloor \ell(e)/ \dout^2 \rfloor$ disjoint copies of the complete bipartite graph with $\dout$ vertices on each side, and add one additional bipartite graph with $\dout$ vertices on each side 
for the remaining edges when $\ell(e)/ \dout^2$ is not an integer.

We consider the Potts model on the graph $ (V_{\Gs_\Gamma},E_{\Gs_\Gamma})$
with edge weights $\beta_{\Gs_\Gamma}: E_{\Gs_\Gamma} \rightarrow \R$ 
and external field $h_{\Gs_\Gamma}: V_{\Gs_\Gamma} \times [q] \rightarrow \R$
defined as follows:
\begin{enumerate}
\item  each edge with both of its endpoints in the same gadget is assigned weight $\beta_B := \hat\beta$;
\item if the edge connects the gadgets corresponding to $u \neq v \in V_G$, then it is assigned weight
$\frac{\beta_G(\{u,v\})}{2\ell(\{u,v\})}$. 
\item for each vertex $v\in V_G$,
every vertex $u$ in the gadget $B_v$ is assigned the field $h_{\Gs_\Gamma} (u,i) := h_G(v,i)/2b$ for $i\in[q]$.
\end{enumerate}
Note that if $h_G$ is $h$-vertex-monochromatic, then $h_{\Gs_\Gamma}$ is $(h/2b)$-vertex-monochromatic, and that in the gadget of every vertex only one spin may receive a non-zero weight; in particular, if $h_G$ is $h$-vertex-monochromatic, then the field in every gadget would satisfy the conditions Section~\ref{subsection:potts-gadget}.


For a configuration $\sigma$ on $\Gs_\Gamma$,
we say that the gadget $B_v \!=\!  (V_{B_v}, E_{B_v})$ is in the $i$-\emph{th phase} if all the vertices in $V_{B_v}$ are assigned  spin $i \in \{1,\dots,q\}$.
Let $\Good$ be the set of configurations of $\Gs_\Gamma$ where the gadget of every vertex is in a phase (not necessarily the same).
The set of all Potts configurations of $\Gs_\Gamma$ is denoted by $\Omega$.
We use $Z_{\Gs_\Gamma}$ for the partition function of the Potts model on $\Gs_\Gamma$ and $Z_{\Gs_\Gamma}(\Lambda)$ for its restriction to a subset of configurations $\Lambda \subseteq \Omega$.
That is,
$
Z_{\Gs_\Gamma} = \sum_{\sigma \in \Omega} w_{\Gs_\Gamma}(\sigma)
$
and
$
Z_{\Gs_\Gamma}(\Lambda) = \sum_{\sigma \in \Lambda} w_{\Gs_\Gamma}(\sigma)
$
where
$$
w_{\Gs_\Gamma}(\sigma) :=  \exp\left[ \sum_{\{u,v\} \in E_{\Gs_\Gamma}} \beta_{\Gs_\Gamma}(\{u,v\}) \cdot \1 (\sigma(u)=\sigma(v)) + \sum_{v \in  V_{\Gs_\Gamma}} h_{\Gs_\Gamma}(v,\sigma(v))\right]$$
is the {weight} of the configuration $\sigma$.


For a configuration $\sigma \in \Good$, let $\sigma_G$ be the corresponding configuration on $G$ where $\sigma_G(v)$ is set to the phase of gadget $B_v$ in $\sigma$.
Let $\mu_G$ and $\mu_{\Gs_\Gamma}$ denote the Gibbs distribution for the Potts models we just defined on $G$ and $\Gs_\Gamma$. 
	From our construction, we can deduce the following fact.

\begin{lemma}
		\label{lemma:model:eq}
	For any graph $G$, we have
	$
	\mu_{\Gs_\Gamma}(\sigma \mid \sigma \in \Good) = \mu_{G}(\sigma_G).
	$
\end{lemma}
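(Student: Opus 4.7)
The plan is to show that for $\sigma\in\Good$ the weight $w_{\Gs_\Gamma}(\sigma)$ factors as a constant (independent of $\sigma$) times $w_G(\sigma_G)$; once this is established, the lemma follows immediately by normalizing and noting that the map $\sigma\mapsto\sigma_G$ is a bijection between $\Good$ and the set $\Omega_G$ of Potts configurations on $G$.

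First I would decompose $w_{\Gs_\Gamma}(\sigma)$ according to the three groups of contributions dictated by the construction in Section~\ref{subsection:testing-instance-potts}: (i) edges internal to each gadget $B_v$, (ii) the external field at vertices inside each gadget, and (iii) edges between distinct gadgets $B_u$ and $B_v$. For (i), since $\sigma\in\Good$ every gadget $B_v$ is monochromatic, so every one of the $|E_B|$ internal edges (each of weight $\hat\beta=\beta_B$) contributes, giving a factor of $\exp(\hat\beta|E_B|)$ per gadget, i.e.\ a global factor $C := \exp(\hat\beta|E_B|\cdot|V_G|)$ that does not depend on $\sigma$. For (ii), the field contribution from the gadget $B_v$, which consists of $2b$ vertices all assigned spin $\sigma_G(v)$, equals $\sum_{u\in V_{B_v}} h_{\Gs_\Gamma}(u,\sigma_G(v)) = 2b\cdot h_G(v,\sigma_G(v))/(2b) = h_G(v,\sigma_G(v))$, exactly matching the field contribution in the original model.

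For (iii), the key observation is that the construction places exactly $2\ell(e)$ edges between $B_u$ and $B_v$ for each $e=\{u,v\}\in E_G$ (one bipartite graph on $L_v\!-\!R_u$ contributes $\ell(e)$ edges from step~3, and another on $R_v\!-\!L_u$ contributes $\ell(e)$ edges from step~4), each of weight $\beta_G(e)/(2\ell(e))$. When $\sigma_G(u)=\sigma_G(v)$ both gadgets are in the same phase so all $2\ell(e)$ crossing edges are monochromatic, contributing $\exp(2\ell(e)\cdot\beta_G(e)/(2\ell(e)))=\exp(\beta_G(e))$; when $\sigma_G(u)\ne\sigma_G(v)$ no crossing edge is monochromatic and the contribution is $1$. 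Hence the total contribution from crossing edges is $\exp\bigl(\sum_{e=\{u,v\}\in E_G}\beta_G(e)\1(\sigma_G(u)=\sigma_G(v))\bigr)$.

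Combining the three contributions gives $w_{\Gs_\Gamma}(\sigma)=C\cdot w_G(\sigma_G)$ for every $\sigma\in\Good$. Summing over $\Good$ and using the bijection $\sigma\mapsto\sigma_G$ yields $Z_{\Gs_\Gamma}(\Good)=C\cdot Z_G$, so
\[
\mu_{\Gs_\Gamma}(\sigma\mid \sigma\in\Good)=\frac{w_{\Gs_\Gamma}(\sigma)}{Z_{\Gs_\Gamma}(\Good)}=\frac{C\cdot w_G(\sigma_G)}{C\cdot Z_G}=\mu_G(\sigma_G).
\]
There is no real obstacle here; the only thing to be careful about is bookkeeping the factor $2\ell(e)$ of crossing edges and the factor $2b$ of vertices per gadget so that the crossing-edge weights $\beta_G(e)/(2\ell(e))$ and vertex-field weights $h_G(v,i)/(2b)$ exactly reconstitute $\beta_G(e)$ and $h_G(v,i)$ on the collapsed configuration $\sigma_G$.
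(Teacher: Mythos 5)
Your proof is correct and follows essentially the same route as the paper's: decompose the weight into internal-gadget edges, fields, and crossing edges, observe that the first piece is a $\sigma$-independent constant while the other two reconstitute $w_G(\sigma_G)$, and conclude by normalizing over the bijection $\Good\leftrightarrow\Omega_G$. Your bookkeeping of the $2\ell(e)$ crossing edges of weight $\beta_G(e)/(2\ell(e))$ and the $2b$ gadget vertices with field $h_G(v,\cdot)/(2b)$ is exactly what the paper does (and in fact you state the constant factor, $\exp(\hat\beta|E_B|\cdot|V_G|)$, slightly more carefully than the paper's shorthand $\exp(\beta_B\din bn)$).
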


\begin{proof}
	Let $Q \subseteq E_{\Gs_\Gamma}$ be the edges of $\Gs_\Gamma$ that connect vertices between different gadgets. Then, for $\sigma \in \Good$,
	\begin{align*}
	\sum\limits_{\{u,v\}\in Q} \beta_{\Gs_\Gamma}(\{u,v\}) \1 (\sigma(u)=\sigma(v)) &= \sum\limits_{\{u',v'\}\in E_G} \beta_{G}(\{u',v'\}) \1 (\sigma_G(u')=\sigma_G(v')),\\
	\sum\limits_{\{u,v\}\in E_{\Gs_\Gamma}\setminus Q} \beta_{\Gs_\Gamma}(\{u,v\}) \1 (\sigma(u)=\sigma(v)) &= \exp\left(\beta_B \din b n\right),~\text{and}\\
	\sum_{v \in  V_{\Gs_\Gamma}} h_{\Gs_\Gamma}(v,\sigma(v)) &= \sum_{v' \in  V_{G}} h_G(v',\sigma_G(v')).	
	\end{align*}
	Thus, $w_{\Gs_\Gamma}(\sigma) = w_{G}(\sigma_G)\exp\left(\beta_B \din bn\right)$,
	and
	\[
	\mu_{\Gs_\Gamma}(\sigma \mid \sigma \in \Good) = \frac{w_{\Gs_\Gamma} (\sigma)}{Z_{\Gs_\Gamma}(\Good)} =\frac{w_{G}(\sigma_G) \exp\left(\beta_B \din bn \right)}{Z_{G} \exp\left(\beta_B \din bn\right) }=\mu_{G}(\sigma_G).\qedhere
	\]
\end{proof}

\begin{lemma}\label{lemma:main-reduction-phase}
	Let $(G,\beta_G,h_G)$ and $(G^*,\beta_{G^*},h_{G^*})$ be two Potts on the $n$-vertex graphs $G$ and $G^*$, respectively.
	Let $\Gamma = (b,p,\din,\dout)$ be such that conditions \eqref{eq:cons-cond-1} and \eqref{eq:cons-cond-3} are satisfied.
	Suppose that
	$\mu_{\Gs_\Gamma}(\Omega_{\rm good}) \ge 1- \delta$ and $\mu_{\Gs_\Gamma^*}(\Omega_{\rm good}) \ge 1- \delta$  for some $\delta \in (0,1)$.
	Then,
	\[
	\TV{\mu_G}{\mu_{G^*}} - 2\delta \le
	\TV{\mu_{\Gs_\Gamma}}{\mu_{\Gs_\Gamma^*}} \le \TV{\mu_G}{\mu_{G^*}} + 2\delta.
	\]	
\end{lemma}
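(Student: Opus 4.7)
The plan is to use Lemma~\ref{lemma:model:eq} to identify the conditional distributions $\mu_{\Gs_\Gamma}(\cdot \mid \Good)$ and $\mu_{\Gs_\Gamma^*}(\cdot \mid \Good)$ with $\mu_G$ and $\mu_{G^*}$ respectively, and then absorb the conditioning error into a $2\delta$ slack via two applications of the triangle inequality for total variation distance.

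First I would observe that the projection $\sigma \mapsto \sigma_G$ restricts to a bijection between $\Good$ and the configuration space $\Omega_G = [q]^{V_G}$: once every gadget $B_v$ is constrained to be monochromatic, the only remaining freedom is a single spin choice per $v \in V_G$, and conversely any $\tau \in [q]^{V_G}$ lifts uniquely to the configuration that paints each $B_v$ with color $\tau(v)$. Because both blown-up graphs $\Gs_\Gamma$ and $\Gs_\Gamma^*$ use the same per-vertex gadgets (only the inter-gadget edges differ), the set $\Good$ and the bijection $\sigma \mapsto \sigma_G$ are literally the same object in both models. Lemma~\ref{lemma:model:eq} then gives $\mu_{\Gs_\Gamma}(\sigma \mid \Good) = \mu_G(\sigma_G)$ and $\mu_{\Gs_\Gamma^*}(\sigma \mid \Good) = \mu_{G^*}(\sigma_G)$ for every $\sigma \in \Good$, and since total variation is invariant under bijective push-forwards,
\[
\TV{\mu_{\Gs_\Gamma}(\cdot \mid \Good)}{\mu_{\Gs_\Gamma^*}(\cdot \mid \Good)} = \TV{\mu_G}{\mu_{G^*}}.
\]

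Next I would invoke the elementary identity $\TV{\nu}{\nu(\cdot \mid A)} = 1 - \nu(A)$, valid for any distribution $\nu$ and event $A$ with $\nu(A) > 0$ (verified directly from the definition by splitting the $\ell^1$ sum over $A$ and $\Omega \setminus A$). Combined with the standing assumption $\mu_{\Gs_\Gamma}(\Good), \mu_{\Gs_\Gamma^*}(\Good) \ge 1 - \delta$, this yields
\[
\TV{\mu_{\Gs_\Gamma}}{\mu_{\Gs_\Gamma}(\cdot \mid \Good)} \le \delta, \qquad \TV{\mu_{\Gs_\Gamma^*}}{\mu_{\Gs_\Gamma^*}(\cdot \mid \Good)} \le \delta.
\]
Two triangle inequalities then close the argument. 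For the upper bound,
\[
\TV{\mu_{\Gs_\Gamma}}{\mu_{\Gs_\Gamma^*}} \le \TV{\mu_{\Gs_\Gamma}}{\mu_{\Gs_\Gamma}(\cdot \mid \Good)} + \TV{\mu_{\Gs_\Gamma}(\cdot \mid \Good)}{\mu_{\Gs_\Gamma^*}(\cdot \mid \Good)} + \TV{\mu_{\Gs_\Gamma^*}(\cdot \mid \Good)}{\mu_{\Gs_\Gamma^*}} \le \TV{\mu_G}{\mu_{G^*}} + 2\delta.
\]
For the lower bound, apply the triangle inequality in the other direction, starting from $\TV{\mu_G}{\mu_{G^*}} = \TV{\mu_{\Gs_\Gamma}(\cdot \mid \Good)}{\mu_{\Gs_\Gamma^*}(\cdot \mid \Good)}$ and inserting $\mu_{\Gs_\Gamma}$ and $\mu_{\Gs_\Gamma^*}$ on the path, to get $\TV{\mu_G}{\mu_{G^*}} \le \TV{\mu_{\Gs_\Gamma}}{\mu_{\Gs_\Gamma^*}} + 2\delta$.

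There is no real obstacle here; Lemma~\ref{lemma:model:eq} already does the heavy lifting and what remains is bookkeeping. The only mildly delicate point is making the push-forward step rigorous, i.e.\ confirming that $\Good$ and the projection $\sigma \mapsto \sigma_G$ coincide across the two constructions $\Gs_\Gamma$ and $\Gs_\Gamma^*$; this is immediate because the per-gadget structure is identical in both models and only the inter-gadget edges (which do not enter the definition of $\Good$) differ.
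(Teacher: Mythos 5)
Your proposal is correct and follows essentially the same route as the paper's proof: conditioning error bounded by $1 - \mu(\Good) \le \delta$, the pushforward identification via Lemma~\ref{lemma:model:eq}, and two applications of the triangle inequality. The additional remark that total variation is invariant under the bijective pushforward $\sigma \mapsto \sigma_G$ just makes explicit a step the paper leaves implicit.
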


\begin{proof}
	From the assumptions that $\mu_{\Gs_\Gamma}(\Omega_{\rm good}) \ge 1- \delta$ and $\mu_{\Gs_\Gamma^*}(\Omega_{\rm good}) \ge 1- \delta$ we get
	\begin{align*}
	\TV{\mu_{\Gs_\Gamma}}{\mu_{\Gs_\Gamma}(\cdot\,|\,\Omega_{\rm good})} &= 1 - \mu_{\Gs_\Gamma}(\Omega_{\rm good}) \le \delta,~\text{and}\\
	\TV{\mu_{\Gs_\Gamma^*}}{\mu_{\Gs_\Gamma^*}(\cdot\,|\,\Omega_{\rm good})} &= 1 - \mu_{\Gs_\Gamma^*}(\Omega_{\rm good}) \le \delta.
	\end{align*}
	Also, from Lemma~\ref{lemma:model:eq} we have
	$
	\TV{\mu_{\Gs_\Gamma}(\cdot\,|\,\Omega_{\rm good})}{\mu_{\Gs_\Gamma^*}(\cdot\,|\,\Omega_{\rm good})} = \TV{\mu_G}{\mu_{G^*}}.
	$
	Therefore, it follows from the triangle inequality that
	\begin{align*}
	\TV{\mu_{\Gs_\Gamma}}{\mu_{\Gs_\Gamma^*}}
	\le{}&
	\TV{\mu_{\Gs_\Gamma}}{\mu_{\Gs_\Gamma}(\cdot\,|\,\Omega_{\rm good})} +
	\TV{\mu_G}{\mu_{G^*}} 
	+  \TV{\mu_{\Gs_\Gamma^*}}{\mu_{\Gs_\Gamma^*}(\cdot\,|\,\Omega_{\rm good})}\\
	\le{}& \TV{\mu_G}{\mu_{G^*}} + 2\delta.
	\end{align*}
	The lower bound is derived in similar fashion:
	\begin{align*}
		\TV{\mu_{\Gs_\Gamma}}{\mu_{\Gs_\Gamma^*}}
		\ge{}&
		\TV{\mu_G}{\mu_{G^*}} - \TV{\mu_{\Gs_\Gamma}}{\mu_{\Gs_\Gamma}(\cdot\,|\,\Omega_{\rm good})}
		 -  \TV{\mu_{\Gs_\Gamma^*}}{\mu_{\Gs_\Gamma^*}(\cdot\,|\,\Omega_{\rm good})}\\
		\ge{}& \TV{\mu_G}{\mu_{G^*}} - 2\delta,
	\end{align*}
	as claimed.
\end{proof}

We show next that if we have a sampling oracle for $\mu_{G}$, then we can generate approximate samples from $\mu_{\Gs_\Gamma}$ efficiently.

\begin{lemma}
	\label{lemma:main-reduction-sampling}
	%
	Consider the Potts model on an $n$-vertex graph $G$ and
	let $\Gamma = (b,p,\din,\dout)$ be such that  conditions \eqref{eq:cons-cond-1} and \eqref{eq:cons-cond-3} are satisfied.
	Suppose that $\mu_{\Gs_\Gamma}(\Omega_{\rm good}) \ge 1- \delta$ for some $\delta \in (0,1)$.
	Then, given a sampling oracle for the distribution $\mu_{G}$,
	there exists a sampling algorithm with running time $\poly(n,b)$
	such that
	the distribution $\mu_{\Gs_\Gamma}^{\normalfont\textsc{alg}}$ of its output satisfies:
	$$
	\TV{\mu_{\Gs_\Gamma}}{\mu_{\Gs_\Gamma}^{\normalfont\textsc{alg}}} \le \delta.
	$$
\end{lemma}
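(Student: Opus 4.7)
\medskip
\noindent
\textbf{Proof plan for Lemma~\ref{lemma:main-reduction-sampling}.}
The plan is to use the sampling oracle for $\mu_G$ to produce a configuration on $\Gs_\Gamma$ that lies in $\Omega_{\rm good}$, and then invoke Lemma~\ref{lemma:model:eq} to identify the induced distribution with $\mu_{\Gs_\Gamma}(\,\cdot\,\mid\Omega_{\rm good})$. The TV bound then follows immediately from the hypothesis $\mu_{\Gs_\Gamma}(\Omega_{\rm good})\ge 1-\delta$.

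Concretely, the algorithm will: (i) draw one sample $\tau\in[q]^{V_G}$ from $\mu_G$ using the oracle; (ii) construct the configuration $\hat\sigma\in[q]^{V_{\Gs_\Gamma}}$ by setting $\hat\sigma(u)=\tau(v)$ for every $u\in V_{B_v}$ and every $v\in V_G$, so that gadget $B_v$ is placed in phase $\tau(v)$; (iii) output $\hat\sigma$. Since each gadget has $2b$ vertices and there are $n$ gadgets, step (ii) runs in $O(nb)$ time, giving total running time $\poly(n,b)$ including the single oracle call.

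Let $\mu_{\Gs_\Gamma}^{\textsc{alg}}$ denote the distribution of $\hat\sigma$. By construction $\mu_{\Gs_\Gamma}^{\textsc{alg}}$ is supported on $\Omega_{\rm good}$ and, for any $\sigma\in\Omega_{\rm good}$, $\mu_{\Gs_\Gamma}^{\textsc{alg}}(\sigma)=\mu_G(\sigma_G)$ since there is a bijection between $\Omega_{\rm good}$ and $[q]^{V_G}$ through $\sigma\mapsto\sigma_G$. By Lemma~\ref{lemma:model:eq}, $\mu_G(\sigma_G)=\mu_{\Gs_\Gamma}(\sigma\mid\Omega_{\rm good})$ for $\sigma\in\Omega_{\rm good}$, so $\mu_{\Gs_\Gamma}^{\textsc{alg}}=\mu_{\Gs_\Gamma}(\,\cdot\,\mid\Omega_{\rm good})$. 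Therefore
\[
\TV{\mu_{\Gs_\Gamma}}{\mu_{\Gs_\Gamma}^{\textsc{alg}}}
=\TV{\mu_{\Gs_\Gamma}}{\mu_{\Gs_\Gamma}(\,\cdot\,\mid\Omega_{\rm good})}
= 1-\mu_{\Gs_\Gamma}(\Omega_{\rm good})\le\delta,
\]
which is the desired bound.

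There is essentially no obstacle: all the content has been isolated in Lemma~\ref{lemma:model:eq} and in the hypothesis $\mu_{\Gs_\Gamma}(\Omega_{\rm good})\ge 1-\delta$ (which, in applications, is supplied by Theorems~\ref{thm:gadget:fact-ld} and~\ref{thm:gadget:fact-hd} via a union bound over the $n$ gadgets). The only point worth a sentence of care is that the algorithm uses the oracle as a black box returning an exact sample from $\mu_G$; if one only has approximate access within $\delta'$ in TV, the output distribution degrades by an additive $\delta'$ by the data-processing inequality, but this is not needed for the statement as given.
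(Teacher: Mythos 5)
Your proof is correct and is essentially the same as the paper's: sample from $\mu_G$, lift to $\Gs_\Gamma$ by placing each gadget in the sampled phase, identify the resulting distribution with $\mu_{\Gs_\Gamma}(\,\cdot\,\mid\Omega_{\rm good})$ via Lemma~\ref{lemma:model:eq}, and read off the TV bound from the hypothesis. The extra remark about approximate oracle access is harmless but not needed.
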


\begin{proof}
	The algorithm first draws a sample $\sigma_G$ from $\mu_{G}$ using the sampling oracle.
	It then constructs $\sigma \in \Omega_{\Gs_\Gamma}$ by assigning the spin $\sigma_{G}(v)$ to every vertex in the gadget corresponding to $v$ for each vertex $v$ of $G$. This can be done in $O(bn)$ time.
	From Lemma~\ref{lemma:model:eq} we see that the sampling algorithm in fact generates a sample from the distribution $\mu_{\Gs_\Gamma}(\cdot\,|\,\Omega_{\rm good})$, and so
	\[
	\TV{\mu_{\Gs_\Gamma}}{\mu_{\Gs_\Gamma}^{\normalfont\textsc{alg}}}
	= \TV{\mu_{\Gs_\Gamma}}{\mu_{\Gs_\Gamma}(\cdot\,|\,\Omega_{\rm good})} = 1 - \mu_{\Gs_\Gamma}(\Omega_{\rm good}) \le \delta. \qedhere
	\]
\end{proof}

\subsection{Proof of Theorem~\ref{thm:deg-red}}
\label{subsec:proof-deg}

We are now ready to prove Theorem~\ref{thm:deg-red}.

\begin{proof}[Proof of Theorem~\ref{thm:deg-red}]
	We show that if there is an identity testing algorithm for $\MBiPottsMono(\hat{n},d,\hat\beta,\hat h)$
	with running time $T(\hat{n}) = \poly(\hat{n})$ and sample complexity $L(\hat{n}) = \poly(\hat{n})$,
	henceforth called the \textsc{Tester},
	then it can be used to solve the the identity testing problem for $\MPottsMono(n,n,\beta,h)$
	in $\poly(n)$ time; the parameters $n$, $\beta$ and $h$ depend on $\hat{n}$, $d$, $\hat\beta$ and $\hat h$ and will be specified next.

	Let us consider first the case when $3 \le d = O(1)$.
	In this case, we choose $n$ such that $\hat{n} = 2n^6$, $\beta = \hat\beta$ and $h = 2b\hat h$.
	Our identity testing algorithm for the family $\MPottsMono(n,n,\beta,h)$ constructs the graph $\Gs_\Gamma$ and the Potts model on $\Gs_\Gamma$ from Section~\ref{subsection:testing-instance-potts} using $\Gamma = (n^5,\lfloor n^{5/4}\rfloor,d-1,1)$ as the parameters for the random bipartite graph.
	This choice of parameters ensures that conditions \eqref{eq:cons-cond-1} and \eqref{eq:cons-cond-3} are satisfied. Note also $\Gs_\Gamma$ is bipartite by construction and that
	$|h_{\Gs_\Gamma} (u,i)| \le h/2b = O(\log n)$ for all $u\in V_{\Gs_\Gamma}$ and $i\in [q]$.
	
	Let $(G,\beta_G,h_G)$ be a Potts model from $\MPottsMono(n,n,\beta,h)$, and
	suppose that there is a hidden model $(G^*,\beta_{G^*},h_{G^*})$ from $\MPottsMono(n,n,\beta,h)$
	from which we are given samples.
	We want to use the \textsc{Tester} to distinguish with probability at least $3/4$ between the cases
	$\mu_G = \mu_{G^*}$ and
	$\TV{\mu_G}{\mu_{G^*}} > 1 - \varepsilon$.
	
	Suppose that $\sigma$ is sampled from $\mu_{\Gs_\Gamma}$.
	Since the field $h_G$ is $h$-vertex-monochromatic by assumption, it follows from our construction that for each gadget 
 	there exists $\kappa \in [q]$ such that for each vertex $v$ in the gadget $h_{\Gs}(v,j) = \hat h\cdot \1(j=\kappa)$.
	Hence, Theorem~\ref{thm:gadget:fact-ld} implies that with probability $1-o(1)$ over the choice of the random gadget $B$, if the configuration in the gadget $B_v$ for a vertex $v \in V_{\Gs}$
	is re-sampled, conditional on the configuration of $\sigma$ outside of $B_v$, then the new configuration in $B_v$ will be in a phase with probability
	at least
	$$
	\left(1 -    \frac{q^2{\e}^{2 \hat h}}{{\e}^{\delta' \beta_B d }} \right)^{2b} \ge 1 - \frac{2q^2b}{{\e}^{\delta \beta_B d}}
	$$
	for suitable constants $\delta,\delta' > 0$, since $\hat h = O(\log n)$ and $\beta_B d = \omega(\log n)$.
	A union bound then implies that after re-sampling the configuration in every gadget one by one, the resulting configuration $\sigma'$ is in the set $\Omega_{\rm good}$ with probability $1 - \frac{2q^2bn}{{\e}^{\delta \beta_B d}}$.
	Thus,
	\begin{align}
	\mu_{\Gs_\Gamma}(\Omega_{\rm good}) &\ge 1 - \frac{q^2 \hat{n}}{{\e}^{\delta \beta_B d}}. \label{eq:omega-good-G}
	\end{align}	
	We also consider the Potts model on $\Gs_\Gamma^*$, obtained from $G^*$ using the same random bipartite graph $B$.
	Note that we can not actually construct $\Gs_\Gamma^*$, since we only have sample access to $(G^*,\beta_{G^*},h_{G^*})$, but we can similarly deduce that
	\begin{align}
	\mu_{\Gs^*_\Gamma}(\Good) &\ge 1 - \frac{q^2\hat{n}}{{\e}^{\delta \beta_B d}}	\label{eq:omega-good-G*}.
	\end{align}
	
	Since we are given samples from $\mu_{G^*}$, \eqref{eq:omega-good-G*} and Lemma~\ref{lemma:main-reduction-sampling} imply that we can generate $L$ samples $\mathcal{S} = \{\sigma_1,\dots,\sigma_L\}$ from a distribution $\mu_{\Gs_\Gamma^*}^{\textsc{alg}}$ in $\poly(n)$ time such that
	\begin{equation}
	\label{eq:M*-alg}
	\TV{\mu_{\Gs_\Gamma^*}}{\mu_{\Gs_\Gamma^*}^{\textsc{alg}}} \leq \frac{q^2\hat{n}}{{\e}^{\delta \beta_B d}}.
	\end{equation}
	
	Our testing algorithm inputs the Potts model on $\Gs_\Gamma$ and the $L$ samples $\mathcal{S}$ to the \textsc{Tester} and outputs the  \textsc{Tester}'s output.	
	Recall that the \textsc{Tester} returns \textsc{Yes} if it regards the samples in $\mathcal{S}$ as samples from $\mu_{\Gs_\Gamma}$; it returns \textsc{No} if it regards them to be from some other distribution $\nu$ such that
	$\TV{\mu_{\Gs_\Gamma}}{\nu} > 1 - \varepsilon$.
	
	If $\mu_G = \mu_{G^*}$, then $\mu_{\Gs_\Gamma} = \mu_{\Gs_\Gamma^*}$.
	Hence, \eqref{eq:M*-alg} implies that:
	$$
	\TV{\mu_{\Gs_\Gamma}}{\mu_{\Gs_\Gamma^*}^{\textsc{alg}}} = \TV{\mu_{\Gs^*_\Gamma}}{\mu_{\Gs_\Gamma^*}^{\textsc{alg}}} \leq   \frac{q^2\hat{n}}{{\e}^{\delta \beta_B d}}.
	$$
	Let $(\mu_{\Gs_\Gamma})^{\otimes L}$, $(\mu_{\Gs_\Gamma^*})^{\otimes L}$ and $(\mu_{\Gs_\Gamma^*}^{\textsc{alg}})^{\otimes L}$ be the product distributions corresponding to $L$ independent samples from $\mu_{\Gs_\Gamma}$, $\mu_{\Gs_\Gamma^*}$ and $\mu_{\Gs_\Gamma^*}^{\textsc{alg}}$ respectively.
	We have
	$$
	\TV{(\mu_{\Gs_\Gamma})^{\otimes L}}{(\mu_{\Gs_\Gamma^*}^{\textsc{alg}})^{\otimes L}} \leq L \TV{\mu_{\Gs_\Gamma}}{\mu_{\Gs_\Gamma^*}^{\textsc{alg}}} \leq   \frac{q^2\hat{n}L}{{\e}^{\delta \beta_B d}} = o_{\hat{n}}(1),
	$$
	since $L = \poly(\hat{n})$ and $\beta_B d = \hat \beta d = \omega(\log \hat{n})$.
	Hence, using the optimal coupling of the distributions $(\mu_{\Gs_\Gamma^*}^{\textsc{alg}})^{\otimes L}$ and $(\mu_{\Gs_\Gamma})^{\otimes L}$ as in~\eqref{eq:coupling-main}, we obtain
	\begin{align}
	\label{eq:coupling}
	\Pr[ \textsc{Tester}~\text{outputs}~\textsc{No}~\text{given samples}~\mathcal{S}~\text{where}~\mathcal{S}\sim(\mu_{\Gs_\Gamma^*}^{\textsc{alg}})^{\otimes L}]
	\le
	\frac{1}{4} + o_{\hat{n}}(1) < \frac{1}{3}.\notag
	\end{align}	
	Hence, the \textsc{Tester} returns \textsc{Yes} with probability at least $2/3$ in this case.
	
	If $\TV{\mu_{G}}{\mu_{G^*}} \ge 1 - \varepsilon$,
	\eqref{eq:omega-good-G}, \eqref{eq:omega-good-G*} and
	Lemma~\ref{lemma:main-reduction-phase} imply
	\begin{equation}
	\label{eq:error-lb}
	\TV{\mu_{\Gs_\Gamma}}{\mu_{\Gs_\Gamma^*}} \ge 1 - \varepsilon - \frac{2q^2\hat{n}}{{\e}^{\delta \beta_B d}} = 1 - \varepsilon - o_{\hat{n}}(1),
	\end{equation}
	because $\beta_B d = \hat \beta d = \omega(\log \hat{n})$.
	Moreover, from \eqref{eq:M*-alg} we get
	$$
	\TV{(\mu_{\Gs_\Gamma^*})^{\otimes L}}{(\mu_{\Gs_\Gamma^*}^{\textsc{alg}})^{\otimes L}} \leq L \TV{\mu_{\Gs_\Gamma^*}}{\mu_{\Gs_\Gamma^*}^{\textsc{alg}}} \leq   \frac{q^2\hat{n}L}{{\e}^{\delta \beta_B d}} = o_{\hat{n}}(1).
	$$
	Thus, analogously to~\eqref{eq:coupling-main} (i.e., using the optimal coupling between $(\mu_{\Gs_\Gamma^*}^{\textsc{alg}})^{\otimes L}$ and $(\mu_{\Gs_\Gamma^*})^{\otimes L}$), we get
	$$
	\Pr\left[  \textsc{Tester}~\text{outputs}~\textsc{Yes}~\text{given samples}~\mathcal{S}~\text{where}~\mathcal{S}\sim(\mu_{\Gs_\Gamma^*}^{\textsc{alg}})^{\otimes L}\right] \leq \frac{1}{3}.
	$$
	Hence, the \textsc{Tester} returns \textsc{No} with probability at least $2/3$.
	
	The case when $d$ is such that $d \le \hat{n}^{1-\rho}$ but $d = d(\hat{n}) \rightarrow \infty$ follows in similar fashion. In particular, we can take
	$b = \floor{{n^{4/\rho-1}}}$ and $\Gamma = \{b, b,\floor{\theta d},d-\floor{\theta d}\}$,
	where $\theta = \theta(\rho)$ is a suitable constant.
	That is,
	$p = b$, $\din = \floor{\theta d}$, $\dout = d -  \floor{\theta d}$ and $\hat{n} = \Theta(n^{4/\rho})$.
	This choice parameters also satisfies conditions \eqref{eq:cons-cond-1} and \eqref{eq:cons-cond-3}.
	Hence,
	\eqref{eq:omega-good-G} and \eqref{eq:omega-good-G*} can be deduced similarly using Theorem~\ref{thm:gadget:fact-hd} instead.
	The rest of the proof remains unchanged for this case.
\end{proof}

\section{Hardness of the decision version of approximate counting}
\label{app:app-cnt}

In this section we give a general reduction from the approximate counting problem to the decision version of the problem. In particular, we prove Theorem~\ref{thn:decision-cnt:potts} from Section~\ref{Potts-decision}.
We state our results for the models of interest in this paper, but they extend straightforwardly to other spin systems. 

We restate first the definition of the decision version of $r$-approximate counting. 

\begin{center}
	\setlength{\fboxsep}{5pt}
	\noindent\fbox{
		\parbox{0.88\textwidth}{
			\defDecisionCnt*
		}
	}
\end{center}

Recall also that a \emph{fully polynomial-time randomized approximation scheme ($\fpras$)} for an optimization problem with solutions $\mathrm{OPT}$ is a randomized algorithm that for any $\rho > 0$ outputs a solution $\hat{Z}$ satisfying $e^{-\rho}\, \mathrm{OPT} \le \hat{Z} \le e^\rho\, \mathrm{OPT}$ with probability at least $3/4$ and has running time $\poly(n,1/\rho)$ where $n$ is the size of the input. 
To prove Theorem~\ref{thn:decision-cnt:potts}, we introduce an intermediate problem referred as $r$-approximate counting. 

\begin{center}
	\setlength{\fboxsep}{5pt}
	\noindent\fbox{
		\parbox{0.88\textwidth}{
			\begin{defn}[$r$-approximate counting]
				\label{def:cnt}
			Given a Potts model ($G$,$\beta_G$,$h$) 
			and an approximation ratio $r>1$, output a real number $\hat{Z}$ satisfying the following with probability at least $3/4$:
			\[
			\frac{1}{r}\, Z_{G,\beta_G,h} < \hat{Z} < r\, Z_{G,\beta_G,h}.
			\]
			\end{defn}
		}
	}
\end{center}

%

%



Notice that an $\fpras$ for the counting problem is equivalent to an algorithm for the $e^\rho$-approximate counting problem with running time $\poly(n,1/\rho)$ for all $\rho>0$. 
We first show the equivalence of $r$-approximate counting and its decision version.
\begin{lemma}
\label{lem:binary-search}
Let $n,d \ge 1$ be integers and let $\beta,h\ge0$ be real numbers. 
Assume that $r=r(n)>1$ is the approximation ratio. 
Then, given a polynomial-time algorithm for the decision version of $r$-approximate counting for a family of Potts models $\mathcal M$, where 
$$
\mathcal M \in \{\hatMfPotts(n,d,\beta,h),\hatMaIsing(n,d,\beta,h),\hatMfIsing(n,d,{\beta},{h})\}, 
$$
there is also a polynomial-time algorithm for $2r$-approximate counting for $\mathcal{M}$. 
\end{lemma}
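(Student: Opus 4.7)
The plan is to perform binary search on $\log \hat{Z}$ using the decision oracle. First I would note that for any model $(G,\beta_G,h_G) \in \mathcal M$ on $n$ vertices, since $|\beta_G(e)|\le\beta$ and $|h_G(v,i)|\le h$, every configuration has weight in $[\exp(-\beta|E_G|-hn),\exp(\beta|E_G|+hn)]$, so $\log Z_{G,\beta_G,h_G}$ lies in an explicit interval $[a_0,b_0]$ of length $b_0 - a_0 = O(\beta n^2 + hn + n\log q) = \poly(n)$.

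The algorithm would maintain an interval $[a,b]$ with the invariant $\log Z \in [a,b]$, starting from $[a_0,b_0]$. At each step I query the decision oracle at $\hat{Z} := \exp((a+b)/2)$; on answer (i) I set $b \leftarrow (a+b)/2 + \log r$, and on answer (ii) I set $a \leftarrow (a+b)/2 - \log r$. Each step contracts the length according to $\Delta \mapsto \Delta/2 + \log r$, so by unrolling, after $K = O(\log((b_0-a_0)/\log r)) = O(\log n)$ iterations the length is at most $2\log r + O(1) \le 2\log(2r)$, and outputting the midpoint yields a $2r$-approximation of $Z$.

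The key correctness observation is that the invariant fails only when the oracle \emph{flips}: returns (i) when condition (ii) actually holds, or vice versa. Indeed, if the oracle returns (i) and (i) genuinely holds, then $\log Z \le (a+b)/2 - \log r < (a+b)/2 + \log r$; in the ambiguous regime where neither (i) nor (ii) holds (i.e., $\hat Z/r < Z < r\hat Z$) one still has $\log Z < (a+b)/2 + \log r$. Only a flip, occurring with probability at most $3/8$ per raw query, can violate the invariant, and the analogous statement holds for updates to $a$. To handle this I would boost the oracle's success probability from $5/8$ to $1-\delta$ for $\delta = 1/(4K)$ via $O(\log n)$ independent repetitions and majority voting; a union bound over the $K$ steps then gives overall success probability $\ge 3/4$. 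Since each step is polynomial and $K = O(\log n)$, the total running time is polynomial.

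The subtlest point, and the one I would be most careful about, is precisely this invariant analysis in the ambiguous regime where the oracle is permitted to return an arbitrary answer: the asymmetric update rule---pushing $b$ down only to $(a+b)/2 + \log r$ rather than $(a+b)/2 - \log r$ on answer (i), and symmetrically for $a$---is what simultaneously preserves the invariant under ambiguity \emph{and} guarantees geometric shrinkage of the search interval down to width $\Theta(\log r)$.
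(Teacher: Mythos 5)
Your proof is correct and takes essentially the same approach as the paper's: boost the decision oracle by majority voting, then binary search over $\log\hat Z$ until the interval width is $\le 2\log(2r)$, exploiting the observation that even on an ambiguous query the oracle's answer safely certifies a one-sided bound with an extra $\log r$ slack. The only cosmetic difference is the change of variables: the paper tracks a tight interval $[\ell_j, u_j]$ with the invariant $\tfrac{1}{r}\ell_j < Z < r u_j$ and halves $\log(u_j/\ell_j)$ each step, while you track the $r$-widened interval $[a,b] = [\log\ell_j - \log r,\ \log u_j + \log r]$ containing $\log Z$ exactly and observe it contracts as $\Delta\mapsto\Delta/2 + \log r$; these are the same recursion in different coordinates, and both terminate after $O(\log n)$ queries with the same union-bound argument.
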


\begin{proof}
Consider a Potts model from $\mathcal{M}$ with the underlying graph $G$. 
We note first that using a standard argument we can boost the success probability of the algorithm for the decision version of $r$-approximate counting in polynomial time.
More precisely, for a given $\hat{Z} > 0$ we run the algorithm for 
$$k = 80\ceil{\log(8\log(4c_1n^2+4\log r))}+1$$
 times and output the majority answer.
Let $X_i$ be the indicator random variable of the event that the $i$-th answer is correct and let $X = \sum_{i=1}^k X_i$.
Then by our assumption we have $\Exp [X] \ge \frac{5}{8} k$.
The Chernoff bound then implies that the majority answer is incorrect with probability at most
\[
\Pr\left(X \le \frac{k}{2}\right) \le \Pr\left(X \le \frac{4}{5}\Exp [X]\right) \le \exp\left( -\frac{\Exp [X]}{50} \right) \le \exp\left( -\frac{k}{80} \right) \le \frac{1}{8\log(4c_1n^2+4\log r)}.
\]

Using the boosted version of the decision $r$-approximate counting algorithm, henceforth call \textsc{BoostedDecider}, we use binary search procedure to give an $r$-approximate counting algorithm.
First note that there exists a constant $c_1 := c_1(q,\beta,h)>0$ such that
$$ \exp\left(-c_1 n^2 \right)\le Z_G \le \exp\left(c_1 n^2 \right). $$

Then, let $\ell_0= \frac{1}{r}\exp(-c_1 n^2)$ and $u_0 = r\exp(c_1 n^2)$.
For $i\ge 1$, let $c_i = \sqrt{\ell_{i-1}u_{i-1}}$ and run the testing algorithm with $\hat{Z} = c_i$.
If \textsc{BoostedDecider} outputs $Z_G \le \frac{1}{r} \hat{Z}$ then we let $(\ell_i,u_i) = (\ell_{i-1},c_i)$, and if \textsc{BoostedDecider} outputs $Z_G \ge r \hat{Z}$ then we let $(\ell_i,u_i) = (c_i, u_{i-1})$.
We repeat this process until $u_i/\ell_i \le 2$, and then output $\hat{Z} = \ell_i$.
Observe that $\log u_i - \log\ell_i$ decreases by a factor $2$ in each iteration.
Thus, the number of times that outputs is called is at most
\[
\log_2\left( \frac{\log u_0 - \log\ell_0}{\log 2} \right) = \log_2 \left( \frac{2c_1 n^2 + 2\log r}{\log 2} \right) \le 2\log(4c_1n^2+4\log r).
\]
Assume that \textsc{BoostedDecider} never makes a mistake in all these calls; this happens with probability at least $3/4$ by a union bound.
Then, for each $j\ge 0$, the algorithm outputs $Z_G \le \frac{1}{r} \hat{Z}$ for $\hat{Z} = u_j$ and $Z_G \ge r \hat{Z}$ for $\hat{Z} = \ell_j$. This implies that
$$ \frac{1}{r}\ell_j< Z_G < ru_j $$
for all $j\ge 0$.
Hence, the final output satisfies 
$$ \frac{1}{r}\ell_i < Z_G < ru_i \le 2r \ell_i $$
with probability at least $3/4$. 
The running time of the algorithm is polynomial in $n$, assuming that $r\le \exp(c_1 n^2)$. If we have $r> \exp(c_1 n^2)$ instead, then the algorithm can just output $1$, which is already a $r$-approximation of $Z_G$. 
\end{proof}

We show
next that a polynomial-time $n^c$-approximate counting algorithm for a family of Potts models on $n$-vertex graphs can be turned into an $\fpras$. 

\begin{lemma}
\label{lem:poly-approx}
Let $n,d \ge 1$ be integers and let $\beta,h\ge0$ be real numbers. 
For any $c>0$, given a polynomial-time $n^c$-approximate counting algorithm for a family of Potts models $\mathcal M$, where 
$$
\mathcal M \in \{\hatMfPotts(n,d,\beta,h),\hatMaIsing(n,d,\beta,h),\hatMfIsing(n,d,{\beta},{h})\}, 
$$
there is an $\fpras$ for the counting problem for $\mathcal{M}$. 
\end{lemma}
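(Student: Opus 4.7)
The proof plan is to apply the standard \emph{powering/tensorization} trick, which turns a weak polynomial-factor approximation into an arbitrarily good one by amplifying the partition function before applying the algorithm. Given an input Potts model $(G,\beta_G,h_G) \in \mathcal M$ on $N$ vertices and a target precision $\eps > 0$, the plan is to form the disjoint union $G^{(k)}$ of $k$ copies of $G$, each equipped with the same edge interactions and external field. Since the three families $\hatMfPotts(n,d,\beta,h)$, $\hatMaIsing(n,d,\beta,h)$, $\hatMfIsing(n,d,\beta,h)$ are all characterized by pointwise constraints (exact edge weight $\beta$, $|h(v,i)|\le h$, maximum degree~$d$, and ferro/antiferromagnetic sign), the disjoint union $G^{(k)}$ on $kN$ vertices stays in the family, and its partition function factorizes as $Z_{G^{(k)}} = Z_G^{\,k}$.

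The key step is then to invoke the assumed $n^c$-approximate counting algorithm, call it $A$, on $G^{(k)}$. With probability at least $3/4$ we obtain an output $\hat W$ satisfying
\[
(kN)^{-c}\, Z_G^{\,k} \;\le\; \hat W \;\le\; (kN)^c\, Z_G^{\,k},
\]
and taking the $k$-th root gives $\hat Z := \hat W^{1/k}$ satisfying $(kN)^{-c/k}\, Z_G \le \hat Z \le (kN)^{c/k}\, Z_G$. Choosing, say, $k = \lceil 8c\log(N+2)/\eps \rceil$ makes $(kN)^{c/k}\le 1+\eps$, so $\hat Z$ is an $(1\pm\eps)$-approximation of $Z_G$. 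Since $k = O(\log N/\eps)$, the size of $G^{(k)}$ is $O(N\log N/\eps)$, and therefore one call to $A$ runs in time $\poly(N,1/\eps)$, and the $k$-th root is computed to sufficient precision in the same time.

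Finally, to boost the success probability from $3/4$ to $1-\delta$ one repeats the entire procedure $r = O(\log(1/\delta))$ times independently and returns the median of the $r$ outputs. A standard Chernoff bound shows that the median is a $(1\pm\eps)$-approximation of $Z_G$ with probability at least $1-\delta$, and the overall running time is $\poly(N,1/\eps,\log(1/\delta))$ as required by the definition of an $\fpras$. The argument is essentially routine once the family-preservation observation is made; the only thing to verify carefully is that the $n^c$ factor in the approximation guarantee on the amplified instance $G^{(k)}$ is measured in its own input size $kN$ (not $N$), which is why the telescoping with $k\approx \log N/\eps$ copies suffices to drive the error to $(1+\eps)$. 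This completes the reduction and yields Theorem~\ref{thn:decision-cnt:potts} when combined with Lemma~\ref{lem:binary-search}.
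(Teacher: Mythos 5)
Your proof takes exactly the same route as the paper's: form the disjoint union $G^{(k)}$ of $k$ copies, observe it stays in the family and that $Z_{G^{(k)}} = Z_G^k$, invoke the assumed $(kN)^c$-approximation on the amplified instance, and take the $k$-th root. The boosting via repetition and medians is a standard add-on (the paper's stated $\fpras$ guarantee is only $3/4$, so it omits this step, but your version matches the stronger footnote definition).

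One small technical wrinkle: your explicit choice $k = \lceil 8c\log(N+2)/\eps\rceil$ does not quite guarantee $(kN)^{c/k} \le 1+\eps$. Writing $c\log(kN)/k = c\log N/k + c\log k/k$, the first term is $\le \eps/8$, but for the second term note $\log k / k$ is decreasing and $k \ge 8c/\eps$, so $c\log k/k$ can be as large as $\Theta(\eps\log(1/\eps))$, which exceeds $\eps$ when $\eps$ is small. The paper avoids this by defining $k$ implicitly as the smallest integer with $k \ge c\log(kn)/\rho$; your argument goes through with, say, $k = \lceil 16c\bigl(\log N + \log(1/\eps) + c + 1\bigr)/\eps \rceil$ or with the same implicit choice. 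This is a local fix and does not affect the correctness of the overall reduction.
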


\begin{proof}
Suppose that there is a polynomial-time $n^c$-approximate counting algorithm for $\mathcal{M}$ where $c>0$ is a constant. 
Consider a Potts model from $\mathcal{M}$ defined on a graph $G$ of $n$ vertices. 
We will give an $\fpras$ for its partition function. 
For an arbitrary $\rho>0$, let $k$ be the smallest integer such that $k\ge (c\log (kn))/\rho$. Notice that $k \le \poly(\log n, 1/\rho)$. 
Define a Potts model that is a disjoint union of $k$ copies of the Potts model on $G$.
That is, the underlying graph $G'$ consists of $k$ copies of $G$, and the weights for each copy are the same as the original model. 
It follows immediately that $Z_{G'} = (Z_G)^k$.
We run the $(kn)^c$-approximate counting algorithm for the Potts model on $G'$ and assume the output is $\hat{Z}$. Then with probability at least $3/4$ we have
\[
(kn)^{-c} Z_{G'} < \hat{Z} < (kn)^c Z_{G'}.
\]
Assuming this holds, then we get
\[
e^{-\rho} Z_G \le (kn)^{-c/k} Z_G <  \hat{Z}^{1/k} < (kn)^{c/k} Z_G \le e^\rho Z_G
\]
Thus, $\hat{Z}^{1/k}$ is a $e^\rho$-approximation of $Z_G$ with probability at least $3/4$ and can be computed in $\poly(kn) = \poly(n,1/\rho)$ time.
\end{proof}

\begin{proof}[Proof of Theorem~\ref{thn:decision-cnt:potts}]
Follows immediately from Lemmas~\ref{lem:binary-search} and \ref{lem:poly-approx}. 
\end{proof}

\bibliographystyle{plain}
\bibliography{testing}

\appendix

\section{The ferromagnetic mean-field Potts model: proofs}
\label{app:mean-field-potts}

In this appendix we prove our detailed results concerning the phase transitions of the ferromagnetic mean-field Potts models (i.e., Lemmas~\ref{lem:binary-betaH-meta} and~\ref{lem:binary-betaH-ratio}). As mentioned, several variants of these results have appeared before, e.g.,~\cite{BGJ,LL,GJ,CDLLPS,GLP,GSVmf,BSmf}, but we need slightly more precise results.

\begin{proof}[Proof of Lemma~\ref{lem:binary-betaH-meta}]
	
	Let us introduce some convenient notation first. 
	For an integer $m \ge 1$, let
	\begin{align*}
	\hat A &= \left\{ (\alpha_1, \dots, \alpha_q) \in \R^q: \alpha_i \ge 0,~ \sum_{i=1}^q \alpha_i = 1,~ \alpha_i m \in \N \right\},\\
	\hat{{D}} &= \mathrm{Ball}_\infty (u, m^{-1/4}) = \left\{ \alpha\in \hat A: \norm{\alpha - u}_\infty \le m^{-1/4} \right\},\\
	\hat{{M}} &= \bigcup_{i=1}^q \mathrm{Ball}_\infty (\alpha^{*,i}, m^{-1/4})
	= \bigcup_{i=1}^q \left\{ \alpha\in \hat A: \norm{\alpha - \alpha^{*,i}}_\infty \le m^{-1/4} \right\},
	\end{align*}
	and $\hat{{S}} = \hat A \backslash (\hat{{D}} \cup \hat{{M}})$. Setting $\hat{\beta}_H = \beta_H \cdot m$, we have
	
	\begin{align}
	\label{eq:mf:z:dis}
	Z_H^{\DD}({\beta}_H) =
	\sum_{\alpha \in \hat{{D}}} \binom{m}{\alpha_1 m ~\cdots~ \alpha_q m} \exp\left( \frac{\hat{\beta}_H}{m} \sum_{i=1}^q \binom{\alpha_i m}{2} \right),
	\end{align}	
	and similarly for $Z_H^{\MM}({\beta}_H)$ and $Z_H^{\SS}({\beta}_H)$ with the summation over $\hat M$ and $\hat S$ respectively.
	
	Using standard bounds for the multinomial coefficient (see, e.g., \cite[Lemma 2.2]{CS04}), we have 
	for every $\alpha\in \hat A$ 
	\begin{equation}
	\label{eq:mf:mcoef}
	\frac{1}{|\hat A|} e^{H(\alpha) m} \le \binom{m}{\alpha_1 m ~\cdots~ \alpha_q m} \le e^{H(\alpha) m},
	\end{equation}
	where $	H(\alpha) = \sum_{i=1}^q - \alpha_i \ln \alpha_i$. Hence, for $\beta \in \R$ and $\alpha \in \R^q$, we introduce:
	\[
	\Phi_{{\beta}}(\alpha) = H(\alpha) + \frac{{\beta}}{2} \norm{\alpha}_2^2.
	\]
	The function $\Phi_{{\beta}}$ have the following properties, which we prove later and will be useful throughout the proof. 
	\begin{fact}\label{fact:phi}
		\begin{enumerate}[(i)]
			\item For $\alpha \in \hat A$ and $\beta_1,\beta_2>0$,  we have
			$
			|\Phi_{\beta_1}(\alpha) - \Phi_{\beta_2}(\alpha)| \le \frac{1}{2} |\beta_1 - \beta_2|.
			$
			\item When $\hat{\beta}_H = \Bo$, the function $\Phi_{\Bo}$ has exactly $q+1$ global maxima in $\hat A$ consisting of
			one disordered phase
			$
			u = (1/q,\dots,1/q)
			$
			and $q$ majority phases 
			$\alpha^{*,i}$ with $ i \in [q]$,
			where the $i$-th coordinate of $\alpha^{*,i}$ is strictly larger than $1/q$.
			\item 	There exist constants $\eps,c>0$ such that $\Phi_{\Bo}(\alpha)$ is $c$-strongly concave in the balls $\mathrm{Ball}_\infty (u, \eps)$ and $\mathrm{Ball}_\infty (\alpha^{*,i}, \eps)$ for $i\in [q]$. That is, for all $\alpha\in \hat A$ such that $\norm{\alpha - u}_\infty \le \eps$ or $\norm{\alpha - \alpha^{*,i}}_\infty \le \eps$ for some $i \in [q]$, we have
			$\hessian \Phi_{\Bo}(\alpha) \preceq -c  \cdot I$, where $I$ is the $q \times q$ identity matrix.
		\end{enumerate}
	\end{fact}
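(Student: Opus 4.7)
The plan is to handle the three parts of Fact~\ref{fact:phi} separately, as they concern essentially different aspects of the mean-field free energy $\Phi_\beta$.

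Part (i) is an immediate calculation. Since $\Phi_{\beta_1}(\alpha) - \Phi_{\beta_2}(\alpha) = \tfrac{\beta_1-\beta_2}{2}\|\alpha\|_2^2$, and for $\alpha\in\hat A$ we have $\alpha_i\ge 0$ and $\sum_i\alpha_i=1$, the power-mean inequality (or convexity of $x\mapsto x^2$) gives $\|\alpha\|_2^2\le(\sum_i\alpha_i)^2=1$, so $|\Phi_{\beta_1}(\alpha)-\Phi_{\beta_2}(\alpha)|\le \tfrac{1}{2}|\beta_1-\beta_2|$.

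For part (ii), I would carry out the classical classification of the critical points of $\Phi_\beta$ on the simplex. Writing the Lagrangian with constraint $\sum_i\alpha_i=1$, any interior critical point satisfies $-\ln\alpha_i-1+\beta\alpha_i=\lambda$ for a common $\lambda\in\R$. Consequently, the coordinates of any such critical point take at most two positive values (the at most two roots of $\beta x-\ln x=\lambda+1$). By symmetry of $\Phi_\beta$ under permutations of coordinates, every critical point is, up to a permutation, of the form $(a,b,\ldots,b)$ with $a+(q-1)b=1$. Substituting into $\Phi_\beta$ reduces the problem to maximizing a one-variable function $f_\beta(a)=\Phi_\beta(a,(1-a)/(q-1),\ldots,(1-a)/(q-1))$ on $[0,1]$. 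A careful analysis of $f_\beta$ (e.g., following the treatment in~\cite{CDLLPS,GSVmf,BSmf}) shows: the uniform point $a=1/q$ is always a critical point; a second non-trivial critical point $a^*>1/q$ exists for all $\beta$ above a spinodal threshold; and the critical value $\Bo$ is defined precisely so that $f_{\Bo}(1/q)=f_{\Bo}(a^*)$ and this common value is strictly larger than $f_{\Bo}$ at any other critical point. This yields exactly $q+1$ global maxima: the disordered point $u$ and the $q$ permutations $\alpha^{*,i}$ of $(a^*,(1-a^*)/(q-1),\ldots)$, and by construction the $i$-th coordinate of $\alpha^{*,i}$ equals $a^*>1/q$.

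For part (iii), a direct computation gives $\hessian\Phi_\beta(\alpha)=\beta I-\mathrm{diag}(1/\alpha_1,\ldots,1/\alpha_q)$, which is strictly negative definite whenever $\max_i \alpha_i < 1/\beta$. At $\alpha=u$ this condition becomes $\Bo<q$, which is known to hold for all $q\ge 3$ (since the critical value of the mean-field Potts model is $\Bo=2\frac{q-1}{q-2}\ln(q-1)<q$ for $q\ge 3$). At a majority phase $\alpha^{*,i}$ the dominant coordinate $a^*$ satisfies $\Bo<1/a^*$: this follows because $\alpha^{*,i}$ is a \emph{strict} local maximum of $\Phi_{\Bo}$ (as it is a global maximum and, by the classification in part (ii), the critical point is non-degenerate), so the full Hessian must in particular be negative definite on the tangent to the simplex, and the calculation above shows that this forces the diagonal entry $\Bo-1/a^*$ to be strictly negative. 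By continuity of $\hessian\Phi_{\Bo}$, there exist $\eps,c>0$ such that $\hessian\Phi_{\Bo}(\alpha)\preceq-cI$ on each of the balls $\mathrm{Ball}_\infty(u,\eps)$ and $\mathrm{Ball}_\infty(\alpha^{*,i},\eps)$.

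The principal obstacle is part (ii): establishing that at $\beta=\Bo$ there are \emph{exactly} $q+1$ global maxima and no other ties requires a careful one-variable analysis of $f_\beta$ and the identification of $\Bo$ as the unique crossing point between the disordered and majority branches. Once this is in place, parts (i) and (iii) are routine.
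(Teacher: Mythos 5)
Parts (i) and (ii) track the paper's own argument closely. Part (i) is the trivial bound $\norm{\alpha}_2\le 1$ on the simplex. Part (ii) uses the same Lagrange-multiplier classification (coordinates take at most two distinct values since $\Bo x-\log x$ is unimodal), the same reduction to a one-variable profile, and the same appeal to prior work on the mean-field Potts model for the statement that $\Bo$ is the unique crossing point giving $q+1$ tied maxima; the paper additionally uses a second-order argument to rule out two coordinates both exceeding $1/\Bo$, which pins down the $(a^*,b,\dots,b)$ structure.

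Part (iii) contains a genuine error. You assert that at the majority phase $\alpha^{*,i}$ the dominant coordinate satisfies $\Bo<1/a^*$, claiming that strict local maximality forces the diagonal entry $\Bo-1/a^*$ of the Hessian to be negative. That is backwards. The stationarity condition $\Bo x-\log x=1-\lambda$ has one root below $1/\Bo$ and one above, and (as the paper's proof of part (ii) shows: if two coordinates both exceeded $1/\Bo$, the vector $(1,-1,0,\dots,0)$ would witness a positive direction in the tangent space) the $q-1$ minority coordinates must lie below $1/\Bo$; hence the single dominant coordinate is the large root, $a^*>1/\Bo$, so $\Bo-1/a^*>0$. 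Consequently $\hessian\Phi_{\Bo}(\alpha^{*,i})=\Bo I-\diag(1/\alpha^{*,i}_j)$ has a strictly positive eigenvalue in the $e_i$ direction and is \emph{not} negative definite as a $q\times q$ matrix. Your intermediate inference --- that negative definiteness on the tangent subspace $\{x:\sum_j x_j=0\}$ forces each diagonal entry to be negative --- is not valid, because $e_i$ is not a tangent vector; this is precisely the step that goes wrong. What is actually true, and what Lemma~\ref{lem:binary-betaH-meta} uses (there the gradient term drops out because $\alpha-\alpha^{*,i}$ is tangent to the simplex and the constrained gradient is a multiple of $\mathbf{1}$), is strong concavity of $\Phi_{\Bo}$ \emph{restricted to the affine simplex}: for tangent $x$, one must bound $x^\trans\hessian\Phi_{\Bo}(\alpha^{*,i})x$ above by $-c\norm{x}_2^2$, which requires estimating the quadratic form on the subspace $\sum_j x_j=0$ rather than inspecting diagonal entries. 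Your proof of part (iii) does not establish this, and the version near $\alpha^{*,i}$ needs to be redone; the argument at $u$ (using $\Bo<q$) is fine, since there the full Hessian genuinely is negative definite.
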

	Hence, \eqref{eq:mf:mcoef} and part (ii) of this fact imply
	\begin{align}
	\label{eq:mf:dis}
	Z_H^{\DD}({\beta}_H) &\ge \sum_{\alpha \in \hat{{D}}} \frac{e^{-\hat{\beta}_H/2}}{|\hat A|} \exp\left[ \Phi_{\hat{\beta}_H}(\alpha) m \right] \notag\\
	&\ge \frac{e^{-\hat{\beta}_H/2}}{|\hat A|} \exp\left(-\frac{1}{2}|\hat{\beta}_H-\Bo|m\right) \sum_{\alpha \in \hat{{D}}} \exp\left[ \Phi_{\Bo}(\alpha) m \right] \notag\\
	&\ge \frac{e^{-\hat{\beta}_H/2}}{|\hat A|} \exp\left(-\frac{c'}{2}\sqrt{m}\right) \exp\left[ \Phi_{\Bo}(u) m \right].
	\end{align}
	Similarly, we deduce that
	\begin{align}
	\label{eq:mf:maj}
	Z_H^{\MM}({\beta}_H) &\ge \frac{e^{-\hat{\beta}_H/2}}{|\hat A|} \exp\left(-\frac{c'}{2}\sqrt{m}\right) \sum_{i=1}^q \exp\left[ \Phi_{\Bo}(\alpha^{*,i}) m \right] \notag\\
	&= \frac{q e^{-\hat{\beta}_H/2}}{|\hat A|} \exp\left(-\frac{c'}{2}\sqrt{m}\right) \exp\left[ \Phi_{\Bo}(u) m \right],
	\end{align}
	and 
	\begin{align}
	\label{eq:mf:comp}
	Z_H^{\SS}({\beta}_H) 
	&\le e^{-\hat{\beta}_H/2} \exp\left(\frac{1}{2}|\hat{\beta}_H-\Bo|m\right) \sum_{\alpha \in \hat{{S}}} \exp\left[ \Phi_{\Bo}(\alpha) m \right] \notag\\
	&\le e^{-\hat{\beta}_H/2} |\hat A| \exp\left(\frac{c'}{2}\sqrt{m}\right) \exp\left[ \left(\max_{\alpha \in \hat{{S}}} \Phi_{\Bo}(\alpha)\right) m \right].
	\end{align}
	Let $\hat{{S}} = \hat{{S}}_1 \cup \hat{{S}}_2$ where
	\[
	\hat{{S}}_1
	= \hat A \backslash \left( \mathrm{Ball}_\infty (u, \eps) \cup \bigcup_{i=1}^q \mathrm{Ball}_\infty (\alpha^{*,i}, \eps) \right)
	\]
	and
	\[
	\hat{{S}}_2 = \left( \mathrm{Ball}_\infty (u, \eps) \cup \bigcup_{i=1}^q \mathrm{Ball}_\infty (\alpha^{*,i}, \eps) \right) \backslash (\hat{{D}} \cup \hat{{M}}).
	\]
	Since the function $\Phi_{\Bo}$ is continuous, and $u, \alpha^{*,1},\dots, \alpha^{*,q}$ are its only global maxima, for constant $\eps > 0$ there exists constant $\delta =\delta(\varepsilon) > 0$ such that for all $\alpha \in \hat{{S}}_1$ we have
	\[
	\Phi_{\Bo}(\alpha) \le \Phi_{\Bo}(u) - \delta.
	\]
	By part (iii) of Fact~\ref{fact:phi}, $\Phi_{\Bo}(\alpha)$ is $c$-strongly concave in $\hat{{S}}_2$; thus, for all $\alpha \in \mathrm{Ball}_\infty (u, \eps) \backslash \hat{{D}}$ we have
	\begin{align*}
	\Phi_{\Bo}(\alpha) &\le \Phi_{\Bo}(u) + \grad \Phi_{\Bo}(u) (\alpha-u) - c\norm{\alpha - u}^2\\
	&= \Phi_{\Bo}(u) - c\norm{\alpha - u}^2\\
	&\le \Phi_{\Bo}(u) - c m^{-1/2},
	\end{align*}
	and similarly for all $\alpha \in \mathrm{Ball}_\infty (\alpha^{*,i}, \eps) \backslash \hat{{M}}$ we have
	\begin{align*}
	\Phi_{\Bo}(\alpha) &\le \Phi_{\Bo}(\alpha^{*,i}) + \grad \Phi_{\Bo}(\alpha^{*,i}) (\alpha-u) - c\norm{\alpha - \alpha^{*,i}}^2\\
	&= \Phi_{\Bo}(u) - c\norm{\alpha - \alpha^{*,i}}^2\\
	&\le \Phi_{\Bo}(u) - c m^{-1/2}.
	\end{align*}
	Therefore,
	\[
	\max_{\alpha \in \hat{{S}}} \Phi_{\Bo}(\alpha) \le \Phi_{\Bo}(u) - c m^{-1/2}.
	\]
	Plugging this bound into \eqref{eq:mf:comp} and combining it with \eqref{eq:mf:dis}, we get
	\begin{align*}
	Z_H^{\SS}({\beta}_H) &\le e^{-\hat{\beta}_H/2} |\hat A| \exp\left(\frac{c'}{2}\sqrt{m}\right) \exp\left( -c\sqrt{m} \right) \exp\left[ \Phi_{\Bo}(u) m \right]\\
	&\le |\hat A|^2 \exp\left( -(c-c')\sqrt{m} \right) Z_H^{\DD}({\beta}_H).
	\end{align*}
	Combining with \eqref{eq:mf:maj} instead we obtain
	\begin{align*}
	Z_H^{\SS}({\beta}_H) &\le \frac{|\hat A|^2}{q} \exp\left( -(c-c')\sqrt{m} \right) Z_H^{\MM}({\beta}_H).
	\end{align*}
	The results then follows by picking $c' = c/2$.
\end{proof}

We wrap up the proof of Lemma~\ref{lem:binary-betaH-meta} by establishing the facts used of the function in $\Phi_{\Bo}$.

\begin{proof}[Proof of Fact~\ref{fact:phi}]
	Part (i) follows from the definition of the function $\Phi_{\beta}$, since when $\alpha \in \hat A$, $\|\alpha\|_1=1$, and so $\|\alpha\|_2\leq 1$.
	
	For part (ii),
	suppose $\alpha = (\alpha_1,\dots,\alpha_q)$ is a local maxima for $\Phi_{\Bo}(\alpha)$.
	Using the method of Lagrange multipliers, we obtain that $\alpha$ must satisfy:
	$$
	\Bo \alpha_i - \log(\alpha_i) = 1-\lambda,\ i\in [q].
	$$
	The function $\Bo x - \log x$ is decreasing for $x<1/\Bo$ and increasing for $x>1/\Bo$.
	This implies that for any $\lambda$ there are at most 2 solutions to $\Bo x - \log x = 1 - \lambda$
	and hence there are at most two different values of $\alpha_i$. If there is only one value of $\alpha_i$ then
	$\alpha_i=1/q$ for $i\in [q]$. If there are two values of $\alpha_i$ then one of them is in $(0,1/\Bo)$ and
	one of them is in $(1/\Bo,1)$. 
	
	Now the Hessian of $\Phi_{\Bo}$ is
	\begin{equation}\label{eee}
	\hessian \Phi_{\Bo}(\alpha) = - \diag(\alpha_1^{-1},\dots, \alpha_q^{-1}) + \Bo I,
	\end{equation}
	and since $\alpha$ is a maxima for $\Phi_{\Bo}$, then $\hessian \Phi_{\Bo}(\alpha)$ must be negative definite in the subspace of vectors perpendicular to $1$ (since
	the sum of $\alpha_i$ is constrained to be $1$ the perturbations must maintain this constraint).
	If there were at least two indexes (w.l.o.g., make the indexes $1$ and $2$) such that
	$\alpha_1=\alpha_2 > 1/\Bo$ then the Hessian is not negative definite in the subspace of
	vectors perpendicular to $1$ (e.g., take the vector $x=(1,-1,0,\dots,0)$; then $x^T \hessian \Phi_{\Bo}(\alpha) x = 2(\Bo - 1/\alpha_1) > 0$).
	Thus a (constrained) maxima $\alpha$ of $\Phi_{\Bo}$ will either have all $\alpha_i$ equal to $1/q$, or exactly $q-1$ of the $\alpha_i$'s will be the same.
	
	Hence, the maxima of $\Phi_{\Bo}$ 
	will coincide with those of a one-dimensional version of it , denoted by $\Psi_1$, previously studied in~\cite{GSVmf}.
	The function $\Psi_1: [0,1] \rightarrow \R$ is define as  $\Psi_1(x) = \Phi_{\Bo}(x,y,\dots,y)$, where $y = \frac{1-x}{q-1}$.
	The function $\Psi_1$ has $2$ global maxima
	(see Lemma 2 in~\cite{GSVmf}) and hence $\Phi_{\Bo}$ has exactly
	$q+1$ global maxima (one of the maxima of $\Psi_1$ corresponds to $q$ maxima of $\Phi_{\Bo}$).
	Finally, observe that $\Bo < q$, and so the coordinate of the maxima of $\Phi_{\Bo}$ in $(1/\Bo,1)$ is greater than $1/q$.
	
	For part (iii), note that the Hessian in equation~\eqref{eee} is continuous around $(\alpha_1,\dots,\alpha_q)$
	and hence it is negative definite in a sufficiently small ball around $u$ and $\alpha^{*,i}$.
\end{proof}

We will provide next the proof of Lemma~\ref{lem:binary-betaH-ratio}, in which we will use the following bound on the ratio $\frac{Z_H^{\MM}(\Bo)}{Z_H^{\DD}(\Bo)}$, which is derived 
similarly to Lemma~\ref{lem:binary-betaH-meta}.

\begin{fact}\label{fact:mean} $
	\frac{1}{q |\hat A|} \leq \frac{Z_H^{\MM}(\Bo/m)}{Z_H^{\DD}(\Bo/m)}\leq q |\hat A|^2.
	$
\end{fact}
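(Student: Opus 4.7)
The plan is to follow the same strategy used in the proof of Lemma~\ref{lem:binary-betaH-meta}, specialized to the critical value $\beta_H = \Bo/m$ where $\hat\beta_H = \beta_H\cdot m = \Bo$ and the distortion factor $\exp(|\hat\beta_H-\Bo|m/2)$ disappears. The key representation, already exploited in that proof, is
\[
\binom{m}{\alpha_1 m\;\cdots\;\alpha_q m}\exp\!\Big(\tfrac{\Bo}{m}\textstyle\sum_{j}\binom{\alpha_j m}{2}\Big) \;=\; e^{-\Bo/2}\exp\!\big(\Phi_{\Bo}(\alpha)\,m\big)\cdot \frac{\binom{m}{\alpha_1 m\,\cdots\,\alpha_q m}}{e^{H(\alpha)m}},
\]
which together with the multinomial bounds $\frac{1}{|\hat A|}e^{H(\alpha)m}\le \binom{m}{\alpha_1 m\,\cdots\,\alpha_q m}\le e^{H(\alpha)m}$ and Fact~\ref{fact:phi}(ii) (so that $\Phi_{\Bo}(u)=\Phi_{\Bo}(\alpha^{*,i})$ is the global maximum of $\Phi_{\Bo}$ over $\hat A$) will produce matching upper and lower envelopes for both partition functions that I combine.

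The upper bound is routine: since $\Phi_{\Bo}(\alpha)\le \Phi_{\Bo}(u)$ on all of $\hat A$, the multinomial upper bound yields
\[
Z_H^{\MM}(\Bo/m)\;\le\;|\hat M|\,e^{-\Bo/2}\exp(\Phi_{\Bo}(u)m)\;\le\;|\hat A|\,e^{-\Bo/2}\exp(\Phi_{\Bo}(u)m),
\]
while keeping just the single summand at $u\in\hat D$ with the multinomial lower bound gives $Z_H^{\DD}(\Bo/m)\ge \frac{e^{-\Bo/2}}{|\hat A|}\exp(\Phi_{\Bo}(u)m)$. Dividing delivers $Z_H^{\MM}/Z_H^{\DD}\le |\hat A|^2\le q|\hat A|^2$ as claimed.

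For the lower bound, I would first retain the $q$ single-point contributions at $\alpha^{*,1},\dots,\alpha^{*,q}$ in $Z_H^{\MM}$ and use $\Phi_{\Bo}(\alpha^{*,i})=\Phi_{\Bo}(u)$ to obtain $Z_H^{\MM}(\Bo/m)\ge \frac{q\,e^{-\Bo/2}}{|\hat A|}\exp(\Phi_{\Bo}(u)m)$. The crude upper bound $Z_H^{\DD}(\Bo/m)\le |\hat A|\,e^{-\Bo/2}\exp(\Phi_{\Bo}(u)m)$ only delivers ratio $\ge q/|\hat A|^2$, which falls short of the claimed $1/(q|\hat A|)$ since $|\hat A|=\Theta(m^{q-1})$. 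To close the gap I would sharpen both bounds using the local strong concavity of $\Phi_{\Bo}$ at $u$ and at each $\alpha^{*,i}$ provided by Fact~\ref{fact:phi}(iii). Writing $\alpha-u=k/m$ with $k\in\Z^q$, $\sum_{j}k_j=0$, $|k_j|\le m^{3/4}$, strong concavity bounds the sum over $\hat D$ by a $(q-1)$-dimensional Gaussian-type sum $\sum_k \exp(-cm\|k\|^2/(2m^2))$ of order $m^{(q-1)/2}$ (the truncation $|k_j|\le m^{3/4}$ is lossless since $\sqrt{cm}\to\infty$), and symmetrically each ball around $\alpha^{*,i}$ contributes a matching Gaussian sum of the same order $m^{(q-1)/2}$ from below. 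The $m^{(q-1)/2}$ factors cancel in the ratio, producing $Z_H^{\MM}/Z_H^{\DD}\ge c\,q/|\hat A|\ge 1/(q|\hat A|)$ for a suitable universal constant $c>0$ and $m$ large enough. The main obstacle will be verifying that the Gaussian sums at $u$ and at $\alpha^{*,i}$ have the same order up to constants: this reduces to controlling the determinants of the restricted Hessians $\hessian \Phi_{\Bo}\big|_{\sum_j\alpha_j=0} = -\diag(\alpha_1^{-1},\dots,\alpha_q^{-1})+\Bo I$ at $u$ and at each $\alpha^{*,i}$, which are bounded between universal positive constants depending only on $q$ by continuity and the fact that all these critical points lie strictly in the interior of the simplex.
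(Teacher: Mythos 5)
Your upper-bound derivation is correct and essentially identical to the paper's. For the lower bound, your diagnosis is sharp but your cure is misdirected. You rightly observe that the crude route --- keeping only the $q$ points $\alpha^{*,i}$ in $Z_H^{\MM}$ together with the $1/|\hat A|$ multinomial slack, and upper-bounding $Z_H^{\DD}$ over all of $\hat D$ --- yields only $Z_H^{\MM}(\Bo/m)/Z_H^{\DD}(\Bo/m)\ge q/|\hat A|^2$, which is strictly weaker than the stated $1/(q|\hat A|)$ once $|\hat A|>q^2$ (i.e.\ for all large $m$, since $|\hat A|=\Theta(m^{q-1})$). What you did not notice is that this is exactly what the paper's own proof does and delivers: the ``lower bounds'' it combines are \eqref{eq:mf:dis} and \eqref{eq:mf:maj} specialized to $\beta_H=\Bo/m$, where the $\exp(-(c'/2)\sqrt m)$ correction disappears, giving precisely $q/|\hat A|^2$. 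The mismatch between the constant in the statement of the Fact and the one actually proved is benign: the Fact's sole downstream use, in the proof of Lemma~\ref{lem:binary-betaH-ratio}, is to conclude $|g(\Bo/m)|=\Theta(\log m)$, and any $m^{\pm O(1)}$ sandwich serves.

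Your proposed Laplace-method refinement is therefore overkill, and as sketched it does not in fact close the gap you identified. First, it needs a two-sided Hessian bound $-CI\preceq\hessian\Phi_{\Bo}\preceq -cI$ on the balls; Fact~\ref{fact:phi}(iii) states only the upper side, and while the lower side holds by continuity as you note, it would have to be stated and invoked. More substantively, the Gaussian sums around $u$ and around $\alpha^{*,i}$ carry generally \emph{different} leading constants because the restricted Hessians at the disordered and ordered critical points are not equal. After the $m^{(q-1)/2}$ factors cancel you are left with $Z_H^{\MM}/Z_H^{\DD}\ge c'q/|\hat A|$ for an unspecified constant $c'>0$ governed by a ratio of Hessian determinants, and your last step $c'q/|\hat A|\ge 1/(q|\hat A|)$ silently asserts $c'\ge 1/q^2$, which you have not established and which has no a priori reason to hold. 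The correct move is to accept the crude bound $q/|\hat A|^2$ --- it is what the paper proves and all the application requires.
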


\begin{proof}
	From \eqref{eq:mf:z:dis} and \eqref{eq:mf:mcoef}, we obtain
	\begin{align*}
	Z_H^{\DD}(\Bo/m) &=
	\sum_{\alpha \in \hat{\mathrm{D}}} \binom{m}{\alpha_1 m ~\cdots~ \alpha_q m} \exp\left( \frac{\Bo}{m} \sum_{i=1}^q \binom{\alpha_i m}{2} \right)\\
	&\le \sum_{\alpha \in \hat{\mathrm{D}}} e^{-\Bo/2} \exp\left[ \Phi_{\Bo}(\alpha) m \right]\\
	&\le |\hat A| e^{-\Bo/2} \exp\left[ \Phi_{\Bo}(u) m \right].
	\end{align*}
	Similarly we have
	\begin{align*}
	Z_H^{\MM}(\Bo/m) &\le |\hat A| e^{-\Bo/2} \sum_{i=1}^q \exp\left[ \Phi_{\Bo}(\alpha^{*,i}) m \right]\\
	&\le q |\hat A| e^{-\Bo/2} \exp\left[ \Phi_{\Bo}(u) m \right] .
	\end{align*}
	Combining our upper and lower bounds on $Z_H^{\DD}(\Bo/m)$ and $Z_H^{\DD}(\Bo/m)$ we obtain the result.
\end{proof}

We are now ready to proof Lemma~~\ref{lem:binary-betaH-ratio}.

\begin{proof}[Proof of Lemma~\ref{lem:binary-betaH-ratio}]
	For ease of notation let $f(\beta) = \frac{Z_H^{\MM}(\beta)}{Z_H^\DD(\beta)}$.
	We show that for suitable constants $c,c' > 0$, for $\beta_L = \mathfrak{B}_o/m - c' m^{-3/2}$ we have
	\begin{equation}\label{lb}
	f(\beta_L) \leq \exp(-c\sqrt{m}),
	\end{equation}
	and for $\beta_U = \mathfrak{B}_o/m + c' m^{-3/2}$ we have
	\begin{equation}\label{ub}
	f(\beta_U)\geq \exp( c\sqrt{m}).
	\end{equation}
	Since $|\hat M|=O(m^q)$ and $|\hat D|= O(m^q)$, we can compute 
	$Z_H^{\MM}(\beta)$ and $Z_H^{\DD}(\beta)$ for any $\beta \in [\beta_L,\beta_U]$
	in $\poly(m)$ time by enumerating over elements of $\hat M$ and $\hat D$, respectively.
	(Note that this involves computing multinomial coefficients, which can be done for example by expressing them as product of $q$ binomial coefficients; see~\eqref{eq:mf:z:dis}.) 
	Then, given~\eqref{lb}
	and~\eqref{ub}, for any
	$R\in [\exp(-c\sqrt{m}),\exp(c\sqrt{m})]$  and small enough $\xi >0$,
	we can use the bisection method
	with $[\beta_L,\beta_U]$ as the starting interval to find a $\beta \in [\beta_L,\beta_U]$ such that
	$$
	f(\beta) \le R \le f(\beta + \xi) \le f(x) + \xi \cdot \max_{\beta_0 \in [\beta_L,\beta_U]} f'(\beta_0)
	$$
	in time polynomial in $m$ and $\log \xi^{-1}$. 
	Since $f'(\beta_0) = \exp(O(m))$ for $\beta_0 \in [\beta_L,\beta_U]$, we can choose 
	$\xi = \exp(-\Theta(m))$ so that $f(\beta) \le R \le f(\beta) + \delta R$ as desired. 
	
	To establish~\eqref{lb}
	and~\eqref{ub} we consider the function
	$$
	g(\beta) = \log Z_H^\MM(\beta) - \log Z_H^\DD(\beta).
	$$
	Note that
	\begin{equation}\label{rho}
	\frac{\partial}{\partial \beta } g(\beta) = \frac{\frac{\partial}{\partial \beta }Z_H^\mathrm{M}(\beta)}{Z_H^\mathrm{M}(\beta)} - \frac{\frac{\partial}{\partial \beta }Z_H^\mathrm{D}(\beta)}{Z_H^\mathrm{D}(\beta)}.
	\end{equation}
	By a direct (and standard) calculation, we can check that
	the first term in the right-hand-side expression in~\eqref{rho} corresponds to the expected number of monochromatic edges in a random configuration $\sigma$ of
	the model conditioned on $\sigma$ being in the set $M$. Therefore,
	\begin{equation}\label{hjk1}
	\frac{\frac{\partial}{\partial \beta }Z_H^\MM(\beta)}{Z_H^\MM(\beta)} \geq \binom{\hat \alpha m - m^{3/4}}{2} + (q-1)\binom{\frac{(1-\hat\alpha)m}{q-1} - m^{3/4}}{2},
	\end{equation}
	where $\hat \alpha$ is the constant in the definition of the set $M$.
	Similarly, the second term in the right-hand-side of~\eqref{rho} is the expected number of  monochromatic edges in a random configuration $\sigma$ of
	the model conditioned on $\sigma$ being in the set $D$ and so
	\begin{equation}\label{hjk2}
	\frac{\frac{\partial}{\partial \beta }Z_H^\mathrm{D}(\beta)}{Z_H^\mathrm{D}(\beta)} \leq q \binom{m/q + m^{3/4}}{2}.
	\end{equation}
	Combining~\eqref{hjk1} and~\eqref{hjk2} 
	and using the fact that $\hat \alpha>1/q$,
	we obtain for a suitable constant $\rho >0$ and sufficiently large $m$ that for any $\beta \in [\beta_L,\beta_U]$
	\begin{equation}\label{hjk3}
	\frac{\partial}{\partial \beta } g(\beta) \geq \rho m^2.
	\end{equation}
	
	Since $|\hat A| = \Theta(m^q)$, Fact~\ref{fact:mean} implies that $|g(\Bo/m)| = \Theta(\log m)$. Hence,
	by the mean value theorem 
	$$
	g(\beta_L) \le g(\Bo/m) - \rho m^2 |\Bo/m - \beta_L| \le -c\sqrt{m}
	$$
	and similarly $g(\beta_U) \ge c\sqrt{m} $ for a suitable constant $c > 0$. 	Since 
	$g = \log f$, \eqref{lb}
	and~\eqref{ub} follow and the proof is complete.
\end{proof}

\end{document}